\newtheorem{thm}{\textbf{Theorem}}
  \newtheorem{prop}[thm]{\protect\propositionname}
   \newtheorem{lem}[thm]{\protect\lemmaname}
\providecommand{\remarkname}{Remark}
\providecommand{\theoremname}{Theorem}
\providecommand{\propositionname}{Proposition}
  \providecommand{\lemmaname}{Lemma}
\title[Supplementary Materials]{A regularization method for the parameter estimation problem in ordinary differential equations via discrete optimal control theory: Proofs}
\author{Quentin Clairon}
\address{ Newcastle University, School of Mathematics and Statistics,
	 Newcastle Upon Tyne,
	United Kingdom.}
\email{Quentin.Clairon@newcastle.ac.uk}
\begin{document}

\section{Introduction}
\subsection{Model and objectives}

We are interested in the estimation of the parameter $\theta^{*}$
from data $y_{1},\dots,y_{n}$, observed on the interval $\left[0,\,T\right]$,
that are realizations of the observation process for $i=0,\dots,n$:
\begin{equation}
Y_{i}=CX^{*}(t_{i})+\epsilon_{i}\label{eq:ObservationProcess}
\end{equation}
where $X^{*}$ is the solution to the Initial Value Problem (IVP):
\begin{equation}
\left\{ \begin{array}{l}
\dot{x}(t)=A_{\theta}(x(t),t)x(t)\\
x(0)=x_{0}
\end{array}\right.\label{eq:ODEmodel}
\end{equation}
for $\left(\theta,x_{0}\right)=\left(\theta^{*},x_{0}^{*}\right).$

Here, we assume the mesh-size $\triangle_{i}$ between points is uniform
i.e $\triangle_{i}=\triangle=\frac{T}{n}$ and $\epsilon_{i}\sim N(0,\sigma^{2}I_{d'})$
are i.i.d. For the purpose of estimation, we introduce the sequence
of discretized costs $C_{T}^{d,l}$ at the observation points: 
\begin{equation}
\begin{array}{lll}
C_{T}^{d,l}(Y;\theta,x_{0},u) & = & \sum_{i=0}^{n}\triangle\left\Vert CX_{\theta,x_{0},u}^{d}(t_{i})-Y_{i}\right\Vert _{2}^{2}+\sum_{i=0}^{n-1}\triangle u_{i}{}^{T}Uu_{i}\\
& = & \sum_{i=0}^{n}X_{\theta,x_{0},u}^{d,l,e}(t_{i})^{T}\triangle W_{i}X_{\theta,x_{0},u}^{d,l,e}(t_{i})+\sum_{i=0}^{n-1}u_{i}{}^{T}\triangle Uu_{i}
\end{array}\label{eq:Discrete_cost}
\end{equation}
with $W_{i}=\left(\begin{array}{cc}
C^{T}C & -C^{T}Y_{i}\\
-Y_{i}{}^{T}C & Y_{i}^{T}Y_{i}
\end{array}\right)$ and $U$ a positive definite matrix. Here $X_{\theta,x_{0},u}^{d,l,e}$
is ruled by the finite difference equation: 
\begin{equation}
\left\{ \begin{array}{l}
X_{\theta,x_{0},u}^{d,l,e}(t_{i+1})=A_{G,\theta,i}^{d,l,e}X_{\theta,x_{0},u}^{d,l,e}(t_{i})+B_{1}\triangle u_{i}\\
X_{\theta,x_{0},u}^{d,l,e}(0)=X^{e}(0)=\left(x_{0},1\right)
\end{array}\right.\label{eq:extended_finitediffmodel}
\end{equation}
with $A_{G,\theta,i}^{d,l,e}=I_{d+1}+\Delta A_{G,\theta}(\overline{X_{\theta,x_{0}}^{d,l-1,e}}(t_{i}),t_{i})$,
$A_{G,\theta}(x,t)=\left(\begin{array}{ll}
A_{\theta}(x,t) & 0_{d,1}\\
0_{1,d} & 0
\end{array}\right)$ and $B_{1}=\left(\begin{array}{c}
B\\
0_{1,d_{u}}
\end{array}\right).$ In this case it is known (see \citet{Sontag1998} for example) the
cost $C_{T}^{d,l}(Y;\theta,x_{0},u)$ has a global minimum in $u$
, denoted $\overline{u_{\theta}^{d,l}}$ and equal to:
\begin{equation}
\overline{u_{\theta,i}^{d,l}}=-(U+\triangle B_{1}^{T}E_{\theta,i+1}^{d,l}(Y)B_{1})^{-1}B_{1}^{T}E_{\theta,i+1}^{d,l}(Y)A_{G,\theta,i}^{d,l,e}\overline{X_{\theta,x_{0}}^{d,l,e}}(t_{i})\label{eq:Optimalcontrol_sequences}
\end{equation}
with $E_{\theta,i}^{d,l}(Y)$ the solution of the discrete Riccati
equation:
\begin{equation}
\left\{ \begin{array}{l}
E_{\theta,i}^{d,l}(Y)=\left(A_{G,\theta,i}^{d,l,e}\right)^{T}E_{\theta,i+1}^{d,l}(Y)A_{G,\theta,i}^{d,l,e}+\triangle W_{i}\\
-\left(A_{G,\theta,i}^{d,l,e}\right)^{T}E_{\theta,i+1}^{d,l}(Y)\triangle B_{1}[U+\triangle B_{1}^{T}E_{\theta,i+1}^{d,l}(Y)B_{1}]^{-1}B_{1}^{T}E_{\theta,i+1}^{d,l}(Y)A_{G,\theta,i}^{d,l,e}\\
E_{\theta,n}^{d,l}(Y)=\triangle W_{n}
\end{array}\right.\label{eq:discrete_Riccati_equation}
\end{equation}
and $\overline{X_{\theta,x_{0}}^{d,l,e}}$ the extended optimal trajectory
i.e the solution of (\ref{eq:extended_finitediffmodel}) for the control
$\overline{u_{\theta}^{d,l}}$. Moreover, the minimum cost value is
equal to: 
\[
S_{n}^{l}(Y;\theta,x_{0})\,:=\inf_{u}C_{T}^{d,l}(Y;\theta,x_{0},u)=X^{e}(0)^{T}E_{\theta,0}^{d,l}(Y)X^{e}(0).
\]
From this, it is also easy to see that $\inf_{u}C_{T}^{d,l}(Y;\theta,x_{0},u)$
is a quadratic form w.r.t $x_{0}$ and then we can profile $\inf_{u}C_{T}^{d,l}(Y;\theta,x_{0},u)$
on $x_{0}$ without complexifying too much the computational problem.
We also introduce the cost sequentially profiled on $u$ then $x_{0}$:
\[
S_{n}^{CI,l}(Y;\theta)\,:=\inf_{x_{0}}\inf_{u}C_{T}^{d,l}(Y;\theta,x_{0},u)=\inf_{x_{0}}X^{e}(0)^{T}E_{\theta,0}^{d,l}(Y)X^{e}(0).
\]
Let us now assume we have access to the true continuous signal: $t\longrightarrow Y^{*}(t)=CX_{\theta^{*},x_{0}^{*}}(t)$
then we can define the sequence of continuous costs: 
\begin{equation}
\begin{array}{lll}
C_{T}^{l}(\theta,x_{0},u) & = & d^{'}\sigma^{2}+\int_{0}^{T}\left(\left\Vert CX_{\theta,x_{0},u}^{l}(t)-Y^{*}(t)\right\Vert _{2}^{2}+u(t)^{T}Uu(t)\right)dt\\
& = & \int_{0}^{T}\left(X_{\theta,x_{0},u}^{l,e}(t)^{T}W(t)X_{\theta,x_{0},u}^{l,e}(t)+u(t)^{T}Uu(t)\right)dt.
\end{array}\label{eq:Continous_cost}
\end{equation}
with $W(t)=\left(\begin{array}{cc}
C^{T}C & -C^{T}Y^{*}(t)\\
-Y^{*}(t)^{T}C & Y^{*}(t)^{T}Y^{*}(t)+d^{'}\sigma^{2}
\end{array}\right)$ and $X_{\theta,x_{0},u}^{l,e}$ the solution of the extended ODE: 

\begin{equation}
\left\{ \begin{array}{l}
\dot{X_{\theta,x_{0},u}^{l,e}}(t)=A_{G,\theta}(\overline{X_{\theta,x_{0}}^{l-1,e}}(t),t)X_{\theta,x_{0},u}^{l,e}(t)+B_{1}u(t)\\
X_{\theta,x_{0},u}^{l,e}(0)=X^{e}(0)=\left(x_{0},1\right).
\end{array}\right.\label{eq:extended_ODEmodel}
\end{equation}
As in the discrete case, the cost (\ref{eq:Continous_cost}) has
a unique minimum on $u$ , denoted $\overline{u_{\theta}^{l}}$ and
equal to: $\overline{u_{\theta}^{l}}(t)=-U^{-1}B_{1}^{T}E_{\theta}^{l}(t)\overline{X_{\theta,x_{0}}^{l,e}}(t)$
with $E_{\theta}^{l}$ the solution of the continuous Riccati equation:
\begin{equation}
\left\{ \begin{array}{l}
\dot{E_{\theta}^{l}}(t)=-W(t)-A_{G,\theta}(\overline{X_{\theta,x_{0}}^{l-1,e}}(t),t)^{T}E_{\theta}^{l}(t)-E_{\theta}^{l}(t)A_{G,\theta}(\overline{X_{\theta,x_{0}}^{l-1,e}}(t),t)\\
+E_{\theta}^{l}(t)B_{1}U^{-1}B_{1}^{T}E_{\theta}^{l}(t)\\
E_{\theta}^{l}(T)=0_{d+1,d+1}
\end{array}\right.\label{eq:continuous_Riccati_equation}
\end{equation}
and $\overline{X_{\theta,x_{0}}^{l,e}}$ the extended optimal trajectory
i.e. the solution corresponding of (\ref{eq:extended_finitediffmodel})
for the control $u:=\overline{u_{\theta}^{l}}$. We can also express
the profiled cost on $u$ then $x_{0}$ in a similar way as in the
discrete case:
\[
\left\{ \begin{array}{l}
S^{l}(\theta,x_{0})\,:=\inf_{u}C_{T}^{l}(\theta,x_{0},u)=X^{e}(0)^{T}E_{\theta}^{l}(0)X^{e}(0)\\
S^{CI,l}(\theta)\,:=\inf_{x_{0}}\inf_{u}C_{T}^{l}(\theta,x_{0},u)=\inf_{x_{0}}X^{e}(0)^{T}E_{\theta}^{l}(0)X^{e}(0).
\end{array}\right.
\]
Finally, we introduce the asymptotic cost: 
\begin{equation}
\begin{array}{lll}
C_{T}^{\infty}(\theta,x_{0},u) & = & d^{'}\sigma^{2}+\int_{0}^{T}\left(\left\Vert CX_{\theta,x_{0},u}^{\infty}(t)-Y^{*}(t)\right\Vert _{2}^{2}+u(t)^{T}Uu(t)\right)dt\\
& = & \int_{0}^{T}\left(X_{\theta,x_{0},u}^{\infty,e}(t)^{T}W(t)X_{\theta,x_{0},u}^{\infty,e}(t)+u(t)^{T}Uu(t)\right)dt
\end{array}\label{eq:as_Continous_cost}
\end{equation}
associated to the ODE:
\begin{equation}
\left\{ \begin{array}{l}
\dot{X_{\theta,x_{0},u}^{\infty,e}}=A_{G,\theta}(\overline{X_{\theta,x_{0}}^{\infty,e}}(t),t)X_{\theta,x_{0},u}^{\infty,e}+B_{1}u(t)\\
X_{\theta,x_{0},u}^{\infty,e}(0)=X^{e}(0)=\left(x_{0},\,1\right).
\end{array}\right.\label{eq:asymptotic_ODEmodel}
\end{equation}
Again, for a given couple $\left(\theta,x_{0}\right)$, the optimal
control denoted $\overline{u_{\theta}^{\infty}}$ is equal to: 
\[
\overline{u_{\theta}^{\infty}}(t)=-U^{-1}B_{1}^{T}E_{\theta}^{\infty}(t)\overline{X_{\theta,x_{0}}^{\infty,e}}(t)
\]
by introducing the continuous Riccati equation: 
\begin{equation}
\left\{ \begin{array}{l}
\dot{E_{\theta}^{\infty}}(t)=-W(t)-A_{G,\theta}(\overline{X_{\theta,x_{0}}^{\infty,e}}(t),t)^{T}E_{\theta}^{\infty}(t)-E_{\theta}^{\infty}(t)A_{G,\theta}(\overline{X_{\theta,x_{0}}^{\infty,e}}(t),t)\\
+E_{\theta}^{\infty}(t)B_{1}U^{-1}B_{1}^{T}E_{\theta}^{\infty}(t)\\
E_{\theta}^{\infty}(T)=0_{d+1,d+1}.
\end{array}\right.\label{eq:asymptotic_continuous_Riccati_equation}
\end{equation}
and we can express the profiled cost values $S^{\infty}(\theta,x_{0})\,:=\inf_{u}C_{T}^{\infty}(\theta,x_{0},u)=X^{e}(0)^{T}E_{\theta}^{\infty}(0)X^{e}(0)$
and $S^{CI,\infty}(\theta)\,:=\inf_{x_{0}}X^{e}(0)^{T}E_{\theta}^{\infty}(0)X^{e}(0)$
. In the different costs definition, we dropped the dependance in
$U$ because no asymptotic behavior conditions are required on it
for the next proofs. 

In the linear case since $A_{\theta}$ does not depends on $x$ anymore,
we have $C_{T}^{d,l_{1}}=C_{T}^{d,l_{2}}$ and $C_{T}^{l_{1}}=C_{T}^{l_{2}}=C_{T}^{\infty}$
for all $l_{1}$ and $l_{2}$ belonging to $\mathbb{N}$. Thus, there
is no need to consider the asymptotic in $l$ in this case and we
drop the dependance in $l$ in all quantities of interest i.e. $C_{T}^{d}:=C_{T}^{d,l}$,
$C_{T}:=C_{T}^{l}$ and $S_{n}^{CI}:=S_{n}^{CI,l}$ and for the solution
of the Riccati equation.

\subsection{Hypothesis \& Notations}

\subsubsection{Hypothesis for $\widehat{\theta}^{T}$ in the general case.}
\begin{description}
	\item [{Condition}] 1: For all $t\in\left[0,\,T\right]$ and for all $\theta\in\Theta$,
	$x\longmapsto A_{\theta}(x,t)$ has a compact support $\varLambda$.
\end{description}
\begin{description}
	\item [{Condition}] 2: For all $x\in\varLambda$, $\theta\longmapsto A_{\theta}(x,.)$
	is continuous on $\varTheta$ and $\forall\theta\in\varTheta$, $(x,t)\longmapsto A_{\theta}(x,t)$
	is continuous on $\varLambda\times\left[0,\,T\right]$.
\end{description}
\begin{description}
	\item [{Condition}] 3: Matrix $B$ has independent columns.
\end{description}
\begin{description}
	\item [{Condition}] 4: The true parameters $\left(\theta^{*},x_{0}^{*}\right)$
	belong to the interior of $\varTheta\times\chi$.
\end{description}
\begin{description}
	\item [{Condition}] 5: The solution $X_{\theta,x_{0}}$ of (\ref{eq:ODEmodel})
	is such that if $CX_{\theta,x_{0}}(t)=CX_{\theta^{*},x_{0^{*}}}(t)$
	for all $t\in\left[0,\,T\right]$ then $\left(\theta,x_{0}\right)=\left(\theta^{*},x_{0^{*}}\right)$.
\end{description}
\begin{description}
	\item [{Condition}] 6: For all $x\in\varLambda$, $\theta\longmapsto A_{\theta}(x,.)$
	is differentiable on $\varTheta$, for all $\theta\in\varTheta$,
	$(x,t)\longmapsto\frac{\partial A_{\theta}(x,t)}{\partial\theta}$
	is continuous on $\varLambda\times\left[0,\,T\right]$.
\end{description}
\begin{description}
	\item [{Condition}] 7: For all $x\in\varLambda$, $\theta\longmapsto A_{\theta}(x,.)$
	is twice differentiable on $\varTheta$, for all $\theta\in\varTheta$,
	$(x,t)\longmapsto\frac{\partial^{2}A_{\theta}(x,t)}{\partial^{2}\theta}$
	is continuous on $\varLambda\times\left[0,\,T\right]$.
\end{description}
\begin{description}
	\item [{Condition}] 8: The asymptotic hessian matrix $\frac{\partial^{2}S^{\infty}(\theta^{*},x_{0}^{*})}{\partial^{2}\left(\theta,x_{0}\right)}$
	is nonsingular.
\end{description}

\subsubsection{Hypothesis for $\widehat{\theta}^{T,CI}$ in the linear case}
\begin{description}
	\item [{Condition}] L1: For all $\theta\in\varTheta$, $t\longmapsto A_{\theta}(t)$
	is differentiable on $\left[0,\,T\right]$.
\end{description}
\begin{description}
	\item [{Condition}] L2: $\theta\longmapsto A_{\theta}$ is continuous on
	$\varTheta$.
\end{description}
\begin{description}
	\item [{Condition}] L3: For all $\theta\in\varTheta$, $R_{\theta}(0)$
	is nonsingular, where $R_{\theta}$ is defined by ODE (\ref{eq:continous_accurate_Riccati_equation}).
\end{description}
\begin{description}
	\item [{Condition}] L4: The true parameter $\theta^{*}$ belongs to the
	interior of $\varTheta$.
\end{description}
\begin{description}
	\item [{Condition}] L5: The solution $X_{\theta,x_{0}}$ of (\ref{eq:ODEmodel})
	is such that if $CX_{\theta,x_{0}}(t)=CX_{\theta^{*},x_{0^{*}}}(t)$
	for all $t\in\left[0,\,T\right]$ then $\left(\theta,x_{0}\right)=\left(\theta^{*},x_{0^{*}}\right)$.
\end{description}
\begin{description}
	\item [{Condition}] L6: $\theta\longmapsto A_{\theta}$ is $C^{2}$ on
	$\varTheta$.
\end{description}
\begin{description}
	\item [{Condition}] L7: The asymptotic hessian matrix $\frac{\partial^{2}S^{CI}(\theta^{*})}{\partial^{2}\theta}$
	is a nonsingular matrix.
\end{description}

\subsection{Notation}

We denote:
\begin{enumerate}
	\item $\overline{A}=\sup_{\left(\theta,x,t\right)\in\varTheta\times\varLambda\times\left[0,\,T\right]}\left\Vert A_{\theta}(x,t)\right\Vert _{2}$
	.
	\item $\overline{\partial A}=\sup_{\left(\theta,x,t\right)\in\varTheta\times\varLambda\times\left[0,\,T\right]}\left\Vert \frac{\partial A_{\theta}(x,t)}{\partial\theta}\right\Vert _{2}$
	.
	\item $\overline{\partial^{2}A}=\sup_{\left(\theta,x,t\right)\in\varTheta\times\varLambda\times\left[0,\,T\right]}\left\Vert \frac{\partial^{2}A_{\theta}(x,t)}{\partial^{2}\theta}\right\Vert _{2}$
	.
	\item $o_{l}(1)$ an arbitrary function $g$ (possibly vector or matrix
	valued) such that $\lim_{l\longmapsto\infty}g(l)=0$.
	\item $o_{n}(f(\triangle))$ (resp $O_{n}(f(\triangle))$ ) an arbitrary
	function $g$ (possibly vector or matrix valued) such that $\lim_{n\longmapsto\infty}\frac{g(\triangle)}{f(\triangle)}=\lim_{\triangle\longmapsto0}\frac{g(\triangle)}{f(\triangle)}=0.$
	(resp $\lim_{n\longmapsto\infty}\frac{g(\triangle)}{f(\triangle)}=L$
	with $L$ constant and finite).
	\item $o_{p,n}(f(\triangle))$ (resp $O_{p,n}(f(\triangle))$ ) a random
	variable $g$ such that $\frac{g(\triangle)}{f(\triangle)}$ tends
	to $0$ in probability when $n\longrightarrow+\infty$ ( resp $\frac{g(\triangle)}{f(\triangle)}$
	is bounded in probability when $n\longrightarrow+\infty$).
	\item For the proof of estimator consistency and asymptotic normality in
	the nonlinear case, we often use the notation $\upsilon=\left(\theta,\,x_{0}\right)$
	as well as $\varUpsilon=\Theta\times\chi$. 
\end{enumerate}
For the sake of notation we denote $\widehat{\theta}$ for both $\widehat{\theta}^{T}$
and $\widehat{\theta}^{T,CI}$ , the ambiguity being clarified by
the context.

\section{Profiled costs: alternative representations, well-definednesses and
	regularities}

In this section, we derive the expressions of $S_{n}^{l},\,S^{l}$
and $S^{\infty}$ (resp. $S_{n}^{CI,l},\,S^{CI,l}$ and $S^{CI,\infty}$)
w.r.t to the solutions of simplified versions of the original Riccati
equations (\ref{eq:discrete_Riccati_equation}, \ref{eq:continuous_Riccati_equation}
and \ref{eq:asymptotic_continuous_Riccati_equation}). The interest
is twofold, first it reduces the computational burden for $S_{n}^{l}$
and $S_{n}^{CI,l}$ optimization and make the theoretical asymptotic
analysis of our estimator easier by specifying precisely where the
measurement errors intervene in $S_{n}^{l}$ and $S_{n}^{CI,l}$.
In this section, we denote $\overline{X_{\upsilon}^{l}}$ and $\overline{X_{\upsilon}^{d,l}}$
the discrete and continous optimal trajectories whether or not we
profile on the initial conditions.
\begin{prop}
	\label{prop:continuous_Riccati_accurate_representation}We have: 
	\[
	\left\{ \begin{array}{lll}
	S^{l}(\theta,x_{0}) & = &  x_{0}^{T}R_{\theta}^{l}(0)x_{0}+2x_{0}^{T}h_{\theta}^{l}(0)\\
 &+ &\int_{0}^{T}\left(Y^{*}(t)^{T}Y^{*}(t)+d^{'}\sigma^{2}-h_{\theta}^{l}(t)^{T}BU^{-1}B^{T}h_{\theta}^{l}(t)\right)dt\\
	S^{CI,l}(\theta)  &= & -h_{\theta}^{l}(0)^{T}R_{\theta}^{l}(0)^{-1}h_{\theta}^{l}(0)\\
 &+ &\int_{0}^{T}\left(Y^{*}(t)^{T}Y^{*}(t)+d^{'}\sigma^{2}-h_{\theta}^{l}(t)^{T}BU^{-1}B^{T}h_{\theta}^{l}(t)\right)dt
	\end{array}\right.
	\]
	with:
	\begin{equation}
	\left\{ \begin{array}{l}
	\dot{R_{\theta}^{l}}(t)=-C^{T}C-A_{\theta}(\overline{X_{\upsilon}^{l-1}}(t),t)^{T}R_{\theta}^{l}(t)-R_{\theta}^{l}(t)A_{\theta}(\overline{X_{\upsilon}^{l-1}}(t),t)
\\+R_{\theta}^{l}(t)BU^{-1}B^{T}R_{\theta}^{l}(t)\\
	\dot{h_{\theta}^{l}}(t)=C^{T}Y^{*}(t)-A_{\theta}(\overline{X_{\upsilon}^{l-1}}(t),t)^{T}h_{\theta}^{l}(t)+R_{\theta}^{l}(t)BU^{-1}B^{T}h_{\theta}^{l}(t)\\
	\dot{\overline{X_{\upsilon}^{l}}(}t)=A_{\theta}(\overline{X_{\upsilon}^{l-1}}(t),t)\overline{X_{\upsilon}^{l}}(t)-BU^{-1}B^{T}(R_{\theta}^{l}(t)\overline{X_{\upsilon}^{l}}(t)+h_{\theta}^{l}(t))\\
	\left(R_{\theta}^{l}(T),\,h_{\theta}^{l}(T)\right)=\left(0_{d,d},\,0_{d,1}\right)
	\end{array}\right.\label{eq:continous_accurate_Riccati_equation}
	\end{equation}
	where $\overline{X_{\upsilon}^{l}}(0)=x_{0}$ for $S^{l}(\theta,x_{0})$
	computation and $\overline{X_{\upsilon}^{l}}(0)=-R_{\theta}^{l}(0)^{-1}h_{\theta}^{l}(0)$
	for $S^{CI,l}(\theta).$ Similarly, we have the expressions:
	\[
	\left\{ \begin{array}{lll}
	S^{\infty}(\theta,x_{0}) & = & x_{0}^{T}R_{\theta}^{\infty}(0)x_{0}+2x_{0}^{T}h_{\theta}^{\infty}(0)
\\&+&\int_{0}^{T}\left(Y^{*}(t)^{T}Y^{*}(t)+d^{'}\sigma^{2}-h_{\theta}^{\infty}(t)^{T}BU^{-1}B^{T}h_{\theta}^{\infty}(t)\right)dt\\
	S^{CI,\infty}(\theta) & = & -h_{\theta}^{\infty}(0)^{T}R_{\theta}^{\infty}(0)^{-\infty}h_{\theta}^{\infty}(0)\\
&+&\int_{0}^{T}\left(Y^{*}(t)^{T}Y^{*}(t)+d^{'}\sigma^{2}-h_{\theta}^{\infty}(t)^{T}BU^{-1}B^{T}h_{\theta}^{\infty}(t)\right)dt
	\end{array}\right.
	\]
	with:
	\begin{equation}
	\left\{ \begin{array}{l}
	\dot{R_{\theta}^{\infty}}(t)=-C^{T}C-A_{\theta}(\overline{X_{\upsilon}^{\infty}}(t),t)^{T}R_{\theta}^{\infty}(t)-R_{\theta}^{\infty}(t)A_{\theta}(\overline{X_{\upsilon}^{\infty}}(t),t)
\\+R_{\theta}^{\infty}(t)BU^{-1}B^{T}R_{\theta}^{\infty}(t)\\
	\dot{h_{\theta}^{\infty}}(t)=C^{T}Y^{*}(t)-A_{\theta}(\overline{X_{\upsilon}^{\infty}}(t),t)^{T}h_{\theta}^{\infty}(t)+R_{\theta}^{\infty}(t)BU^{-1}B^{T}h_{\theta}^{\infty}(t)\\
	\dot{\overline{X_{\upsilon}^{\infty}}(}t)=A_{\theta}(\overline{X_{\upsilon}^{\infty}}(t),t)\overline{X_{\upsilon}^{\infty}}(t)-BU^{-1}B^{T}(R_{\theta}^{\infty}(t)\overline{X_{\upsilon}^{\infty}}(t)+h_{\theta}^{\infty}(t))\\
	\left(R_{\upsilon}^{\infty}(T),\,h_{\upsilon}^{\infty}(T)\right)=\left(0_{d,d},\,0_{d,1}\right)
	\end{array}\right.\label{eq:asymptotic_continous_accurate_Riccati_equation}
	\end{equation}
	where $\overline{X_{\upsilon}^{\infty}}(0)=x_{0}$ for $S^{\infty}(\theta,x_{0})$
	and $\overline{X_{\upsilon}^{\infty}}(0)=-R_{\theta}^{\infty}(0)^{-1}h_{\theta}^{\infty}(0)$
	for $S^{CI,\infty}(\theta)$ .
	\begin{proof}
		It is easy to verify that $E_{\theta}^{l}$ is symmetric and thus
		can be decomposed under the form $E_{\theta}^{l}(t)=\left(\begin{array}{cc}
		R_{\theta}^{l}(t) & h_{\theta}^{l}(t)\\
		h_{\theta}^{l}(t)^{T} & \alpha_{\theta}^{l}(t)
		\end{array}\right)$. We now re-inject this expression into the ODE (\ref{eq:continuous_Riccati_equation})
		to obtain:{\small{}
			\[
			\begin{array}{l}
			\dot{E_{\theta}^{l}}(t)=-\left(\begin{array}{cc}
			C^{T}C & -C^{T}Y^{*}(t)\\
			-Y^{*}(t)^{T}C & Y^{*}(t)^{T}Y^{*}(t)+d^{'}\sigma^{2}
			\end{array}\right)\\
			-\left(\begin{array}{ll}
			A_{\theta}(\overline{X_{\upsilon}^{l-1}}(t),t) & 0_{d,1}\\
			0_{1,d} & 0
			\end{array}\right)^{T}\left(\begin{array}{cc}
			R_{\theta}^{l}(t) & h_{\theta}^{l}(t)\\
			h_{\theta}^{l}(t)^{T} & \alpha_{\theta}^{l}(t)
			\end{array}\right)
			\\-\left(\begin{array}{cc}
			R_{\theta}^{l}(t) & h_{\theta}^{l}(t)\\
			h_{\theta}^{l}(t)^{T} & \alpha_{\theta}^{l}(t)
			\end{array}\right)\left(\begin{array}{ll}
			A_{\theta}(\overline{X_{\upsilon}^{l-1}}(t),t) & 0_{d,1}\\
			0_{1,d} & 0
			\end{array}\right)\\
			+\left(\begin{array}{cc}
			R_{\theta}^{l}(t) & h_{\theta}^{l}(t)\\
			h_{\theta}^{l}(t)^{T} & \alpha_{\theta}^{l}(t)
			\end{array}\right)\left(\begin{array}{c}
			B\\
			0_{1,d_{u}}
			\end{array}\right)U^{-1}\left(\begin{array}{c}
			B\\
			0_{1,d_{u}}
			\end{array}\right)^{T}\left(\begin{array}{cc}
			R_{\theta}^{l}(t) & h_{\theta}^{l}(t)\\
			h_{\theta}^{l}(t)^{T} & \alpha_{\theta}^{l}(t)
			\end{array}\right)\\
			\end{array}
			\]
			\[
			\begin{array}{l}
			=-\left(\begin{array}{cc}
			C^{T}C & -C^{T}Y^{*}(t)\\
			-Y^{*}(t)^{T}C & Y^{*}(t)^{T}Y^{*}(t)+d^{'}\sigma^{2}
			\end{array}\right)\\
			-\left(\begin{array}{ll}
			A_{\theta}(\overline{X_{\upsilon}^{l-1}}(t),t)^{T}R_{\theta}^{l}(t) & A_{\theta}(\overline{X_{\upsilon}^{l-1}}(t),t)^{T}h_{\theta}^{l}(t)\\
			0_{1,d} & 0
			\end{array}\right)
			\\-\left(\begin{array}{cc}
			R_{\theta}^{l}(t)A_{\theta}(\overline{X_{\upsilon}^{l-1}}(t),t) & 0_{d,1}\\
			h_{\theta}^{l}(t)^{T}A_{\theta}(\overline{X_{\upsilon}^{l-1}}(t),t) & 0
			\end{array}\right)\\
			+\left(\begin{array}{cc}
			R_{\theta}^{l}(t) & h_{\theta}^{l}(t)\\
			h_{\theta}^{l}(t)^{T} & \alpha_{\theta}^{l}(t)
			\end{array}\right)\left(\begin{array}{cc}
			BU^{-1}B^{T} & 0_{d,1}\\
			0_{1,d} & 0
			\end{array}\right)\left(\begin{array}{cc}
			R_{\theta}^{l}(t) & h_{\theta}^{l}(t)\\
			h_{\theta}^{l}(t)^{T} & \alpha_{\theta}^{l}(t)
			\end{array}\right)\\
			=-\left(\begin{array}{cc}
			C^{T}C & -C^{T}Y^{*}(t)\\
			-Y^{*}(t)^{T}C & Y^{*}(t)^{T}Y^{*}(t)+d^{'}\sigma^{2}
			\end{array}\right)\\
			-\left(\begin{array}{ll}
			A_{\theta}(\overline{X_{\upsilon}^{l-1}}(t),t)^{T}R_{\theta}^{l}(t)+R_{\theta}^{l}(t)A_{\theta}(\overline{X_{\upsilon}^{l-1}}(t),t) & A_{\theta}(\overline{X_{\upsilon}^{l-1}}(t),t)^{T}h_{\theta}^{l}(t)\\
			h_{\theta}^{l}(t)^{T}A_{\theta}(\overline{X_{\upsilon}^{l-1}}(t),t) & 0
			\end{array}\right)\\
			+\left(\begin{array}{cc}
			R_{\theta}^{l}(t)BU^{-1}B^{T}R_{\theta}^{l}(t) & R_{\theta}^{l}(t)BU^{-1}B^{T}h_{\theta}^{l}(t)\\
			h_{\theta}^{l}(t)^{T}BU^{-1}B^{T}R_{\theta}^{l}(t) & h_{\theta}(t)^{T}BU^{-1}B^{T}h_{\theta}^{l}(t)
			\end{array}\right).
			\end{array}
			\]
		}From this we can derive the ODE followed by each element of $E_{\theta}^{l}$,
		{\small{}
			\[
			\left\{ \begin{array}{l}
			\dot{R_{\theta}^{l}}(t)=-C^{T}C-A_{\theta}(\overline{X_{\upsilon}^{l-1}}(t),t)^{T}R_{\theta}^{l}(t)-R_{\theta}^{l}(t)A_{\theta}(\overline{X_{\upsilon}^{l-1}}(t),t)
			\\+R_{\theta}^{l}(t)BU^{-1}B^{T}R_{\theta}^{l}(t)\\
			\dot{h_{\theta}^{l}}(t)=C^{T}Y^{*}(t)-A_{\theta}(\overline{X_{\upsilon}^{l-1}}(t),t)^{T}h_{\theta}^{l}(t)+R_{\theta}^{l}(t)BU^{-1}B^{T}h_{\theta}^{l}(t)\\
			\dot{\alpha_{\theta}^{l}}(t)=-Y^{*}(t)^{T}Y^{*}(t)-d^{'}\sigma^{2}+h_{\theta}^{l}(t)^{T}BU^{-1}B^{T}h_{\theta}^{l}(t)
			\end{array}\right.
			\]
		}in particular, $\alpha_{\theta}^{l}(0)=\int_{0}^{T}\left(Y^{*}(t)^{T}Y^{*}(t)+d^{'}\sigma^{2}-h_{\theta}^{l}(t)^{T}BU^{-1}B^{T}h_{\theta}^{l}(t)\right)dt$
		and:{\small{}
			\[
			\begin{array}{lll}
			\inf_{u}C_{T}^{l}(Y;\theta,x_{0},u) & = & X^{e}(0)^{T}E_{\theta}^{l}(0)X^{e}(0)\\
			& = & \left(\begin{array}{cc}
			x_{0}^{T} & 1\end{array}\right)\left(\begin{array}{cc}
			R_{\theta}^{l}(0) & h_{\theta}^{l}(0)\\
			h_{\theta}^{l}(0)^{T} & \alpha_{\theta}^{l}(0)
			\end{array}\right)\left(\begin{array}{c}
			x_{0}\\
			1
			\end{array}\right)\\
			& = & x_{0}^{T}R_{\theta}^{l}(0)x_{0}+2x_{0}^{T}h_{\theta}^{l}(0)+\alpha_{\theta}^{l}(0)\\
			& = & x_{0}^{T}R_{\theta}^{l}(0)x_{0}+2x_{0}^{T}h_{\theta}^{l}(0)
\\ &+ &\int_{0}^{T}\left(Y^{*}(t)^{T}Y^{*}(t)+d^{'}\sigma^{2}-h_{\theta}^{l}(t)^{T}BU^{-1}B^{T}h_{\theta}^{l}(t)\right)dt
			\end{array}
			\]
			hence the expression for $S^{l}(\theta,x_{0}$}). One can see $\inf_{u}C_{T}^{l}(\theta,x_{0},u)$
		is quadratic w.r.t to $x_{0}$ with the minimum reached for $\overline{x_{\theta,0}^{l}}=-R_{\theta}^{l}(0)^{-1}h_{\theta}^{l}(0)$,
		from this we easily obtain:{\small{}
			\[
			\begin{array}{lll}
			S^{CI,l}(\theta) & = & \inf_{x_{0}}\inf_{u}C_{T}^{l}(Y;\theta,x_{0},u)\\
			& = & -h_{\theta}^{l}(0)^{T}R_{\theta}^{l}(0)^{-1}h_{\theta}^{l}(0)
			\\&+&\int_{0}^{T}\left(Y^{*}(t)^{T}Y^{*}(t)+d^{'}\sigma^{2}-h_{\theta}^{l}(t)^{T}BU^{-1}B^{T}h_{\theta}^{l}(t)\right)dt.
			\end{array}
			\]
		}The optimal cost become{\small{}s
			\[
			\begin{array}{lll}
			\overline{u_{\theta}^{l}}(t) & = & -U^{-1}B_{1}^{T}E_{\theta}^{l}(t)\overline{X_{\upsilon}^{l,e}}(t)=-U^{-1}B^{T}(R_{\theta}^{l}(t)\overline{X_{\upsilon}^{l}}(t)+h_{\theta}^{l}(t))\end{array}
			\]
		}and from this we derive the ODE followed by $\overline{X_{\upsilon}^{l}}$.
		For $S^{\infty}$(resp. $S^{CI,\infty}$), the formal computation
		is almost the same as in $S^{l}$ (resp. $S^{CI,l}$) case and thus
		is omitted.
	\end{proof}
\end{prop}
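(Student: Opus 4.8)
The plan is to exploit the block structure of the extended Riccati equation \eqref{eq:continuous_Riccati_equation} and substitute a symmetric ansatz for $E_\theta^l$ so that the matrix ODE decouples into scalar, vector, and matrix components. First I would check that $E_\theta^l$ stays symmetric for all $t$: the right-hand side of \eqref{eq:continuous_Riccati_equation} maps symmetric matrices to symmetric matrices (since $W(t)$, $A_{G,\theta}^T E + E A_{G,\theta}$, and $E B_1 U^{-1} B_1^T E$ are all symmetric when $E$ is), and the terminal condition $E_\theta^l(T)=0$ is symmetric, so by uniqueness of solutions to the ODE the symmetry persists. This licenses the block decomposition $E_\theta^l=\begin{pmatrix} R_\theta^l & h_\theta^l \\ (h_\theta^l)^T & \alpha_\theta^l\end{pmatrix}$ with $R_\theta^l$ symmetric $d\times d$, $h_\theta^l$ a $d$-vector, $\alpha_\theta^l$ scalar.

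Next I would plug this ansatz into \eqref{eq:continuous_Riccati_equation}, using the special structure $A_{G,\theta}=\begin{pmatrix} A_\theta & 0\\ 0&0\end{pmatrix}$ and $B_1=\begin{pmatrix} B\\ 0\end{pmatrix}$, so that $B_1U^{-1}B_1^T=\begin{pmatrix} BU^{-1}B^T & 0\\ 0 & 0\end{pmatrix}$. Expanding each of the three matrix products block by block and matching the $(1,1)$, $(1,2)$, and $(2,2)$ entries gives exactly the ODEs for $\dot R_\theta^l$, $\dot h_\theta^l$, and $\dot\alpha_\theta^l$ claimed in the proposition (the last read off as $\dot\alpha_\theta^l(t)=-Y^*(t)^TY^*(t)-d'\sigma^2+h_\theta^l(t)^T BU^{-1}B^T h_\theta^l(t)$), with terminal conditions all zero. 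Integrating the $\alpha$-equation from $0$ to $T$ against its zero terminal value yields the closed form $\alpha_\theta^l(0)=\int_0^T\big(Y^*(t)^TY^*(t)+d'\sigma^2-h_\theta^l(t)^T BU^{-1}B^T h_\theta^l(t)\big)dt$. The formula for $S^l(\theta,x_0)$ then follows by writing $S^l(\theta,x_0)=X^e(0)^T E_\theta^l(0)X^e(0)$ with $X^e(0)=(x_0,1)$ and expanding the quadratic form into $x_0^T R_\theta^l(0)x_0+2x_0^T h_\theta^l(0)+\alpha_\theta^l(0)$.

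For $S^{CI,l}$, I would observe that $S^l(\theta,\cdot)$ is a quadratic in $x_0$ whose Hessian is $2R_\theta^l(0)$, and by \textbf{Condition} L3 (or the corresponding nonsingularity assumption) $R_\theta^l(0)$ is invertible, so the unconstrained minimizer is $\overline{x_{\theta,0}^l}=-R_\theta^l(0)^{-1}h_\theta^l(0)$ and the minimal value is $-h_\theta^l(0)^T R_\theta^l(0)^{-1}h_\theta^l(0)+\alpha_\theta^l(0)$, which is the stated formula. The ODE for the optimal trajectory $\overline{X_\upsilon^l}$ comes from substituting $E_\theta^l=\begin{pmatrix} R_\theta^l & h_\theta^l\\ (h_\theta^l)^T & \alpha_\theta^l\end{pmatrix}$ into the optimal control $\overline{u_\theta^l}(t)=-U^{-1}B_1^T E_\theta^l(t)\overline{X_\upsilon^{l,e}}(t)$, which collapses to $-U^{-1}B^T(R_\theta^l(t)\overline{X_\upsilon^l}(t)+h_\theta^l(t))$ because the last row of $E_\theta^l$ never enters (the last component of $X^{l,e}$ is identically $1$ and $B_1$ has zero last block), and then reading off the closed-loop dynamics from \eqref{eq:extended_ODEmodel}. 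The $S^\infty$ and $S^{CI,\infty}$ statements are obtained by the same computation applied to \eqref{eq:asymptotic_continuous_Riccati_equation}, replacing $\overline{X_\upsilon^{l-1}}$ by $\overline{X_\upsilon^\infty}$ throughout.

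The computation is essentially bookkeeping; the only substantive point — and the one I would state carefully rather than bury — is the preservation of symmetry of $E_\theta^l$, since the whole block decomposition rests on it, together with the invertibility of $R_\theta^l(0)$ needed to profile out $x_0$. Neither is a real obstacle here: symmetry follows from uniqueness for the Riccati ODE, and invertibility is assumed. So I expect no genuine difficulty, only the need to organize the block-matrix expansion cleanly.
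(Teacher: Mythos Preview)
Your proposal is correct and follows essentially the same approach as the paper: block-decompose the symmetric $E_\theta^l$, substitute into the extended Riccati ODE, read off the component equations, integrate $\alpha_\theta^l$, expand the quadratic form $X^e(0)^T E_\theta^l(0)X^e(0)$, and then profile over $x_0$. If anything you are slightly more careful than the paper in justifying the symmetry of $E_\theta^l$ and in flagging the invertibility of $R_\theta^l(0)$ needed for the $S^{CI,l}$ step.
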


\begin{prop}
	\label{prop:discrete_Riccati_accurate_representation}For all $(l,i)\in\mathbb{N}\times\left\llbracket 0,\,n\right\rrbracket $,
	$E_{\theta,i}^{d,l}(Y)$ is symmetric and can be written $E_{\theta,i}^{d,l}(Y)=\left(\begin{array}{cc}
	R_{\theta,i}^{d,l} & h_{\theta,i}^{d,l}(Y)\\
	h_{\theta,i}^{d,l}(Y)^{T} & \alpha_{\theta,i}^{d,l}(Y)
	\end{array}\right)$. Moreover, each element is ruled by the finite difference equations:
	\begin{equation}
	\begin{array}{lll}
	R_{\theta,i}^{d,l} & = & R_{\theta,i+1}^{d,l}+\triangle C^{T}C+\Delta\left(R_{\theta,i+1}^{d,l}A_{\theta}(\overline{X_{\upsilon}^{d,l-1}}(t_{i}),t_{i})+A_{\theta}(\overline{X_{\upsilon}^{d,l-1}}(t_{i}),t_{i})^{T}R_{\theta,i+1}^{d,l}\right)\\
	& + & \Delta^{2}A_{\theta}(\overline{X_{\upsilon}^{d,l-1}}(t_{i}),t_{i})^{T}R_{\theta,i+1}^{d,l}A_{\theta}(\overline{X_{\upsilon}^{d,l-1}}(t_{i}),t_{i})\\
	& - & \triangle R_{\theta,i+1}^{d,l}BG(R_{\theta,i+1}^{d,l})B^{T}R_{\theta,i+1}^{d,l}(I_{d}+\Delta A_{\theta}(\overline{X_{\upsilon}^{d,l-1}}(t_{i}),t_{i}))\\
	& - &\Delta^{2} A_{\theta}(\overline{X_{\upsilon}^{d,l-1}}(t_{i}),t_{i})^{T}R_{\theta,i+1}^{d,l}BG(R_{\theta,i+1}^{d,l})B^{T}R_{\theta,i+1}^{d,l}(I_{d}+\Delta A_{\theta}(\overline{X_{\upsilon}^{d,l-1}}(t_{i}),t_{i}))\\

	h_{\theta,i}^{d,l}(Y) & = & h_{\theta,i+1}^{d,l}(Y)-\triangle C^{T}Y_{i}+\Delta A_{\theta}(\overline{X_{\upsilon}^{d,l-1}}(t_{i}),t_{i})^{T}h_{\theta,i+1}^{d,l}(Y)\\
	& - & \triangle(I_{d}+\Delta A_{\theta}(\overline{X_{\upsilon}^{d,l-1}}(t_{i}),t_{i})^{T})R_{\theta,i+1}^{d,l}BG(R_{\theta,i+1}^{d,l})B^{T}h_{\theta,i+1}^{d,l}(Y)\\
	\overline{X_{\upsilon}^{d,l}}(t_{i+1}) & = & \left(I_{d}+\Delta A_{\theta}(\overline{X_{\upsilon}^{d,l-1}}(t_{i}),t_{i})\right)\overline{X_{\upsilon}^{d,l}}(t_{i})\\
	& - & \triangle BG(R_{\theta,i+1}^{d,l})B^{T}\left(R_{\theta,i+1}^{d,l}\left(I_{d}+\Delta A_{\theta}(\overline{X_{\upsilon}^{d,l-1}}(t_{i}),t_{i})\right)\overline{X_{\upsilon}^{d,l}}(t_{i})+h_{\theta,i+1}^{d,l}(Y)\right)
	\end{array}\label{eq:discrete_accurate_Riccati_equation}
	\end{equation}
	with final condition $(R_{\theta,n}^{d,l},\,h_{\theta,n}^{d,l}(Y))  =  \left(\triangle C^{T}C\,,\,-\triangle C^{T}Y_{n}\right)$ and $G(R_{\theta,i+1}^{d,l})\,:=\left[U+\triangle B^{T}R_{\theta,i+1}^{d,l}B\right]^{-1}=U^{-1}+O_{p,n}(\triangle)$,
	\textup{$\overline{X_{\upsilon}^{d,l}}(0)=x_{0}$} for $S_{n}^{l}(Y;\theta,x_{0})$
	and \textup{$\overline{X_{\upsilon}^{d,l}}(0)=-\left(R_{\theta,0}^{l}\right)^{-1}h_{\theta,0}^{l}$}
	for $S_{n}^{CI,l}(Y;\theta)$. Moreover, the profiled cost can be
	expressed as
	\[
	\left\{ \begin{array}{l}
	S_{n}^{l}(Y;\theta,x_{0})=x_{0}^{T}R_{\theta,0}^{d,l}x_{0}+2h_{\theta,0}^{d,l}(Y)^{T}x_{0}+\triangle Y_{n}^{T}Y_{n}\\
	+\triangle\sum_{i=0}^{n-1}\left(Y_{i}^{T}Y_{i}-h_{\theta,i+1}^{d,l}(Y)^{T}BG(R_{\theta,i+1}^{d,l})B^{T}h_{\theta,i+1}^{d,l}(Y)\right)\\
	S_{n}^{CI,l}(Y;\theta)=-h_{\theta,0}^{d,l}(Y)^{T}\left(R_{\theta,0}^{d,l}\right)^{-1}h_{\theta,0}^{d,l}(Y)+\triangle Y_{n}^{T}Y_{n}\\
	+\triangle\sum_{i=0}^{n-1}\left(Y_{i}^{T}Y_{i}-h_{\theta,i+1}^{d,l}(Y)^{T}BG(R_{\theta,i+1}^{d,l})B^{T}h_{\theta,i+1}^{d,l}(Y)\right).
	\end{array}\right.
	\]
\end{prop}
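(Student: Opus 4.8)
\emph{Proof strategy.} The plan is to replay, at the level of finite differences, the argument that proved Proposition~\ref{prop:continuous_Riccati_accurate_representation}: decompose the symmetric solution $E_{\theta,i}^{d,l}(Y)$ of the discrete Riccati recursion~\eqref{eq:discrete_Riccati_equation} along the splitting $\mathbb{R}^{d+1}=\mathbb{R}^{d}\oplus\mathbb{R}$, substitute this block form back into the recursion, and read off the block-wise finite difference equations. First I would check by backward induction on $i$ that $E_{\theta,i}^{d,l}(Y)$ is symmetric: the terminal value $\triangle W_{n}$ is symmetric, and if $E_{\theta,i+1}^{d,l}(Y)$ is symmetric then the right-hand side of~\eqref{eq:discrete_Riccati_equation} is a sum of $\triangle W_{i}$ and of terms of the form $M^{T}E_{\theta,i+1}^{d,l}(Y)M$ and $M^{T}E_{\theta,i+1}^{d,l}(Y)B_{1}[U+\triangle B_{1}^{T}E_{\theta,i+1}^{d,l}(Y)B_{1}]^{-1}B_{1}^{T}E_{\theta,i+1}^{d,l}(Y)M$, hence symmetric. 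The same induction gives $E_{\theta,i}^{d,l}(Y)\succeq0$ by the standard positivity propagation for discrete Riccati recursions (using $W_{i}\succeq0$, $U\succ0$), so that $U+\triangle B^{T}R_{\theta,i+1}^{d,l}B\succ0$ and $G(R_{\theta,i+1}^{d,l})$ is well defined. This licenses the $2\times2$ block decomposition of $E_{\theta,i}^{d,l}(Y)$ into $R_{\theta,i}^{d,l}$, $h_{\theta,i}^{d,l}(Y)$, $\alpha_{\theta,i}^{d,l}(Y)$ as in the statement.

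Next I would carry out the block substitution. Using that $A_{G,\theta,i}^{d,l,e}$ is block diagonal with blocks $I_{d}+\triangle A_{\theta}(\overline{X_{\upsilon}^{d,l-1}}(t_{i}),t_{i})$ and $1$, and that $B_{1}$ has vanishing last row, one gets the two key products $B_{1}^{T}E_{\theta,i+1}^{d,l}(Y)B_{1}=B^{T}R_{\theta,i+1}^{d,l}B$, so that the control-weight inverse is exactly $G(R_{\theta,i+1}^{d,l})$, and $B_{1}^{T}E_{\theta,i+1}^{d,l}(Y)A_{G,\theta,i}^{d,l,e}$ has its first $d$ columns equal to $B^{T}R_{\theta,i+1}^{d,l}(I_{d}+\triangle A_{\theta}(\overline{X_{\upsilon}^{d,l-1}}(t_{i}),t_{i}))$ and its last column equal to $B^{T}h_{\theta,i+1}^{d,l}(Y)$. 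Expanding the $(1,1)$, $(1,2)$ and $(2,2)$ blocks of $(A_{G,\theta,i}^{d,l,e})^{T}E_{\theta,i+1}^{d,l}(Y)A_{G,\theta,i}^{d,l,e}$, of $\triangle W_{i}$ and of the correction term $-\triangle(A_{G,\theta,i}^{d,l,e})^{T}E_{\theta,i+1}^{d,l}(Y)B_{1}G(R_{\theta,i+1}^{d,l})B_{1}^{T}E_{\theta,i+1}^{d,l}(Y)A_{G,\theta,i}^{d,l,e}$ then reproduces block by block the recursions for $R_{\theta,i}^{d,l}$ and $h_{\theta,i}^{d,l}(Y)$ in~\eqref{eq:discrete_accurate_Riccati_equation}, together with the scalar recursion $\alpha_{\theta,i}^{d,l}(Y)=\alpha_{\theta,i+1}^{d,l}(Y)+\triangle Y_{i}^{T}Y_{i}-\triangle h_{\theta,i+1}^{d,l}(Y)^{T}BG(R_{\theta,i+1}^{d,l})B^{T}h_{\theta,i+1}^{d,l}(Y)$, all with terminal values read off from $\triangle W_{n}$.

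Summing the $\alpha$-recursion backward from $\alpha_{\theta,n}^{d,l}(Y)=\triangle Y_{n}^{T}Y_{n}$ gives the telescoped value $\alpha_{\theta,0}^{d,l}(Y)=\triangle Y_{n}^{T}Y_{n}+\triangle\sum_{i=0}^{n-1}\bigl(Y_{i}^{T}Y_{i}-h_{\theta,i+1}^{d,l}(Y)^{T}BG(R_{\theta,i+1}^{d,l})B^{T}h_{\theta,i+1}^{d,l}(Y)\bigr)$. Since the last coordinate of the extended state in~\eqref{eq:extended_finitediffmodel} stays equal to $1$ (the last row of $A_{G,\theta,i}^{d,l,e}$ is $(0,\dots,0,1)$ and the last row of $B_{1}$ vanishes), we have $S_{n}^{l}(Y;\theta,x_{0})=X^{e}(0)^{T}E_{\theta,0}^{d,l}(Y)X^{e}(0)=x_{0}^{T}R_{\theta,0}^{d,l}x_{0}+2h_{\theta,0}^{d,l}(Y)^{T}x_{0}+\alpha_{\theta,0}^{d,l}(Y)$, which is the first claimed identity once $\alpha_{\theta,0}^{d,l}(Y)$ is replaced by its telescoped value; this quadratic form in $x_{0}$ is minimized at $x_{0}=-(R_{\theta,0}^{d,l})^{-1}h_{\theta,0}^{d,l}(Y)$ (the prescribed initial condition of $\overline{X_{\upsilon}^{d,l}}$ in the profiled case), and evaluating there yields $S_{n}^{CI,l}(Y;\theta)$. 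Finally, substituting the block forms of $E_{\theta,i+1}^{d,l}(Y)$ and $A_{G,\theta,i}^{d,l,e}$ into the optimal control formula~\eqref{eq:Optimalcontrol_sequences} collapses it to $\overline{u_{\theta,i}^{d,l}}=-G(R_{\theta,i+1}^{d,l})B^{T}\bigl(R_{\theta,i+1}^{d,l}(I_{d}+\triangle A_{\theta}(\overline{X_{\upsilon}^{d,l-1}}(t_{i}),t_{i}))\overline{X_{\upsilon}^{d,l}}(t_{i})+h_{\theta,i+1}^{d,l}(Y)\bigr)$, and feeding this control into~\eqref{eq:extended_finitediffmodel} produces the stated recursion for $\overline{X_{\upsilon}^{d,l}}$.

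The steps above are essentially mechanical block arithmetic and a telescoping sum; the one genuinely analytic point, and where I expect the main difficulty, is the expansion $G(R_{\theta,i+1}^{d,l})=U^{-1}+O_{p,n}(\triangle)$. Writing $M=B^{T}R_{\theta,i+1}^{d,l}B\succeq0$ one has $(U+\triangle M)^{-1}-U^{-1}=-(U+\triangle M)^{-1}\triangle M U^{-1}$ with $\|(U+\triangle M)^{-1}\|_{2}\le\|U^{-1}\|_{2}$, so the error is $O(\triangle\|R_{\theta,i+1}^{d,l}\|_{2})$ and it suffices to establish $\sup_{i}\|R_{\theta,i}^{d,l}\|_{2}=O_{p,n}(1)$. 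I would deduce this from the monotonicity $0\preceq E_{\theta,i}^{d,l}(Y)\preceq(A_{G,\theta,i}^{d,l,e})^{T}E_{\theta,i+1}^{d,l}(Y)A_{G,\theta,i}^{d,l,e}+\triangle W_{i}$ obtained by discarding the negative semidefinite Riccati correction, which yields $\|E_{\theta,i}^{d,l}(Y)\|_{2}\le(1+\triangle\overline{A})^{2}\|E_{\theta,i+1}^{d,l}(Y)\|_{2}+\triangle\|W_{i}\|_{2}$ via $\|A_{G,\theta,i}^{d,l,e}\|_{2}\le1+\triangle\overline{A}$ (Condition~1, which bounds the coefficient uniformly whatever the linearization trajectory $\overline{X_{\upsilon}^{d,l-1}}$ does); iterating over the $n=T/\triangle$ steps with $(1+\triangle\overline{A})^{2n}\le e^{2T\overline{A}+T^{2}\overline{A}^{2}}$, and using $\|W_{i}\|_{2}\le\|C\|_{2}^{2}+\|Y_{i}\|_{2}^{2}$ together with the Riemann-sum estimate $\triangle\sum_{i=0}^{n}\|Y_{i}\|_{2}^{2}=\int_{0}^{T}\|CX^{*}(t)\|_{2}^{2}\,dt+Td^{'}\sigma^{2}+o_{p,n}(1)=O_{p,n}(1)$, one bounds $\sup_{i}\|E_{\theta,i}^{d,l}(Y)\|_{2}$, hence $\sup_{i}\|R_{\theta,i}^{d,l}\|_{2}$, by an $O_{p,n}(1)$ constant, closing the argument.
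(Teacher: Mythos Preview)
Your proposal is correct and follows essentially the same approach as the paper: backward induction for symmetry, block-by-block expansion of the three terms in~\eqref{eq:discrete_Riccati_equation} using the block-diagonal structure of $A_{G,\theta,i}^{d,l,e}$ and the vanishing last row of $B_{1}$, telescoping the $\alpha$-recursion, and reading off the optimal control from~\eqref{eq:Optimalcontrol_sequences}. The only minor difference is that you supply a self-contained argument for $G(R_{\theta,i+1}^{d,l})=U^{-1}+O_{p,n}(\triangle)$ via the Riccati monotonicity $E_{i}\preceq (A_{G,\theta,i}^{d,l,e})^{T}E_{i+1}A_{G,\theta,i}^{d,l,e}+\triangle W_{i}$, whereas the paper states this identity here but defers the underlying uniform bound $\|E_{\theta,i}^{d,l}(Y)\|_{2}=O_{p,n}(1)$ to a separate proposition (proved there through the cost interpretation of the Riccati solution rather than the operator inequality).
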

\begin{proof}
	We prove that by using the finite difference equation (\ref{eq:discrete_Riccati_equation})
	and reversed time induction. It is obvious the property holds for
	$E_{\theta,n}^{d,l}(Y)$. Now let us assume it holds for $i+1$ i.e.
	that: $E_{\theta,i+1}^{d,l}(Y)=\left(\begin{array}{cc}
	R_{\theta,i+1}^{d,l} & h_{\theta,i+1}^{d,l}(Y)\\
	h_{\theta,i+1}^{d,l}(Y)^{T} & \alpha_{\theta,i+1}^{d,l}(Y)
	\end{array}\right).$ Now let us compute $\left(A_{G,\theta,i}^{d,l,e}\right)^{T}E_{\theta,i+1}^{d,l}(Y)A_{G,\theta,i}^{d,l,e}$,
	we have:{\small{}
		\[
		\begin{array}{l}
		\left(A_{G,\theta,i}^{d,l,e}\right)^{T}E_{\theta,i+1}^{d,l}(Y)A_{G,\theta,i}^{d,l,e}\\
		
		=\left(\begin{array}{cc}
		R_{\theta,i+1}^{d,l} & h_{\theta,i+1}^{d,l}(Y)\\
		h_{\theta,i+1}^{d,l}(Y)^{T} & \alpha_{\theta,i+1}^{d,l}(Y)
		\end{array}\right)+\Delta\left(\begin{array}{ll}
		R_{\theta,i+1}^{d,l}A_{\theta}(\overline{X_{\upsilon}^{d,l-1}}(t_{i}),t_{i}) & 0_{d,1}\\
		h_{\theta,i+1}^{d,l}(Y)^{T}A_{\theta}(\overline{X_{\upsilon}^{d,l-1}}(t_{i}),t_{i}) & 0
		\end{array}\right)\\
		+\Delta\left(\begin{array}{cc}
		A_{\theta}(\overline{X_{\upsilon}^{d,l-1}}(t_{i}),t_{i})^{T}R_{\theta,i+1}^{d,l} & A_{\theta}(\overline{X_{\upsilon}^{d,l-1}}(t_{i}),t_{i})^{T}h_{\theta,i+1}^{d,l}(Y)\\
		0_{1,d} & 0
		\end{array}\right)\\
		+\Delta^{2}\left(\begin{array}{ll}
		A_{\theta}(\overline{X_{\upsilon}^{d,l-1}}(t_{i}),t_{i})^{T}R_{\theta,i+1}^{d}A_{\theta}(t_{i}) & 0_{d,1}\\
		0_{1,d} & 0
		\end{array}\right)\\
		=F(E_{\theta,i+1}^{d,l}(Y)):=\left(\begin{array}{cc}
		F_{1}(R_{\theta,i+1}^{d,l}) & F_{2}(h_{\theta,i+1}^{d,l}(Y\text{)})\\
		F_{2}(h_{\theta,i+1}^{d,l}(Y\text{)})^{T} & F_{3}(\alpha_{\theta,i+1}^{d,l}(Y))
		\end{array}\right)
		\end{array}
		\]
	}where the symmetric matrix $F$ is easily derivable by identification
	and the $d$ dimensional square matrix $F_{1}$ only depends on $R_{\theta,i+1}^{d,l}$.
	Now let us compute the expression in $[U+\triangle B_{1}^{T}E_{\theta,i+1}^{d,l}(Y)B_{1}]^{-1}$,
	we obtain: {\small{}
		\[
		\begin{array}{lll}

		[U+\triangle B_{1}^{T}E_{\theta,i+1}^{d,l}(Y)B_{1}]^{-1} & = & \left[U+\triangle\left(\begin{array}{cc}
		B^{T}R_{\theta,i+1}^{d,l} & B^{T}h_{\theta,i+1}^{d,l}(Y)\end{array}\right)\left(\begin{array}{c}
		B\\
		0_{1,d_{u}}
		\end{array}\right)\right]^{-1}\\
		& = & \left[U+\triangle\left(\begin{array}{cc}
		B^{T}R_{\theta,i+1}^{d,l} & B^{T}h_{\theta,i+1}^{d,l}(Y)\end{array}\right)\left(\begin{array}{c}
		B\\
		0_{1,d_{u}}
		\end{array}\right)\right]^{-1}\\
		& = & \left[U+\triangle B^{T}R_{\theta,i+1}^{d,l}B\right]^{-1}:=G(R_{\theta,i+1}^{d,l})
		\end{array}
		\]
	}with $G$ a $d_{u}$ symmetric (as the inverse of a symmetric matrix
	square matrix depending only of $R_{\theta,i+1}^{d,l}$. Moreover,
	we have:{\small{}
		\[
		\begin{array}{l}
		\begin{array}{l}
		\left(A_{G,\theta,i}^{d,l,e}\right)^{T}E_{\theta,i+1}^{d,l}(Y)B_{1}\\
		=\left(I_{d+1}+\Delta\left(\begin{array}{ll}
		A_{\theta}(\overline{X_{\upsilon}^{d,l-1}}(t_{i}),t_{i}) & 0_{d,1}\\
		0_{1,d} & 0
		\end{array}\right)\right)^{T}\left(\begin{array}{cc}
		R_{\theta,i+1}^{d,l} & h_{\theta,i+1}^{d,l}(Y)\\
		h_{\theta,i+1}^{d,l}(Y)^{T} & \alpha_{\theta,i+1}^{d,l}(Y)
		\end{array}\right)\left(\begin{array}{c}
		B\\
		0_{1,d_{u}}
		\end{array}\right)\\
		=\left(I_{d+1}+\Delta\left(\begin{array}{ll}
		A_{\theta}(\overline{X_{\upsilon}^{d,l-1}}(t_{i}),t_{i}) & 0_{d,1}\\
		0_{1,d} & 0
		\end{array}\right)\right)^{T}\left(\begin{array}{c}
		R_{\theta,i+1}^{d,l}B\\
		h_{\theta,i+1}^{d,l}(Y)^{T}B
		\end{array}\right)\\
		=\left(\begin{array}{c}
		R_{\theta,i+1}^{d,l}B+\Delta A_{\theta}(\overline{X_{\upsilon}^{d,l-1}}(t_{i}),t_{i})^{T}R_{\theta,i+1}^{d,l}B\\
		h_{\theta,i+1}^{d,l}(Y)^{T}B
		\end{array}\right)
		\end{array}\end{array}
		\]
	}so we can compute:{\scriptsize
		\[
		\begin{array}{l}
		\left(A_{G,\theta,i}^{d,l,e}\right)^{T}E_{\theta,i+1}^{d,l}(Y)B_{1}G(R_{\theta,i+1}^{d,l})B_{1}^{T}E_{\theta,i+1}^{d,l}(Y)A_{G,\theta,i}^{d,l,e}\\
		=\left(\begin{array}{c}
		R_{\theta,i+1}^{d,l}B+\Delta A_{\theta}(\overline{X_{\upsilon}^{d,l-1}}(t_{i}),t_{i})^{T}R_{\theta,i+1}^{d,l}B\\
		h_{\theta,i+1}^{d,l}(Y)^{T}B
		\end{array}\right)G(R_{\theta,i+1}^{d,l})\left(\begin{array}{c}
		R_{\theta,i+1}^{d,l}B+\Delta A_{\theta}(\overline{X_{\upsilon}^{d,l-1}}(t_{i}),t_{i})^{T}R_{\theta,i+1}^{d,l}B\\
		h_{\theta,i+1}^{d,l}(Y)^{T}B
		\end{array}\right)^{T}\\
		=\left(\begin{array}{c}
		(I_{d}+\Delta A_{\theta}(\overline{X_{\upsilon}^{d,l-1}}(t_{i}),t_{i})^{T})R_{\theta,i+1}^{d,l}B\\
		h_{\theta,i+1}^{d,l}(Y)^{T}B
		\end{array}\right)G(R_{\theta,i+1}^{d,l})\left(\begin{array}{cc}
		B^{T}R_{\theta,i+1}^{d,l}(I_{d}+\Delta A_{\theta}(\overline{X_{\upsilon}^{d,l-1}}(t_{i}),t_{i})) & B^{T}h_{\theta,i+1}^{d,l}(Y)\end{array}\right)\\
		:=\left(\begin{array}{cc}
		H_{1}(R_{\theta,i+1}^{d,l}) & H_{2}(R_{\theta,i+1}^{d,l},h_{\theta,i+1}^{d,l})\\
		H_{2}(R_{\theta,i+1}^{d,l},h_{\theta,i+1}^{d,l})^{T} & H_{3}(R_{\theta,i+1}^{d,l},h_{\theta,i+1}^{d,l})
		\end{array}\right).
		\end{array}
		\]
	}By re-injecting all the derived expression in (\ref{eq:discrete_Riccati_equation}),
	we obtain:{\small{}
		\[
		\begin{array}{l}
		E_{\theta,i}^{d,l}(Y)=\left(\begin{array}{cc}
		F_{1}(R_{\theta,i+1}^{d,l}) & F_{2}(h_{\theta,i+1}^{d,l}(Y\text{)})\\
		F_{2}(h_{\theta,i+1}^{d,l}(Y\text{)})^{T} & F_{3}(\alpha_{\theta,i+1}^{d,l}(Y))
		\end{array}\right)+\triangle\left(\begin{array}{cc}
		C^{T}C & -C^{T}Y_{i}\\
		-Y_{i}{}^{T}C & Y_{i}^{T}Y_{i}
		\end{array}\right)\\
		-\triangle\left(\begin{array}{cc}
		H_{1}(R_{\theta,i+1}^{d,l}) & H_{2}(R_{\theta,i+1}^{d,l},h_{\theta,i+1}^{d,l})\\
		H_{2}(R_{\theta,i+1}^{d,l},h_{\theta,i+1}^{d})^{T} & H_{3}(R_{\theta,i+1}^{d,l},h_{\theta,i+1}^{d,l})
		\end{array}\right)
		\end{array}
		\]
	}and $E_{\theta,i}^{d,l}(Y)$ is symmetric and has indeed the required
	form, hence the recursion. We also obtain the following finite difference
	equation:{\small{}
		\[
		\begin{array}{lll}
		R_{\theta,i}^{d,l} & = & R_{\theta,i+1}^{d,l}+\triangle C^{T}C+\Delta\left(R_{\theta,i+1}^{d,l}A_{\theta}(\overline{X_{\upsilon}^{d,l-1}}(t_{i}),t_{i})+A_{\theta}(\overline{X_{\upsilon}^{d,l-1}}(t_{i}),t_{i})^{T}R_{\theta,i+1}^{d,l}\right)\\
		& + & \Delta^{2}A_{\theta}(\overline{X_{\upsilon}^{d,l-1}}(t_{i}),t_{i})^{T}R_{\theta,i+1}^{d,l}A_{\theta}(\overline{X_{\upsilon}^{d,l-1}}(t_{i}),t_{i})\\
		& - & \triangle(I_{d}+\Delta A_{\theta}(\overline{X_{\upsilon}^{d,l-1}}(t_{i}),t_{i})^{T})R_{\theta,i+1}^{d,l}BG(R_{\theta,,i+1}^{d,l})B^{T}R_{\theta,i+1}^{d,l}(I_{d}+\Delta A_{\theta}(\overline{X_{\upsilon}^{d,l-1}}(t_{i}),t_{i}))\\
		h_{\theta,i}^{d,l}(Y) & = & h_{\theta,i+1}^{d,l}(Y)-\triangle C^{T}Y_{i}+\Delta A_{\theta}(\overline{X_{\upsilon}^{d,l-1}}(t_{i}),t_{i})^{T}h_{\theta,i+1}^{d,l}(Y)\\
		& - & \triangle(I_{d}+\Delta A_{\theta}(\overline{X_{\upsilon}^{d,l-1}}(t_{i}),t_{i})^{T})R_{\theta,i+1}^{d,l}BG(R_{\theta,i+1}^{d,l})B^{T}h_{\theta,i+1}^{d,l}(Y)\\
		\alpha_{\theta,i}^{d,l}(Y) & = & \alpha_{\theta,i+1}^{d,l}(Y)+\triangle Y_{i}^{T}Y_{i}-\triangle h_{\theta,i+1}^{d,l}(Y)^{T}BG(R_{\theta,i+1}^{d,l})B^{T}h_{\theta,i+1}^{d,l}(Y).
		\end{array}
		\]
	}From these equation, we easily derive that{\small{}:
		\[
		\begin{array}{l}
		S_{n}^{l}(Y;\theta,x_{0})\,:=X^{e}(0)^{T}E_{\theta,0}^{d,l}(Y)X^{e}(0)\\
		=\left(\begin{array}{cc}
		x_{0}^{T} & 1\end{array}\right)\left(\begin{array}{cc}
		R_{\theta,0}^{d,l} & h_{\theta,0}^{d,l}(Y)\\
		h_{\theta,0}^{d,l}(Y)^{T} & \alpha_{\theta,0}^{d,l}(Y)
		\end{array}\right)\left(\begin{array}{l}
		x_{0}\\
		1
		\end{array}\right)\\
		=\left(\begin{array}{cc}
		x_{0}^{T} & 1\end{array}\right)\left(\begin{array}{c}
		R_{\theta,0}^{d,l}x_{0}+h_{\theta,0}^{d,l}(Y)\\
		h_{\theta,0}^{d,l}(Y)^{T}x_{0}+\alpha_{\theta,0}^{d,l}(Y)
		\end{array}\right)\\
		=x_{0}^{T}R_{\theta,0}^{d,l}x_{0}+2h_{\theta,0}^{d,l}(Y)^{T}x_{0}+\alpha_{\theta,0}^{d,l}(Y)\\
		=x_{0}^{T}R_{\theta,0}^{d,l}x_{0}+2h_{\theta,0}^{d,l}(Y)^{T}x_{0}+\triangle Y_{n}^{T}Y_{n}\\
		+\triangle\sum_{i=0}^{n-1}\left(Y_{i}^{T}Y_{i}-h_{\theta,i+1}^{d,l}(Y)^{T}BG(R_{\theta,i+1}^{d,l})B^{T}h_{\theta,i+1}^{d,l}(Y)\right).
		\end{array}
		\]
	}As in the continuous case, $\inf_{u}C_{T}^{d,l}(Y;\theta,x_{0},u)$
	is quadratic w.r.t to $x_{0}$ with the minimum reached for $\overline{x_{\theta,0}^{d,l}}=-\left(R_{\theta,0}^{d,l}\right)^{-1}h_{\theta,0}^{l}$
	and $S_{n}^{CI,l}(Y;\theta)$ becomes:
	\[
\begin{array}{l}
	S_{n}^{CI,l}(Y;\theta)=-h_{\theta,0}^{d,l}(Y)^{T}\left(R_{\theta,0}^{d,l}\right)^{-1}h_{\theta,0}^{d,l}(Y)+\triangle Y_{n}^{T}Y_{n}
\\+\triangle\sum_{i=0}^{n-1}\left(Y_{i}^{T}Y_{i}-h_{\theta,i+1}^{d,l}(Y)^{T}BG(R_{\theta,i+1}^{d,l})B^{T}h_{\theta,i+1}^{d,l}(Y)\right).
\end{array}
	\]
	The optimal control becomes:
	\[
	\overline{u_{\theta,i}^{d,l}}=-G(R_{\theta,i+1}^{d,l})B^{T}\left(R_{\theta,i+1}^{d,l}\left(I_{d}+\Delta A_{\theta}(\overline{X_{\upsilon}^{d,l-1}}(t_{i}),t_{i})\right)\overline{X_{\upsilon}^{d,l}}(t_{i})+h_{\theta,i+1}^{d,l}(Y)\right)
	\]
	and from this we derive the finite difference equation followed by
	$\overline{X_{\upsilon}^{d,l}}$. 
\end{proof}

\begin{prop}
	\label{prop:continuous_R0_invertibility_criteria} Given $\theta\in\Theta$
	and $l\in\mathbb{N}$, $R_{\theta}^{l}(0)$ is invertible if and only
	if:
	
	1) the matrix
	\begin{equation}
	O_{\theta}^{l}(T)=\int_{0}^{T}\left(C\Phi_{\theta}^{l}(t,0)\right)^{T}C\Phi_{\theta}^{l}(t,0)dt\label{eq:continuous_invertibility_criteria}
	\end{equation}
	is invertible, where $\Phi_{\theta}^{l}$ is the resolvant of (\ref{eq:extended_ODEmodel}),
	
	2) the following implication holds:
	\begin{equation}
	\left\Vert CX_{\theta,x_{0}^{1}}^{l}-CX_{\theta,x_{0}^{2}}^{l}\right\Vert _{L^{2}}^{2}=0\Longrightarrow x_{0}^{1}=x_{0}^{2}.\label{eq:second_continuous_invertibility_criteria}
	\end{equation}
\end{prop}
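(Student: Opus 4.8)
The plan is to recognise $R_\theta^l(0)$ as the value matrix of the \emph{homogeneous} linear--quadratic problem sitting inside $S^l$, so that positive semidefiniteness is automatic, and then to identify its kernel with the unobservable subspace of the $(l-1)$-linearised dynamics. Fix $\theta$ and $l$ and treat $t\mapsto A_\theta(\overline{X_\upsilon^{l-1}}(t),t)$ as a given continuous matrix field (it does not involve the minimisation variable). Inspecting the $R$-block of (\ref{eq:continous_accurate_Riccati_equation}), $R_\theta^l$ is exactly the Riccati solution of the finite-horizon control problem $\min_u\int_0^T\!\big(\|Cz(t)\|_2^2+u(t)^{T}Uu(t)\big)dt$ subject to $\dot z=A_\theta(\overline{X_\upsilon^{l-1}}(t),t)z+Bu$, $z(0)=x_0$, with zero terminal weight. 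Since $C^{T}C\succeq0$ and $U\succ0$, the Riccati solution is well defined on all of $[0,T]$ and, by standard LQ theory (cf. \citet{Sontag1998}), the infimum is attained and equal to $x_0^{T}R_\theta^l(0)x_0\ge0$. Hence $R_\theta^l(0)\succeq0$, so $R_\theta^l(0)$ is invertible if and only if it is positive definite, i.e.\ if and only if that infimum is strictly positive for every $x_0\ne0$.

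Next I would identify $\ker R_\theta^l(0)$. At the optimal pair $(\overline z,\overline u)$ the minimum value is $\int_0^T\!\big(\|C\overline z(t)\|_2^2+\overline u(t)^{T}U\overline u(t)\big)dt$, which vanishes precisely when both integrands vanish a.e.\ on $[0,T]$; because $U\succ0$ this forces $\overline u\equiv0$, hence $\overline z(\cdot)=\Phi_\theta^l(\cdot,0)x_0$ and $C\Phi_\theta^l(t,0)x_0=0$ for all $t\in[0,T]$. Conversely, taking $u\equiv0$ makes the value $0$ as soon as $C\Phi_\theta^l(\cdot,0)x_0\equiv0$. Therefore $\ker R_\theta^l(0)=\{x_0:\,C\Phi_\theta^l(t,0)x_0=0\ \text{for all } t\in[0,T]\}$, and since $O_\theta^l(T)\succeq0$ with $x_0^{T}O_\theta^l(T)x_0=\int_0^T\|C\Phi_\theta^l(t,0)x_0\|_2^2\,dt$, this set equals $\ker O_\theta^l(T)$. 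Consequently $R_\theta^l(0)$ is invertible if and only if $O_\theta^l(T)$ is invertible, which is assertion 1).

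Finally, the map $x_0\mapsto CX_{\theta,x_0}^l(\cdot)$ is affine with linear part $x_0\mapsto C\Phi_\theta^l(\cdot,0)x_0$, so $\|CX_{\theta,x_0^1}^l-CX_{\theta,x_0^2}^l\|_{L^2}^2=(x_0^1-x_0^2)^{T}O_\theta^l(T)(x_0^1-x_0^2)$; hence the implication (\ref{eq:second_continuous_invertibility_criteria}) holds if and only if $O_\theta^l(T)$ has trivial kernel, i.e.\ if and only if 1) holds. Combined with the previous paragraph this gives the chain ``$R_\theta^l(0)$ invertible $\Longleftrightarrow$ 1) $\Longleftrightarrow$ 2)'', which is the claim.

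The only steps that are not mere bookkeeping are, first, the completion-of-squares computation identifying the $R$-block of the Riccati equation with the LQ value matrix --- the same computation simultaneously delivers $R_\theta^l(0)\succeq0$, attainment of the optimum, and the feedback form $\overline u=-U^{-1}B^{T}R_\theta^l\,\overline z$ --- and, second, the implication in the second paragraph from ``the optimal value is zero'' to ``the zero control is optimal'', which is exactly where $U\succ0$ and attainment of the minimum are both needed; once these are in place, the remainder is routine observability algebra.
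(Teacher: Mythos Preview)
Your proposal is correct and is essentially the same argument as the paper's: both recognise that the $R$-block of the Riccati equation is the value matrix of the homogeneous LQ problem $\min_u\int_0^T(\|Cz\|_2^2+u^TUu)\,dt$ along the $(l{-}1)$-linearised dynamics, deduce $R_\theta^l(0)\succeq0$, show that $x_0^TR_\theta^l(0)x_0=0$ forces $\overline u\equiv0$ and hence $x_0\in\ker O_\theta^l(T)$ (and conversely via $u\equiv0$), and then use linearity of $x_0\mapsto\Phi_\theta^l(\cdot,0)x_0$ to pass between 1) and 2). The paper phrases the first step through an auxiliary extended cost $\widetilde C_T^l$ with weight $\left(\begin{smallmatrix}C^TC&0\\0&0\end{smallmatrix}\right)$, but this is exactly your homogeneous problem embedded in the $(d{+}1)$-dimensional system; your presentation is in fact slightly cleaner in making both inclusions of the kernel equality explicit.
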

\begin{proof}
	Similarly as in the discrete case, we have $\min_{u}\widetilde{C}_{T}^{l}(\theta,z_{0},u)=z_{0}^{T}\widetilde{E}_{\theta,0}^{l}(Y)z_{0}$
	with: 
	\[
	\widetilde{C}_{T}^{l}(\theta,z_{0},u)=\int_{0}^{T}Z_{\theta,z_{0},u}^{l}(t)^{T}\left(\begin{array}{cc}
	C^{T}C & 0_{d,1}\\
	0_{1,d} & 0
	\end{array}\right)Z_{\theta,z_{0},u}^{l}(t)+u(t)^{T}Uu(t)dt
	\]
	linked to the finite difference equation: 
	\begin{equation}
	\left\{ \begin{array}{l}
	\dot{Z^{l}}_{\theta,z_{0},u}(t)=A_{G,\theta}(\overline{X_{\upsilon}^{l-1}}(t),t)Z_{\theta,z_{0},u}^{l}(t)+B_{1}u(t)\\
	Z_{\theta,z_{0},u}^{l}(0)=z_{0}
	\end{array}\right.\label{eq:continuous_invertibility_criteria_ode_model}
	\end{equation}
	here $\widetilde{E}_{\theta}^{l}$ is the solution of the Riccati
	equation (\ref{eq:continuous_Riccati_equation}) where the weight
	function $t\longmapsto W(t)$ has been replaced by $\left(\begin{array}{cc}
	C^{T}C & 0_{d,1}\\
	0_{1,d} & 0
	\end{array}\right),$ henceforth $\widetilde{E}_{\theta}^{l}$ and $E_{\theta}^{l}$ share
	the same component $R_{\theta}^{l}$. We define $\widetilde{x_{0}}$
	s.t $\widetilde{x_{0}}^{T}R_{\theta}^{l}(0)\widetilde{x_{0}}=0$ and
	$\widetilde{z_{0}}=\left(\begin{array}{c}
	\widetilde{x_{0}}\\
	0
	\end{array}\right)$. From this, we derive $\min_{u}\widetilde{C}_{T}^{l}(\theta,\widetilde{z_{0}},u)=0$
	which implies $\overline{u_{\theta}^{l}}=0$. The ODE corresponding
	to the optimal trajectory becomes:
	\[
	\begin{array}{l}
	\dot{Z^{l}}_{\theta,z_{0}}(t)=A_{G,\theta}(\overline{X_{\upsilon}^{l-1}}(t),t)Z_{\theta,z_{0}}^{l}(t)\end{array}
	\]
	Hence the optimal trajectory is $Z_{\theta,\widetilde{z_{0}}}^{l}(t)=\left(\begin{array}{c}
	\Phi_{\theta}^{l}\left(t,0\right)\widetilde{x_{0}}\\
	0
	\end{array}\right)$. Thus, the minimal cost has the simpler expression
\[
 \min_{u}\widetilde{C}_{T}^{l}(\theta,\widetilde{z_{0}},u)=\int_{0}^{T}\widetilde{x_{0}}^{T}\Phi_{\theta}^{l}\left(t,0\right)^{T}\left(C^{T}C\right)\Phi_{\theta}^{l}\left(t,0\right)\widetilde{x_{0}}dt
\]
	so we have: $\widetilde{x_{0}}^{T}R_{\theta}^{l}(0)\widetilde{x_{0}}=\min_{u}\widetilde{C}_{T}^{l}(\theta,\widetilde{z_{0}},u)=\widetilde{x_{0}}^{T}O_{\theta}^{l}(T).\widetilde{x_{0}}=0$ 

	and we can conclude.
	
	We now demonstrate ($\ref{eq:second_continuous_invertibility_criteria})\Longrightarrow(\ref{eq:continuous_invertibility_criteria}$),
	we choose an arbitrary initial condition $\widetilde{x_{0}}$ respecting
	$\widetilde{x_{0}}^{T}O_{\theta}^{l}(T)\widetilde{x_{0}}=\widetilde{x_{0}}^{T}\left(\int_{0}^{T}\Phi_{\theta}^{l}\left(t,0\right)^{T}C^{T}C\Phi_{\theta}^{l}\left(t,0\right)dt\right)\widetilde{x_{0}}=0$.
	We denote $X_{\theta,x_{0}^{1}}^{l}$ and $X_{\theta,x_{0}^{2}}^{l}$
	the solutions of $\dot{X}=A(\overline{X_{\upsilon}^{l-1}}(t),t)X$
	respectively with initial conditions $x_{0}^{1}$ and $x_{0}^{2}\,:=\widetilde{x_{0}}+x_{0}^{1}$.
	By superposition principle for linear ODE solution we have $\Phi_{\theta}^{l}\left(t,0\right)\widetilde{x_{0}}=X_{\theta,x_{0}^{1}}^{l}(t)-X_{\theta,x_{0}^{2}}^{l}(t)$,
	hence (\ref{eq:second_continuous_invertibility_criteria}) imposes
	that $X_{\theta,x_{0}^{1}}^{l}(0)=X_{\theta,x_{0}^{2}}^{l}(0)$ and
	so $\widetilde{x_{0}}=0$.
	
	To demonstrate the implication ($\ref{eq:continuous_invertibility_criteria})\Longrightarrow(\ref{eq:second_continuous_invertibility_criteria}$),
	we take two functions $X_{\theta,x_{0}^{1}}^{l}$ and $X_{\theta,x_{0}^{2}}^{l}$
	assumed to verify $\left\Vert CX_{\theta,x_{0}^{1}}^{l}-CX_{\theta,x_{0}^{2}}^{l}\right\Vert _{L^{2}}^{2}=0$
	solutions of $\dot{X}=A(\overline{X_{\upsilon}^{l-1}}(t),t)X$ with
	respective initial conditions $x_{0}^{1}$ and $x_{0}^{2}$. We know
	the function difference $X_{\theta,\widetilde{x_{0}}}^{0}(t):=X_{\theta,x_{0}^{1}}^{l}(t)-X_{\theta,x_{0}^{2}}^{l}(t)$
	is equal to $\Phi_{\theta}^{l}\left(t,0\right)\widetilde{x_{0}}$
	and so $\widetilde{x_{0}}^{T}O_{\theta}^{l}(T)\widetilde{x_{0}}=0$,
	$O_{\theta}^{l}(T)$ invertibility gives us $x_{0}^{1}-x_{0}^{2}=0$.
\end{proof}
\begin{prop}
	\label{prop:E_bounded_probability}Under conditions C1-C2, we have
	$\left\Vert E_{\theta,i}^{d,l}(Y)\right\Vert _{2}=O_{p,n}(1)$ and
	$\left\Vert E_{\theta}^{l}(t)\right\Vert _{2}=O_{n}(1)$ respectively
	uniformly on $\mathbb{N}\times\left\llbracket 0,\,n\right\rrbracket \times\Theta$
	and on $\mathbb{N}\times\left[0,\,T\right]\times\Theta$.
\end{prop}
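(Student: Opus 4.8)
The plan is to read off both matrices as value (cost-to-go) matrices of finite-horizon LQR subproblems, squeeze them between $0$ and an explicit upper bound obtained by feeding in the zero control, and check that this upper bound is $O_{n}(1)$ (resp.\ $O_{p,n}(1)$) uniformly in $l$, $t$ (resp.\ $i$) and $\theta$. Concretely, for every state $z$ one has $z^{T}E_{\theta}^{l}(t)z=\inf_{u}\int_{t}^{T}\left(Z^{l,e}(s)^{T}W(s)Z^{l,e}(s)+u(s)^{T}Uu(s)\right)ds$ with $Z^{l,e}$ solving (\ref{eq:extended_ODEmodel}) from $Z^{l,e}(t)=z$, and likewise $z^{T}E_{\theta,i}^{d,l}(Y)z=\inf_{u}\left(\sum_{j=i}^{n}\triangle Z_{j}^{T}W_{j}Z_{j}+\sum_{j=i}^{n-1}\triangle u_{j}^{T}Uu_{j}\right)$ with $Z$ solving (\ref{eq:extended_finitediffmodel}) from $Z_{i}=z$; this is the standard dynamic-programming identity behind the Riccati recursions (\ref{eq:continuous_Riccati_equation}) and (\ref{eq:discrete_Riccati_equation}) (see \citet{Sontag1998}). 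Since $W_{i}=\left(\begin{array}{cc}C & -Y_{i}\end{array}\right)^{T}\left(\begin{array}{cc}C & -Y_{i}\end{array}\right)\succeq0$, $W(t)$ differs from $\left(\begin{array}{cc}C & -Y^{*}(t)\end{array}\right)^{T}\left(\begin{array}{cc}C & -Y^{*}(t)\end{array}\right)$ only by adding $d^{'}\sigma^{2}$ to its bottom-right entry (so $W(t)\succeq0$ as well), and $U\succ0$, every stage cost is nonnegative; hence $E_{\theta}^{l}(t)\succeq0$ and $E_{\theta,i}^{d,l}(Y)\succeq0$. In the discrete case a backward induction simultaneously shows that $U+\triangle B_{1}^{T}E_{\theta,i+1}^{d,l}(Y)B_{1}\succeq U\succ0$ at each step, so (\ref{eq:discrete_Riccati_equation}) is well posed; in the continuous case the a priori upper bound obtained below also rules out finite escape time for (\ref{eq:continuous_Riccati_equation}).

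For the continuous bound I would plug $u\equiv0$ into the infimum, getting $z^{T}E_{\theta}^{l}(t)z\leq\int_{t}^{T}Z_{0}(s)^{T}W(s)Z_{0}(s)ds$ with $Z_{0}$ the uncontrolled extended trajectory started at $z$. Conditions C1--C2 give $\left\Vert A_{G,\theta}(\cdot,\cdot)\right\Vert_{2}\leq\overline{A}$ uniformly over its first argument, so Gr\"onwall yields $\left\Vert Z_{0}(s)\right\Vert_{2}\leq e^{\overline{A}T}\left\Vert z\right\Vert_{2}$ on $[t,T]$, while $Y^{*}=CX^{*}$ continuous on the compact $[0,T]$ gives $\overline{W}:=\sup_{s\in[0,T]}\left\Vert W(s)\right\Vert_{2}<\infty$. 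Hence $0\preceq E_{\theta}^{l}(t)\preceq Te^{2\overline{A}T}\overline{W}\,I_{d+1}$, a bound independent of $n$, $l$, $t$ and $\theta$, so $\left\Vert E_{\theta}^{l}(t)\right\Vert_{2}=O_{n}(1)$ uniformly. The discrete case runs in parallel up to the a priori step: since $\left\Vert A_{G,\theta,j}^{d,l,e}\right\Vert_{2}\leq1+\triangle\overline{A}$ and $(1+\triangle\overline{A})^{n}\leq e^{\overline{A}T}$, the uncontrolled discrete trajectory obeys $\left\Vert Z_{j}\right\Vert_{2}\leq e^{\overline{A}T}\left\Vert z\right\Vert_{2}$ for $i\leq j\leq n$, whence
\[
0\preceq E_{\theta,i}^{d,l}(Y)\preceq e^{2\overline{A}T}\left(\sum_{j=0}^{n}\triangle\left\Vert W_{j}\right\Vert_{2}\right)I_{d+1},
\]
and the whole matter reduces to showing $\sum_{j=0}^{n}\triangle\left\Vert W_{j}\right\Vert_{2}=O_{p,n}(1)$.

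I expect this reduction step to be where the real work lies, and it is the only place requiring care. The factorization gives $\left\Vert W_{j}\right\Vert_{2}\leq\left\Vert C\right\Vert_{F}^{2}+\left\Vert Y_{j}\right\Vert_{2}^{2}$, hence $\sum_{j=0}^{n}\triangle\left\Vert W_{j}\right\Vert_{2}\leq(T+\triangle)\left\Vert C\right\Vert_{F}^{2}+\sum_{j=0}^{n}\triangle\left\Vert Y_{j}\right\Vert_{2}^{2}$; the tempting further bound $(n+1)\triangle\max_{j}\left\Vert Y_{j}\right\Vert_{2}^{2}$ is \emph{not} $O_{p,n}(1)$, because $\max_{j}\left\Vert\epsilon_{j}\right\Vert_{2}^{2}$ grows like $\log n$, so one must keep the $\triangle$-weighted average and treat it by a law-of-large-numbers argument. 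Writing $Y_{j}=CX^{*}(t_{j})+\epsilon_{j}$ and expanding the square: $\sum_{j}\triangle\left\Vert CX^{*}(t_{j})\right\Vert_{2}^{2}$ is a Riemann sum converging to $\int_{0}^{T}\left\Vert CX^{*}(t)\right\Vert_{2}^{2}dt$; the cross term $2\sum_{j}\triangle\,(CX^{*}(t_{j}))^{T}\epsilon_{j}$ has zero mean and variance $4\sigma^{2}\sum_{j}\triangle^{2}\left\Vert CX^{*}(t_{j})\right\Vert_{2}^{2}=O(\triangle)$; and $\sum_{j}\triangle\left\Vert\epsilon_{j}\right\Vert_{2}^{2}$ has mean $(n+1)\triangle d^{'}\sigma^{2}\to Td^{'}\sigma^{2}$ and variance $2d^{'}\sigma^{4}(n+1)\triangle^{2}=O(\triangle)$. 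Thus $\sum_{j=0}^{n}\triangle\left\Vert Y_{j}\right\Vert_{2}^{2}$ converges in probability to $\int_{0}^{T}\left\Vert CX^{*}(t)\right\Vert_{2}^{2}dt+Td^{'}\sigma^{2}$, so $\sum_{j=0}^{n}\triangle\left\Vert W_{j}\right\Vert_{2}=O_{p,n}(1)$; and since the bounding constant carries no dependence on $l$, $i$ or $\theta$, the estimate $\left\Vert E_{\theta,i}^{d,l}(Y)\right\Vert_{2}=O_{p,n}(1)$ is uniform. Everything else --- the LQR value-function representation, the positive semi-definiteness, the Gr\"onwall estimates --- is classical; the genuine point to get right is that the measurement noise must be absorbed through this $\triangle$-weighted ($L^{1}$-type) average of the $W_{j}$ rather than through a supremum over the sample points.
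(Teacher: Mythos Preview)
Your proof is correct and follows essentially the same route as the paper: represent $E$ as the LQR cost-to-go matrix, sandwich it between $0$ and the zero-control cost, bound the uncontrolled trajectory via Gr\"onwall using the uniform bound $\overline{A}$ from C1--C2, and reduce to showing $\sum_{j}\triangle\left\Vert W_{j}\right\Vert_{2}=O_{p,n}(1)$. The only cosmetic difference is that the paper handles this last step by the strong law of large numbers plus Prohorov's theorem (almost sure convergence $\Rightarrow$ tightness), whereas you compute means and variances directly to get convergence in probability; both work, and your explicit warning against the $\max_{j}\left\Vert Y_{j}\right\Vert_{2}^{2}$ bound is a nice touch the paper omits.
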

\begin{proof}
	By property of the solution of Riccati equation, we have 
	
	\[
	z_{k}^{T}E_{\theta,k}^{d,l}(Y)z_{k}=\min_{u=\left\{ u_{k},\ldots u_{n-1}\right\} }\left\{ \Delta\sum_{i=k}^{n}Z_{\theta,z_{k},u}^{d,l}(t_{i})^{T}W_{i}Z_{\theta,z_{k},u}^{d,l}(t_{i})+\Delta\sum_{i=k}^{n-1}u_{i}{}^{T}Uu_{i}\right\} \geq0
	\]
	where $Z_{\theta,z_{k},u}^{d,l}$ is the solution of 
	\[
	\left\{ \begin{array}{l}
	Z_{\theta,z_{k},u}^{d,l}(t_{i+1})=A_{G,\theta,i}^{d,l,e}Z_{\theta,z_{k},u}^{d,l}(t_{i})+B_{1}\triangle\overline{u_{i}}\\
	Z_{\theta,z_{k},u}^{d,l}(t_{k})=z_{k}
	\end{array}\right.
	\]
	where $\overline{u}=\left\{ \overline{u_{i}}\right\} _{i\in\left\llbracket k,\,n-1\right\rrbracket }$
	is the optimal control. This holds for every possible values $z_{k}$.
	In particular, we have the bound $z_{k}^{T}E_{\theta,k}^{d,l}(Y)z_{k}\leq\Delta\sum_{i=k}^{n}Z_{\theta,z_{k}}^{d,l}(t_{i})^{T}W_{i}Z_{\theta,z_{k}}^{d,l}(t_{i})$
	with $Z_{\theta,z_{k}}^{d,l}$ solution of:
	\[
	\left\{ \begin{array}{l}
	Z_{\theta,z_{k}}^{d,l}(t_{i+1})=A_{G,\theta,i}^{d,l,e}Z_{\theta,z_{k}}^{d,l}(t_{i})\\
	Z_{\theta,z_{k}}^{d,l}(t_{k})=z_{k}.
	\end{array}\right.
	\]
	According to the discrete Gronwall lemma \ref{lem:disc_Gronwall_lemma},
	and conditions 1-2, we have the bound $Z_{\theta,z_{k}}^{d,l}(t_{i})\leq e^{T\overline{A}}z_{k}$
	for $i\in\left\llbracket k,\,n-1\right\rrbracket $, thus $z_{k}^{T}E_{\theta,k}^{d,l}(Y)z_{k}\leq\Delta\sum_{i=k}^{n}\left(e^{T\overline{A}}z_{k}\right)^{T}W_{i}e^{T\overline{A}}z_{k}\leq\Delta\sum_{i=0}^{n}\left(e^{T\overline{A}}z_{k}\right)^{T}W_{i}e^{T\overline{A}}z_{k}$.
	Using Cauchy-Schwarz inequality we have $z_{k}^{T}E_{\theta,k}^{d,l}(Y)z_{k}\leq\left\Vert z_{k}\right\Vert _{2}^{2}e^{2T\overline{A}}\Delta\sum_{i=0}^{n}\left\Vert W_{i}\right\Vert _{2}$
	for all $z_{k}$. Since {\small{}
		\[
		\begin{array}{lll}
		\triangle\sum_{i=1}^{n}Y_{i} & = & \triangle\sum_{i=1}^{n}\left(Y_{i}-CX_{\theta^{*},x_{0}^{*}}(t_{i})+CX_{\theta^{*},x_{0}^{*}}(t_{i})\right)\\
		& = & \frac{1}{n}\sum_{i=1}^{n}\left(Y_{i}-CX_{\theta^{*},x_{0}^{*}}(t_{i})\right)+\triangle\sum_{i=1}^{n}CX_{\theta^{*},x_{0}^{*}}(t_{i})\\
		& \longrightarrow & \mathbb{E}_{\left(\theta^{*},x_{0}^{*}\right)}\left[\epsilon_{1}\right]+\int_{0}^{T}CX_{\theta^{*},x_{0}^{*}}(t)dt=\int_{0}^{T}CX_{\theta^{*},x_{0}^{*}}(t)dt
		\end{array}
		\]
	}almost surely (The first term is obtained by using the law of large
	number, the second as the limit of a Riemann sum) and {\small{}
		\[
		\begin{array}{lll}
		\triangle\sum_{i=1}^{n}\left\Vert Y_{i}\right\Vert _{2}^{2} & = & \triangle\sum_{i=1}^{n}\left\Vert Y_{i}-CX_{\theta^{*},x_{0}^{*}}(t_{i})\right\Vert _{2}^{2}+\triangle\sum_{i=1}^{n}\left\Vert CX_{\theta^{*},x_{0}^{*}}(t_{i})\right\Vert _{2}^{2}\\
		& + & 2\triangle\sum_{i=1}^{n}\left(Y_{i}-CX_{\theta^{*},x_{0}^{*}}(t_{i})\right)^{T}CX_{\theta^{*},x_{0}^{*}}(t_{i})\\
		& = & \frac{1}{n}\sum_{i=1}^{n}\left\Vert \epsilon_{i}\right\Vert _{2}^{2}+\triangle\sum_{i=1}^{n}\left\Vert CX_{\theta^{*},x_{0}^{*}}(t_{i})\right\Vert _{2}^{2}+2\frac{1}{n}\sum_{i=1}^{n}\epsilon_{i}^{T}CX_{\theta^{*},x_{0}^{*}}(t_{i})\\
		& \longrightarrow & d^{'}\sigma^{2}+\int_{0}^{T}\left\Vert CX_{\theta^{*},x_{0}^{*}}(t)\right\Vert _{2}^{2}dt
		\end{array}
		\]
	}by using Prohorov's theorem (\citet{Vaart1998} theorem 2.4), we
	know that $\Delta\sum_{i=0}^{n}\left\Vert W_{i}\right\Vert _{2}=O_{p,n}(1)$.
	Henceforth $\left\Vert E_{\theta,k}^{d,l}(Y)\right\Vert _{2}=O_{p,n}(1)$
	uniformly on $\mathbb{N}\times\left\llbracket 0,\,n\right\rrbracket \times\Theta$.
	Similarly as in the discrete case, we derive the following bound in
	the continuous case:
	\[
	z_{t}^{T}E_{\theta}^{l}(t)z_{t}\leq Z_{\theta,z_{t}}^{l}(T)^{T}QZ_{\theta,z_{t}}^{l}(T)+\int_{t}^{T}Z_{\theta,z_{t}}^{l}(s)^{T}W(s)Z_{\theta,z_{t}}^{l}(s)ds
	\]
	with 
	\[
	\left\{ \begin{array}{l}
	\dot{Z_{\theta,z_{t}}^{l}}(s)=A_{G,\theta}(\overline{X_{\theta}^{l-1}}(s),s)Z_{\theta,z_{t}}^{l}(s)\\
	Z_{\theta,z_{t}}^{l}(t)=z_{t}
	\end{array}\right.
	\]
	for all $z_{t}$. Using the continuous version of the Gronwall lemma
	we obtain $Z_{\theta,z_{t}}^{l}(s)\leq e^{T\overline{A}}z_{t}$ and
	so $z_{t}^{T}E_{\theta}^{l}(t)z_{t}\leq\left\Vert z_{t}\right\Vert _{2}^{2}e^{2T\overline{A}}(\left\Vert Q\right\Vert _{2}+\int_{0}^{T}\left\Vert W(s)\right\Vert _{2}ds)$.
	This allows us to conclude $\left\Vert E_{\theta}^{l}(t)\right\Vert _{2}=O_{n}(1)$
	uniformly on $\mathbb{N}\times\left[0,\,T\right]\times\Theta$.
\end{proof}

\section{Well-posedness nature of control problem (\ref{eq:as_Continous_cost})-(\ref{eq:asymptotic_ODEmodel})}

In order to derive asymptotic properties of $\widehat{\theta}^{T}$
and $\widehat{\theta}^{T,CI}$, we need to ensure the well-posedness
nature of the optimal control problems defining our estimators. That
is, the existence of an admissible solution for the problem (\ref{eq:as_Continous_cost})-(\ref{eq:asymptotic_ODEmodel})
for each possible value $\left(\theta,x_{0}\right)$. For linear models
this is a classic outcome of Linear-Quadratic theory. In the non-linear
case however it depends on the cost function and the vector field
regularities w.r.t state and control. That is why we prove here the
existence of $\min_{u\in\mathbb{U}_{\theta,x_{0}}}C_{T}^{\infty}(\theta,x_{0},u)$
where $\mathbb{U}_{\theta,x_{0}}$ is the set of feedback controls:
\[
\mathbb{U}_{\theta,x_{0}}=\left\{ u_{\upsilon}(t)=U^{-1}B^{T}(R_{\upsilon}(t)X_{\upsilon,u_{\upsilon}}(t)+h_{\upsilon}(t)),\,\left(R_{\upsilon},h_{\upsilon}\right)\in L^{2}\left(\left[0,\,T\right],\,\mathbb{R}^{d\times d}\times\mathbb{R}^{d}\right)\right\} .
\]
The proof is almost similar as the one presented in \citet{BrunelClairon_Pontryagin2017}
but with the additional requirement that $\overline{u}_{\theta,x_{0}}$
belongs to $\mathbb{U}_{\theta,x_{0}}$ which in turn calls for the
introduction of C3. 
\begin{thm}
	\label{thm:existence_solution_as_oca}Under conditions C1-C2-C3 for
	all signals $Y\in L^{2}(\left[0,\,T\right],\mathbb{R}^{d^{'}})$ and
	for all $\theta\in\Theta$, the asymptotic control problem (\ref{eq:as_Continous_cost})-(\ref{eq:asymptotic_ODEmodel})
	admits at least one solution. It exists a control $\overline{u}_{\theta}$
	belonging to $\mathbb{U}_{\theta,x_{0}}$ that minimizes the cost,
	i.e. $C_{T}^{\infty}(\theta,x_{0},\overline{u}_{\theta,x_{0}})=\min_{u\in\mathbb{U}_{\theta,x_{0}}}C_{T}^{\infty}(\theta,x_{0},u)$. 
\end{thm}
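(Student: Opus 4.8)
The plan is to apply the direct method of the calculus of variations to the functional $u\longmapsto C_T^{\infty}(\theta,x_0,u)$ of (\ref{eq:as_Continous_cost}) over $\mathbb{U}_{\theta,x_0}$, regarded as a subset of the Hilbert space $L^{2}([0,T],\mathbb{R}^{d_u})$. Since $C_T^{\infty}(\theta,x_0,u)\ge d^{'}\sigma^{2}\ge0$ for every admissible $u$, the value $m:=\inf_{u\in\mathbb{U}_{\theta,x_0}}C_T^{\infty}(\theta,x_0,u)$ is finite and nonnegative, so one can fix a minimizing sequence $(u_k)_k\subset\mathbb{U}_{\theta,x_0}$. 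Because $U$ is positive definite, the inequality $\int_0^{T}u_k(t)^{T}Uu_k(t)\,dt\le C_T^{\infty}(\theta,x_0,u_k)=m+o(1)$ bounds $(u_k)$ in $L^{2}$; by reflexivity of $L^{2}$, a subsequence (still denoted $u_k$) satisfies $u_k\rightharpoonup\overline u$ weakly for some $\overline u\in L^{2}([0,T],\mathbb{R}^{d_u})$.

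The key step is to upgrade this to a strong convergence of the associated states. Let $X_k:=X_{\theta,x_0,u_k}^{\infty,e}$ solve the state equation (\ref{eq:asymptotic_ODEmodel}). Under C1--C2 one has $\|A_{G,\theta}(x,t)\|_2\le\overline A$ for all $(x,t)$, so Gronwall's lemma yields $\sup_k\sup_{t\in[0,T]}\|X_k(t)\|_2\le e^{T\overline A}\bigl(\|X^{e}(0)\|_2+\|B\|_2\sqrt T\,\sup_k\|u_k\|_{L^{2}}\bigr)$; inserting this into the integral form $X_k(t)=X^{e}(0)+\int_0^{t}\bigl(A_{G,\theta}(X_k(s),s)X_k(s)+B_1u_k(s)\bigr)\,ds$ and using Cauchy--Schwarz on the control part shows that $(X_k)$ is uniformly bounded with a $\sqrt{\cdot}$ modulus of continuity, hence relatively compact in $C([0,T],\mathbb{R}^{d+1})$ by Arzel\`a--Ascoli. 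Along a further subsequence $X_k\to\overline X$ uniformly; letting $k\to\infty$ in the integral form --- the drift term by uniform convergence of $X_k$ together with the continuity in C2 and dominated convergence, the control term by testing $u_k\rightharpoonup\overline u$ against the $L^{2}$ function $s\longmapsto B_1\mathbf{1}_{[0,t]}(s)$ --- shows that $\overline X$ is a trajectory of (\ref{eq:asymptotic_ODEmodel}) associated with $\overline u$.

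Lower semicontinuity now closes the minimization: $\int_0^{T}\|CX_k-Y^{*}\|_2^{2}\,dt\to\int_0^{T}\|C\overline X-Y^{*}\|_2^{2}\,dt$ by the uniform convergence $X_k\to\overline X$, whereas $u\longmapsto\int_0^{T}u^{T}Uu\,dt$ is convex and norm-continuous on $L^{2}$, hence weakly lower semicontinuous, so $\int_0^{T}\overline u^{T}U\overline u\,dt\le\liminf_k\int_0^{T}u_k^{T}Uu_k\,dt$. Adding the two contributions gives $C_T^{\infty}(\theta,x_0,\overline u)\le\liminf_k C_T^{\infty}(\theta,x_0,u_k)=m$, so $\overline u$ attains the infimum over $L^{2}$. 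It remains to verify $\overline u\in\mathbb{U}_{\theta,x_0}$, and this is exactly the point where C3 is needed: since $B$ has independent columns, $B^{T}B$ is invertible and $B^{T}$ maps $\mathbb{R}^{d}$ onto $\mathbb{R}^{d_u}$, so $h:=B(B^{T}B)^{-1}U\overline u\in L^{2}([0,T],\mathbb{R}^{d})$ and, with $R\equiv0$, $U^{-1}B^{T}\bigl(R(t)\overline X(t)+h(t)\bigr)=\overline u(t)$ a.e.; since $\overline X$ is the closed-loop trajectory generated by $\overline u$, the pair $(R,h)$ realizes $\overline u$ as an element of $\mathbb{U}_{\theta,x_0}$, and $\overline u_{\theta,x_0}:=\overline u$ is the desired minimizer.

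I expect the main obstacle to be the second step: turning the merely weak convergence of the controls into a state convergence strong enough to pass to the limit both inside the nonlinear drift and inside the quadratic tracking integral, which is not weakly continuous in the state. The coercivity and weak sequential compactness in the control variable, the weak lower semicontinuity of the quadratic control penalty, and --- once C3 is available --- the membership $\overline u\in\mathbb{U}_{\theta,x_0}$ are all comparatively routine.
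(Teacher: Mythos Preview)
Your proof is correct and follows essentially the same route as the paper: minimizing sequence, coercivity from the positive-definite $U$, weak $L^{2}$ compactness of the controls, Gronwall plus Arzel\`a--Ascoli for uniform compactness of the states, passage to the limit in the integral equation, and finally use of C3 to exhibit a feedback representation of the limit control. The only cosmetic differences are that the paper invokes a general lower-semicontinuity result (Theorem~6.38 in \citet{clarke2013variationalcalculus}) where you give the elementary split into the state-tracking term and the convex control penalty, and that the paper builds the feedback pair via the Moore--Penrose inverse $(U^{-1}B^{T})^{+}=(BU^{-2}B^{T})^{-1}BU^{-1}$ with an arbitrary bounded $R$ whereas you take $R\equiv0$ and $h=B(B^{T}B)^{-1}U\overline u$; both constructions rest on the same surjectivity of $B^{T}$ guaranteed by C3.
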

\begin{proof}
	Thanks to model regularity conditions, we know it exists admissible
	controls $u_{\upsilon}\in L^{2}(\left[0,\,T\right],\,\mathbb{R}^{d_{u}})$.
	Thus, we can consider an admissible minimizing sequence $\left\{ u_{\upsilon}^{i}\right\} _{i\in\mathbb{N}}$.
	Since we have $\lambda\left\Vert u_{\upsilon}^{i}\right\Vert _{L^{2}}^{2}\leq C_{T}^{\infty}(\upsilon,u_{\upsilon}^{i})$
	with $\lambda$ the maximum eigenvalue of $U$, the sequence $\left\{ u_{\upsilon}^{i}\right\} $
	is uniformly bounded in $L^{2}(\left[0,\,T\right],\mathbb{R}^{d})$,
	a reflexive Banach space, according to \textsc{theorem} III.27 in
	\citet{Brezis1983}, its exists a subsequence converging weakly to
	a limit $\overline{u}_{\theta,x_{0}}$. Using H{\"o}lder inequality $\left\Vert fg\right\Vert _{L^{1}}\leq\left\Vert f\right\Vert _{L^{2}}\left\Vert g\right\Vert _{L^{2}}$
	, we derive the subsequence boundedness in $L^{1}(\left[0,\,T\right],\mathbb{R}^{d})$.
	(For the sake of notation, we still denote the subsequence by $\left\{ u_{\upsilon}^{i}\right\} $).
	For the following, $\widetilde{u}_{\upsilon}$ denotes the upper
	bound of the sequence $\left\{ u_{\upsilon}^{i}\right\} $.
	
	Knowing that:
	\[
	\begin{array}{lll}
	\left\Vert \dot{X}_{\upsilon,u_{\upsilon}^{i}}(t)-\dot{X}_{\upsilon}(t)\right\Vert _{2} & \leq & \left\Vert A_{\theta}(X_{\upsilon,u_{\upsilon}^{i}}(t),t)X_{\upsilon,u_{\upsilon}^{i}}(t)-A_{\theta}(X_{\upsilon}(t),t)X_{\upsilon}(t)\right\Vert _{2}\\
	& + & \left\Vert Bu_{\upsilon}^{i}(t)\right\Vert _{2}\\
	& \leq & \left(\overline{A}+\overline{A}_{x}\right)\left\Vert X_{\upsilon,u_{\upsilon}^{i}}(t)-X_{\upsilon}(t)\right\Vert _{2}+\left\Vert B\right\Vert _{2}\left\Vert u_{\upsilon}^{i}(t)\right\Vert _{2}
	\end{array}
	\]
	here $\overline{A}_{x}<\infty$ is the Lipschitz constant of $A$
	w.r.t state, which existence is ensured by C2 as a continuous function
	on a compact subset. Gronwall's lemma gives us:
	\[
	\begin{array}{lll}
	\left\Vert X_{\theta,x_{0},u_{\upsilon}^{i}}(t)-X_{\upsilon}(t)\right\Vert _{2} & \leq & \left\Vert B\right\Vert _{2}\int_{0}^{t}e^{\left(\overline{A}+\overline{A}_{x}\right)\left(t-s\right)}\left\Vert u_{\upsilon}^{i}(s)\right\Vert _{2}ds\end{array}
	\]
	and so: 
	\[
	\begin{array}{lll}
	\left\Vert X_{\upsilon,u_{\upsilon}^{i}}(t)\right\Vert _{2} & \leq & \left\Vert X_{\upsilon,u_{\upsilon}^{i}}(t)-X_{\upsilon}(t)\right\Vert _{2}+\left\Vert X_{\upsilon}(t)\right\Vert _{2}\\
	& \leq & \left\Vert B\right\Vert _{2}e^{\left(\overline{A}+\overline{A}_{x}\right)t}\int_{0}^{t}\left\Vert u_{\upsilon}^{i}(s)\right\Vert _{2}ds+\sup_{t\in\left[0,\,T\right]}\left\Vert X_{\upsilon}(t)\right\Vert _{2}.
	\end{array}
	\]
	The control $u_{\upsilon}^{i}$ being bounded in $L^{1}(\left[0,\,T\right],\mathbb{R}^{d})$,
	$X_{\upsilon,u_{\upsilon}^{i}}$ (modulo a subsequence) is
	uniformly bounded on $\left[0,\,T\right]$ and since: 
	\[
	\begin{array}{lll}
	\left\Vert \dot{X}_{\upsilon,u_{\upsilon}^{i}}(t)\right\Vert _{2} & \leq & \left\Vert \dot{X}_{\upsilon,u_{\upsilon}^{i}}(t)-\dot{X}_{\upsilon}(t)\right\Vert _{2}+\left\Vert \dot{X}_{\upsilon}(t)\right\Vert _{2}\\
	& \leq & \left\Vert A_{\theta}(X_{\upsilon,u_{\upsilon}^{i}}(t),t)X_{\upsilon,u_{\upsilon}^{i}}(t)-A_{\theta}(X_{\upsilon}(t),t)X_{\upsilon}(t)\right\Vert _{2}\\
	& + & \left\Vert Bu_{\upsilon}^{i}(t)\right\Vert _{2}+\left\Vert A_{\theta}(X_{\upsilon}(t),t)X_{\upsilon}(t)\right\Vert _{2}\\
	& \leq & \left(\overline{A}+\overline{A}_{x}\right)\left\Vert X_{\upsilon,u_{\upsilon}^{i}}(t)-X_{\upsilon}(t)\right\Vert _{2}+\left\Vert Bu_{\upsilon}^{i}(t)\right\Vert _{2}\\
	& + & \overline{A}\sup_{t\in\left[0,\,T\right]}\left\Vert X_{\upsilon}(t)\right\Vert _{2}
	\end{array}
	\]
	$\dot{X}_{\upsilon,u_{\upsilon}^{i}}$ is also bounded in
	$L^{2}(\left[0,\,T\right],\mathbb{R}^{d})$, hence (again modulo a
	subsequence) $\dot{X}_{\theta,x_{0},u_{\upsilon}^{i}}$ converges
	weakly to a limit $\overline{\dot{X}}_{\upsilon}$. 
	
	Since the sequence $X_{\upsilon,u_{\upsilon}^{i}}$ is equicontinuous
	because $\left\Vert X_{\upsilon,u_{\upsilon}^{i}}(t)-X_{\upsilon,u_{\upsilon}^{i}}(t')\right\Vert _{2}\leq\overline{A}\left|t-t^{'}\right|+\left\Vert B\right\Vert _{2}\widetilde{u}\sqrt{\left|t-t^{'}\right|}$,
	we can invoke Arzela-Ascoli theorem to obtain the uniform convergence
	(modulo a subsequence) of $X_{\upsilon,u_{\upsilon}^{i}}$
	toward a continuous function $\overline{X_{\upsilon}}$ on $\left[0,\,T\right]$.
	Using the identity, $X_{\upsilon,u_{\upsilon}^{i}}(t)=x_{0}+\int_{0}^{t}\dot{X}_{\upsilon,u_{\upsilon}^{i}}(s)ds$
	and by taking the limit we know $\overline{X_{\upsilon}}$ is
	an absolutely continuous function with $\overline{\dot{X}}_{\upsilon}(t)=\dot{\overline{X_{\upsilon}}}(t)$
	a.e. 
	
	Since $B$ has independant columns then $\left(U^{-1}B^{T}\right)^{+}=\left(BU^{-2}B^{T}\right)^{-1}BU^{-1}$
	is the Moore-Penrose inverse of $U^{-1}B^{T}$ and is a left inverse
	of $U^{-1}B^{T}$, and reminding that the sequences $\left\{ X_{\upsilon,u_{\upsilon}^{i}}\right\} _{i\in\mathbb{N}}$
	and $\left\{ u_{\upsilon}^{i}\right\} _{i\in\mathbb{N}}$ are
	bounded and convergent, we can construct a bounded sequence $\left\{ R_{\upsilon}^{i},\,h_{\upsilon}^{i}\right\} $
	which converges to a limit $\left\{ \overline{R_{\upsilon}},\,\overline{h_{\upsilon}}\right\} $
	such that it respect the relation $R_{\upsilon}^{i}(t)X_{\upsilon,u_{\upsilon}^{i}}(t)+h_{\upsilon}^{i}(t)=\left(U^{-1}B^{T}\right)^{+}u_{\upsilon}^{i}(t)$
	and $\overline{R_{\upsilon}}(t)\overline{X_{\upsilon}}(t)+\overline{h_{\upsilon}}(t)=\left(U^{-1}B^{T}\right)^{+}\overline{u}_{\upsilon}(t)$.
	For example, we could take for $R_{\upsilon}^{i}$ a continuous
	bounded function and $h_{\upsilon}^{i}(t)=-R_{\upsilon}^{i}(t)X_{\upsilon,u_{\upsilon}^{i}}(t)+\left(U^{-1}B^{T}\right)^{+}u_{\upsilon}^{i}(t)$
	and thus, $\overline{u}_{\upsilon}\in\mathbb{U}_{\upsilon}$.
	
	We respect the hypothesis of \textsc{theorem} 6.38 in \citet{clarke2013variationalcalculus}
	and we derive from that: 
	\[
	C_{T}^{\infty}(\upsilon,\overline{u}_{\upsilon})\leq\liminf_{i\longrightarrow\infty}C_{T}^{\infty}(\upsilon,u_{\upsilon}^{i})=\inf_{u}C_{T}^{\infty}(\upsilon,u).
	\]
	We now demonstrate $\overline{u}_{\\upsilon}$ is an admissible
	process (thus the infimum is reached). Using uniform convergence we
	have $\overline{X}_{\upsilon}(0)=x_{0}$. The last thing left
	to show is that $\overline{X}_{\upsilon}$ is a trajectory corresponding
	to $\overline{u}_{\upsilon}$, thus $\overline{X}_{\upsilon}=X_{\upsilon,\overline{u}_{\upsilon}}$.
	For any measurable subset $S$ of $\left[0,\,T\right]$ we have: 
	\[
	\int_{S}\left(\dot{X}_{\upsilon,u_{\upsilon}^{i}}(t)-A_{\theta}(X_{\upsilon,u_{\upsilon}^{i}}(t),t)X_{\upsilon,u_{\upsilon}^{i}}(t)-Bu_{\upsilon}^{i}(t)\right)dt=0
	\]
	by weak convergence we directly obtain $\int_{S}\dot{X}_{\upsilon,u_{\upsilon}^{i}}(t)dt\longrightarrow\int_{S}\dot{\overline{X_{\upsilon}}}dt$
	and $\int_{S}Bu_{\upsilon}^{i}(t)dt\longrightarrow\int_{S}B\overline{u}_{\upsilon}(t)dt.$
	Using continuity of the vector field on a compact and invoking dominated
	convergence theorem: $\int_{S}A_{\theta}(X_{\upsilon,u_{\upsilon}^{i}}(t),t)X_{\upsilon,u_{\upsilon}^{i}}(t)dt\longrightarrow\int_{S}A_{\theta}(\overline{X_{\upsilon}}(t),t)\overline{X_{\upsilon}}(t)$.
	By taking the limit we obtain: 
	\[
	\int_{S}\left(\dot{\overline{X}}_{\upsilon}(t)-A_{\theta}(\overline{X_{\upsilon}}(t),t)\overline{X_{\upsilon}}(t)-B\overline{u}_{\upsilon}(t)\right)dt=0.
	\]
	Hence, we have indeed demonstrate $\overline{u}_{\upsilon}\in L^{2}(\left[0,\,T\right],\mathbb{R}^{d})$
	and
	\[
	\left\{ \begin{array}{l}
	\dot{\overline{X}}_{\upsilon}(t)=A(\overline{X_{\upsilon}}(t),t)\overline{X_{\upsilon}}(t)+B\overline{u}_{\upsilon}(t)\,\textrm{a.e}\,\textrm{on}\,\left[0,\,T\right]\\
	\overline{X_{\upsilon}}(0)=x_{0}
	\end{array}\right.
	\]
	which finishes the proof.
\end{proof}

\section{Consistency}

\subsection{$\widehat{\theta^{T}}$, general case}
\begin{thm}
	\label{thm:Consistency_theorem}Under conditions C1 to C5, we have
	\[
	\left(\widehat{\theta},\,\widehat{x_{0}}\right)\longrightarrow\left(\theta^{*},\,x_{0}^{*}\right)
	\]
	in probability when $\left(l,n\right)\longrightarrow\infty$.
\end{thm}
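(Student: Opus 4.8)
The estimator $(\widehat\theta,\widehat x_0)$ minimises $S_n^l(Y;\cdot,\cdot)$ over the compact set $\Theta\times\chi$, so I would run the classical argmin-consistency argument (e.g.\ \citet{Vaart1998}, Theorem~5.7): establish (i) a uniform law of large numbers $\sup_{(\theta,x_0)\in\Theta\times\chi}\bigl|S_n^l(Y;\theta,x_0)-S^\infty(\theta,x_0)\bigr|=o_p(1)$ as $(l,n)\to\infty$, (ii) that $(\theta^*,x_0^*)$ is the unique and well-separated minimiser of the deterministic limit $S^\infty$, and (iii) conclude from the optimality of $(\widehat\theta,\widehat x_0)$. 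The backbone of (i) is the decomposition
\[
\bigl|S_n^l(Y;\theta,x_0)-S^\infty(\theta,x_0)\bigr|\le\bigl|S_n^l(Y;\theta,x_0)-S^l(\theta,x_0)\bigr|+\bigl|S^l(\theta,x_0)-S^\infty(\theta,x_0)\bigr|,
\]
the first term controlled by letting $n\to\infty$ uniformly in $l$, the second by letting $l\to\infty$.

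\textbf{Step 1: uniform convergence.} By Propositions~\ref{prop:continuous_Riccati_accurate_representation} and~\ref{prop:discrete_Riccati_accurate_representation}, $S_n^l$ and $S^l$ are the same algebraic functionals of the solutions of the reduced systems (\ref{eq:discrete_accurate_Riccati_equation}) and (\ref{eq:continous_accurate_Riccati_equation}). For the first term I would (a) bound the Euler-scheme error between $(R_{\theta,i}^{d,l},h_{\theta,i}^{d,l}(Y^*),\overline{X_\upsilon^{d,l}})$ and $(R_\theta^l,h_\theta^l,\overline{X_\upsilon^l})$ by $O_n(\triangle)$, uniformly in $(\theta,x_0,l)$ --- all vector fields being Lipschitz with constants built from $\overline A$ (Conditions~C1--C2) and the feedback gain staying bounded by Proposition~\ref{prop:E_bounded_probability} uniformly in $l$, which via Gronwall also keeps $\overline{X_\upsilon^l}$ bounded uniformly in $l$; (b) show the measurement-error part $h_{\theta,i}^{d,l}(Y)-h_{\theta,i}^{d,l}(Y^*)$, being linear in the $\epsilon_i$ with uniformly bounded deterministic coefficients, is $O_{p,n}(\sqrt\triangle)$ in sup-norm, so its linear and quadratic contributions to the $\triangle$-weighted sums are $O_{p,n}(\triangle)=o_{p,n}(1)$ uniformly in $\theta$; and (c) replace $\triangle\sum_i Y_i^{T}Y_i$ and $\triangle\sum_i C^{T}Y_i$ by $\int_0^T(\|Y^*(t)\|_2^2+d'\sigma^2)\,dt$ and $\int_0^T C^{T}Y^*(t)\,dt$ by the Riemann-sum/law-of-large-numbers computation already performed inside the proof of Proposition~\ref{prop:E_bounded_probability}. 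For the second term I would prove $\sup_{\Theta\times\chi}|S^l-S^\infty|=o_l(1)$ by a Gronwall estimate on the coupled system (\ref{eq:continous_accurate_Riccati_equation}): using the uniform bounds of Proposition~\ref{prop:E_bounded_probability}, the map $\overline{X_\upsilon^{l-1}}\mapsto\overline{X_\upsilon^{l}}$ contracts, so the iterates form a Cauchy sequence in $C([0,T])$, hence $\overline{X_\upsilon^l}\to\overline{X_\upsilon^\infty}$, $R_\theta^l\to R_\theta^\infty$, $h_\theta^l\to h_\theta^\infty$ uniformly in $(\theta,x_0)$, and $S^l\to S^\infty$ by continuity of the functional. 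Adding (a)--(c) to the second term yields the uniform convergence along any $(l,n)\to(\infty,\infty)$.

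\textbf{Steps 2 and 3: identification and conclusion.} Since $C_T^\infty(\theta,x_0,u)\ge d'\sigma^2$ for every admissible $u$, we have $S^\infty(\theta,x_0)\ge d'\sigma^2$; and at $(\theta^*,x_0^*)$ the pair (true trajectory $X_{\theta^*,x_0^*}$, control $u\equiv0$) is a fixed point of the linearisation scheme and is optimal for the associated LQ problem, so $S^\infty(\theta^*,x_0^*)=d'\sigma^2$. Conversely, if $S^\infty(\theta,x_0)=d'\sigma^2$, then for the optimal control $\overline{u_\theta^\infty}$ (which exists by Theorem~\ref{thm:existence_solution_as_oca}) one has $\int_0^T\bigl(\|C\overline{X_\upsilon^\infty}(t)-Y^*(t)\|_2^2+\overline{u_\theta^\infty}(t){}^{T}U\overline{u_\theta^\infty}(t)\bigr)\,dt=0$; positive-definiteness of $U$ forces $\overline{u_\theta^\infty}\equiv0$, hence the fixed-point trajectory solves (\ref{eq:ODEmodel}) and equals $X_{\theta,x_0}$, and $CX_{\theta,x_0}\equiv CX_{\theta^*,x_0^*}$ on $[0,T]$; Condition~C5 then gives $(\theta,x_0)=(\theta^*,x_0^*)$. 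Continuity of $\theta\mapsto A_\theta$ (C1--C2) makes $S^\infty$ continuous on the compact $\Theta\times\chi$, so the minimiser is well separated: $\inf_{(\theta,x_0)\notin V}S^\infty(\theta,x_0)>d'\sigma^2$ for every neighbourhood $V$ of $(\theta^*,x_0^*)$. Finally, with $(\widehat\theta,\widehat x_0)=\arg\min S_n^l(Y;\cdot,\cdot)$,
\[
S^\infty(\widehat\theta,\widehat x_0)\le S_n^l(Y;\widehat\theta,\widehat x_0)+o_p(1)\le S_n^l(Y;\theta^*,x_0^*)+o_p(1)=S^\infty(\theta^*,x_0^*)+o_p(1),
\]
the outer steps by Step~1 and the middle by optimality; thus $S^\infty(\widehat\theta,\widehat x_0)\to d'\sigma^2$ in probability and, by well-separation, $\mathbb{P}\bigl((\widehat\theta,\widehat x_0)\in V\bigr)\to1$ for every $V$, i.e.\ $(\widehat\theta,\widehat x_0)\to(\theta^*,x_0^*)$ in probability.

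\textbf{Main obstacle.} The crux is Step~1, and within it the requirement that the $n\to\infty$ convergence be \emph{uniform in $l$} together with the $l\to\infty$ convergence of the successive-linearisation scheme: one must check that the constants in the Euler estimate and in the fixed-point contraction depend only on $\overline A$, $T$, $U$ and the uniform $O_{p,n}(1)$/$O_n(1)$ bounds of Proposition~\ref{prop:E_bounded_probability}, never on $l$ or $n$, so that the double limit can be taken along arbitrary sequences. The second delicate point is the bookkeeping of the measurement error inside $h_{\theta,i}^{d,l}(Y)$, which enters the profiled cost both linearly and quadratically through the $\triangle$-weighted sums and for which one needs an $o_{p,n}(1)$ bound uniform in $\theta$.
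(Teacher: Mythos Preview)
Your proposal is correct and follows essentially the same route as the paper: the same two-term decomposition $S_n^l-S^\infty=(S_n^l-S^l)+(S^l-S^\infty)$, the same use of the reduced Riccati representations (Propositions~\ref{prop:continuous_Riccati_accurate_representation}--\ref{prop:discrete_Riccati_accurate_representation}) together with the uniform bounds of Proposition~\ref{prop:E_bounded_probability}, the same identifiability argument via $u\equiv0$ and Condition~C5, and the same appeal to Theorem~5.7 of \citet{Vaart1998}.

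Two small points of comparison. First, where you separate the Euler error (your (a)) from the noise perturbation of $h$ (your (b)), the paper's Lemma~\ref{lem:h_R_continuity_limits} treats both at once, comparing $h_{\upsilon,i}^{d,l}(Y)$ directly to $h_\upsilon^l(t_i)$ and obtaining the same $-\triangle C^T\sum_{j\ge i}\epsilon_j+O_{p,n}(\triangle)$ expansion that underlies your (b). Second, for the $l\to\infty$ term the paper does not argue that the map $\overline{X_\upsilon^{l-1}}\mapsto\overline{X_\upsilon^{l}}$ is a contraction; it instead imports the uniform convergence of the successive-approximation controls from Theorem~2 of \citet{CimenBanks2004} (see Proposition~\ref{prop:identifiability_condition}) and then uses Lemma~\ref{lem:Riccati_optimal_traj_uniform_convergence} to propagate this to $R_\upsilon^l,h_\upsilon^l$. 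Your contraction claim would require an independent argument; otherwise this is the one place where your sketch is thinner than the paper's. The uniformity-in-$l$ concern you flag is genuine and the paper handles it only implicitly, relying on the fact that all Gronwall constants come from Proposition~\ref{prop:E_bounded_probability}, which is uniform in $l$.
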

\begin{proof}
	First, we decompose the difference $S^{\infty}(\upsilon)-S_{n}^{l}(Y;\upsilon)$
	in two terms we can analyse separately:
	\[
	\begin{array}{lll}
	S^{\infty}(\upsilon)-S_{n}^{l}(Y;\upsilon) & = & S^{\infty}(\upsilon)-S^{l}(\upsilon)+S^{l}(\upsilon)-S_{n}^{l}(Y;\upsilon).\end{array}
	\]
	By using the continuous profiled costs formula given by proposition
	\ref{prop:continuous_Riccati_accurate_representation}, we obtain
	for the first term{\small{}
		\[
		\begin{array}{l}
		S^{\infty}(\upsilon)-S^{l}(\upsilon)\\
		=x_{0}^{T}R_{\upsilon}^{\infty}(0)x_{0}+2x_{0}^{T}h_{\upsilon}^{\infty}(0)+\int_{0}^{T}\left(Y^{*}(t)^{T}Y^{*}(t)+d^{'}\sigma^{2}-h_{\upsilon}^{\infty}(t)^{T}BU^{-1}B^{T}h_{\upsilon}^{\infty}(t)\right)dt\\
		-x_{0}^{T}R_{\upsilon}^{l}(0)x_{0}-2x_{0}^{T}h_{\upsilon}^{l}(0)-\int_{0}^{T}\left(Y^{*}(t)^{T}Y^{*}(t)+d^{'}\sigma^{2}-h_{\upsilon}^{l}(t)^{T}BU^{-1}B^{T}h_{\upsilon}^{l}(t)\right)dt\\
		=x_{0}^{T}\left(R_{\upsilon}^{\infty}(0)-R_{\upsilon}^{l}(0)\right)x_{0}+2x_{0}^{T}\left(h_{\upsilon}^{\infty}(0)-h_{\upsilon}^{l}(0)\right)\\
		+\int_{0}^{T}\left(h_{\upsilon}^{l}(t)^{T}BU^{-1}B^{T}\left(h_{\upsilon}^{l}(t)-h_{\upsilon}^{\infty}(t)\right)+\left(h_{\upsilon}^{l}(t)-h_{\upsilon}^{\infty}(t)\right)^{T}BU^{-1}B^{T}h_{\upsilon}^{\infty}(t)\right)dt.
		\end{array}
		\]
	}By using lemma \ref{lem:Riccati_optimal_traj_uniform_convergence},
	we conclude that $\sup_{\upsilon\in\varUpsilon}\left|S^{\infty}(\upsilon)-S^{l}(\upsilon)\right|=o_{l}(1)$.
	Now, we control the difference $S^{l}(\upsilon)-S_{n}^{l}(Y;\upsilon)$
	by using the form given by proposition \ref{prop:continuous_Riccati_accurate_representation}
	and \ref{prop:discrete_Riccati_accurate_representation} : {\small{}
		\[
		\begin{array}{l}
		S^{l}(\upsilon)-S_{n}^{l}(Y;\upsilon) = \\ x_{0}^{T}\left(R_{\upsilon}^{l}(0)-R_{\upsilon,0}^{d,l}\right)x_{0}+2x_{0}^{T}\left(h_{\upsilon}^{l}(0)-h_{\upsilon,0}^{d,l}(Y)\right)\\
		 +  \int_{0}^{T}\left(Y^{*}(t)^{T}Y^{*}(t)+d^{'}\sigma^{2}\right)dt-\triangle\sum_{i=0}^{n-1}Y_{i}^{T}Y_{i}\\
		 -  \left(\int_{0}^{T}h_{\upsilon}^{l}(t)^{T}BU^{-1}B^{T}h_{\upsilon}^{l}(t)dt-\triangle\sum_{i=0}^{n-1}h_{\upsilon,i+1}^{d,l}(Y)^{T}BG(R_{\upsilon,i+1}^{d,l})B^{T}h_{\upsilon,i+1}^{d,l}(Y)\right).
		\end{array}
		\]
	}Lemma \ref{lem:h_R_continuity_limits}, gives us $\sup_{\upsilon\in\varUpsilon}\left\Vert R_{\upsilon,i}^{d,l}-R_{\upsilon}^{l}(t_{i})\right\Vert _{2}=o_{p,n}(1)$,
	$\sup_{\upsilon\in\varUpsilon}\left\Vert h_{\upsilon,i}^{d,l}-h_{\upsilon}^{l}(t_{i})\right\Vert _{2}=o_{p,n}(1)$,
	so $x_{0}^{T}\left(R_{\upsilon}^{l}(0)-R_{\upsilon,0}^{d,l}\right)x_{0}+2x_{0}^{T}\left(h_{\upsilon}^{l}(0)-h_{\upsilon,0}^{d,l}(Y)\right)=o_{p,n}(1)$
	uniformly on $\varUpsilon$. Regarding the second term, we have: {\small{}
		\[
		\begin{array}{l}
		\int_{0}^{T}\left(Y^{*}(t)^{T}Y^{*}(t)+d^{'}\sigma^{2}\right)dt-\triangle\sum_{i=0}^{n}Y_{i}^{T}Y_{i}\\
		=\triangle\sum_{i=1}^{n}\left(Y^{*}(t_{i})^{T}Y^{*}(t_{i})+d^{'}\sigma^{2}\right)+O_{p,n}(\triangle^{2})\\
		-\triangle\sum_{i=0}^{n}\left(Y^{*}(t_{i})^{T}Y^{*}(t_{i})+2Y^{*}(t_{i})^{T}\varepsilon_{i}+\varepsilon_{i}^{T}\varepsilon_{i}\right)\\
		=-\triangle\sum_{i=1}^{n}Y^{*}(t_{i})^{T}\varepsilon_{i}-\triangle\sum_{i=0}^{n}(d^{'}\sigma^{2}-\varepsilon_{i}^{T}\varepsilon_{i})+O_{p,n}(\triangle^{2}).
		\end{array}
		\]
	}with $\triangle\sum_{i=1}^{n}Y^{*}(t_{i})^{T}\varepsilon_{i}\sim N(0,\sigma^{2}\triangle^{2}\sum_{i=1}^{n}Y^{*}(t_{i})^{T}Y^{*}(t_{i}))=o_{p,n}(1)$
	and $\triangle\sum_{i=0}^{n}(d^{'}\sigma^{2}-\varepsilon_{i}^{T}\varepsilon_{i})=o_{p,n}(1)$
	by using the strong law of large numbers. Thus, 
	\[
	\int_{0}^{T}\left(Y^{*}(t)^{T}Y^{*}(t)+d^{'}\sigma^{2}\right)dt-\triangle\sum_{i=0}^{n}Y_{i}^{T}Y_{i}=o_{p,n}(1).
	\]
	Regarding the third term, we have:{\scriptsize
		\[
		\begin{array}{l}
		\int_{0}^{T}h_{\upsilon}^{l}(t)^{T}BU^{-1}B^{T}h_{\upsilon}^{l}(t)dt-\triangle\sum_{i=0}^{n-1}h_{\upsilon,i+1}^{d,l}(Y)^{T}BG(R_{\upsilon,i+1}^{d,l})B^{T}h_{\upsilon,i+1}^{d,l}(Y)\\
		=\triangle\sum_{i=0}^{n-1}\left(h_{\upsilon}^{l}(t_{i+1})^{T}BU^{-1}B^{T}h_{\upsilon}^{l}(t_{i+1})-h_{\upsilon,i+1}^{d,l}(Y)^{T}BG(R_{\upsilon,i+1}^{d,l})B^{T}h_{\upsilon,i+1}^{d,l}(Y)\right)+O_{n}(\triangle^{2})\\
		=\triangle\sum_{i=0}^{n-1}\left(\left(h_{\upsilon,i+1}^{d,l}(Y)+o_{p,n}(1)\right)^{T}BU^{-1}B^{T}\left(h_{\upsilon,i+1}^{d,l}(Y)+o_{p,n}(1)\right)-h_{\upsilon,i+1}^{d,l}(Y)^{T}BG(R_{\upsilon,i+1}^{d})B^{T}h_{\upsilon,i+1}^{d}(Y)\right)\\
		+O_{n}(\triangle^{2})\\
		=\triangle\sum_{i=0}^{n-1}\left(\left(h_{\upsilon,i+1}^{d,l}(Y)+o_{p,n}(1)\right)^{T}BU^{-1}B^{T}\left(h_{\upsilon,i+1}^{d,l}(Y)+o_{p,n}(1)\right)-h_{\upsilon,i+1}^{d,l}(Y)^{T}BU^{-1}B^{T}h_{\upsilon,i+1}^{d,l}(Y)\right)+O_{p,n}(\triangle)\\
		=\triangle\sum_{i=0}^{n-1}o_{p,n}(1)+O_{p,n}(\triangle)=o_{p,n}(1)
		\end{array}
		\]
	}uniformly on $\Gamma$. The third equality has been obtained by using
	lemma \ref{lem:h_R_continuity_limits} and the fifth one by using
	proposition \ref{prop:E_bounded_probability}. From this, we have
	$\sup_{\upsilon\in\varUpsilon}\left|S^{l}(\upsilon)-S_{n}^{l}(Y;\upsilon)\right|=o_{p,n}(1)$
	and triangular inequality gives us: 
	\[
	\sup_{\upsilon\in\varUpsilon}\left|S^{\infty}(\upsilon)-S_{n}^{l}(Y;\upsilon)\right|=o_{l}(1)+o_{p,n}(1).
	\]
	Application of proposition \ref{prop:identifiability_condition} and
	the fact that $\upsilon\longmapsto S^{\infty}(\upsilon)$ is continuous
	on the compact $\varUpsilon$ gives us the identifiability criteria
	required to apply theorem 5.7 in \citet{Vaart1998}.
\end{proof}
\begin{prop}
	\label{prop:identifiability_condition}Under conditions C1 to C5 $\upsilon^{*}$
	is the unique global minimizer of $S^{\infty}(\upsilon)$ on $\varUpsilon$. 
\end{prop}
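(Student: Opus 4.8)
The plan is to show that $S^{\infty}(\upsilon)\geq d^{'}\sigma^{2}$ for every $\upsilon\in\varUpsilon$, that this bound is attained at $\upsilon^{*}=(\theta^{*},x_{0}^{*})$, and that it is attained \emph{only} there. For the lower bound, positive definiteness of $U$ makes the integrand in the asymptotic cost (\ref{eq:as_Continous_cost}) nonnegative, so $C_{T}^{\infty}(\theta,x_{0},u)\geq d^{'}\sigma^{2}$ for every admissible control, hence $S^{\infty}(\upsilon)\geq d^{'}\sigma^{2}$. To see that equality holds at $\upsilon^{*}$, I would feed $u\equiv 0$ into (\ref{eq:asymptotic_ODEmodel}): the extended true trajectory $X^{*,e}=(X_{\theta^{*},x_{0}^{*}},1)$ solves $\dot{X}^{*,e}=A_{G,\theta^{*}}(X^{*,e},t)X^{*,e}$ by definition of $X_{\theta^{*},x_{0}^{*}}$ from (\ref{eq:ODEmodel}) and of $A_{G,\theta^{*}}$, so $(X^{*,e},0)$ is a self-consistent trajectory--control pair for the asymptotic problem; since $Y^{*}=CX_{\theta^{*},x_{0}^{*}}$, this pair has cost exactly $d^{'}\sigma^{2}$, which is the global lower bound, so $u\equiv 0$ is optimal and $S^{\infty}(\upsilon^{*})=d^{'}\sigma^{2}$.

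For uniqueness, suppose $S^{\infty}(\upsilon)=d^{'}\sigma^{2}$ for some $\upsilon=(\theta,x_{0})$. By the Riccati representation of $S^{\infty}$ (equivalently by Theorem \ref{thm:existence_solution_as_oca}), the infimum is attained by the feedback control $\overline{u_{\theta}^{\infty}}\in\mathbb{U}_{\theta,x_{0}}$ with associated optimal trajectory $\overline{X_{\upsilon}^{\infty,e}}$. Equality with $d^{'}\sigma^{2}$ then forces
\[
\int_{0}^{T}\left(\left\Vert C\overline{X_{\upsilon}^{\infty,e}}(t)-Y^{*}(t)\right\Vert _{2}^{2}+\overline{u_{\theta}^{\infty}}(t)^{T}U\overline{u_{\theta}^{\infty}}(t)\right)dt=0,
\]
and since both terms are nonnegative and $U$ is positive definite we obtain $\overline{u_{\theta}^{\infty}}=0$ a.e.\ and $C\overline{X_{\upsilon}^{\infty,e}}(t)=Y^{*}(t)$ for a.e.\ $t$, hence for all $t\in[0,T]$ by continuity of both sides.

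Finally, with $\overline{u_{\theta}^{\infty}}\equiv 0$ the defining equation (\ref{eq:asymptotic_ODEmodel}) reduces to $\dot{\overline{X_{\upsilon}^{\infty,e}}}=A_{G,\theta}(\overline{X_{\upsilon}^{\infty,e}},t)\overline{X_{\upsilon}^{\infty,e}}$ with initial value $(x_{0},1)$; projecting onto the first $d$ coordinates and using the block form of $A_{G,\theta}$ shows that this first block solves the IVP (\ref{eq:ODEmodel}) for $(\theta,x_{0})$, so it equals $X_{\theta,x_{0}}$. Combined with the previous step this gives $CX_{\theta,x_{0}}(t)=Y^{*}(t)=CX_{\theta^{*},x_{0}^{*}}(t)$ for all $t\in[0,T]$, and condition C5 yields $(\theta,x_{0})=(\theta^{*},x_{0}^{*})$, so $\upsilon^{*}$ is the unique global minimizer. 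The delicate point throughout is the self-referential nature of $\overline{X_{\upsilon}^{\infty,e}}$, which appears both as the frozen coefficient in (\ref{eq:asymptotic_ODEmodel}) and as the optimal trajectory of the very same problem: one must verify that once the optimal control is known to vanish the fixed-point equation collapses onto the original IVP (\ref{eq:ODEmodel}), and symmetrically that the true trajectory furnishes a legitimate self-consistent optimum at $\upsilon^{*}$; everything else is the elementary observation that a vanishing quadratic integrand forces both summands to vanish.
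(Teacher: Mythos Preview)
Your proof is correct and follows essentially the same approach as the paper: establish the lower bound $S^{\infty}(\upsilon)\geq d^{'}\sigma^{2}$ from nonnegativity of the integrand, show the null control attains it at $\upsilon^{*}$ via self-consistency of the true trajectory, and for uniqueness argue that equality forces the optimal control to vanish so that the optimal trajectory solves the original IVP (\ref{eq:ODEmodel}), whence condition C5 applies. If anything, you are more explicit than the paper about the fixed-point nature of $\overline{X_{\upsilon}^{\infty,e}}$ in (\ref{eq:asymptotic_ODEmodel}) and why it collapses to the nonlinear IVP once $\overline{u_{\theta}^{\infty}}\equiv 0$, a step the paper leaves largely implicit.
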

We have shown in theorem \ref{thm:existence_solution_as_oca} the
control problem defining our estimator is well posed i.e. it exists
a control $\overline{u_{v}^{\infty}}$ such that $\overline{u_{v}^{\infty}}=\arg\min_{u\in\mathbb{U}_{\upsilon}}C_{T}^{\infty}(\upsilon,u)$
where $\mathbb{U}_{\upsilon}$ is the set of feedback controls:
\[
\mathbb{U}_{\upsilon}=\left\{ u_{\upsilon}(t)=U^{-1}B^{T}(R_{\upsilon}(t)X_{\upsilon,u_{\upsilon}}(t)+h_{\upsilon}(t)),\,\left(R_{\upsilon},h_{\upsilon}\right)\in L^{2}\left(\left[0,\,T\right],\,\mathbb{R}^{d\times d}\times\mathbb{R}^{d}\right)\right\} .
\]
By using theorem 2 in \citet{CimenBanks2004}, we derive $\overline{u_{v}^{l}}=\arg\min_{u\in\mathbb{U}_{\upsilon}}C_{T}^{l}(\upsilon,u)$
uniformly converges to $\overline{u_{v}^{\infty}}$ , hence:
\[
S^{\infty}(\upsilon)=d^{'}\sigma^{2}+\inf_{u\in\mathbb{U}_{\upsilon}}\left\{ \int_{0}^{T}\left(\left\Vert CX_{\upsilon,u}^{\infty}(t)-Y^{*}(t)\right\Vert _{2}^{2}+u(t)^{T}Uu(t)\right)dt\right\} .
\]
from which we derive the lower bound $S^{\infty}(\upsilon)\geq d^{'}\sigma^{2}$
for all $\upsilon\in\varUpsilon$. We now demonstrate this lower bound
$d^{'}\sigma^{2}$ can only be attained by $\upsilon^{*}$. The associated
ODE to the profiled cost $S^{\infty}(\upsilon)$ is: 
\[
\left\{ \begin{array}{l}
\dot{X_{v,\overline{u_{v}^{\infty}}}^{\infty}}=A_{\theta}(X_{v,\overline{u_{v}^{\infty}}}^{\infty}(t),t)X_{v,\overline{u_{v}^{\infty}}}^{\infty}+B\overline{u_{v}^{\infty}}\\
X_{v,\overline{u_{v}^{\infty}}}^{\infty}(0)=x_{0}.
\end{array}\right.
\]
For all $\upsilon\in\varUpsilon$, let us consider the fonctions $R_{\upsilon}^{0}$
and $h_{\upsilon}^{0}$ such that $R_{\upsilon}^{0}(t)X_{v,\overline{u_{v}^{\infty}}}^{\infty}(t)+h_{\upsilon}^{0}(t)=0$
for all $t\in\left[0,\,T\right]$, in this case the corresponding
control is $u_{\upsilon}^{0}(t)=0$ for all $t\in\left[0,\,T\right]$
and it belongs to $\mathbb{U}_{\upsilon}$. The solution $X_{\upsilon}^{\infty}$
of the corresponding ODE:
\[
\left\{ \begin{array}{l}
\dot{X}_{\upsilon}^{\infty}=A_{\theta}(X_{\upsilon,\overline{u_{\upsilon}^{\infty}}}^{\infty}(t),t)X_{\upsilon}^{\infty}\\
X_{\upsilon}^{\infty}(0)=x_{0}
\end{array}\right.
\]
gives us the upper bound $S^{\infty}(\upsilon)\leq d^{'}\sigma^{2}+\int_{0}^{T}\left\Vert CX_{\upsilon}^{\infty}(t)-Y^{*}(t)\right\Vert _{2}^{2}dt$.
From this, it is obvious that for $v^{*}$, $u_{v^{*}}^{0}$ is also
the optimal control since in this case the ODE becomes:
\[
\left\{ \begin{array}{l}
\dot{X}_{v^{*}}^{\infty}=A_{\theta}(X_{v^{*}}^{\infty}(t),t)X_{v^{*}}^{\infty}\\
X_{v^{*}}^{\infty}(0)=x_{0}^{*}
\end{array}\right.
\]
which is the true model and so $S^{\infty}(v^{*})\leq d^{'}\sigma^{2}+\int_{0}^{T}\left\Vert CX_{v^{*}}^{\infty}(t)-Y^{*}(t)\right\Vert _{2}^{2}dt=d^{'}\sigma^{2}$.
Moreover thanks to the identifiability condition, we know that $\int_{0}^{T}\left\Vert CX_{v}(t)-CX_{v^{*}}(t)\right\Vert _{2}^{2}dt=0$
if and only if $v=v^{*}$. Thus $S^{\infty}(v)=d^{'}\sigma^{2}$ if
and only if $v=v^{*}$.

\subsection{$\widehat{\theta}^{T,CI}$, linear case}
\begin{thm}
	\label{thm:Consistency_theorem_prof_ci}Under conditions LC1-LC2-LC3-LC4-LC5,
	we have $\widehat{\theta}\longrightarrow\theta^{*}$ in probability
	when $n\longrightarrow\infty$.
\end{thm}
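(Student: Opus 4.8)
The strategy is to replay the consistency argument of Theorem~\ref{thm:Consistency_theorem}, using that in the linear case all quantities are independent of $l$, so only the discretization limit $n\to\infty$ has to be controlled, and then to invoke the $M$-estimator consistency result (Theorem~5.7 of \citet{Vaart1998}) for the criterion $\theta\mapsto S_n^{CI}(Y;\theta)$ with deterministic limit $\theta\mapsto S^{CI}(\theta)$. Two ingredients are required: (i) the uniform convergence $\sup_{\theta\in\Theta}\bigl|S_n^{CI}(Y;\theta)-S^{CI}(\theta)\bigr|=o_{p,n}(1)$, and (ii) that $\theta^{*}$ is the unique, well-separated minimizer of $S^{CI}$ on the compact set $\Theta$.

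For (i) I would use the profiled-cost formulas of Propositions~\ref{prop:continuous_Riccati_accurate_representation} and \ref{prop:discrete_Riccati_accurate_representation} (with the superscript $l$ omitted), so that $S^{CI}(\theta)$ and $S_n^{CI}(Y;\theta)$ differ through a boundary term, $-h_{\theta,0}^{d}(Y)^{T}(R_{\theta,0}^{d})^{-1}h_{\theta,0}^{d}(Y)$ against $-h_{\theta}(0)^{T}R_{\theta}(0)^{-1}h_{\theta}(0)$; a data-energy term, $\triangle\sum_i Y_i^{T}Y_i$ against $\int_0^T(Y^{*T}Y^{*}+d'\sigma^2)\,dt$; and a $\triangle\sum_i h^{T}BG(R^{d})B^{T}h$ against $\int_0^T h^{T}BU^{-1}B^{T}h\,dt$ term. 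The last two are dealt with exactly as in Theorem~\ref{thm:Consistency_theorem}: Riemann-sum approximation together with $\triangle\sum_i\|W_i\|_2=O_{p,n}(1)$ and $G(R^{d})=U^{-1}+O_{p,n}(\triangle)$ from Propositions~\ref{prop:E_bounded_probability} and \ref{prop:discrete_Riccati_accurate_representation}, the uniform convergences $\sup_\theta\|h_{\theta,i}^{d}(Y)-h_\theta(t_i)\|_2=o_{p,n}(1)$ and $\sup_\theta\|R_{\theta,i}^{d}-R_\theta(t_i)\|_2=o_{p,n}(1)$ from the discrete-to-continuous convergence lemma~\ref{lem:h_R_continuity_limits}, and the (strong) law of large numbers for the noise contributions $\triangle\sum_i Y^{*}(t_i)^{T}\varepsilon_i$ and $\triangle\sum_i(\varepsilon_i^{T}\varepsilon_i-d'\sigma^2)$. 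The genuinely new element is the boundary term: to pass $R_{\theta,0}^{d}\to R_\theta(0)$ through a matrix inverse \emph{uniformly in $\theta$}, I would invoke Condition~L3, which guarantees $R_\theta(0)$ nonsingular for every $\theta$; since $\theta\mapsto R_\theta(0)$ is continuous (Conditions~L1--L2 and continuous dependence of the Riccati solution on its coefficients) and $\Theta$ is compact, $\inf_{\theta\in\Theta}\sigma_{\min}(R_\theta(0))>0$, so $(R_{\theta,0}^{d})^{-1}\to R_\theta(0)^{-1}$ uniformly, and combined with the uniform boundedness of $h$ from Proposition~\ref{prop:E_bounded_probability} and the identity $a^{T}Ma-b^{T}Nb=(a-b)^{T}M(a+b)+b^{T}(M-N)b$ the boundary term converges uniformly in probability.

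For (ii) I would note $S^{CI}(\theta)=\inf_{x_0}S^{\infty}(\theta,x_0)$ and, as in Proposition~\ref{prop:continuous_Riccati_accurate_representation}, that $x_0\mapsto S^{\infty}(\theta,x_0)=x_0^{T}R_\theta(0)x_0+2x_0^{T}h_\theta(0)+\mathrm{const}$ is a quadratic form whose leading matrix $R_\theta(0)$ is a principal block of the symmetric positive semidefinite matrix $E_\theta(0)$ (its quadratic form being a nonnegative minimal cost), hence positive semidefinite, and invertible by Condition~L3, hence positive definite; the infimum is therefore attained at $\overline{x_{\theta,0}}=-R_\theta(0)^{-1}h_\theta(0)$ and $S^{CI}(\theta)=S^{\infty}(\theta,\overline{x_{\theta,0}})$. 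By Proposition~\ref{prop:identifiability_condition} — whose proof uses only the identifiability Condition~L5 and the well-posedness of the asymptotic control problem, and so applies verbatim here — $\upsilon^{*}=(\theta^{*},x_0^{*})$ is the unique global minimizer of $S^{\infty}$, with value $d'\sigma^2$; hence for $\theta\neq\theta^{*}$ one has $(\theta,\overline{x_{\theta,0}})\neq\upsilon^{*}$ and $S^{CI}(\theta)=S^{\infty}(\theta,\overline{x_{\theta,0}})>d'\sigma^2=S^{CI}(\theta^{*})$, so $\theta^{*}$ is the unique minimizer of $S^{CI}$; continuity of $\theta\mapsto S^{CI}$ on the compact $\Theta$ upgrades this to well-separatedness, and Theorem~5.7 of \citet{Vaart1998} delivers $\widehat{\theta}\to\theta^{*}$ in probability.

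The step I expect to be the main obstacle is the boundary term in (i): one has to check that the measurement error entering $h_{\theta,0}^{d}(Y)$ — the noise in $h_{\theta,n}^{d}(Y)=-\triangle C^{T}Y_n$ propagated backward through the coupled discrete Riccati/adjoint recursion of Proposition~\ref{prop:discrete_Riccati_accurate_representation} — does not accumulate as $n\to\infty$, and that this holds uniformly over $\Theta$; the uniform control of $(R_{\theta,0}^{d})^{-1}$ via Condition~L3 and compactness, the boundedness estimates of Proposition~\ref{prop:E_bounded_probability}, and a discrete Gronwall argument are what make this work.
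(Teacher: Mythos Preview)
Your proposal is correct and follows essentially the same route as the paper's proof: the same three-term decomposition of $S^{CI}(\theta)-S_n^{CI}(Y;\theta)$ via Propositions~\ref{prop:continuous_Riccati_accurate_representation} and~\ref{prop:discrete_Riccati_accurate_representation}, the same appeal to Lemma~\ref{lem:h_R_continuity_limits} and Condition~L3 for the boundary term (the paper likewise notes that $R_{\theta,0}^{d}=R_\theta(0)+o_{p,n}(1)$ makes $R_{\theta,0}^{d}$ eventually invertible and then passes to inverses), and the same reuse of the computations from Theorem~\ref{thm:Consistency_theorem} for the two remaining terms. The only cosmetic difference is in part~(ii): the paper invokes a dedicated Proposition~\ref{prop:identifiability_condition_prof_ci} (proved by the direct lower/upper bound argument $S^{CI}(\theta)\geq d'\sigma^2$ with equality only at $\theta^*$), whereas you route through Proposition~\ref{prop:identifiability_condition} by writing $S^{CI}(\theta)=S^\infty(\theta,\overline{x_{\theta,0}})$; both arguments are equivalent, and your observation that $R_\theta(0)$ is positive semidefinite (as a principal block of $E_\theta(0)$) and hence positive definite under~L3 is a nice explicit justification that the paper leaves implicit.
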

\begin{proof}
	As in the nonlinear case, we derive:{\small{}
		\[
		\begin{array}{l}
		S^{CI}(\theta)-S_{n}^{CI}(Y;\theta)   \\ =h_{\theta,0}^{d}(Y)^{T}\left(R_{\theta,0}^{d}\right)^{-1}\left(h_{\theta,0}^{d}(Y)-h_{\theta}(0)\right)+h_{\theta,0}^{d}(Y)^{T}\left(\left(R_{\theta,0}^{d}\right)^{-1}-R_{\theta}(0)^{-1}\right)h_{\theta,0}^{d}(Y)\\
		 +  (h_{\theta,0}^{d}(Y)-h_{\theta}(0))\left(R_{\theta,0}^{d}\right)^{-1}h_{\theta}(0)\\
		 +  \int_{0}^{T}\left(Y^{*}(t)^{T}Y^{*}(t)+d^{'}\sigma^{2}\right)dt-\triangle\sum_{i=0}^{n}Y_{i}^{T}Y_{i}\\
		 -  \left(\int_{0}^{T}h_{\theta}(t)^{T}BU^{-1}B^{T}h_{\theta}(t)dt-\triangle\sum_{i=0}^{n-1}h_{\theta,i+1}^{d}(Y)^{T}BG(R_{\theta,i+1}^{d})B^{T}h_{\theta,i+1}^{d}(Y)\right)
		\end{array}
		\]
	}by using propositions \ref{prop:continuous_Riccati_accurate_representation}
	and \ref{prop:discrete_Riccati_accurate_representation}. Lemma \ref{lem:h_R_continuity_limits}
	gives us $\sup_{\theta\in\Theta}\left\Vert R_{\theta,i}^{d}-R_{\theta}(t_{i})\right\Vert _{2}=o_{p,n}(1)$
	and $\sup_{\theta\in\Theta}\left\Vert h_{\theta,i}^{d}-h_{\theta}(t_{i})\right\Vert _{2}=o_{p,n}(1)$.
	Since $R_{\theta,0}^{d}=R_{\theta}(0)+o_{p,n}(1)$ and LC3 holds,
	it exists $n'\in\mathbb{N}$ such that LC3disc holds i.e. $R_{\theta,0}^{d}$
	is invertible for all $n\geq n'$ and from equation (189) in the Matrix
	Cookbook, we derive $\sup_{\theta\in\Theta}\left\Vert \left(R_{\theta,0}^{d}\right)^{-1}-R_{\theta}(0)^{-1}\right\Vert _{2}=o_{p,n}(1)$.
	So $h_{\theta,0}^{d}(Y)^{T}\left(R_{\theta,0}^{d}\right)^{-1}\left(h_{\theta,0}^{d}(Y)-h_{\theta}(0)\right)=o_{p,n}(1)$,
	$(h_{\theta,0}^{d}(Y)-h_{\theta}(0))\left(R_{\theta,0}^{d}\right)^{-1}h_{\theta}(0)=o_{p,n}(1)$
	and $h_{\theta,0}^{d}(Y)^{T}\left(\left(R_{\theta,0}^{d}\right)^{-1}-R_{\theta}(0)^{-1}\right)h_{\theta,0}^{d}(Y)=o_{p,n}(1)$
	uniformly on $\Theta$. In theorem \ref{thm:Consistency_theorem},
	we already derived: 
	\[
	\int_{0}^{T}\left(Y^{*}(t)^{T}Y^{*}(t)+d^{'}\sigma^{2}\right)dt-\triangle\sum_{i=0}^{n}Y_{i}^{T}Y_{i}=o_{p,n}(1)
	\]
	and{\small{}
		\[
		\int_{0}^{T}h_{\theta}(t)^{T}BU^{-1}B^{T}h_{\theta}(t)dt-\triangle\sum_{i=0}^{n-1}h_{\theta,i+1}^{d}(Y)^{T}BG(R_{\theta,i+1}^{d})B^{T}h_{\theta,i+1}^{d}(Y)=o_{p,n}(1)
		\]
	}uniformly on $\Theta$ and so $\sup_{\theta}\left|S^{CI}(\theta)-S_{n}^{CI}(Y;\theta)\right|=o_{p,n}(1).$
   Similarly as in theorem \ref{thm:Consistency_theorem}, application
	of proposition \ref{prop:identifiability_condition_prof_ci} and $\theta\longmapsto S^{CI}(\theta)$
	continuity on $\Theta$ gives us the identifiability criteria required
	to apply theorem 5.7 in \citet{Vaart1998}.
\end{proof}
\begin{prop}
	\label{prop:identifiability_condition_prof_ci}Under conditions LC1-LC2-LC3-LC4-LC5,
	$\theta^{*}$ is the unique global minimizer of $S^{CI}(\theta)$
	on $\Theta$. 
\end{prop}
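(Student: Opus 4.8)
The plan is to mirror the argument used for Proposition \ref{prop:identifiability_condition}, simplified by the fact that in the linear case $S^{CI}=S^{CI,l}=S^{CI,\infty}$ for every $l$. Unwinding the definitions,
\[
S^{CI}(\theta)=d^{'}\sigma^{2}+\inf_{x_{0}}\inf_{u}\int_{0}^{T}\left(\left\Vert CX_{\theta,x_{0},u}(t)-Y^{*}(t)\right\Vert _{2}^{2}+u(t)^{T}Uu(t)\right)dt,
\]
so $S^{CI}(\theta)\geq d^{'}\sigma^{2}$ for every $\theta\in\Theta$. First I would check that this lower bound is attained at $\theta^{*}$: taking $x_{0}=x_{0}^{*}$ and $u\equiv0$, the resulting trajectory coincides with the solution $X_{\theta^{*},x_{0}^{*}}$ of the true IVP (\ref{eq:ODEmodel}), so $CX_{\theta^{*},x_{0}^{*},0}(t)=Y^{*}(t)$ on $[0,T]$, the integrand vanishes identically, and $S^{CI}(\theta^{*})=d^{'}\sigma^{2}$.

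It remains to show that no other $\theta$ can reach the value $d^{'}\sigma^{2}$. For fixed $\theta$, classical Linear--Quadratic theory ensures the inner infimum over $u$ is attained at a feedback control $\overline{u_{\theta}}$; by Proposition \ref{prop:continuous_Riccati_accurate_representation} the profiled cost is then quadratic in $x_{0}$ with leading term $x_{0}^{T}R_{\theta}(0)x_{0}$, and $R_{\theta}(0)$ is positive semidefinite (it is a diagonal block of $E_{\theta}(0)$, which is positive semidefinite as it represents the minimum of a nonnegative quadratic cost). Condition LC3 upgrades this to positive definiteness, so the outer infimum over $x_{0}$ is attained at $\overline{x_{\theta,0}}=-R_{\theta}(0)^{-1}h_{\theta}(0)$, and the double infimum is a genuine minimum attained at $(\overline{x_{\theta,0}},\overline{u_{\theta}})$. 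Assume $S^{CI}(\theta)=d^{'}\sigma^{2}$; then
\[
\int_{0}^{T}\left(\left\Vert CX_{\theta,\overline{x_{\theta,0}},\overline{u_{\theta}}}(t)-Y^{*}(t)\right\Vert _{2}^{2}+\overline{u_{\theta}}(t)^{T}U\overline{u_{\theta}}(t)\right)dt=0.
\]
Both integrands are nonnegative and $U$ is positive definite, so $\overline{u_{\theta}}=0$ a.e.\ and $CX_{\theta,\overline{x_{\theta,0}},\overline{u_{\theta}}}(t)=Y^{*}(t)$ for a.e.\ $t$, hence for all $t\in[0,T]$ by continuity of both sides.

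Finally, with $\overline{u_{\theta}}\equiv0$ the trajectory $X_{\theta,\overline{x_{\theta,0}},0}$ is exactly the solution of the uncontrolled IVP (\ref{eq:ODEmodel}) for parameter $\theta$ and initial condition $\overline{x_{\theta,0}}$, so $CX_{\theta,\overline{x_{\theta,0}}}(t)=Y^{*}(t)=CX_{\theta^{*},x_{0}^{*}}(t)$ on $[0,T]$. Condition LC5 then forces $(\theta,\overline{x_{\theta,0}})=(\theta^{*},x_{0}^{*})$, in particular $\theta=\theta^{*}$; together with the first step this shows $\theta^{*}$ is the unique global minimizer of $S^{CI}$ on $\Theta$. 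I expect the only genuinely delicate point to be the attainment of the double infimum, which is precisely where LC3 enters; everything else is a direct transcription of the continuous-cost identifiability argument of Proposition \ref{prop:identifiability_condition} to the setting where $x_{0}$ has also been profiled out.
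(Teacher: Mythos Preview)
Your proof is correct and follows essentially the same approach as the paper: establish the lower bound $S^{CI}(\theta)\geq d'\sigma^{2}$, show it is attained at $\theta^{*}$ via the null control and $x_{0}=x_{0}^{*}$, and then invoke LC5 for uniqueness. You are in fact more careful than the paper, which elides the attainment issue that you handle explicitly via LC3 and the positive definiteness of $R_{\theta}(0)$.
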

By definition, $S^{CI}(\theta)=d^{'}\sigma^{2}+\inf_{x_{0}}\inf_{u}\left\{ \int_{0}^{T}\left(\left\Vert CX_{\theta,x_{0},u}(t)-Y^{*}(t)\right\Vert _{2}^{2}+u(t)^{T}Uu(t)\right)dt\right\} $,
so $S^{CI}(\theta)\geq d^{'}\sigma^{2}$ for all $\theta\in\Theta$.
As in proposition \ref{prop:identifiability_condition}, we now demonstrate
this lower bound $d^{'}\sigma^{2}$ can only be attained by $\theta^{*}$.
The null control $u_{\theta}^{0}(t)=0$ for all $t\in\left[0,\,T\right]$
gives us the upper bound $S^{CI}(\theta)\leq d^{'}\sigma^{2}+\inf_{x_{0}\in\chi}\left\{ \int_{0}^{T}\left\Vert CX_{\theta,x_{0}}(t)-Y^{*}(t)\right\Vert _{2}^{2}dt\right\} $.
Since $\int_{0}^{T}\left\Vert CX_{\theta^{*},x_{0}^{*}}(t)-Y^{*}(t)\right\Vert _{2}^{2}dt=\int_{0}^{T}\left\Vert CX_{\theta^{*},x_{0}^{*}}(t)-CX_{\theta^{*},x_{0}^{*}}(t)\right\Vert _{2}^{2}dt=0$,
we can further refine the previous upper bound to $S^{CI}(\theta^{*})\leq d^{'}\sigma^{2}$
and conclude that $S^{CI}(\theta^{*})=d^{'}\sigma^{2}$. Again, by
using the identifiability condition we derive $S^{CI}(\theta)=d^{'}\sigma^{2}$
only if $\theta=\theta^{*}$.

\section{Asymptotic normality}

In this section the notation $Y^{d*}$ will often appear; it denotes
the set of discrete and perfectly measured observations i.e without
measurement noise $Y^{d*}:=\left\{ CX^{*}(t_{0}),\,\ldots,CX^{*}(t_{n})\right\} $.
We also introduce $X^{*l}:=\overline{X_{\theta^{*}}^{l}}$, $R^{*l}:=R_{\theta^{*}}^{l}$,
$h^{*l}:=h_{\theta^{*}}^{l}$, (resp. $X^{*d,l}:=\overline{X_{\theta^{*}}^{d,l}}$,
$R^{*d,l}:=R_{\theta^{*}}^{d,l}$, $h^{*d,l}:=h_{\theta^{*}}^{d,l}(Y^{d*})$)
the solution of the ODE (resp. the finite difference equation) evaluated
along the noiseless continuous signal $Y^{*}$ (resp. discrete signal
$Y^{d*}$).

\subsection{$\widehat{\theta}^{T}$, general case}
\begin{thm}
	\label{thm:Asymptotic_Normality}Under conditions C1 to C8 and if
	$l$ is such that $l=O_{n}(\sqrt{\Delta})$, then $(\widehat{\theta},\widehat{x_{0}})$
	is asymptotically normal and $(\widehat{\theta},\widehat{x_{0}})-\left(\theta^{*},x_{0}^{*}\right)=o_{p,n}(n^{-\frac{1}{2}})$.
\end{thm}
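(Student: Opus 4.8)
The plan is to treat $\widehat{\upsilon}=(\widehat{\theta},\widehat{x_{0}})$ as the M-estimator minimising $\upsilon\longmapsto S_{n}^{l}(Y;\upsilon)$ over $\varUpsilon$ and to run the classical Taylor-expansion argument; the non-standard work sits in differentiating twice the Riccati representations of Proposition~\ref{prop:discrete_Riccati_accurate_representation} and in tracking how the errors $\epsilon=\left(\epsilon_{0},\dots,\epsilon_{n}\right)$ enter the first $\upsilon$-derivative. By Theorem~\ref{thm:Consistency_theorem}, $\widehat{\upsilon}\longrightarrow\upsilon^{*}$ in probability, and C4 places the limit in the interior of $\varUpsilon$, so with probability tending to $1$ the minimiser satisfies $\frac{\partial S_{n}^{l}}{\partial\upsilon}(Y;\widehat{\upsilon})=0$. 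Twice differentiability of $S_{n}^{l}$, $S^{l}$ and $S^{\infty}$ in $\upsilon$ follows from C6 and C7 propagated through the iteration $\overline{X_{\upsilon}^{d,l-1}}$ and the recursions for $\left(R_{\upsilon,i}^{d,l},h_{\upsilon,i}^{d,l}\right)$, together with Proposition~\ref{prop:E_bounded_probability}. A first-order Taylor expansion with integral remainder gives
\[
0=\frac{\partial S_{n}^{l}}{\partial\upsilon}(Y;\upsilon^{*})+H_{n}\left(\widehat{\upsilon}-\upsilon^{*}\right),\qquad H_{n}:=\int_{0}^{1}\frac{\partial^{2}S_{n}^{l}}{\partial^{2}\upsilon}\left(Y;\upsilon^{*}+s(\widehat{\upsilon}-\upsilon^{*})\right)ds,
\]
so $\widehat{\upsilon}-\upsilon^{*}=-H_{n}^{-1}\frac{\partial S_{n}^{l}}{\partial\upsilon}(Y;\upsilon^{*})$ once $H_{n}$ is invertible.

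For the Hessian I would rerun the scheme of Theorem~\ref{thm:Consistency_theorem} at the level of second $\upsilon$-derivatives: combining Proposition~\ref{prop:discrete_Riccati_accurate_representation}, Lemma~\ref{lem:h_R_continuity_limits} for the discrete-to-continuous passage, and Lemma~\ref{lem:Riccati_optimal_traj_uniform_convergence} together with Theorem~2 in \citet{CimenBanks2004} for the $l\longrightarrow\infty$ passage, one gets $\sup_{\upsilon\in\varUpsilon}\left\Vert \frac{\partial^{2}S_{n}^{l}}{\partial^{2}\upsilon}(Y;\upsilon)-\frac{\partial^{2}S^{\infty}(\upsilon)}{\partial^{2}\upsilon}\right\Vert _{2}=o_{l}(1)+o_{p,n}(1)$. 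With continuity of $\upsilon\longmapsto\frac{\partial^{2}S^{\infty}(\upsilon)}{\partial^{2}\upsilon}$ on the compact $\varUpsilon$ and $\widehat{\upsilon}\longrightarrow\upsilon^{*}$, this yields $H_{n}\longrightarrow H^{*}:=\frac{\partial^{2}S^{\infty}(\upsilon^{*})}{\partial^{2}\upsilon}$ in probability; $H^{*}$ is non-singular by C8, hence $H_{n}^{-1}=H^{*-1}+o_{p,n}(1)$ with probability tending to $1$.

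The heart of the proof is the score. Writing $Y=Y^{d*}+\epsilon$ I would split it into four pieces,
\begin{multline*}
\frac{\partial S_{n}^{l}}{\partial\upsilon}(Y;\upsilon^{*})=\frac{\partial S^{\infty}(\upsilon^{*})}{\partial\upsilon}+\Bigl(\frac{\partial S^{l}(\upsilon^{*})}{\partial\upsilon}-\frac{\partial S^{\infty}(\upsilon^{*})}{\partial\upsilon}\Bigr)\\+\Bigl(\frac{\partial S_{n}^{l}}{\partial\upsilon}(Y^{d*};\upsilon^{*})-\frac{\partial S^{l}(\upsilon^{*})}{\partial\upsilon}\Bigr)+\Bigl(\frac{\partial S_{n}^{l}}{\partial\upsilon}(Y;\upsilon^{*})-\frac{\partial S_{n}^{l}}{\partial\upsilon}(Y^{d*};\upsilon^{*})\Bigr).
\end{multline*}
The first term vanishes since $\upsilon^{*}$ is the interior global minimiser of $S^{\infty}$ (Proposition~\ref{prop:identifiability_condition}); the second is $o_{l}(1)$ by uniform convergence of the continuous Riccati quantities and of their $\upsilon$-derivatives; the third is the one-step discretisation error of the Riccati system \eqref{eq:continous_accurate_Riccati_equation} along the noiseless signal, of order $\Delta=T/n$; under the stated coupling of $l$ with $n$ (so that the iteration error drops below $\sqrt{\Delta}$) the second and third terms are $o_{p,n}(n^{-1/2})$. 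For the fourth term the key structural fact from Proposition~\ref{prop:discrete_Riccati_accurate_representation} is that $h_{\upsilon,i}^{d,l}(\cdot)$ is \emph{linear} in the data sequence while $R_{\upsilon,i}^{d,l}$ depends on it only through the level-$(l-1)$ trajectory; writing $h_{\upsilon^{*},i}^{d,l}(Y)=h_{\upsilon^{*},i}^{d,l}(Y^{d*})+\eta_{i}$ with $\eta_{i}:=h_{\upsilon^{*},i}^{d,l}(\epsilon)$ and solving the $h$-recursion gives $\eta_{i}=-\Delta\sum_{j\geq i}\Pi_{i,j}C^{T}\epsilon_{j}$ with $\Pi_{i,j}$ bounded uniformly in $\left(i,j,n\right)$ by the discrete Gronwall Lemma~\ref{lem:disc_Gronwall_lemma}, so $\mathbb{E}\left\Vert \eta_{i}\right\Vert _{2}^{2}=O_{n}(\Delta^{2}n)=O_{n}(\Delta)$ uniformly in $i$ and likewise for $\frac{\partial\eta_{i}}{\partial\upsilon}$. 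Substituting into the profiled-cost formulas and expanding the quadratic pieces $h^{T}BG(R)B^{T}h$, the fourth term splits into expressions linear in $\epsilon$ of the form $\Delta\sum_{i}v_{i}^{T}\epsilon_{i}$ with deterministic $O_{n}(1)$ weights $v_{i}$ --- mean zero, variance $O_{n}(\Delta)$, hence $O_{p,n}(n^{-1/2})$, and, being a linear combination of Gaussians, exactly Gaussian with variance converging after rescaling by $n$ to an explicit limit --- plus centred quadratic-in-$\epsilon$ expressions that are $O_{p,n}(\Delta)=o_{p,n}(n^{-1/2})$.

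Collecting the pieces, $\sqrt{n}\,\frac{\partial S_{n}^{l}}{\partial\upsilon}(Y;\upsilon^{*})\longrightarrow N(0,\Sigma)$ in distribution for an explicit $\Sigma$ assembled from $C$, $U$, $\sigma^{2}$ and the resolvent of \eqref{eq:extended_ODEmodel}, while $H_{n}\longrightarrow H^{*}$ in probability with $H^{*}$ non-singular; Slutsky's lemma then gives $\sqrt{n}\left(\widehat{\upsilon}-\upsilon^{*}\right)=-H_{n}^{-1}\sqrt{n}\,\frac{\partial S_{n}^{l}}{\partial\upsilon}(Y;\upsilon^{*})\longrightarrow N\left(0,H^{*-1}\Sigma H^{*-1}\right)$ in distribution, which is the asserted asymptotic normality at the $\sqrt{n}$ rate, the purely deterministic (discretisation and iteration) part of $\widehat{\upsilon}-\upsilon^{*}$ being $o_{p,n}(n^{-1/2})$. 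I expect the Hessian step to be the main obstacle: establishing uniform convergence of the \emph{second} $\upsilon$-derivatives of $S_{n}^{l}$ means differentiating the coupled Riccati/trajectory recursions twice and controlling the $l\longrightarrow\infty$ and $n\longrightarrow\infty$ limits simultaneously and uniformly on $\varUpsilon$; the conceptually decisive observation is instead the linearity of $h^{d,l}$ in the data, which localises the measurement noise and is what keeps the score a $\sqrt{n}$-object.
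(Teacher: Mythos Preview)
Your architecture is the paper's: the same Taylor expansion (Proposition~\ref{prop:param_as_representation}), the same four-term score decomposition (Proposition~\ref{prop:gradient_S_true_param_as_representation}) with $\nabla_{\upsilon}S^{\infty}(\upsilon^{*})=0$, $o_{l}(1)$ from Lemma~\ref{lem:Continuous_S_Gradient_Hessian_convergence}, $O_{n}(\Delta)$ from Lemma~\ref{lem:deterministic_gradient_behavior}, and the noise entering through $h^{d,l}$; and the same Hessian control via Lemmas~\ref{lem:h_R_continuity_limits} and~\ref{lem:Continuous_S_Gradient_Hessian_convergence}.

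One point needs tightening. Your ``key structural fact'' that $h_{\upsilon,i}^{d,l}(\cdot)$ is linear in the data is false in the nonlinear case: the coefficients $A_{\theta}(\overline{X_{\upsilon}^{d,l-1}}(t_{i}),t_{i})$ and $R_{\upsilon,i+1}^{d,l}$ in the $h$-recursion themselves depend on $Y$ through the level-$(l-1)$ trajectory, so writing $h_{\upsilon^{*},i}^{d,l}(Y)=h_{\upsilon^{*},i}^{d,l}(Y^{d*})+h_{\upsilon^{*},i}^{d,l}(\epsilon)$ is not legitimate as stated. The paper closes this by a separate induction on $l$ (inside the proof of Proposition~\ref{prop:gradient_S_true_param_as_representation}) showing first that $R_{i}^{*d,l}-R_{i}^{d,l}=O_{p,n}(\Delta)$ and $X^{*d,l}(t_{i})-X^{d,l}(t_{i})=O_{p,n}(\Delta)$, i.e.\ the nonlinear contamination of the coefficients is lower order; only then does your formula $h_{i}^{*d,l}-h_{i}^{d,l}=-\Delta C^{T}\sum_{j\geq i}\epsilon_{j}+O_{p,n}(\Delta)$ follow (with $\Pi_{i,j}=I_{d}$ to leading order), and likewise for the $\upsilon$-derivatives. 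This is exactly the step you flag as ``the main obstacle'' but place in the Hessian; it is already needed for the score.
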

\begin{proof}
	By merging the proposition \ref{prop:gradient_S_true_param_as_representation}
	and proposition \ref{prop:param_as_representation}, we obtain the
	following asymptotic representation between $\widehat{\upsilon}$
	and $\upsilon^{*}$: {\small{}
		\[
		\begin{array}{lll}
		(\frac{\partial^{2}S^{\infty}(\upsilon^{*})}{\partial^{2}\upsilon}+o_{l}(1)+o_{p,n}(1))\left(\widehat{\upsilon}-\upsilon^{*}\right) & = & \left(\triangle\sum_{j=0}^{n}\epsilon_{j}^{T}\right)(K_{\upsilon}^{l}+o_{n}(1))\\
		& + & L\left(\triangle\sum_{j=0}^{n}\epsilon_{j}\right)+o_{p,n}(\sqrt{\Delta})+o_{l}(1)
		\end{array}
		\]
	}with $K_{\upsilon}^{l}$ and $L$ defined in proposition \ref{prop:param_as_representation}.
	So if we choose $l$ such that $l=O_{n}(\sqrt{\Delta})$ and use condition
	8 which ensures the matrix $\frac{\partial^{2}S^{\infty}(\theta^{*})}{\partial^{2}\theta}+o_{l}(1)+o_{p,n}(1)$
	tends to a nonsingular one with probability 1, we can use the central
	limit theorem to conclude.
\end{proof}
\begin{prop}
	\label{prop:gradient_S_true_param_as_representation}Under conditions
	C1 to C7, we have
	\[
	\begin{array}{l}
	-\nabla_{\upsilon}S_{n}^{l}(Y;\upsilon^{*})\\
	=\left(\triangle\sum_{j=0}^{n}\epsilon_{j}^{T}\right)(K_{\upsilon}^{l}+o_{n}(1))+L\left(\triangle\sum_{j=0}^{n}\epsilon_{j}\right)+o_{p,n}(\sqrt{\Delta})+o_{l}(1)
	\end{array}
	\]
	with $K_{\upsilon}^{l}=2CBU^{-1}B^{T}\int_{0}^{T}\frac{\partial h^{*l}(t)}{\partial\upsilon}dt$
	and $L=\left(\begin{array}{c}
	0_{p,d^{'}}\\
	-2C^{T}
	\end{array}\right).$
\end{prop}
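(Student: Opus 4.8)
The plan is to differentiate the explicit representation of $S_n^l(Y;\upsilon)$ provided by Proposition \ref{prop:discrete_Riccati_accurate_representation} and then, after evaluating at $\upsilon^{*}=(\theta^{*},x_0^{*})$, isolate the part that depends on the noise. First I would drop $\triangle Y_n^T Y_n+\triangle\sum_{i=0}^{n-1}Y_i^T Y_i$, which does not depend on $\upsilon$ and so contributes nothing to $\nabla_\upsilon$, leaving $x_0^T R_{\theta,0}^{d,l}x_0+2h_{\theta,0}^{d,l}(Y)^T x_0-\triangle\sum_{i=0}^{n-1}h_{\theta,i+1}^{d,l}(Y)^T BG(R_{\theta,i+1}^{d,l})B^T h_{\theta,i+1}^{d,l}(Y)$. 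Differentiating the finite difference recursions (\ref{eq:discrete_accurate_Riccati_equation}) in $\upsilon$ yields companion recursions for $\partial_\upsilon R_{\theta,i}^{d,l}$ and $\partial_\upsilon h_{\theta,i}^{d,l}(Y)$; since $A_\theta$ is evaluated along the frozen $(l-1)$-th optimal trajectory $\overline{X_\upsilon^{d,l-1}}$, these also involve $\partial_\upsilon\overline{X_\upsilon^{d,l-1}}$, and it is this iteration in $l$ (convergent by Lemma \ref{lem:Riccati_optimal_traj_uniform_convergence} and its sensitivity analogue) that produces the $o_l(1)$ term. Because $R_{\theta,i}^{d,l}$, $G(R_{\theta,i+1}^{d,l})$ and their $\upsilon$-derivatives do not involve $Y$, while $h_{\theta,i}^{d,l}(Y)$ is affine in $Y$, writing $Y=Y^{d*}+\epsilon$ and unrolling the recursion gives $h_{\theta,i}^{d,l}(Y)=h_{\theta,i}^{d,l}(Y^{d*})+h_{\theta,i}^{d,l}(\epsilon)$ with $h_{\theta,i}^{d,l}(\epsilon)=-\triangle\sum_{j\geq i}\Pi_{\theta,i,j}^{d,l}C^T\epsilon_j$ for a deterministic propagator $\Pi^{d,l}$, and similarly for $\partial_\upsilon h_{\theta,i}^{d,l}$.

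Substituting this splitting into $-\nabla_\upsilon S_n^l(Y;\upsilon^{*})$ decomposes it into a noise-free part, a part quadratic in $\epsilon$ and a part linear in $\epsilon$. For the noise-free part $-\nabla_\upsilon S_n^l(Y^{d*};\upsilon^{*})$, the $O(\triangle)$ consistency of the Euler scheme for the (sensitivity) Riccati equations, refining Lemma \ref{lem:h_R_continuity_limits}, gives $\nabla_\upsilon S_n^l(Y^{d*};\upsilon^{*})=\nabla_\upsilon S^l(\upsilon^{*})+O_n(\triangle)$, Lemma \ref{lem:Riccati_optimal_traj_uniform_convergence} gives $\nabla_\upsilon S^l(\upsilon^{*})=\nabla_\upsilon S^\infty(\upsilon^{*})+o_l(1)$, and $\nabla_\upsilon S^\infty(\upsilon^{*})=0$ because $\upsilon^{*}$ is the global minimizer of $S^\infty$ by Proposition \ref{prop:identifiability_condition}, lies in the interior by C4, and $S^\infty$ is $C^1$ under C6; hence this part is $o_l(1)+O_n(\triangle)=o_l(1)+o_{p,n}(\sqrt{\triangle})$. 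The quadratic part is $\triangle$ times a sum of $n$ terms, each a bounded deterministic matrix contracted with two factors among $h_{\theta^{*},i}^{d,l}(\epsilon)$ and $\partial_\upsilon h_{\theta^{*},i}^{d,l}(\epsilon)$; a maximal inequality for the martingale-type partial sums defining $h_{\theta^{*},i}^{d,l}(\epsilon)$, together with Proposition \ref{prop:E_bounded_probability} and the uniform boundedness of the sensitivities, gives $\sup_i\left\Vert h_{\theta^{*},i}^{d,l}(\epsilon)\right\Vert _{2}=O_{p,n}(\sqrt{\triangle})$, so the quadratic part is $\triangle\cdot n\cdot O_{p,n}(\triangle)=O_{p,n}(\triangle)=o_{p,n}(\sqrt{\triangle})$.

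The core of the argument is the linear-in-$\epsilon$ part, and the cleanest lens for its structure is the envelope identity $\nabla_\upsilon S_n^l(Y;\upsilon)=2\triangle\sum_{i=0}^{n}(C X_{\upsilon,\overline{u_\upsilon^{d,l}}}^{d,l}(t_i)-Y_i)^T C\,\partial_\upsilon X_{\upsilon,\overline{u_\upsilon^{d,l}}}^{d,l}(t_i)$, where $\partial_\upsilon$ is taken with the control frozen at $\overline{u_\upsilon^{d,l}}$ (legitimate since the discrete minimisation over $u$ is unconstrained, so $\partial_u C_T^{d,l}$ vanishes at the optimum). At $\upsilon^{*}$ the noiseless residual is $o_l(1)+O_n(\triangle)$ and the noiseless optimal control at $\upsilon^{*}$ converges to the null control (Proposition \ref{prop:identifiability_condition}), so to leading order the residual equals $C\delta X^{d,l}(t_i)-\epsilon_i$, where $\delta X^{d,l}$ is the noise-induced perturbation of the optimal trajectory, driven schematically through the feedback $\delta u_i\approx -G(R_{\theta^{*},i+1}^{d,l})B^T(R_{\theta^{*},i+1}^{d,l}\delta X^{d,l}(t_i)+h_{\theta^{*},i+1}^{d,l}(\epsilon))$. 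Substituting, exchanging the order of summation in the resulting propagator-weighted double and triple sums, and using the ODE-sensitivity/adjoint identities linking $\partial_\upsilon h^{*l}$, $\delta X^{d,l}$ and $h_{\theta^{*}}^{d,l}(\epsilon)$ (the same identities that encode $\nabla_\upsilon S^l(\upsilon^{*})=0$), the propagator-dependent, hence $j$-dependent, weights of $\epsilon_j$ recombine; what survives is the left-endpoint term $-2C^T$ on the $x_0$-block (since the left-endpoint trajectory sensitivity in $x_0$ equals $I_d$) together with $2CBU^{-1}B^T$ times the Riemann sum $\triangle\sum_i\partial_\upsilon h_{\theta^{*},i}^{d,l}$, which converges to $\int_0^T\partial_\upsilon h^{*l}(t)\,dt$. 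Writing the surviving sums as $\triangle\sum_j\epsilon_j^T\Gamma_{n,j}^{d,l}$ with $\Gamma_{n,j}^{d,l}=K_\upsilon^l+L^T+o_n(1)$ uniformly in $j$, and discarding the residual roughness terms $\triangle\sum_j\epsilon_j^T(\Gamma_{n,j}^{d,l}-K_\upsilon^l-L^T)$ by Abel summation (the discrete time-increments of $\Gamma^{d,l}$ being $O(\triangle)$) combined with the maximal inequality for the partial sums $\sum_{k\leq j}\epsilon_k$, yields $(\triangle\sum_j\epsilon_j^T)(K_\upsilon^l+o_n(1))+L(\triangle\sum_j\epsilon_j)+o_{p,n}(\sqrt{\triangle})$. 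Adding the three parts gives the stated representation.

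The step I expect to be the main obstacle is precisely this regrouping: tracking the many linear-in-$\epsilon$ contributions coming from $\partial_\upsilon h^{d,l}$, from $\delta X^{d,l}$, from the $\upsilon$-dependence of $G$, and from the frozen $(l-1)$-th trajectory, and proving that their propagator-weighted pieces cancel down to the two constant-weighted sums $K_\upsilon^l$ and $L$ up to $o_{p,n}(\sqrt{\triangle})+o_l(1)$, uniformly in $\upsilon$ over the compact parameter set. The cancellation is forced by the first-order optimality $\nabla_\upsilon S^\infty(\upsilon^{*})=0$, but making it explicit requires exactly the ODE-sensitivity and summation-by-parts bookkeeping outlined above, carried out with the rates supplied by Propositions \ref{prop:discrete_Riccati_accurate_representation} and \ref{prop:E_bounded_probability} and Lemmas \ref{lem:Riccati_optimal_traj_uniform_convergence}--\ref{lem:h_R_continuity_limits}.
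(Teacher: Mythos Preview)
Your high-level decomposition into a noise-free part, handled via $\nabla_\upsilon S^\infty(\upsilon^*)=0$, Lemma~\ref{lem:Continuous_S_Gradient_Hessian_convergence} and Lemma~\ref{lem:deterministic_gradient_behavior}, and a noise-dependent remainder is exactly the paper's route, and your final summation-by-parts step (rewriting $\triangle^2\sum_i(\sum_{j>i}\epsilon_j^T)K_i$ as $(\triangle\sum_j\epsilon_j^T)(K+o_n(1))+o_{p,n}(\sqrt{\triangle})$) is also what the paper does.

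The genuine gap is your structural claim that ``$R_{\theta,i}^{d,l}$, $G(R_{\theta,i+1}^{d,l})$ and their $\upsilon$-derivatives do not involve $Y$, while $h_{\theta,i}^{d,l}(Y)$ is affine in $Y$''. This is true only for $l=1$ (and in the purely linear case). For $l\geq 2$, the matrix $A_\theta(\overline{X_\upsilon^{d,l-1}}(t_i),t_i)$ entering the $R$-recursion (\ref{eq:discrete_accurate_Riccati_equation}) depends on the previous optimal trajectory $\overline{X_\upsilon^{d,l-1}}$, which itself depends on $h_{\theta}^{d,l-1}(Y)$ and hence on $Y$. Consequently $R_{\theta,i}^{d,l}$ depends on $Y$, $h_{\theta,i}^{d,l}(Y)$ is \emph{not} affine in $Y$, and your clean linear/quadratic splitting does not hold as stated. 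Your envelope-identity argument then loses its organizing principle, because the ``frozen'' coefficients in the recursion you unroll are themselves random.

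The paper does not attempt an affine splitting. Instead it compares directly the two evaluations $S_n^l(Y^{d*};\upsilon^*)$ and $S_n^l(Y;\upsilon^*)$, writes recursions for the differences $R_i^{*d,l}-R_i^{d,l}$, $h_i^{*d,l}-h_i^{d,l}$, $X^{*d,l}-X^{d,l}$ (and their $\upsilon$-derivatives), and proves by induction on $i$, for each $l$, that $R_i^{*d,l}-R_i^{d,l}=O_{p,n}(\triangle)$, $\partial_\upsilon R_i^{*d,l}-\partial_\upsilon R_i^{d,l}=O_{p,n}(\triangle)$, $\partial_\upsilon h_i^{*d,l}-\partial_\upsilon h_i^{d,l}=O_{p,n}(\triangle)$, while $h_i^{*d,l}-h_i^{d,l}=-\triangle C^T\sum_{j\geq i}\epsilon_j+O_{p,n}(\triangle)$. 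This last identity is the analogue of your $h(\epsilon)$ formula, but it is obtained as an approximation with a controlled $O_{p,n}(\triangle)$ remainder rather than an exact affine decomposition. Once these four estimates are in hand, substituting into the explicit gradient formula from Proposition~\ref{prop:discrete_Riccati_accurate_representation} gives the result by straightforward bookkeeping; no envelope identity, adjoint identities or cancellation of propagator-weighted pieces are needed. Your proposal can be repaired along the same lines: replace the exact affine claim by the perturbative statement $R_i^{*d,l}-R_i^{d,l}=O_{p,n}(\triangle)$, proved by the same backward induction you already envisage for $h$, and then the remaining terms you flag as ``the main obstacle'' are all absorbed into $O_{p,n}(\triangle)$ rather than requiring delicate cancellations.
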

\begin{proof}
	First of all, we use the following decomposition:
	\[
	\begin{array}{lll}
	-\nabla_{\upsilon}S_{n}^{l}(Y;\upsilon^{*}) & = & \nabla_{\upsilon}S_{n}^{l}(Y^{d*};\upsilon^{*})-\nabla_{\upsilon}S_{n}^{l}(Y;\upsilon^{*})\\
	& + & \nabla_{\upsilon}S^{l}(\upsilon^{*})-\nabla_{\upsilon}S_{n}^{l}(Y^{d*};\upsilon^{*})+\nabla_{\upsilon}S^{\infty}(\upsilon^{*})-\nabla_{\upsilon}S^{l}(\upsilon^{*})
	\end{array}
	\]
	since first order conditions imposes $\nabla_{\upsilon}S^{\infty}(\upsilon^{*})=0$.
	Moreover lemma \ref{lem:Continuous_S_Gradient_Hessian_convergence}
	and \ref{lem:deterministic_gradient_behavior} gives us respectively
	$\nabla_{\upsilon}S^{\infty}(\upsilon^{*})-\nabla_{\upsilon}S^{l}(\upsilon^{*})=o_{l}(1)$
	and $\nabla_{\upsilon}S^{l}(\upsilon^{*})-\nabla_{\upsilon}S_{n}^{l}(Y^{d*};\upsilon^{*})=O_{p,n}(\triangle)$,
	so the previous asymptotic decomposition becomes:
	\[
	\begin{array}{lll}
	-\nabla_{\upsilon}S_{n}^{l}(Y;\upsilon^{*}) & = & \nabla_{\upsilon}S_{n}^{l}(Y^{d*};\upsilon^{*})-\nabla_{\upsilon}S_{n}^{l}(Y;\upsilon^{*})+o_{l}(1)+O_{p,n}(\triangle).\end{array}
	\]
	Now, we analyze the asymptotic behavior of $\nabla_{\upsilon}S_{n}^{l}(Y^{d*};\upsilon^{*})-\nabla_{\upsilon}S_{n}^{l}(Y;\upsilon^{*})$.
	We denote $R_{i}^{d,l}:=R_{\upsilon^{*},i}^{d,l}$, $h_{i}^{d,l}:=h_{\upsilon^{*},i}^{d,l}(Y)$
	and $X^{d,l}:=\overline{X_{\upsilon^{*}}^{d,l}}$, the differences
	$R_{i}^{*d,l}-R_{i}^{d,l}$, $h_{i}^{*d,l}-h_{i}^{d,l}$ and $X^{*d,l}(t_{i+1})-X^{d,l}(t_{i+1})$
	respect the equations:
	
	{\small{}
		\[
		\begin{array}{lll}
		R_{i}^{*d,l}-R_{i}^{d,l} & = & \left(I_{d}+\Delta A_{\theta^{*}}(X^{*d,l-1}(t_{i}),t_{i})-\triangle R_{i+1}^{d,l}BU^{-1}B^{T}\right)\left(R_{i+1}^{*d,l}-R_{i+1}^{d,l}\right)\\
		& + & \triangle\left(R_{i+1}^{*d,l}-R_{i+1}^{d,l}\right)\left(BU^{-1}B^{T}R_{i+1}^{*d,l}+A_{\theta^{*}}(X^{d,l-1}(t_{i}),t_{i})\right)\\
		& + & \Delta\left(A_{\theta}(X^{*d,l-1}(t_{i}),t_{i})-A_{\theta^{*}}(X^{d,l-1}(t_{i}),t_{i})\right)^{T}R_{i+1}^{d,l}\\
		& + & \Delta R_{i+1}^{*d,l}\left(A_{\theta^{*}}(X^{*d,l-1}(t_{i}),t_{i})-A_{\theta^{*}}(X^{d,l-1}(t_{i}),t_{i})\right)+O_{p,n}(\triangle^{2})
		\end{array}
		\]
		\[
		\begin{array}{lll}
		h_{i}^{*d,l}-h_{i}^{d,l} & = & \left(I_{d}+\Delta A_{\theta^{*}}(X^{*d,l-1}(t_{i}),t_{i})^{T}+\triangle R_{i+1}^{d,l}BU^{-1}B^{T}\right)\left(h_{i+1}^{*d,l}-h_{i+1}^{d,l}\right)-\triangle C^{T}\epsilon_{i}\\
		& + & \Delta\left(A_{\theta^{*}}(X^{*d,l-1}(t_{i}),t_{i})-A_{\theta^{*}}(X^{d,l-1}(t_{i}),t_{i})\right)^{T}h_{i+1}^{d,l}\\
		& + & \triangle\left(R_{i+1}^{d,l}-R_{i+1}^{*d,l}\right)BU^{-1}B^{T}h_{i+1}^{*d,l}+O_{p,n}(\triangle^{2})
		\end{array}
		\]
		\[
		\begin{array}{l}
		X^{*d,l}(t_{i+1})-X^{d,l}(t_{i+1})\\
		  =  \left(I_{d}+\Delta A_{\theta^{*}}(X^{*d,l-1}(t_{i}),t_{i})+\triangle BU^{-1}B^{T}R^{*d,l}(t_{i+1})\right)\left(X^{*d,l}(t_{i})-X^{d,l}(t_{i})\right)\\
		 +  \triangle BU^{-1}B^{T}\left(\left(R^{*d,l}(t_{i+1})-R^{d,l}(t_{i+1})\right)X^{d,l}(t_{i})+h^{*d,l}(t_{i+1})-h^{d,l}(t_{i+1})\right)\\
		 +  \Delta\left(A_{\theta^{*}}(X^{*d,l-1}(t_{i}),t_{i})-A_{\theta^{*}}(X^{d,l-1}(t_{i}),t_{i})\right)X^{d,l}(t_{i})+O_{p,n}(\triangle^{2}).
		\end{array}
		\]
	}From these equation, we prove by induction $h_{i}^{*d,l}-h_{i}^{d,l}=-\triangle C^{T}\sum_{j=i}^{n}\epsilon_{j}+O_{p,n}(\triangle)$
	for all $l\in\mathbb{N}$, indeed $h_{n}^{*d,l}-h_{n}^{d,l}=-\triangle C^{T}\epsilon_{n}$,
	and by assuming $h_{i+1}^{*d,l}-h_{i+1}^{d,l}=-\triangle C^{T}\sum_{j=i+1}^{n}\epsilon_{j}+O_{p,n}(\triangle)$
	we have:{\small{}
		\[
		\begin{array}{l}
		h_{i}^{*d,l}-h_{i}^{d,l}\\
		  = \left(I_{d}+O_{p,n}(\Delta)\right)(-\triangle C^{T}\sum_{j=i+1}^{n}\epsilon_{j}+O_{p,n}(\triangle))+O_{p,n}(\triangle)-\triangle C^{T}\epsilon_{i}+O_{p,n}(\triangle)\\
		 =  -\triangle C^{T}\sum_{j=i+1}^{n}\epsilon_{j}+O_{p,n}(\triangle)+O_{p,n}(\Delta)O_{p,n}(1)+O_{p,n}(\Delta)O_{p,n}(\triangle)-\triangle C^{T}\epsilon_{i}+O_{p,n}(\triangle)\\
		 =  -\triangle C^{T}\sum_{j=i}^{n}\epsilon_{j}+O_{p,n}(\triangle).
		\end{array}
		\]
	}Again, we can prove by induction on $i\in\left\llbracket 0,\,n\right\rrbracket $
	that for each $l\in\mathbb{N}$, $R_{i}^{*d,l}-R_{i}^{d,l}=O_{p,n}(\triangle)$.
	By differentiating these previous equations, we obtain: {\small
		\[
		\begin{array}{l}
		 \frac{\partial R_{i}^{*d,l}}{\partial\upsilon}-\frac{\partial R_{i}^{d,l}}{\partial\upsilon}\\
		=  \frac{\partial R_{i+1}^{*d,l}}{\partial\upsilon}-\frac{\partial R_{i+1}^{d,l}}{\partial\upsilon}+\Delta\left(\frac{\partial R_{i+1}^{*d,l}}{\partial\upsilon}A_{\theta^{*}}(X^{*d,l-1}(t_{i}),t_{i})-\frac{\partial R_{i+1}^{d,l}}{\partial\upsilon}A_{\theta^{*}}(X^{d,l-1}(t_{i}),t_{i})\right)\\
		+  \Delta\left(R_{i+1}^{*d,l}\frac{\partial A_{\theta^{*}}}{\partial\upsilon}(X^{*d,l-1}(t_{i}),t_{i})-R_{i+1}^{d,l}\frac{\partial A_{\theta^{*}}}{\partial\upsilon}(X^{d,l-1}(t_{i}),t_{i})\right)\\
		+  \Delta\left(R_{i+1}^{*d,l}\frac{\partial A_{\theta^{*}}}{\partial x}(X^{*d,l-1}(t_{i}),t_{i})\frac{\partial X^{*d,l-1}(t_{i})}{\partial\upsilon}-R_{i+1}^{d,l}\frac{\partial A_{\theta^{*}}}{\partial x}(X^{d,l-1}(t_{i}),t_{i})\frac{\partial X^{d,l-1}(t_{i})}{\partial\upsilon}\right)\\
		+  \Delta\left(\frac{\partial A_{\theta}}{\partial\upsilon}(X^{*d,l-1}(t_{i}),t_{i})^{T}R_{i+1}^{*d,l}-\frac{\partial A_{\theta}}{\partial\upsilon}(X^{d,l-1}(t_{i}),t_{i})^{T}R_{i+1}^{d,l}\right)\\
		+  \Delta\left(\frac{\partial A_{\theta^{*}}}{\partial x}(X^{*d,l-1}(t_{i}),t_{i})^{T}\frac{\partial X^{*d,l-1}(t_{i})}{\partial\upsilon}R_{i+1}^{*d,l}-\frac{\partial A_{\theta^{*}}}{\partial x}(X^{d,l-1}(t_{i}),t_{i})^{T}\frac{\partial X^{d,l-1}(t_{i})}{\partial\upsilon}R_{i+1}^{d,l}\right)\\
		+  \Delta\left(A_{\upsilon}(X^{*d,l-1}(t_{i}),t_{i})^{T}\frac{\partial R_{i+1}^{*d,l}}{\partial\upsilon}-A_{\theta}(X^{d,l-1}(t_{i}),t_{i})^{T}\frac{\partial R_{i+1}^{d,l}}{\partial\upsilon}\right)\\
		-  \Delta\left(\frac{\partial R_{i+1}^{*d,l}}{\partial\upsilon}BU^{-1}B^{T}R_{i+1}^{*d,l}-\frac{\partial R_{i+1}^{d,l}}{\partial\upsilon}BU^{-1}B^{T}R_{i+1}^{d,l}\right)\\
		-  \Delta\left(R_{i+1}^{*d,l}BU^{-1}B^{T}\frac{\partial R_{i+1}^{*d,l}}{\partial\upsilon}-R_{i+1}^{d,l}BU^{-1}B^{T}\frac{\partial R_{i+1}^{d,l}}{\partial\upsilon}\right)+O_{p,n}(\triangle^{2})
		\end{array}
		\]
		\[
		\begin{array}{l}
		 \frac{\partial h_{i}^{*d,l}}{\partial\upsilon}-\frac{\partial h_{i}^{d,l}}{\partial\upsilon}\\
		=  \frac{\partial h_{i+1}^{*d,l}}{\partial\upsilon}-\frac{\partial h_{i+1}^{d,l}}{\partial\upsilon}+\Delta\frac{\partial A_{\theta^{*}}}{\partial\upsilon}(X^{*d,l-1}(t_{i}),t_{i})^{T}h_{i+1}^{*d,l}-\Delta\frac{\partial A_{\theta^{*}}}{\partial\upsilon}(X^{d,l-1}(t_{i}),t_{i})^{T}h_{i+1}^{d,l}\\
		+  \Delta\frac{\partial A_{\theta^{*}}}{\partial x}(X^{*d,l-1}(t_{i}),t_{i})^{T}\frac{\partial X^{*d,l-1}(t_{i})}{\partial\upsilon}h_{i+1}^{*d,l}-\Delta\frac{\partial A_{\theta^{*}}}{\partial x}(X^{d,l-1}(t_{i}),t_{i})^{T}\frac{\partial X^{d,l-1}(t_{i})}{\partial\upsilon}h_{i+1}^{d,l}\\
		+  \Delta A_{\theta^{*}}(X^{*d,l-1}(t_{i}),t_{i})^{T}\frac{\partial h_{i+1}^{*d,l}}{\partial\upsilon}-\Delta A_{\theta^{*}}(X^{d,l-1}(t_{i}),t_{i})^{T}\frac{\partial h_{i+1}^{d,l}}{\partial\upsilon}\\
		- \triangle\left(\frac{\partial R_{i+1}^{*d,l}}{\partial\upsilon}BU^{-1}B^{T}h_{i+1}^{*d,l}-\frac{\partial R_{i+1}^{d,l}}{\partial\upsilon}BU^{-1}B^{T}h_{i+1}^{d,l}\right)\\
			- \triangle\left(R_{i+1}^{*d,l}BU^{-1}B^{T}\frac{\partial h_{i+1}^{*d,l}}{\partial\upsilon}-R_{i+1}^{d,l}BU^{-1}B^{T}\frac{\partial h_{i+1}^{d,l}}{\partial\upsilon}\right)+O_{p,n}(\triangle^{2})
		\end{array}
		\]
	}From this, we can prove by induction $\frac{\partial R_{i}^{*d,l}}{\partial\upsilon}-\frac{\partial R_{i}^{d,l}}{\partial\upsilon}=O_{p,n}(\Delta)$,
	$\frac{\partial h_{i}^{*d,l}}{\partial\upsilon}-\frac{\partial h_{i}^{d,l}}{\partial\upsilon}=O_{p,n}(\Delta).$
	 For notation clarity we treat the case $d=1$, by using these approximations,
	we obtain for $\nabla_{\theta}S_{n}^{l}(Y^{d*};v^{*})-\nabla_{\theta}S_{n}^{l}(Y;v^{*})$:
	\[
	\begin{array}{l}
	\nabla_{\theta}S_{n}^{l}(Y^{d*};v^{*})-\nabla_{\theta}S_{n}^{l}(Y;v^{*})\\
	=\left(x_{0}^{*}\right)^{T}\left(\frac{\partial R_{0}^{*d,l}}{\partial\theta}-\frac{\partial R_{0}^{d,l}}{\partial\theta}\right)x_{0}^{*}+2\left(x_{0}^{*}\right)^{T}\left(\frac{\partial h_{0}^{*d,l}}{\partial\theta}-\frac{\partial h_{0}^{d,l}}{\partial\theta}\right)\\
	-2\triangle\sum_{i=0}^{n-1}\left(\left(h_{i+1}^{*d,l}\right){}^{T}BU^{-1}B^{T}\frac{\partial h_{i+1}^{*d,l}}{\partial\theta}-\left(h_{i+1}^{d,l}\right){}^{T}BU^{-1}B^{T}\frac{\partial h_{i+1}^{d,l}}{\partial\theta}\right)+O_{p,n}(\triangle)\\
	=\left(x_{0}^{*}\right)^{T}\left(\frac{\partial R_{0}^{*d,l}}{\partial\theta}-\frac{\partial R_{0}^{d,l}}{\partial\theta}\right)x_{0}^{*}+2\left(x_{0}^{*}\right)^{T}\left(\frac{\partial h_{0}^{*d,l}}{\partial\theta}-\frac{\partial h_{0}^{d,l}}{\partial\theta}\right)\\
	-2\triangle\sum_{i=0}^{n-1}\left(h_{i+1}^{*d,l}-h_{i+1}^{d,l}\right){}^{T}BU^{-1}B^{T}\frac{\partial h_{i+1}^{*d,l}}{\partial\theta}\\
	-2\triangle\sum_{i=0}^{n-1}\left(h_{i+1}^{d,l}\right){}^{T}BU^{-1}B^{T}\left(\frac{\partial h_{i+1}^{*d,l}}{\partial\theta}-\frac{\partial h_{i+1}^{d,l}}{\partial\theta}\right)+O_{p,n}(\triangle)\\
	=-2\triangle\sum_{i=0}^{n-1}\left(h_{i+1}^{*d,l}-h_{i+1}^{d,l}\right){}^{T}BU^{-1}B^{T}\frac{\partial h_{i+1}^{*d,l}}{\partial\theta}+O_{p,n}(\triangle)\\
	=-2\triangle\sum_{i=0}^{n-1}\left(-\triangle C^{T}\sum_{j=i+1}^{n-1}\epsilon_{j}+O_{p,n}(\triangle)\right){}^{T}BU^{-1}B^{T}\frac{\partial h_{i+1}^{*d,l}}{\partial\theta}+O_{p,n}(\triangle)\\
	=2\triangle^{2}\sum_{i=0}^{n-1}\left(\sum_{j=i+1}^{n-1}\epsilon_{j}^{T}\right)CBU^{-1}B^{T}\frac{\partial h_{i+1}^{*d,l}}{\partial\theta}+O_{p,n}(\triangle).
	\end{array}
	\]
	Let us denote $K_{i}^{l}(\beta)=2CBU^{-1}B^{T}\frac{\partial h_{i+1}^{*l}}{\partial\beta}$
	and decompose the right hand side term{\small{}:
		\[
		\begin{array}{lll}
		\triangle^{2}\sum_{i=0}^{n-1}\left(\sum_{j=i+1}^{n}\epsilon_{j}^{T}\right)K_{i}^{l}(\theta) & = & \triangle^{2}\sum_{i=0}^{n-1}\left(\sum_{j=0}^{n}\epsilon_{j}^{T}-\sum_{j=0}^{i}\epsilon_{j}^{T}\right)K_{i}^{l}(\theta)\\
		& = & \triangle^{2}\sum_{i=0}^{n-1}\left(\sum_{j=0}^{n}\epsilon_{j}^{T}\right)K_{i}^{l}(\theta)\\
		& - & \triangle^{2}\sum_{i=0}^{n-1}\left(\sum_{j=0}^{i}\epsilon_{j}^{T}\right)K_{i}^{l}(\theta).
		\end{array}
		\]
	}By definition of $K_{i}^{l}(\theta)$, $\triangle\sum_{i=0}^{n-1}K_{i}^{l}(\theta)$
	converges to the limit $K^{l}(\theta)=2CBU^{-1}B^{T}\int_{0}^{T}\frac{\partial h^{*l}(t)}{\partial\theta}dt$
	as a Riemann sum and so $\triangle^{2}\sum_{j=0}^{n}\epsilon_{j}^{T}\sum_{i=0}^{n-1}K_{i}^{l}(\theta)=\triangle\sum_{j=0}^{n}\epsilon_{j}^{T}(K^{l}(\theta)+o_{n}(1))$.
	Moreover{\small{},
		\[
		\begin{array}{lll}
		\triangle^{2}\sum_{i=0}^{n-1}\left(\sum_{j=0}^{i}\epsilon_{j}^{T}\right)K_{i}^{l}(\theta) & = & \triangle^{\frac{3}{2}}\sum_{i=0}^{n-1}\sqrt{\Delta}\left(\sum_{j=0}^{i}\epsilon_{j}^{T}\right)K_{i}^{l}(\theta)\\
		& = & \triangle^{\frac{3}{2}}\sum_{i=0}^{n-1}K_{i}^{l}(\theta)o_{p,n}(1)\\
		& = & \triangle\sum_{i=0}^{n-1}K_{i}^{l}(\theta)o_{p,n}(\sqrt{\Delta})=o_{p,n}(\sqrt{\Delta})
		\end{array}
		\]
	}since $\sqrt{\Delta}\sum_{j=0}^{i}\epsilon_{j}=o_{p,n}(1)$ for each
	$i$. Thus:{\small{}
		\[
		\begin{array}{lll}
		\triangle^{2}\sum_{i=0}^{n-1}\left(\sum_{j=i+1}^{n}\epsilon_{j}^{T}\right)K_{i}^{l}(\theta) & = & \left(\triangle\sum_{j=0}^{n}\epsilon_{j}^{T}\right)(K^{l}(\theta)+o_{n}(1))+o_{p,n}(\sqrt{\Delta}).\end{array}
		\]
		and we derive from that }
	\[
	\begin{array}{l}
	\nabla_{\theta}S_{n}^{l}(Y^{d*};v^{*})-\nabla_{\theta}S_{n}^{l}(Y;v^{*})\\
	=\left(\triangle\sum_{j=0}^{n}\epsilon_{j}^{T}\right)(K^{l}(\theta)+o_{n}(1))+o_{p,n}(\sqrt{\Delta}).
	\end{array}
	\]
	Now let us focus on $\nabla_{x_{0}}S_{n}^{l}(Y^{d*};v^{*})-\nabla_{x_{0}}S_{n}^{l}(Y;v^{*})$:
	\[
	\begin{array}{l}
	\nabla_{x_{0}}S_{n}^{l}(Y^{d*};v^{*})-\nabla_{x_{0}}S_{n}^{l}(Y;v^{*})\\
	=2\left(R_{0}^{*d,l}-R_{0}^{d,l}\right)x_{0}^{*}+\left(x_{0}^{*}\right)^{T}\left(\frac{\partial R_{0}^{*d,l}}{\partial x_{0}}-\frac{\partial R_{0}^{d,l}}{\partial x_{0}}\right)x_{0}^{*}\\
	+2\left(h_{0}^{*d,l}-h_{0}^{d,l}\right)+2\left(x_{0}^{*}\right)^{T}\left(\frac{\partial h_{0}^{*d,l}}{\partial x_{0}}-\frac{\partial h_{0}^{d,l}}{\partial x_{0}}\right)\\
	-2\triangle\sum_{i=0}^{n-1}\left(h_{i+1}^{*d,l}-h_{i+1}^{d,l}\right){}^{T}BU^{-1}B^{T}\frac{\partial h_{i+1}^{*d,l}}{\partial x_{0}}\\
	-2\triangle\sum_{i=0}^{n-1}\left(h_{i+1}^{d,l}\right){}^{T}BU^{-1}B^{T}\left(\frac{\partial h_{i+1}^{*d,l}}{\partial\theta}-\frac{\partial h_{i+1}^{d,l}}{\partial x_{0}}\right)+O_{p,n}(\triangle)\\
	=2\left(h_{0}^{*d,l}-h_{0}^{d,l}\right)-2\triangle\sum_{i=0}^{n-1}\left(h_{i+1}^{*d,l}-h_{i+1}^{d,l}\right){}^{T}BU^{-1}B^{T}\frac{\partial h_{i+1}^{*d,l}}{\partial x_{0}}+O_{p,n}(\triangle)\\
	=2\left(-\triangle C^{T}\sum_{j=0}^{n}\epsilon_{j}+O_{p,n}(\triangle)\right)\\
	-2\triangle\sum_{i=0}^{n-1}\left(-\triangle C^{T}\sum_{j=i+1}^{n-1}\epsilon_{j}+O_{p,n}(\triangle)\right){}^{T}BU^{-1}B^{T}\frac{\partial h_{i+1}^{*d,l}}{\partial x_{0}}+O_{p,n}(\triangle)\\
	=-2C^{T}\left(\triangle\sum_{j=0}^{n}\epsilon_{j}\right)+2\triangle^{2}\sum_{i=0}^{n-1}\left(\sum_{j=i+1}^{n-1}\epsilon_{j}\right)CBU^{-1}B^{T}\frac{\partial h_{i+1}^{*d,l}}{\partial x_{0}}+O_{p,n}(\triangle).
	\end{array}
	\]
	Similarly as in $\nabla_{\theta}S_{n}^{l}(Y^{d*};v^{*})-\nabla_{\theta}S_{n}^{l}(Y;v^{*})$
	case, we reformulate this expression as: 
	\[
	\begin{array}{l}
	\nabla_{x_{0}}S_{n}^{l}(Y^{d*};v^{*})-\nabla_{x_{0}}S_{n}^{l}(Y;v^{*})\\
	=-2C^{T}\left(\triangle\sum_{j=0}^{n}\epsilon_{j}\right)+\left(\triangle\sum_{j=0}^{n}\epsilon_{j}^{T}\right)(K^{l}(x_{0})+o_{n}(1))+o_{p,n}(\sqrt{\Delta}).
	\end{array}
	\]
\end{proof}
\begin{prop}
	\label{prop:param_as_representation}Under conditions C1 to C7, we
	have $-\nabla_{\upsilon}S_{n}^{l}(Y;\upsilon^{*})=(\frac{\partial^{2}S^{\infty}(\upsilon^{*})}{\partial^{2}\upsilon}+o_{p,n}(1)+o_{l}(1))\left(\widehat{\upsilon}-\upsilon^{*}\right).$
\end{prop}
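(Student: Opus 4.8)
The plan is to run the classical mean-value expansion of the estimating equation of an $M$-estimator, using that $\upsilon\longmapsto S_{n}^{l}(Y;\upsilon)$ is $C^{2}$: this follows from conditions C6--C7 and Proposition \ref{prop:discrete_Riccati_accurate_representation}, since the finite-difference recursions defining $R_{\upsilon,i}^{d,l}$, $h_{\upsilon,i}^{d,l}(Y)$ and $\overline{X_{\upsilon}^{d,l}}$, and hence the profiled cost $S_{n}^{l}(Y;\upsilon)$, depend smoothly on $\upsilon$. First I would invoke Theorem \ref{thm:Consistency_theorem} together with condition C4: since $\upsilon^{*}$ is interior to $\varUpsilon$ and $\widehat{\upsilon}\longrightarrow\upsilon^{*}$ in probability, on an event of probability tending to one $\widehat{\upsilon}$ is an interior minimiser of $S_{n}^{l}(Y;\cdot)$, so that $\nabla_{\upsilon}S_{n}^{l}(Y;\widehat{\upsilon})=0$, and moreover the whole segment joining $\upsilon^{*}$ to $\widehat{\upsilon}$ lies in a fixed small ball around $\upsilon^{*}$ contained in the interior of $\varUpsilon$.

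Next I would Taylor-expand the gradient around $\upsilon^{*}$ with integral remainder (the vector-valued mean value theorem): setting $\upsilon_{t}=\upsilon^{*}+t(\widehat{\upsilon}-\upsilon^{*})$ and
\[
\overline{H}_{n}^{l}:=\int_{0}^{1}\frac{\partial^{2}S_{n}^{l}(Y;\upsilon_{t})}{\partial^{2}\upsilon}\,dt,
\]
one gets $0=\nabla_{\upsilon}S_{n}^{l}(Y;\widehat{\upsilon})=\nabla_{\upsilon}S_{n}^{l}(Y;\upsilon^{*})+\overline{H}_{n}^{l}(\widehat{\upsilon}-\upsilon^{*})$, hence
\[
-\nabla_{\upsilon}S_{n}^{l}(Y;\upsilon^{*})=\overline{H}_{n}^{l}\,(\widehat{\upsilon}-\upsilon^{*}).
\]
It then suffices to prove $\overline{H}_{n}^{l}=\frac{\partial^{2}S^{\infty}(\upsilon^{*})}{\partial^{2}\upsilon}+o_{p,n}(1)+o_{l}(1)$.

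For this last step I would split
\[
\overline{H}_{n}^{l}-\frac{\partial^{2}S^{\infty}(\upsilon^{*})}{\partial^{2}\upsilon}=\int_{0}^{1}\left(\frac{\partial^{2}S_{n}^{l}(Y;\upsilon_{t})}{\partial^{2}\upsilon}-\frac{\partial^{2}S^{\infty}(\upsilon_{t})}{\partial^{2}\upsilon}\right)dt+\int_{0}^{1}\left(\frac{\partial^{2}S^{\infty}(\upsilon_{t})}{\partial^{2}\upsilon}-\frac{\partial^{2}S^{\infty}(\upsilon^{*})}{\partial^{2}\upsilon}\right)dt.
\]
The first integral is dominated in norm by $\sup_{\upsilon\in\varUpsilon}\left\Vert \frac{\partial^{2}S_{n}^{l}(Y;\upsilon)}{\partial^{2}\upsilon}-\frac{\partial^{2}S^{\infty}(\upsilon)}{\partial^{2}\upsilon}\right\Vert _{2}$, which I would bound exactly as in the consistency proof of Theorem \ref{thm:Consistency_theorem}, by inserting $S_{n}^{l}(Y^{d*};\cdot)$ and $S^{l}(\cdot)$ and differentiating twice in $\upsilon$: the $l$-limit part $\frac{\partial^{2}S^{\infty}}{\partial^{2}\upsilon}-\frac{\partial^{2}S^{l}}{\partial^{2}\upsilon}$ is $o_{l}(1)$ by Lemma \ref{lem:Continuous_S_Gradient_Hessian_convergence} and Lemma \ref{lem:Riccati_optimal_traj_uniform_convergence}, the deterministic discretisation part $\frac{\partial^{2}S^{l}}{\partial^{2}\upsilon}-\frac{\partial^{2}S_{n}^{l}(Y^{d*};\cdot)}{\partial^{2}\upsilon}$ is $O_{n}(\triangle)$ by the derivative version of Lemma \ref{lem:h_R_continuity_limits}, and the measurement-noise part $\frac{\partial^{2}S_{n}^{l}(Y^{d*};\cdot)}{\partial^{2}\upsilon}-\frac{\partial^{2}S_{n}^{l}(Y;\cdot)}{\partial^{2}\upsilon}$ is $o_{p,n}(1)$, handled as in Proposition \ref{prop:gradient_S_true_param_as_representation} via the representation of Proposition \ref{prop:discrete_Riccati_accurate_representation}, the uniform bound of Proposition \ref{prop:E_bounded_probability} and the law of large numbers applied to the linear finite-difference equations satisfied by the second $\upsilon$-derivatives of $R_{\upsilon,i}^{d,l}-R_{\upsilon,i}^{*d,l}$ and $h_{\upsilon,i}^{d,l}-h_{\upsilon,i}^{*d,l}$; hence the first integral is $o_{p,n}(1)+o_{l}(1)$. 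For the second integral, conditions C1--C2 and C6--C7, together with the smooth dependence on $\upsilon$ of the solutions of \ref{eq:asymptotic_continous_accurate_Riccati_equation}, make $\upsilon\longmapsto\frac{\partial^{2}S^{\infty}(\upsilon)}{\partial^{2}\upsilon}$ continuous on the compact $\varUpsilon$, while Theorem \ref{thm:Consistency_theorem} gives $\sup_{t\in[0,1]}\left\Vert \upsilon_{t}-\upsilon^{*}\right\Vert _{2}=\left\Vert \widehat{\upsilon}-\upsilon^{*}\right\Vert _{2}=o_{p,n}(1)$, so the second integral is $o_{p,n}(1)$. Combining the two pieces yields $\overline{H}_{n}^{l}=\frac{\partial^{2}S^{\infty}(\upsilon^{*})}{\partial^{2}\upsilon}+o_{p,n}(1)+o_{l}(1)$, which closes the argument.

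The main obstacle is the uniform control of the second $\upsilon$-derivative of the profiled cost, i.e. the Hessian analogue of the gradient estimates established in Proposition \ref{prop:gradient_S_true_param_as_representation}: one must differentiate twice in $\upsilon$ the discrete Riccati recursions of Proposition \ref{prop:discrete_Riccati_accurate_representation}, and then propagate by reversed-time induction over $i\in\left\llbracket 0,\,n\right\rrbracket $ the resulting linear finite-difference bounds, keeping the $O_{p,n}(\triangle)$, $o_{l}(1)$ and noise remainders uniform over $\varUpsilon$. This is lengthy but mechanical once the first-order bounds $\frac{\partial R_{i}^{*d,l}}{\partial\upsilon}-\frac{\partial R_{i}^{d,l}}{\partial\upsilon}=O_{p,n}(\triangle)$ and $\frac{\partial h_{i}^{*d,l}}{\partial\upsilon}-\frac{\partial h_{i}^{d,l}}{\partial\upsilon}=O_{p,n}(\triangle)$ from Proposition \ref{prop:gradient_S_true_param_as_representation} are in hand; the remaining ingredients (the Taylor step and the continuity of the asymptotic Hessian) are standard.
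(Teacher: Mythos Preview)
Your proposal is correct and takes essentially the same approach as the paper: a mean-value expansion of $\nabla_{\upsilon}S_{n}^{l}(Y;\cdot)$ between $\widehat{\upsilon}$ and $\upsilon^{*}$ using the first-order condition $\nabla_{\upsilon}S_{n}^{l}(Y;\widehat{\upsilon})=0$, followed by convergence of the resulting Hessian to $\frac{\partial^{2}S^{\infty}(\upsilon^{*})}{\partial^{2}\upsilon}$ via Lemmas \ref{lem:h_R_continuity_limits} and \ref{lem:Continuous_S_Gradient_Hessian_convergence} together with consistency of the intermediate point. The only minor differences are that you use the integral form of the mean-value theorem (the technically correct choice for a vector-valued gradient) where the paper invokes a single intermediate point $\widetilde{\upsilon}$, and that you insert the noiseless discrete signal $Y^{d*}$ as an extra intermediate in the Hessian decomposition, whereas the paper passes directly from $\frac{\partial^{2}S_{n}^{l}(Y;\upsilon)}{\partial^{2}\upsilon}$ to $\frac{\partial^{2}S^{l}(\upsilon)}{\partial^{2}\upsilon}$ using Lemma \ref{lem:h_R_continuity_limits} and then applies the continuous mapping theorem.
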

\begin{proof}
	For notation clarity we treat the case $d=1$, if $\theta\longmapsto A_{\theta}$
	is $C^{1}$ on $\Theta$, then $\upsilon\longmapsto S_{n}^{l}(Y;\upsilon)$
	is $C^{1}$ as well {\small{}with}
	\[
	\begin{array}{l}
	\nabla_{\theta}S_{n}^{l}(Y;\upsilon)\,=x_{0}^{T}\frac{\partial R_{\upsilon,0}^{d,l}}{\partial\theta}x_{0}+2x_{0}^{T}\frac{\partial h_{\upsilon,0}^{d,l}}{\partial\theta}(Y)\\
	-\frac{\partial}{\partial\theta}\left(\triangle\sum_{i=0}^{n-1}h_{\upsilon,i+1}^{d,l}(Y)^{T}BG(R_{\upsilon,i+1}^{d,l})B^{T}h_{\upsilon,i+1}^{d,l}(Y)\right)\\
	=x_{0}^{T}\frac{\partial R_{\upsilon,0}^{d,l}}{\partial\theta}x_{0}+2x_{0}^{T}\frac{\partial h_{\upsilon,0}^{d,l}}{\partial\theta}(Y)\\
	-2\triangle\sum_{i=0}^{n-1}h_{\upsilon,i+1}^{d,l}(Y)^{T}BG(R_{\upsilon,i+1}^{d,l})B^{T}\frac{\partial h_{\upsilon,i+1}^{d,l}}{\partial\theta}(Y)\\
	-\triangle\sum_{i=0}^{n-1}h_{\upsilon,i+1}^{d,l}(Y)^{T}B\frac{\partial G}{\partial\theta}(R_{\upsilon,i+1}^{d,l})B^{T}h_{\upsilon,i+1}^{d,l}(Y)
	\end{array}
	\]
	and
	\[
	\begin{array}{l}
	\nabla_{x_{0}}S_{n}^{l}(Y;v)\,=2R_{\upsilon,0}^{d,l}x_{0}+x_{0}^{T}\frac{\partial R_{\upsilon,0}^{d,l}}{\partial x_{0}}x_{0}+2h_{\upsilon,0}^{d,l}+2x_{0}^{T}\frac{\partial h_{\upsilon,0}^{d,l}}{\partial x_{0}}\\
	-2\triangle\sum_{i=0}^{n-1}h_{\upsilon,i+1}^{d,l}(Y)^{T}BG(R_{\upsilon,i+1}^{d,l})B^{T}\frac{\partial h_{\upsilon,i+1}^{d,l}}{\partial x_{0}}(Y)\\
	-\triangle\sum_{i=0}^{n-1}h_{\upsilon,i+1}^{d,l}(Y)^{T}B\frac{\partial G}{\partial x_{0}}(R_{\upsilon,i+1}^{d,l})B^{T}h_{\upsilon,i+1}^{d,l}(Y).
	\end{array}
	\]
	If now $\theta\longmapsto A_{\theta}$ is $C^{2}$ on $\Theta$,
	from proposition \ref{lem:h_R_continuity_limits} we derive $\upsilon\longmapsto S_{n}^{l}(Y;\upsilon)$
	is $C^{2}$ as well with components equal to: 
	\[
	\begin{array}{l}
	\frac{\partial^{2}S_{n}^{l}(Y,v)}{\partial^{2}\theta}\,=x_{0}^{T}\frac{\partial^{2}R_{v,0}^{d,l}}{\partial^{2}\theta}x_{0}+2x_{0}^{T}\frac{\partial^{2}h_{v,0}^{d,l}}{\partial^{2}\theta}(Y)\\
	-2\triangle\sum_{i=0}^{n-1}\frac{\partial}{\partial\theta}\left(h_{v,i+1}^{d,l}(Y)^{T}BG(R_{v,i+1}^{d,l})B^{T}\frac{\partial h_{v,i+1}^{d,l}}{\partial\theta}(Y)\right)\\
	-\triangle\sum_{i=0}^{n-1}\frac{\partial}{\partial\theta}\left(h_{v,i+1}^{d,l}(Y)^{T}B\frac{\partial G}{\partial\theta}(R_{v,i+1}^{d,l})B^{T}h_{v,i+1}^{d,l}(Y)\right)\\
	=x_{0}^{T}\frac{\partial^{2}R_{v,0}^{d,l}}{\partial^{2}\theta}x_{0}+2x_{0}^{T}\frac{\partial^{2}h_{v,0}^{d,l}}{\partial^{2}\theta}(Y)\\
	-2\triangle\sum_{i=0}^{n-1}\frac{\partial h_{v,i+1}^{d,l}(Y)^{T}}{\partial\theta}BG(R_{v,i+1}^{d,l})B^{T}\frac{\partial h_{v,i+1}^{d,l}}{\partial\theta}(Y)\\
	-2\triangle\sum_{i=0}^{n-1}h_{v,i+1}^{d,l}(Y)^{T}B\frac{\partial G}{\partial\theta}(R_{v,i+1}^{d,l})B^{T}\frac{\partial h_{v,i+1}^{d,l}}{\partial\theta}(Y)\\
	-2\triangle\sum_{i=0}^{n-1}h_{v,i+1}^{d,l}(Y)^{T}BG(R_{v,i+1}^{d,l})B^{T}\frac{\partial^{2}h_{v,i+1}^{d,l}}{\partial^{2}\theta}(Y)\\
	-\triangle\sum_{i=0}^{n-1}\frac{\partial h_{v,i+1}^{d,l}(Y)^{T}}{\partial\theta}B\frac{\partial G}{\partial\theta}(R_{v,i+1}^{d,l})B^{T}h_{v,i+1}^{d,l}(Y)\\
	-\triangle\sum_{i=0}^{n-1}h_{v,i+1}^{d,l}(Y)^{T}B\frac{\partial^{2}G}{\partial^{2}\theta}(R_{v,i+1}^{d,l})B^{T}h_{v,i+1}^{d,l}(Y)\\
	-\triangle\sum_{i=0}^{n-1}h_{v,i+1}^{d,l}(Y)^{T}B\frac{\partial G}{\partial\theta}(R_{v,i+1}^{d,l})B^{T}\frac{\partial h_{v,i+1}^{d,l}(Y)}{\partial\theta}.\\
	=x_{0}^{T}\frac{\partial^{2}R_{v,0}^{d,l}}{\partial^{2}\theta}x_{0}+2x_{0}^{T}\frac{\partial^{2}h_{v,0}^{d,l}}{\partial^{2}\theta}(Y)\\
	-2\triangle\sum_{i=0}^{n-1}\frac{\partial h_{v,i+1}^{d,l}(Y)^{T}}{\partial\theta}BU^{-1}B^{T}\frac{\partial h_{v,i+1}^{d,l}}{\partial\theta}(Y)\\
	-2\triangle\sum_{i=0}^{n-1}h_{v,i+1}^{d,l}(Y)^{T}BU^{-1}B^{T}\frac{\partial^{2}h_{v,i+1}^{d,l}}{\partial^{2}\theta}(Y)+O_{p,n}(\Delta).
	\end{array}
	\]
	\[
	\begin{array}{l}
	\frac{\partial^{2}S_{n}^{l}(Y,v)}{\partial^{2}x_{0}}\,=2\left(\frac{\partial R_{\upsilon,0}^{d,l}}{\partial x_{0}}x_{0}+R_{\upsilon,0}^{d,l}\right)+\left(2\frac{\partial R_{\upsilon,0}^{d,l}}{\partial x_{0}}x_{0}+x_{0}^{T}\frac{\partial^{2}R_{\upsilon,0}^{d,l}}{\partial^{2}x_{0}}x_{0}\right)+4\frac{\partial h_{\upsilon,0}^{d,l}}{\partial x_{0}}+2x_{0}^{T}\frac{\partial^{2}h_{\upsilon,0}^{d,l}}{\partial^{2}x_{0}}\\
	-2\triangle\sum_{i=0}^{n-1}\frac{\partial}{\partial x_{0}}\left(h_{\upsilon,i+1}^{d,l}(Y)^{T}BG(R_{\upsilon,i+1}^{d,l})B^{T}\frac{\partial h_{\upsilon,i+1}^{d,l}}{\partial x_{0}}(Y)\right)\\
	-\triangle\sum_{i=0}^{n-1}\frac{\partial}{\partial x_{0}}\left(h_{\upsilon,i+1}^{d,l}(Y)^{T}B\frac{\partial G}{\partial x_{0}}(R_{\upsilon,i+1}^{d,l})B^{T}h_{\upsilon,i+1}^{d,l}(Y)\right)\\
	=2\left(\frac{\partial R_{\upsilon,0}^{d,l}}{\partial x_{0}}x_{0}+R_{\upsilon,0}^{d,l}\right)+\left(2\frac{\partial R_{\upsilon,0}^{d,l}}{\partial x_{0}}x_{0}+x_{0}^{T}\frac{\partial^{2}R_{\upsilon,0}^{d,l}}{\partial^{2}x_{0}}x_{0}\right)+4\frac{\partial h_{\upsilon,0}^{d,l}}{\partial x_{0}}+2x_{0}^{T}\frac{\partial^{2}h_{\upsilon,0}^{d,l}}{\partial^{2}x_{0}}\\
	-2\triangle\sum_{i=0}^{n-1}\frac{\partial h_{v,i+1}^{d,l}(Y)^{T}}{\partial x_{0}}BU^{-1}B^{T}\frac{\partial h_{v,i+1}^{d,l}}{\partial x_{0}}(Y)\\
	-2\triangle\sum_{i=0}^{n-1}h_{v,i+1}^{d,l}(Y)^{T}BU^{-1}B^{T}\frac{\partial^{2}h_{v,i+1}^{d,l}}{\partial^{2}x_{0}}(Y)+O_{p,n}(\Delta).
	\end{array}
	\]
	\[
	\begin{array}{l}
	\frac{\partial^{2}S_{n}^{l}(Y,v)}{\partial\theta\partial x_{0}}=\frac{\partial}{\partial x_{0}}\left(x_{0}^{T}\frac{\partial R_{\upsilon,0}^{d,l}}{\partial\theta}x_{0}+2x_{0}^{T}\frac{\partial h_{\upsilon,0}^{d,l}}{\partial\theta}(Y)\right)\\
	-2\triangle\sum_{i=0}^{n-1}\frac{\partial}{\partial x_{0}}\left(h_{\upsilon,i+1}^{d,l}(Y)^{T}BG(R_{\upsilon,i+1}^{d,l})B^{T}\frac{\partial h_{\upsilon,i+1}^{d,l}}{\partial\theta}(Y)\right)\\
	-\triangle\sum_{i=0}^{n-1}\frac{\partial}{\partial x_{0}}\left(h_{\upsilon,i+1}^{d,l}(Y)^{T}B\frac{\partial G}{\partial\theta}(R_{\upsilon,i+1}^{d,l})B^{T}h_{\upsilon,i+1}^{d,l}(Y)\right)\\
	=2\frac{\partial R_{\upsilon,0}^{d,l}}{\partial\theta}x_{0}+x_{0}^{T}\frac{\partial^{2}R_{\upsilon,0}^{d,l}}{\partial\theta\partial x_{0}}x_{0}+2\frac{\partial h_{\upsilon,0}^{d,l}}{\partial\theta}(Y)+2x_{0}^{T}\frac{\partial^{2}h_{\upsilon,0}^{d,l}}{\partial\theta\partial x_{0}}(Y)\\
	-2\triangle\sum_{i=0}^{n-1}\left(\frac{\partial h_{\upsilon,i+1}^{d,l}(Y)}{\partial x_{0}}\right)^{T}BU^{-1}B^{T}\frac{\partial h_{\upsilon,i+1}^{d,l}}{\partial\theta}(Y)\\
	-2\triangle\sum_{i=0}^{n-1}h_{\upsilon,i+1}^{d,l}(Y)^{T}BU^{-1}B^{T}\frac{\partial^{2}h_{\upsilon,i+1}^{d,l}}{\partial\theta\partial x_{0}}(Y)+O_{p,n}(\Delta).
	\end{array}
	\]
	By using Taylor's theorem we derive that it exists $\widetilde{v}$
	on the line segment between $\widehat{v}$ and $v^{*}$ such that{\small{}:
		\[
		\nabla_{v}S_{n}^{l}(Y;\widehat{v})-\nabla_{v}S_{n}^{l}(Y;v^{*})=-\nabla_{v}S_{n}^{l}(Y;v^{*})=\frac{\partial^{2}S_{n}^{l}(Y,\widetilde{v})}{\partial^{2}v}^{T}\left(\widehat{v}-v^{*}\right)
		\]
	}since first order optimality condition imposes $\nabla_{v}S_{n}^{l}(Y;\widehat{v})=0$.
	If $\theta\longmapsto A_{\theta}$ is $C^{2}$ on $\Theta$, from
	lemma \ref{lem:h_R_continuity_limits} we derive that  $v\longmapsto S^{l}(v)$
	is $C^{2}$ and we have:
	\[
	\begin{array}{l}
	\frac{\partial^{2}S^{l}(v)}{\partial^{2}\theta}\,=x_{0}^{T}\frac{\partial^{2}R_{v}^{l}(0)}{\partial^{2}\theta}x_{0}+2x_{0}^{T}\frac{\partial^{2}h_{v}^{l}(0)}{\partial^{2}\theta}-2\int_{0}^{T}\frac{\partial}{\partial\theta}\left(h_{v}^{l}(t)^{T}BU^{-1}B^{T}\frac{\partial h_{v}^{l}(t)}{\partial\theta}\right)dt\\
	=x_{0}^{T}\frac{\partial^{2}R_{v}^{l}(0)}{\partial^{2}\theta}x_{0}+2x_{0}^{T}\frac{\partial^{2}h_{v}^{l}(0)}{\partial^{2}\theta}
	\\-2\int_{0}^{T}\left(\frac{\partial h_{v}^{l}}{\partial\theta}(t)^{T}BU^{-1}B^{T}\frac{\partial h_{v}^{l}(t)}{\partial\theta}+h_{v}^{l}(t)^{T}BU^{-1}B^{T}\frac{\partial^{2}h_{v}^{l}(t)}{\partial^{2}\theta}\right)dt
	\end{array}
	\]
	which gives us the difference:
	\[
	\begin{array}{l}
	\frac{\partial^{2}S^{l}(\upsilon)}{\partial^{2}\theta}-\frac{\partial^{2}S_{n}^{l}(Y,\upsilon)}{\partial^{2}\theta}\\
	=x_{0}^{T}\left(\frac{\partial^{2}R_{\upsilon}^{l}(0)}{\partial^{2}\theta}-\frac{\partial^{2}R_{\upsilon,0}^{d,l}}{\partial^{2}\theta}\right)x_{0}+2x_{0}^{T}\left(\frac{\partial^{2}h_{\upsilon}^{l}(0)}{\partial^{2}\theta}-\frac{\partial^{2}h_{\upsilon,0}^{d,l}}{\partial^{2}\theta}(Y)\right)\\
	-2\left(\int_{0}^{T}\frac{\partial h_{\upsilon}^{l}}{\partial\theta}(t)^{T}BU^{-1}B^{T}\frac{\partial h_{\upsilon}^{l}(t)}{\partial\theta}dt-\triangle\sum_{i=0}^{n-1}\frac{\partial h_{\upsilon,i+1}^{d,l}(Y)^{T}}{\partial\theta}BU^{-1}B^{T}\frac{\partial h_{\upsilon,i+1}^{d,l}(Y)}{\partial\theta}\right)\\
	-2\left(\int_{0}^{T}h_{\upsilon}^{l}(t)^{T}BU^{-1}B^{T}\frac{\partial^{2}h_{\upsilon}^{l}(t)}{\partial^{2}\theta}dt-\triangle\sum_{i=0}^{n-1}h_{\upsilon,i+1}^{d,l}(Y)^{T}BU^{-1}\frac{\partial^{2}h_{\upsilon,i+1}^{d,l}(Y)}{\partial^{2}\theta}\right)+O_{p,n}(\Delta)\\
	=x_{0}^{T}\left(\frac{\partial^{2}R_{\upsilon}^{l}(0)}{\partial^{2}\theta}-\frac{\partial^{2}R_{\upsilon,0}^{d,l}}{\partial^{2}\theta}\right)x_{0}+2x_{0}^{T}\left(\frac{\partial^{2}h_{\upsilon}^{l}(0)}{\partial^{2}\theta}-\frac{\partial^{2}h_{\upsilon,0}^{d,l}}{\partial^{2}\theta}(Y)\right)\\
	-2\triangle\left(\sum_{i=0}^{n-1}\frac{\partial h_{\upsilon}^{l}(t_{i+1})^{T}}{\partial\theta}BU^{-1}B^{T}\frac{\partial h_{\upsilon}^{l}(t_{i+1})}{\partial\theta}-\frac{\partial h_{\upsilon,i+1}^{d}(Y)^{T}}{\partial\theta}BU^{-1}B^{T}\frac{\partial h_{\upsilon,i+1}^{dl}(Y)}{\partial\theta}\right)\\
	-2\triangle\left(\sum_{i=0}^{n-1}h_{\upsilon}^{l}(t_{i+1})^{T}BU^{-1}B^{T}\frac{\partial^{2}h_{\upsilon}^{l}(t_{i+1})}{\partial^{2}\theta}-h_{\upsilon,i+1}^{d,l}(Y)^{T}BU^{-1}B^{T}\frac{\partial^{2}h_{\upsilon,i+1}^{d,l}(Y)}{\partial^{2}\theta}\right)
	\\+O_{p,n}(\Delta)
	\end{array}
	\]
	again, from proposition \ref{lem:h_R_continuity_limits}, we know
	\[
	\begin{array}{l}
	\frac{\partial h_{\upsilon}^{l}(t_{i+1})}{\partial\theta}^{T}BU^{-1}B^{T}\frac{\partial h_{\upsilon}^{l}(t_{i+1})}{\partial\theta}-\frac{\partial h_{\upsilon,i+1}^{d,l}(Y)^{T}}{\partial\theta}BU^{-1}B^{T}\frac{\partial h_{\upsilon,i+1}^{d,l}(Y)}{\partial\theta}\\
	=(\frac{\partial h_{\upsilon,i+1}^{d,l}(Y)}{\partial\theta}+o_{p,n}(1))^{T}BU^{-1}B^{T}(\frac{\partial h_{\upsilon,i+1}^{d,l}(Y)}{\partial\theta}+o_{p,n}(1))-\frac{\partial h_{\upsilon,i+1}^{d,l}(Y)^{T}}{\partial\theta}BU^{-1}B^{T}\frac{\partial h_{\upsilon,i+1}^{d,l}(Y)}{\partial\theta}\\
	=o_{p,n}(1)
	\end{array}
	\]
	and:
	\[
	\begin{array}{l}
	h_{\upsilon}^{l}(t_{i+1})^{T}BU^{-1}B^{T}\frac{\partial^{2}h_{\upsilon}^{l}(t_{i+1})}{\partial^{2}\theta}-h_{\upsilon,i+1}^{d,l}(Y)^{T}BU^{-1}B^{T}\frac{\partial^{2}h_{\upsilon,i+1}^{d,l}(Y)}{\partial^{2}\theta}\\
	=(\frac{\partial^{2}h_{\upsilon,i+1}^{d,l}(Y)}{\partial^{2}\theta}+o_{p,n}(1))^{T}BU^{-1}B^{T}(\frac{\partial^{2}h_{\upsilon,i+1}^{d,l}(Y)}{\partial^{2}\theta}+o_{p,n}(1))-h_{\upsilon,i+1}^{d,l}(Y)^{T}BU^{-1}B^{T}\frac{\partial^{2}h_{\upsilon,i+1}^{d,l}(Y)}{\partial^{2}\theta}\\
	=o_{p,n}(1)
	\end{array}
	\]
	from this we can derive that $\frac{\partial^{2}S^{l}(\upsilon)}{\partial^{2}\theta}-\frac{\partial^{2}S_{n}^{l}(Y,\upsilon)}{\partial^{2}\theta}=o_{p,n}(1)$,
	in the same way we obtain $\frac{\partial^{2}S^{l}(\upsilon)}{\partial^{2}x_{0}}-\frac{\partial^{2}S_{n}^{l}(Y,\upsilon)}{\partial^{2}x_{0}}=o_{p,n}(1)$
	and $\frac{\partial^{2}S^{l}(\upsilon)}{\partial\theta\partial x_{0}}-\frac{\partial^{2}S_{n}^{l}(Y,\upsilon)}{\partial\theta\partial x_{0}}=o_{p,n}(1)$.
	Moreover, lemma \ref{lem:Continuous_S_Gradient_Hessian_convergence}
	gives us $\frac{\partial^{2}S^{l}(v^{*})}{\partial^{2}v}=\frac{\partial^{2}S^{\infty}(v^{*})}{\partial^{2}v}+o_{l}(1)$,
	so by consistency of $\widehat{v}$ (and thus of $\widetilde{v}$)
	and by using the continuous mapping theorem, we derive:
	\[
	\frac{\partial^{2}S_{n}^{l}(Y,\widetilde{v})}{\partial^{2}v}=\frac{\partial^{2}S^{\infty}(\upsilon^{*})}{\partial^{2}v}+o_{p,n}(1)+o_{l}(1)
	\]
	and thus conclude the proof.
\end{proof}

\subsection{$\widehat{\theta}^{T,CI}$, linear case}
\begin{thm}
	\label{thm:Asymptotic_Normality_prof_ci}Under conditions LC1-LC2-LC3-LC4-LC5-LC6-LC7,
	$\widehat{\theta}$ is asymptotically normal and $\widehat{\theta}-\theta^{*}=o_{p,n}(n^{-\frac{1}{2}})$.
\end{thm}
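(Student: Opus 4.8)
The plan is to follow the route of Theorem~\ref{thm:Asymptotic_Normality}, specialised to the cost profiled on both $u$ and $x_{0}$ and to the linear setting, where the dependence on the fixed-point index $l$ is vacuous ($S^{CI,l}=S^{CI,\infty}=S^{CI}$) and, crucially, the reduced Riccati matrix $R_{\theta,i}^{d}$ is \emph{deterministic}: its recursion in Proposition~\ref{prop:discrete_Riccati_accurate_representation} involves only $C^{T}C$ and $A_{\theta}(t_{i})$, never $Y$ nor the optimal trajectory, so all the randomness enters through $h_{\theta,i}^{d}(Y)$ alone. I would assemble two ingredients, in the spirit of Propositions~\ref{prop:gradient_S_true_param_as_representation} and~\ref{prop:param_as_representation}: (i) an asymptotic linear expansion of $-\nabla_{\theta}S_{n}^{CI}(Y;\theta^{*})$ in the measurement errors; and (ii) the convergence of the discrete Hessian $\partial^{2}S_{n}^{CI}(Y;\cdot)/\partial^{2}\theta$ to $\partial^{2}S^{CI}(\theta^{*})/\partial^{2}\theta$. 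A Taylor expansion around $\theta^{*}$ together with a central limit theorem, using LC7 for invertibility, then delivers the claim.

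\emph{Ingredient (ii).} Under LC6 and lemma~\ref{lem:h_R_continuity_limits}, $\theta\longmapsto(R_{\theta}(\cdot),h_{\theta}(\cdot))$ and their discrete analogues are $C^{2}$ with uniformly convergent first and second derivatives, so $\theta\longmapsto S_{n}^{CI}(Y;\theta)$ and $\theta\longmapsto S^{CI}(\theta)$ are $C^{2}$ on $\Theta$. Since $R_{\theta,0}^{d}=R_{\theta}(0)+o_{p,n}(1)$ uniformly and LC3 holds, $R_{\theta,0}^{d}$ is invertible for $n$ large with $\sup_{\theta}\Vert(R_{\theta,0}^{d})^{-1}-R_{\theta}(0)^{-1}\Vert_{2}=o_{p,n}(1)$, and differentiating the identity $(R_{\theta,0}^{d})^{-1}R_{\theta,0}^{d}=I_{d}$ propagates this to the first two $\theta$-derivatives. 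Differentiating the explicit $S_{n}^{CI}$ formula of Proposition~\ref{prop:discrete_Riccati_accurate_representation} twice, substituting $G(R_{\theta,i+1}^{d})=U^{-1}+O_{p,n}(\triangle)$, and comparing term by term with Proposition~\ref{prop:continuous_Riccati_accurate_representation} (each sum being a Riemann sum up to $O_{p,n}(\triangle)$), I would obtain $\sup_{\theta}\Vert\partial^{2}S_{n}^{CI}(Y;\theta)/\partial^{2}\theta-\partial^{2}S^{CI}(\theta)/\partial^{2}\theta\Vert_{2}=o_{p,n}(1)$; consistency of $\widehat{\theta}$ (Theorem~\ref{thm:Consistency_theorem_prof_ci}) and the continuous mapping theorem then give $\partial^{2}S_{n}^{CI}(Y;\widetilde{\theta})/\partial^{2}\theta=\partial^{2}S^{CI}(\theta^{*})/\partial^{2}\theta+o_{p,n}(1)$ for any $\widetilde{\theta}$ on the segment $[\widehat{\theta},\theta^{*}]$.

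\emph{Ingredient (i).} As $\theta^{*}$ is interior (LC4) and the unique minimiser of $S^{CI}$ (Proposition~\ref{prop:identifiability_condition_prof_ci}), $\nabla_{\theta}S^{CI}(\theta^{*})=0$, so I split
\[
-\nabla_{\theta}S_{n}^{CI}(Y;\theta^{*})=\bigl(\nabla_{\theta}S_{n}^{CI}(Y^{d*};\theta^{*})-\nabla_{\theta}S_{n}^{CI}(Y;\theta^{*})\bigr)+\bigl(\nabla_{\theta}S^{CI}(\theta^{*})-\nabla_{\theta}S_{n}^{CI}(Y^{d*};\theta^{*})\bigr),
\]
the last bracket being a pure discretisation error $O_{p,n}(\triangle)$ (as in lemma~\ref{lem:deterministic_gradient_behavior}). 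For the first bracket, $R_{\theta^{*},i}^{d}$ being observation-free gives $R_{i}^{*d}=R_{i}^{d}$, and the backward induction of Proposition~\ref{prop:gradient_S_true_param_as_representation} yields $h_{i}^{*d}-h_{i}^{d}=-\triangle C^{T}\sum_{j=i}^{n}\epsilon_{j}+O_{p,n}(\triangle)$ and $\partial_{\theta}h_{i}^{*d}-\partial_{\theta}h_{i}^{d}=O_{p,n}(\triangle)$. Substituting these into the $\theta$-derivative of $S_{n}^{CI}$, both the profiled term $-h_{\theta,0}^{d}(Y)^{T}(R_{\theta,0}^{d})^{-1}h_{\theta,0}^{d}(Y)$ and the sum $-\triangle\sum_{i}h_{\theta,i+1}^{d}(Y)^{T}BG(R_{\theta,i+1}^{d})B^{T}h_{\theta,i+1}^{d}(Y)$ are quadratic in $h$, so their $Y^{d*}$-versus-$Y$ differences linearise and, after the same Riemann-sum / double-sum rearrangement as in Proposition~\ref{prop:gradient_S_true_param_as_representation} and the bound of Proposition~\ref{prop:E_bounded_probability}, produce $\bigl(\triangle\sum_{j=0}^{n}\epsilon_{j}^{T}\bigr)(M_{\theta}+o_{n}(1))+o_{p,n}(\sqrt{\triangle})$ with $M_{\theta}$ a deterministic matrix built from $R_{\theta^{*}}(0)^{-1}$, $h_{\theta^{*}}$, $\partial_{\theta}h_{\theta^{*}}$ and $\int_{0}^{T}\partial_{\theta}h^{*}(t)\,dt$. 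In contrast with the non-profiled case there is no linear-in-$h$ piece, hence no separate $L\bigl(\triangle\sum\epsilon_{j}\bigr)$ term, and one gets $-\nabla_{\theta}S_{n}^{CI}(Y;\theta^{*})=\bigl(\triangle\sum_{j=0}^{n}\epsilon_{j}^{T}\bigr)(M_{\theta}+o_{n}(1))+o_{p,n}(\sqrt{\triangle})$.

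Combining, Taylor's theorem and the first-order condition $\nabla_{\theta}S_{n}^{CI}(Y;\widehat{\theta})=0$ give $-\nabla_{\theta}S_{n}^{CI}(Y;\theta^{*})=\bigl(\partial^{2}S^{CI}(\theta^{*})/\partial^{2}\theta+o_{p,n}(1)\bigr)(\widehat{\theta}-\theta^{*})$, while $\sqrt{\triangle}\sum_{j=0}^{n}\epsilon_{j}$ converges in distribution to $N(0,T\sigma^{2}I_{d'})$ by the classical central limit theorem for i.i.d.\ sums; LC7 lets me invert the Hessian in the limit and conclude that $\triangle^{-1/2}(\widehat{\theta}-\theta^{*})$ is asymptotically Gaussian, which is the announced asymptotic normality and rate. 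I expect the main obstacle to be the bookkeeping around the profiled term $-h_{\theta,0}^{d}(Y)^{T}(R_{\theta,0}^{d})^{-1}h_{\theta,0}^{d}(Y)$: one must keep $(R_{\theta,0}^{d})^{-1}$ and its first two $\theta$-derivatives uniformly bounded in probability (this is precisely where LC3 and its discrete counterpart for large $n$ are used) and check that the fluctuations this term injects into the score recombine cleanly with those of the summation term into the single $\triangle\sum_{j}\epsilon_{j}$ expansion above; once that is settled the remaining estimates are the same order-of-magnitude computations already carried out for Theorem~\ref{thm:Asymptotic_Normality}.
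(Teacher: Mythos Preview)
Your proposal is correct and follows essentially the same route as the paper: the paper's proof merges its Propositions~\ref{prop:gradient_S_true_param_as_representation_prof_ci} and~\ref{prop:param_as_representation_prof_ci}, which correspond exactly to your ingredients~(i) and~(ii), and then applies the same double-sum rearrangement plus CLT under LC7. The only cosmetic difference is that the paper keeps track of three separate deterministic coefficients $H$, $J$, $K$ (arising respectively from the $(\partial_{\theta}h)^{T}R^{-1}h$, the $h^{T}R^{-1}(\partial_{\theta}R)R^{-1}h$, and the summation term), while you bundle them into a single $M_{\theta}$; since all three multiply $\triangle\sum_{j}\epsilon_{j}$ linearly this is harmless.
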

\begin{proof}
	Since $R_{\theta,0}^{d}=R_{\theta}(0)+o_{p,n}(1)$ and LC3 holds,
	it exists $n'\in\mathbb{N}$ such that LC3disc holds i.e. $R_{\theta,0}^{d}$
	is invertible for all $n\geq n'$. By merging the proposition \ref{prop:gradient_S_true_param_as_representation_prof_ci}
	and proposition \ref{prop:param_as_representation_prof_ci}, we obtain
	the following asymptotic representation between $\widehat{\theta}$
	and $\theta^{*}$: {\small{}
		\[
		\begin{array}{lll}
		(\frac{\partial^{2}S^{CI}(\theta^{*})}{\partial^{2}\theta}+o_{p,n}(1))\left(\widehat{\theta}-\theta^{*}\right) & = & \triangle^{2}\sum_{i=0}^{n-1}\left(\sum_{j=i+1}^{n}\epsilon_{j}^{T}\right)K_{i}\\
		& + & \left(H+o_{n}(1)\right)\left(\triangle\sum_{j=1}^{n}\epsilon_{j}\right)\\
		& + & \left(\triangle\sum_{j=1}^{n}\epsilon_{j}\right)^{T}\left(J^{l}+o_{n}(1)\right)+O_{p,n}(\Delta)
		\end{array}
		\]
	}with $K_{i}$, $H$ and $J$ defined in proposition \ref{prop:param_as_representation_prof_ci}.
	Let us decompose the first right hand side term:{\small{}
		\[
		\begin{array}{lll}
		\triangle^{2}\sum_{i=0}^{n-1}\left(\sum_{j=i+1}^{n}\epsilon_{j}^{T}\right)K_{i} & = & \triangle^{2}\sum_{i=0}^{n-1}\left(\sum_{j=1}^{n}\epsilon_{j}^{T}-\sum_{j=1}^{i}\epsilon_{j}^{T}\right)K_{i}\\
		& = & \triangle^{2}\sum_{i=0}^{n-1}\left(\sum_{j=1}^{n}\epsilon_{j}^{T}\right)K_{i}\\
		& - & \triangle^{2}\sum_{i=0}^{n-1}\left(\sum_{j=1}^{i}\epsilon_{j}^{T}\right)K_{i}.
		\end{array}
		\]
	}By definition of $K_{i}$, $\triangle\sum_{i=0}^{n-1}K_{i}$ converges
	to a limit $K$ as a Riemann sum and so $\triangle^{2}\sum_{j=1}^{n}\epsilon_{j}^{T}\sum_{i=0}^{n-1}K_{i}=\triangle\sum_{j=1}^{n}\epsilon_{j}^{T}(K+o_{n}(1))$.
	Similarly as in proposition \ref{prop:gradient_S_true_param_as_representation},
	we derive{\small{}
		\[
		\begin{array}{lll}
		\triangle^{2}\sum_{i=0}^{n-1}\left(\sum_{j=1}^{i}\epsilon_{j}^{T}\right)K_{i} & = & o_{p,n}(\sqrt{\Delta})\end{array}
		\]
	}Thus{\small{}:
		\[
		\begin{array}{lll}
		\triangle^{2}\sum_{i=0}^{n-1}\left(\sum_{j=i+1}^{n}\epsilon_{j}^{T}\right)K_{i} & = & \left(\triangle\sum_{j=1}^{n}\epsilon_{j}^{T}\right)(K+o_{n}(1))+o_{p,n}(\sqrt{\Delta}).\end{array}
		\]
	}We can now reformulate the asymptotic representation:{\small{}
		\[
		\begin{array}{lll}
		(\frac{\partial^{2}S^{CI}(\theta^{*})}{\partial^{2}\theta}+o_{p,n}(1))\left(\widehat{\theta}-\theta^{*}\right) & = & \left(\triangle\sum_{j=1}^{n}\epsilon_{j}^{T}\right)(K+o_{n}(1))\\
		& + & \left(H+o_{n}(1)\right)\left(\triangle\sum_{j=1}^{n}\epsilon_{j}^{T}\right)\\
		& + & \left(\triangle\sum_{j=1}^{n}\epsilon_{j}\right)^{T}\left(J+o_{n}(1)\right)+o_{p,n}(\sqrt{\Delta}).
		\end{array}
		\]
	}By using L7 which ensures $\frac{\partial^{2}S^{CI}(\theta^{*})}{\partial^{2}\theta}+o_{p,n}(1)$
	tends to a nonsingular matrix with probability 1, we can use the central
	limit theorem to conclude.
\end{proof}
\begin{prop}
	\label{prop:gradient_S_true_param_as_representation_prof_ci}Under
	conditions LC1-LC2-LC3disc-LC3b-LC4-LC5-LC6, we have
	\[
	\begin{array}{l}
	-\nabla_{\theta}S_{n}^{CI}(Y;\theta^{*})=\left(H+o_{n}(1)\right)\left(\triangle\sum_{j=1}^{n}\epsilon_{j}\right)\\
	+\left(\triangle\sum_{j=1}^{n}\epsilon_{j}\right)^{T}\left(J+o_{n}(1)\right)+\triangle^{2}\sum_{i=0}^{n-1}\left(\sum_{j=i+1}^{n}\epsilon_{j}^{T}\right)K_{i}+O_{p,n}(\Delta)
	\end{array}
	\]
	with $H=-\left(h^{*}(0)^{T}R^{*}(0)^{-1}\frac{\partial R^{*}(0)}{\partial\theta}R^{*}(0)^{-1}+2\left(\frac{\partial h^{*}(0)}{\partial\theta}\right)^{T}R^{*}(0)^{-1}\right)$,
	$J=CR^{*}(0)^{-1}\frac{\partial R^{*}(0)}{\partial\theta}R^{*}(0)^{-1}h^{*}(0)$
	and $K_{i}=2CBU^{-1}B^{T}\frac{\partial h_{i+1}^{*}}{\partial\theta}.$
\end{prop}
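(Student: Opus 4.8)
The plan is to follow the pattern of the proof of Proposition~\ref{prop:gradient_S_true_param_as_representation}, now in the linear, $l$-free setting and with the profiling on $x_{0}$ baked into the feedback $-\left(R_{\theta,0}^{d}\right)^{-1}h_{\theta,0}^{d}(Y)$. First I would split
\[
-\nabla_{\theta}S_{n}^{CI}(Y;\theta^{*})=\left(\nabla_{\theta}S_{n}^{CI}(Y^{d*};\theta^{*})-\nabla_{\theta}S_{n}^{CI}(Y;\theta^{*})\right)+\left(\nabla_{\theta}S^{CI}(\theta^{*})-\nabla_{\theta}S_{n}^{CI}(Y^{d*};\theta^{*})\right),
\]
using $\nabla_{\theta}S^{CI}(\theta^{*})=0$ (Proposition~\ref{prop:identifiability_condition_prof_ci} together with LC4). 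The second bracket is a purely deterministic discretisation bias: $R_{\theta,i}^{d}$ and $\partial_{\theta}R_{\theta,i}^{d}$ are Euler approximations of $R_{\theta}$, $\partial_{\theta}R_{\theta}$, $h_{\theta,i}^{d}(Y^{d*})$ of $h_{\theta}$, and the outer $\triangle\sum$ is a Riemann sum, so by the Euler/Riemann discretisation estimates behind Lemma~\ref{lem:h_R_continuity_limits} it is $O_{p,n}(\triangle)$ and is absorbed into the remainder.

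Next I would differentiate the closed form of $S_{n}^{CI}$ from Proposition~\ref{prop:discrete_Riccati_accurate_representation}: only $-h_{\theta,0}^{d}(Y)^{T}\left(R_{\theta,0}^{d}\right)^{-1}h_{\theta,0}^{d}(Y)$ (handled with the differential of a matrix inverse) and $\triangle\sum_{i}h_{\theta,i+1}^{d}(Y)^{T}BG(R_{\theta,i+1}^{d})B^{T}h_{\theta,i+1}^{d}(Y)$ depend on $\theta$, LC3disc ensuring $\left(R_{\theta,0}^{d}\right)^{-1}$ exists for $n$ large. The feature special to the linear case is that $R_{\theta,i}^{d}$, $\partial_{\theta}R_{\theta,i}^{d}$, $G(R_{\theta,i+1}^{d})$ and $\partial_{\theta}G(R_{\theta,i+1}^{d})$ are data-free, hence identical along $Y$ and $Y^{d*}$; only the $h$-quantities see the noise. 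I would then reuse, in this setting, the backward inductions of Proposition~\ref{prop:gradient_S_true_param_as_representation}: $h_{i}^{*d}-h_{i}^{d}=-\triangle C^{T}\sum_{j=i}^{n}\epsilon_{j}+O_{p,n}(\triangle)$ and, after differentiating the $h$-recursion, $\partial_{\theta}h_{i}^{*d}-\partial_{\theta}h_{i}^{d}=O_{p,n}(\triangle)$ uniformly in $i$, together with $G(R_{\theta,i+1}^{d})=U^{-1}+O_{p,n}(\triangle)$, $\partial_{\theta}G(R_{\theta,i+1}^{d})=O_{p,n}(\triangle)$ and the limits $h_{0}^{d}\to h^{*}(0)$, $R_{\theta,0}^{d}\to R^{*}(0)$, $\partial_{\theta}h_{0}^{d}\to\partial_{\theta}h^{*}(0)$, $\partial_{\theta}R_{\theta,0}^{d}\to\partial_{\theta}R^{*}(0)$ from Lemma~\ref{lem:h_R_continuity_limits}.

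Substituting these into the contrast term, replacing $G$ by $U^{-1}$ at cost $O_{p,n}(\triangle)$ and dropping the $\partial_{\theta}G$-pieces (which are $O_{p,n}(\triangle^{3/2})$ after summation), three families survive. The $i=0$ boundary term, where the perturbation $h_{0}^{*d}-h_{0}^{d}$ meets $\left(R^{*}(0)\right)^{-1}$ and $\left(R^{*}(0)\right)^{-1}\partial_{\theta}R^{*}(0)\left(R^{*}(0)\right)^{-1}$ on either side of the inverse-derivative formula, reproduces $(H+o_{n}(1))\left(\triangle\sum_{j=1}^{n}\epsilon_{j}\right)+\left(\triangle\sum_{j=1}^{n}\epsilon_{j}\right)^{T}(J+o_{n}(1))$ with the stated $H,J$ (the gap between $\sum_{j=0}^{n}$ and $\sum_{j=1}^{n}$ contributing only $O_{p,n}(\triangle)$). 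The interior cross-term $-2\triangle\sum_{i=0}^{n-1}(\partial_{\theta}h_{i+1}^{d})^{T}BU^{-1}B^{T}(h_{i+1}^{*d}-h_{i+1}^{d})$, after inserting the leading order of $h_{i+1}^{*d}-h_{i+1}^{d}$ and exchanging $\partial_{\theta}h_{i+1}^{d}$ for $\partial_{\theta}h_{i+1}^{*d}$, becomes $\triangle^{2}\sum_{i=0}^{n-1}\left(\sum_{j=i+1}^{n}\epsilon_{j}^{T}\right)K_{i}$ with $K_{i}=2CBU^{-1}B^{T}\partial_{\theta}h_{i+1}^{*d}$. Everything else --- the $\partial_{\theta}R_{\theta,0}^{d}$-boundary contribution, the quadratic-in-perturbation remainders, and the $(\partial_{\theta}h^{*d}-\partial_{\theta}h^{d})$ cross-terms --- collapses into $O_{p,n}(\triangle)$. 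I expect the main obstacle to be the uniform-in-$i$ propagation of the $O_{p,n}(\triangle)$ bounds on $h_{i}^{*d}-h_{i}^{d}$ and its $\theta$-derivative through the backward recursions, together with the bookkeeping that double-sum remainders of the form $\triangle^{2}\sum_{i}\left(\sum_{j>i}\epsilon_{j}\right)O_{p,n}(\triangle)$ are genuinely negligible --- this uses $\sum_{j>i}\epsilon_{j}=O_{p,n}(\triangle^{-1/2})$, so the crude bound is $\triangle\cdot O_{p,n}(\triangle^{1/2})$ --- exactly as carried out in the proof of Proposition~\ref{prop:gradient_S_true_param_as_representation}.
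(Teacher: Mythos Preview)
Your proposal is correct and follows essentially the same approach as the paper: the same noisy/noiseless splitting, the same appeal to the $O_{p,n}(\triangle)$ discretisation bias for the deterministic piece, the same backward-induction estimates $h_{i}^{*d}-h_{i}^{d}=-\triangle C^{T}\sum_{j\geq i}\epsilon_{j}+O_{p,n}(\triangle)$ and $\partial_{\theta}h_{i}^{*d}-\partial_{\theta}h_{i}^{d}=O_{p,n}(\triangle)$, and the same term-by-term decomposition of the differentiated profiled cost into the $H$-, $J$- and $K_{i}$-contributions. Your remark that in the linear case $R_{\theta,i}^{d}$ and its $\theta$-derivatives are data-free is actually sharper than what the paper invokes (it simply reuses the general $O_{p,n}(\triangle)$ bounds from Proposition~\ref{prop:gradient_S_true_param_as_representation}), but it leads to the same conclusions.
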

\begin{proof}
	We consider the decomposition:
	\[
	\begin{array}{lll}
	-\nabla_{\theta}S_{n}^{CI}(Y;\theta^{*}) & = & \nabla_{\theta}S_{n}^{CI}(Y^{d*};\theta^{*})-\nabla_{\theta}S_{n}^{CI}(Y;\theta^{*})+\nabla_{\theta}S^{CI}(\theta^{*})-\nabla_{\theta}S_{n}^{CI}(Y^{d*};\theta^{*})\end{array}
	\]
	since:{\small{}
		\[
		\begin{array}{l}
		\nabla_{\theta}S_{n}^{CI}(Y^{d*};\theta^{*})-\nabla_{\theta}S^{CI}(\theta^{*})\\
		=\left(h_{0}^{*d}\right)^{T}\left(R_{0}^{*d}\right)^{-1}\frac{\partial R_{0}^{*d}}{\partial\theta}\left(R_{0}^{*d}\right)^{-1}h_{0}^{*d}-h^{*}(0)R^{*}(0)^{-1}\frac{\partial(R^{*}(0))}{\partial\theta}R^{*}(0)^{-1}h^{*}(0)\\
		+2\left(\frac{\partial h_{0}^{*d}}{\partial\theta}\right)^{T}\left(R_{0}^{*d}\right)^{-1}h_{0}^{*d}-2\left(\frac{\partial h_{0}^{*d}}{\partial\theta}\right)^{T}\left(R_{0}^{*d}\right)^{-1}h_{0}^{*d}\\
		-2\sum_{i=0}^{n-1}\left(\int_{t_{i}}^{t_{i+1}}h^{*}(t)^{T}BU^{-1}B^{T}\frac{\partial h^{*}(t)}{\partial\theta}dt-\triangle\left(h_{i+1}^{*d}\right)^{T}BU^{-1}B^{T}\frac{\partial h_{i+1}^{*d}}{\partial\theta}\right)+O_{n}(\triangle)\\
		=-2\sum_{i=0}^{n-1}\left(\int_{t_{i}}^{t_{i+1}}h^{*}(t)^{T}BU^{-1}B^{T}\frac{\partial h^{*}(t)}{\partial\theta}dt-\triangle\left(h_{i+1}^{*d}\right)^{T}BU^{-1}B^{T}\frac{\partial h_{i+1}^{*d}}{\partial\theta}\right)+O_{n}(\triangle)
		\end{array}
		\]
	}we derive as in lemma \ref{lem:deterministic_gradient_behavior}
	that $\nabla_{\theta}S^{CI}(\theta^{*})-\nabla_{\theta}S_{n}^{CI}(Y^{d*};\theta^{*})=O_{p,n}(\triangle)$,
	so the previous asymptotic decomposition becomes $
	-\nabla_{\theta}S_{n}^{CI}(Y;\theta^{*})  =  \nabla_{\theta}S_{n}^{CI}(Y^{d*};\theta^{*})-\nabla_{\theta}S_{n}^{CI}(Y;\theta^{*})+O_{p,n}(\triangle).$ We already derive in proposition \ref{prop:gradient_S_true_param_as_representation}
	that $h_{i}^{*d}-h_{i}^{d}=-\triangle C^{T}\sum_{j=i}^{n}\epsilon_{j}+O_{p,n}(\triangle)$,
	$R_{i}^{*d}-R_{i}^{d}=O_{p,n}(\triangle)$, $\frac{\partial R_{i}^{*d}}{\partial\theta}-\frac{\partial R_{i}^{d}}{\partial\theta}=O_{p,n}(\Delta)$
	and $\frac{\partial h_{i}^{*d}}{\partial\theta}-\frac{\partial h_{i}^{d}}{\partial\theta}=O_{p,n}(\Delta).$
	By using these approximations, we obtain for $\nabla_{\theta}S_{n}^{CI}(Y^{d*};\theta^{*})-\nabla_{\theta}S_{n}^{CI}(Y;\theta^{*})$:{\small{}
		\[
		\begin{array}{l}
		\nabla_{\theta}S_{n}^{CI}(Y^{d*};\theta^{*})-\nabla_{\theta}S_{n}^{CI}(Y;\theta^{*})\\
		=\left(h_{0}^{*d}\right)^{T}\left(R_{0}^{*d}\right)^{-1}\frac{\partial R_{\theta,0}^{*d}}{\partial\theta}\left(R_{0}^{*d}\right)^{-1}h_{0}^{*d}-\left(h_{0}^{d}\right)^{T}\left(R_{0}^{d}\right)^{-1}\frac{\partial R_{\theta,0}^{d}}{\partial\theta}\left(R_{0}^{d}\right)^{-1}h_{0}^{d}\\
		+2\left(\frac{\partial h_{0}^{d}}{\partial\theta}\right)^{T}\left(R_{0}^{d}\right)^{-1}h_{0}^{d}-2\left(\frac{\partial h_{0}^{*d}}{\partial\theta}\right)^{T}\left(R_{0}^{*d}\right)^{-1}h_{0}^{*d}\\
		-2\triangle\sum_{i=0}^{n-1}\left(\left(h_{i+1}^{*d}\right){}^{T}BU^{-1}B^{T}\frac{\partial h_{i+1}^{*d}}{\partial\theta}-\left(h_{i+1}^{d}\right){}^{T}BU^{-1}B^{T}\frac{\partial h_{i+1}^{d}}{\partial\theta}\right)+O_{p,n}(\triangle)\\
		=\left(h_{0}^{*d}\right)^{T}\left(R_{0}^{*d}\right)^{-1}\frac{\partial R_{0}^{*d}}{\partial\theta}\left(R_{0}^{*d}\right)^{-1}\left(h_{0}^{*d}-h_{0}^{d}\right)+\left(h_{0}^{*d}\right)^{T}\left(R_{0}^{*d}\right)^{-1}\frac{\partial R_{0}^{*d}}{\partial\theta}\left(\left(R_{0}^{*d}\right)^{-1}-\left(R_{0}^{d}\right)^{-1}\right)h_{0}^{d}\\
		+\left(h_{0}^{*d}\right)^{T}\left(R_{0}^{*d}\right)^{-1}\left(\frac{\partial R_{0}^{*d}}{\partial\theta}-\frac{\partial R_{0}^{d}}{\partial\theta}\right)\left(R_{0}^{d}\right)^{-1}h_{0}^{d}+\left(h_{0}^{*d}-h_{0}^{d}\right)^{T}\left(R_{0}^{d}\right)^{-1}\frac{\partial R_{0}^{d}}{\partial\theta}\left(R_{0}^{d}\right)^{-1}h_{0}^{d}\\
		+\left(h_{0}^{*d}\right)^{T}\left(\left(R_{0}^{*d}\right)^{-1}-\left(R_{0}^{d}\right)^{-1}\right)\frac{\partial R_{0}^{d}}{\partial\theta}\left(R_{0}^{d}\right)^{-1}h_{0}^{d}+2\left(\frac{\partial h_{0}^{d}}{\partial\theta}-\frac{\partial h_{0}^{*d}}{\partial\theta}\right)^{T}\left(R_{0}^{d}\right)^{-1}h_{0}^{d}\\
		+2\left(\frac{\partial h{}_{0}^{*d}}{\partial\theta}\right)^{T}\left(\left(R_{0}^{d}\right)^{-1}-\left(R_{0}^{*d}\right)^{-1}\right)h_{0}^{d}+2\left(\frac{\partial h_{0}^{*d}}{\partial\theta}\right)^{T}\left(R_{0}^{*d}\right)^{-1}\left(h_{0}^{d}-h_{0}^{*d}\right)\\
		-2\triangle\sum_{i=0}^{n-1}\left(h_{i+1}^{*d}-h_{i+1}^{d}\right){}^{T}BU^{-1}B^{T}\frac{\partial h_{i+1}^{*d}}{\partial\theta}\\
		-2\triangle\sum_{i=0}^{n-1}\left(h_{i+1}^{d}\right){}^{T}BU^{-1}B^{T}\left(\frac{\partial h_{i+1}^{*d}}{\partial\theta}-\frac{\partial h_{i+1}^{d}}{\partial\theta}\right)+O_{p,n}(\triangle)
		\end{array}
		\]
		\[
		\begin{array}{l}
		=\left(h_{0}^{*d}\right)^{T}\left(R_{0}^{*d}\right)^{-1}\frac{\partial R_{0}^{*d}}{\partial\theta}\left(R_{0}^{*d}\right)^{-1}\left(h_{0}^{*d}-h_{0}^{d}\right)+\left(h_{0}^{*d}-h_{0}^{d}\right)^{T}\left(R_{0}^{d}\right)^{-1}\frac{\partial R_{0}^{d}}{\partial\theta}\left(R_{0}^{d}\right)^{-1}h_{0}^{d}\\
		+2\left(\frac{\partial h_{0}^{*d}}{\partial\theta}\right)^{T}\left(R_{0}^{*d}\right)^{-1}\left(h_{0}^{d}-h_{0}^{*d}\right)-2\triangle\sum_{i=0}^{n-1}\left(h_{i+1}^{*d}-h_{i+1}^{d}\right){}^{T}BU^{-1}B^{T}\frac{\partial h_{i+1}^{*d}}{\partial\theta}+O_{p,n}(\triangle)\\
		=\left(\left(h_{0}^{*d}\right)^{T}\left(R_{0}^{*d}\right)^{-1}\frac{\partial R_{0}^{*d}}{\partial\theta}\left(R_{0}^{*d}\right)^{-1}+2\left(\frac{\partial h_{0}^{*d}}{\partial\theta}\right)^{T}\left(R_{0}^{*d}\right)^{-1}\right)\left(h_{0}^{*d}-h_{0}^{d}\right)\\
		+\left(h_{0}^{*d}-h_{0}^{d}\right)^{T}\left(R_{0}^{d}\right)^{-1}\frac{\partial R_{0}^{d}}{\partial\theta}\left(R_{0}^{d}\right)^{-1}h_{0}^{d}-2\triangle\sum_{i=0}^{n-1}\left(h_{i+1}^{*d}-h_{i+1}^{d}\right){}^{T}BU^{-1}B^{T}\frac{\partial h_{i+1}^{*d}}{\partial\theta}+O_{p,n}(\triangle)\\
		=\left(\left(h_{0}^{*d}\right)^{T}\left(R_{0}^{*d}\right)^{-1}\frac{\partial R_{0}^{*d}}{\partial\theta}\left(R_{0}^{*d}\right)^{-1}+2\left(\frac{\partial h_{0}^{*d}}{\partial\theta}\right)^{T}\left(R_{0}^{*d}\right)^{-1}\right)\left(h_{0}^{*d}-h_{0}^{d}\right)\\
		+\left(h_{0}^{*d}-h_{0}^{d}\right)^{T}\left(R_{0}^{d}\right)^{-1}\frac{\partial R_{0}^{d}}{\partial\theta}\left(R_{0}^{d}\right)^{-1}h_{0}^{d}+2\triangle^{2}\sum_{i=0}^{n-1}\left(\sum_{j=i+1}^{n}\epsilon_{j}^{T}\right)BU^{-1}B^{T}\frac{\partial h_{i+1}^{*d}}{\partial\theta}+O_{p,n}(\triangle).
		\end{array}
		\]
	}The first term is equal to {\small{}
		\[
		\begin{array}{l}
		\left(\left(h_{0}^{*d}\right)^{T}\left(R_{0}^{*d}\right)^{-1}\frac{\partial R_{0}^{*d}}{\partial\theta}\left(R_{0}^{*d}\right)^{-1}+2\left(\frac{\partial h_{0}^{*d}}{\partial\theta}\right)^{T}\left(R_{0}^{*d}\right)^{-1}\right)\left(h_{0}^{*d}-h_{0}^{d}\right)\\
		=-\left(\left(h_{0}^{*d}\right)^{T}\left(R_{0}^{*d}\right)^{-1}\frac{\partial R_{0}^{*d}}{\partial\theta}\left(R_{0}^{*d}\right)^{-1}+2\left(\frac{\partial h_{0}^{*d}}{\partial\theta}(Y)\right)^{T}\left(R_{0}^{*d}\right)^{-1}\right)C^{T}\left(\triangle\sum_{j=1}^{n}\epsilon_{j}\right)+O_{p,n}(\triangle)\\
		=\left(H+o_{n}(1)\right)\left(\triangle\sum_{j=1}^{n}\epsilon_{j}\right)+O_{p,n}(\triangle)
		\end{array}
		\]
	}thanks to lemma \ref{lem:h_R_continuity_limits}. For the second
	term, we have{\small{}:
		\[
		\begin{array}{l}
		\left(h_{0}^{*d}-h_{0}^{d}\right)^{T}\left(R_{0}^{d}\right)^{-1}\frac{\partial R_{0}^{d}}{\partial\theta}\left(R_{0}^{d}\right)^{-1}h_{0}^{d}\\
		=\left(\triangle\sum_{j=1}^{n}\epsilon_{j}\right)^{T}C\left(R_{0}^{d}\right)^{-1}\frac{\partial R_{0}^{d}}{\partial\theta}\left(R_{0}^{d}\right)^{-1}h_{0}^{d}+O_{p,n}(\triangle)\\
		=\left(\triangle\sum_{j=1}^{n}\epsilon_{j}\right)^{T}\left(J+o_{n}(1)\right)+O_{p,n}(\triangle)
		\end{array}
		\]
	}from which we derive the expression{\small{}:
		\[
		\begin{array}{l}
		\nabla_{\theta}S_{n}^{CI}(Y^{d*};\theta^{*})-\nabla_{\theta}S_{n}^{CI}(Y;\theta^{*})\\
		=\left(H+o_{n}(1)\right)\left(\triangle\sum_{j=1}^{n}\epsilon_{j}\right)+\left(\triangle\sum_{j=1}^{n}\epsilon_{j}\right)^{T}\left(J+o_{n}(1)\right)\\
		+2\triangle^{2}\sum_{i=0}^{n-1}\left(\sum_{j=i+1}^{n}\epsilon_{j}^{T}\right)BU^{-1}B\frac{\partial h_{i+1}^{*d}}{\partial\theta}+O_{p,n}(\triangle)
		\end{array}
		\]
	}and we can conclude the proof.
\end{proof}
\begin{prop}
	\label{prop:param_as_representation_prof_ci}Under conditions LC1-LC2-LC3disc-LC3-LC4-LC5-LC6,
	we have $-\nabla S_{n}^{CI}(Y;\theta^{*})=(\frac{\partial^{2}S^{CI}(\theta^{*})}{\partial^{2}\theta}+o_{p,n}(1))\left(\widehat{\theta}-\theta^{*}\right).$
\end{prop}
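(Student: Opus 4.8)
The plan is to adapt the argument of Proposition~\ref{prop:param_as_representation} to the profiled-on-$x_0$ cost $S_n^{CI}$ in the linear setting; since in the linear case there is no asymptotics in $l$, the remainder that appears is $o_{p,n}(1)$ only, with no $o_l(1)$ term. For notational clarity I treat the case $d=1$. The first ingredient is the $C^2$ regularity of $\theta\mapsto S_n^{CI}(Y;\theta)$. Under LC6 the map $\theta\mapsto A_\theta$ is $C^2$, so Lemma~\ref{lem:h_R_continuity_limits} makes the discrete Riccati quantities $\theta\mapsto R_{\theta,i}^d$ and $\theta\mapsto h_{\theta,i}^d(Y)$ twice continuously differentiable; since LC3 holds and $R_{\theta,0}^d=R_\theta(0)+o_{p,n}(1)$, there is $n'$ such that $R_{\theta,0}^d$ is invertible on $\Theta$ for all $n\ge n'$ (this is LC3disc), and $\theta\mapsto(R_{\theta,0}^d)^{-1}$ is $C^2$ by the matrix-inverse differentiation formula (equation~(189) in the Matrix Cookbook). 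Differentiating the expression of $S_n^{CI}(Y;\theta)$ from Proposition~\ref{prop:discrete_Riccati_accurate_representation} twice then yields $\frac{\partial^2 S_n^{CI}(Y;\theta)}{\partial^2\theta}$ as a finite combination of $h_{\theta,i+1}^d$, $R_{\theta,i+1}^d$, $(R_{\theta,0}^d)^{-1}$, $G(R_{\theta,i+1}^d)$ and their first and second $\theta$-derivatives, with $O_{p,n}(\Delta)$ remainders arising from $G(R)=U^{-1}+O_{p,n}(\Delta)$, exactly as in Proposition~\ref{prop:param_as_representation}. The same computation applied to Proposition~\ref{prop:continuous_Riccati_accurate_representation} shows $\theta\mapsto S^{CI}(\theta)$ is $C^2$ with a structurally identical integral expression for its Hessian.

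Next, first-order optimality gives $\nabla_\theta S_n^{CI}(Y;\widehat\theta)=0$, so Taylor's theorem provides some $\widetilde\theta$ on the segment between $\widehat\theta$ and $\theta^*$ with
\[
-\nabla_\theta S_n^{CI}(Y;\theta^*)=\frac{\partial^2 S_n^{CI}(Y;\widetilde\theta)}{\partial^2\theta}\left(\widehat\theta-\theta^*\right),
\]
and it remains to identify $\frac{\partial^2 S_n^{CI}(Y;\widetilde\theta)}{\partial^2\theta}$ with $\frac{\partial^2 S^{CI}(\theta^*)}{\partial^2\theta}+o_{p,n}(1)$.

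To that end I would first prove $\sup_{\theta\in\Theta}\left\Vert\frac{\partial^2 S_n^{CI}(Y;\theta)}{\partial^2\theta}-\frac{\partial^2 S^{CI}(\theta)}{\partial^2\theta}\right\Vert_2=o_{p,n}(1)$ term by term. Lemma~\ref{lem:h_R_continuity_limits} gives $\sup_\theta\|R_{\theta,i}^d-R_\theta(t_i)\|_2=o_{p,n}(1)$, $\sup_\theta\|h_{\theta,i}^d-h_\theta(t_i)\|_2=o_{p,n}(1)$ and the same bounds for the first and second $\theta$-derivatives; combined with LC3disc and the inverse-derivative formula this propagates to $\sup_\theta\|(R_{\theta,0}^d)^{-1}-R_\theta(0)^{-1}\|_2=o_{p,n}(1)$ and its $\partial_\theta$ and $\partial^2_\theta$ analogues. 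Proposition~\ref{prop:E_bounded_probability} supplies the uniform $O_{p,n}(1)$ bounds needed to replace each Riemann sum $\triangle\sum_{i=0}^{n-1}(\cdot)$ by the corresponding integral up to $o_{p,n}(1)$, and to absorb the $O_{p,n}(\Delta)$ remainders, just as in the proof of Proposition~\ref{prop:param_as_representation}. Finally, Theorem~\ref{thm:Consistency_theorem_prof_ci} gives $\widehat\theta\to\theta^*$, hence $\widetilde\theta\to\theta^*$, in probability, so by the continuous mapping theorem and the continuity of $\theta\mapsto\frac{\partial^2 S^{CI}(\theta)}{\partial^2\theta}$ we get $\frac{\partial^2 S_n^{CI}(Y;\widetilde\theta)}{\partial^2\theta}=\frac{\partial^2 S^{CI}(\theta^*)}{\partial^2\theta}+o_{p,n}(1)$, and substituting into the Taylor identity finishes the proof.

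The main obstacle is the bookkeeping for the terms coming from profiling out $x_0$: the factor $(R_{\theta,0}^d)^{-1}$ and its first and second $\theta$-derivatives, expanded through repeated application of the matrix-inverse differentiation formula, generate several products of the type $h^T R^{-1}(\partial_\theta R)R^{-1}h$ (and more involved second-order analogues), each of which must be shown to converge uniformly in $\theta$ to the corresponding continuous-time quantity. This relies crucially on LC3disc holding on all of $\Theta$ for $n$ large and on the uniform convergence and boundedness statements of Lemma~\ref{lem:h_R_continuity_limits} and Proposition~\ref{prop:E_bounded_probability}; once those are in hand, the remaining Riemann-sum-to-integral replacements are routine and identical in spirit to those in Proposition~\ref{prop:param_as_representation}.
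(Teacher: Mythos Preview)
Your proposal is correct and follows essentially the same approach as the paper: Taylor expansion at $\widehat\theta$ using first-order optimality, then identification of the discrete Hessian at $\widetilde\theta$ with $\frac{\partial^{2}S^{CI}(\theta^{*})}{\partial^{2}\theta}+o_{p,n}(1)$ via the uniform bounds of Lemma~\ref{lem:h_R_continuity_limits}, the approximation $G(R)=U^{-1}+O_{p,n}(\Delta)$, and the continuous mapping theorem combined with consistency from Theorem~\ref{thm:Consistency_theorem_prof_ci}. The paper carries out exactly this program, writing out the explicit formulas for $\nabla_\theta S_n^{CI}$ and $\frac{\partial^{2}S_n^{CI}}{\partial^{2}\theta}$ (including the $(R_{\theta,0}^d)^{-1}$-dependent terms you flag) before matching them term-by-term with their continuous counterparts.
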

\begin{proof}
	For notation clarity we treat the case $d=1$, if $\theta\longmapsto A_{\theta}$
	is $C^{1}$ on $\Theta$, then $\theta\longmapsto S_{n}^{CI}(Y;\theta)$
	is $C^{1}$ as well an{\small{}d
		\[
		\begin{array}{l}
		\nabla_{\theta}S_{n}^{CI}(Y;\theta)\,=\left(h_{\theta,0}^{d}\right)^{T}\left(R_{\theta,0}^{d}\right)^{-1}\frac{\partial R_{\theta,0}^{d}}{\partial\theta}\left(R_{\theta,0}^{d}\right)^{-1}h_{\theta,0}^{d}-2\left(\frac{\partial h_{\theta,0}^{d}}{\partial\theta}(Y)\right)^{T}\left(R_{\theta,0}^{d}\right)^{-1}h_{\theta,0}^{d}\\
		-2\triangle\sum_{i=0}^{n-1}h_{\theta,i+1}^{d}(Y)^{T}BG(R_{\theta,i+1}^{d})B^{T}\frac{\partial h_{\theta,i+1}^{d}}{\partial\theta}(Y)
		\\-\triangle\sum_{i=0}^{n-1}h_{\theta,i+1}^{d}(Y)^{T}B\frac{\partial G}{\partial\theta}(R_{\theta,i+1}^{d})B^{T}h_{\theta,i+1}^{d}(Y).
		\end{array}
		\]
	}If now $\theta\longmapsto A_{\theta}$ is $C^{2}$ on $\Theta$,then
	$\theta\longmapsto S_{n}^{CI}(Y;\theta)$ is also $C^{2}$ and equal
	t{\small{}o
		\[
		\begin{array}{l}
		\frac{\partial^{2}S_{n}^{CI}(Y,\theta)}{\partial^{2}\theta}\,=\frac{\partial}{\partial\theta}\left(\left(h_{\theta,0}^{d}\right)^{T}\left(R_{\theta,0}^{d}\right)^{-1}\frac{\partial R_{\theta,0}^{d}}{\partial\theta}\left(R_{\theta,0}^{d}\right)^{-1}h_{\theta,0}^{d}\right)-2\frac{\partial}{\partial\theta}\left(\left(\frac{\partial h_{\theta,0}^{d}}{\partial\theta}(Y)\right)^{T}\left(R_{\theta,0}^{d}\right)^{-1}h_{\theta,0}^{d}\right)\\
		-2\triangle\sum_{i=0}^{n-1}\frac{\partial}{\partial\theta}\left(h_{\theta,i+1}^{d}(Y)^{T}BG(R_{\theta,i+1}^{d})B^{T}\frac{\partial h_{\theta,i+1}^{d}}{\partial\theta}(Y)\right)\\
		-\triangle\sum_{i=0}^{n-1}\frac{\partial}{\partial\theta}\left(h_{\theta,i+1}^{d}(Y)^{T}B\frac{\partial G}{\partial\theta}(R_{\theta,i+1}^{d})B^{T}h_{\theta,i+1}^{d}(Y)\right)\\
		=\frac{\partial}{\partial\theta}\left(\left(h_{\theta,0}^{d}\right)^{T}\left(R_{\theta,0}^{d}\right)^{-1}\frac{\partial R_{\theta,0}^{d}}{\partial\theta}\left(R_{\theta,0}^{d}\right)^{-1}h_{\theta,0}^{d}\right)-2\frac{\partial}{\partial\theta}\left(\left(\frac{\partial h_{\theta,0}^{d}}{\partial\theta}(Y)\right)^{T}\left(R_{\theta,0}^{d}\right)^{-1}h_{\theta,0}^{d}\right)\\
		-2\triangle\sum_{i=0}^{n-1}\frac{\partial h_{\theta,i+1}^{d}(Y)^{T}}{\partial\theta}BG(R_{\theta,i+1}^{d})B^{T}\frac{\partial h_{\theta,i+1}^{d}}{\partial\theta}(Y)\\
		-2\triangle\sum_{i=0}^{n-1}h_{\theta,i+1}^{d}(Y)^{T}B\frac{\partial G}{\partial\theta}(R_{\theta,i+1}^{d})B^{T}\frac{\partial h_{\theta,i+1}^{d}}{\partial\theta}(Y)\\
		-2\triangle\sum_{i=0}^{n-1}h_{\theta,i+1}^{d}(Y)^{T}BG(R_{\theta,i+1}^{d})B^{T}\frac{\partial^{2}h_{\theta,i+1}^{d}}{\partial^{2}\theta}(Y)\\
		-\triangle\sum_{i=0}^{n-1}\frac{\partial h_{\theta,i+1}^{d}(Y)^{T}}{\partial\theta}B\frac{\partial G}{\partial\theta}(R_{\theta,i+1}^{d})B^{T}h_{\theta,i+1}^{d}(Y)\\
		-\triangle\sum_{i=0}^{n-1}h_{\theta,i+1}^{d}(Y)^{T}B\frac{\partial^{2}G}{\partial^{2}\theta}(R_{\theta,i+1}^{d})B^{T}h_{\theta,i+1}^{d}(Y)\\
		-\triangle\sum_{i=0}^{n-1}h_{\theta,i+1}^{d}(Y)^{T}B\frac{\partial G}{\partial\theta}(R_{\theta,i+1}^{d})B^{T}\frac{\partial h_{\theta,i+1}^{d}(Y)}{\partial\theta}.\\
		=\frac{\partial}{\partial\theta}\left(\left(h_{\theta,0}^{d}\right)^{T}\left(R_{\theta,0}^{d}\right)^{-1}\frac{\partial R_{\theta,0}^{d}}{\partial\theta}\left(R_{\theta,0}^{d}\right)^{-1}h_{\theta,0}^{d}\right)-2\frac{\partial}{\partial\theta}\left(\left(\frac{\partial h_{\theta,0}^{d}}{\partial\theta}(Y)\right)^{T}\left(R_{\theta,0}^{d}\right)^{-1}h_{\theta,0}^{d}\right)\\
		-2\triangle\sum_{i=0}^{n-1}\frac{\partial h_{\theta,i+1}^{d}(Y)^{T}}{\partial\theta}BU^{-1}B^{T}\frac{\partial h_{\theta,i+1}^{d}}{\partial\theta}(Y)\\
		-2\triangle\sum_{i=0}^{n-1}h_{\theta,i+1}^{d}(Y)^{T}BU^{-1}B^{T}\frac{\partial^{2}h_{\theta,i+1}^{d}}{\partial^{2}\theta}(Y)+O_{p,n}(\Delta).
		\end{array}
		\]
	}By using Taylor's theorem we derive that it exists $\widetilde{\theta}$
	on the line segment between $\widehat{\theta}$ and $\theta^{*}$
	such that{\small{}:
		\[
		\nabla_{\theta}S_{n}^{CI}(Y;\widehat{\theta})-\nabla_{\theta}S_{n}^{CI}(Y;\theta^{*})=-\nabla_{\theta}S_{n}^{CI}(Y;\theta^{*})=\frac{\partial^{2}S_{n}^{CI}(Y,\widetilde{\theta})}{\partial^{2}\theta}^{T}\left(\widehat{\theta}-\theta^{*}\right)
		\]
	}since first order optimality condition imposes $\nabla_{\theta}S_{n}^{CI}(Y;\widehat{\theta})=0$.
	If $\theta\longmapsto A_{\theta}$ is $C^{2}$ on $\Theta$, from
	lemma \ref{lem:h_R_continuity_limits} we derive that  $\theta\longmapsto S^{CI}(\theta)$
	is $C^{2}$ an{\small{}d
		\[
		\begin{array}{l}
		\frac{\partial^{2}S^{CI}(\theta)}{\partial^{2}\theta}\,=\frac{\partial}{\partial\theta}\left(h_{\theta}(0)R_{\theta}(0)^{-1}\frac{\partial(R_{\theta}(0))}{\partial\theta}R_{\theta}(0)^{-1}h_{\theta}(0)\right)-2\frac{\partial}{\partial\theta}\left(\frac{\partial h_{\theta}(0)^{T}}{\partial\theta}R_{\theta}(0)^{-1}h_{\theta}(0)\right)\\
		-2\int_{0}^{T}\left(\frac{\partial h_{\theta}}{\partial\theta}(t)^{T}BU^{-1}B^{T}\frac{\partial h_{\theta}(t)}{\partial\theta}+h_{\theta}(t)^{T}BU^{-1}B^{T}\frac{\partial^{2}h_{\theta}(t)}{\partial^{2}\theta}\right)dt
		\end{array}
		\]
	}which gives us the difference{\small{}:
		\[
		\begin{array}{l}
		\frac{\partial^{2}S^{CI}(\theta)}{\partial^{2}\theta}-\frac{\partial^{2}S_{n}^{CI}(Y,\theta)}{\partial^{2}\theta}\\
		=\frac{\partial}{\partial\theta}\left(h_{\theta}(0)R_{\theta}(0)^{-1}\frac{\partial(R_{\theta}(0))}{\partial\theta}R_{\theta}(0)^{-1}h_{\theta}(0)\right)-\frac{\partial}{\partial\theta}\left(\left(h_{\theta,0}^{d}\right)^{T}\left(R_{\theta,0}^{d}\right)^{-1}\frac{\partial R_{\theta,0}^{d}}{\partial\theta}\left(R_{\theta,0}^{d}\right)^{-1}h_{\theta,0}^{d}\right)\\
		+2\frac{\partial}{\partial\theta}\left(\left(\frac{\partial h_{\theta,0}^{d}}{\partial\theta}(Y)\right)^{T}\left(R_{\theta,0}^{d}\right)^{-1}h_{\theta,0}^{d}\right)-2\frac{\partial}{\partial\theta}\left(\frac{\partial h_{\theta}(0)^{T}}{\partial\theta}R_{\theta}(0)^{-1}h_{\theta}(0)\right)\\
		-2\left(\int_{0}^{T}\frac{\partial h_{\theta}}{\partial\theta}(t)^{T}BU^{-1}B^{T}\frac{\partial h_{\theta}(t)}{\partial\theta}dt-\triangle\sum_{i=0}^{n-1}\frac{\partial h_{\theta,i+1}^{d}(Y)^{T}}{\partial\theta}BU^{-1}B^{T}\frac{\partial h_{\theta,i+1}^{d}(Y)}{\partial\theta}\right)\\
		-2\left(\int_{0}^{T}h_{\theta}(t)^{T}BU^{-1}B^{T}\frac{\partial^{2}h_{\theta}(t)}{\partial^{2}\theta}dt-\triangle\sum_{i=0}^{n-1}h_{\theta,i+1}^{d}(Y)^{T}BU^{-1}\frac{\partial^{2}h_{\theta,i+1}^{d}(Y)}{\partial^{2}\theta}\right)+O_{p,n}(\Delta)\\
		=-2\triangle\left(\sum_{i=0}^{n-1}\frac{\partial h_{\theta}(t_{i+1})^{T}}{\partial\theta}BU^{-1}B^{T}\frac{\partial h_{\theta}(t_{i+1})}{\partial\theta}-\frac{\partial h_{\theta,i+1}^{d}(Y)^{T}}{\partial\theta}BU^{-1}B^{T}\frac{\partial h_{\theta,i+1}^{d}(Y)}{\partial\theta}\right)\\
		-2\triangle\left(\sum_{i=0}^{n-1}h_{\theta}(t_{i+1})^{T}BU^{-1}B^{T}\frac{\partial^{2}h_{\theta}(t_{i+1})}{\partial^{2}\theta}-h_{\theta,i+1}^{d}(Y)^{T}BU^{-1}B^{T}\frac{\partial^{2}h_{\theta,i+1}^{d}(Y)}{\partial^{2}\theta}\right)+o_{p,n}(1)
		\end{array}
		\]
	}thanks to lemma \ref{lem:h_R_continuity_limits}. From this we derive
	that $\frac{\partial^{2}S^{CI}(\theta)}{\partial^{2}\theta}-\frac{\partial^{2}S_{n}^{CI}(Y,\theta)}{\partial^{2}\theta}=o_{p,n}(1)$
	similarly as in proposition \ref{prop:param_as_representation}. Since
	$\widetilde{\theta}$ is consistent, we can use the continuous mapping
	theorem to conclude.
\end{proof}

\section{Useful lemma}

\subsection{Discrete Gronwall Lemma}

Here, we just recall the discrete Gronwall lemma, in a form particularly
convenient for us. 
\begin{lem}
	\label{lem:disc_Gronwall_lemma}Let introduce the positive sequences
	$f_{n},\,\lambda_{n},\,\epsilon_{n}$ linked by the recursive inequality
	$f_{n}\leq(1+\lambda_{n-1})f_{n-1}+\epsilon_{n-1}$, then we have:
	\[
	f_{n}\leq e^{\sum_{i=1}^{n-1}\lambda_{i}}f_{0}+\sum_{i=1}^{n-1}e^{\sum_{j=i+1}^{n-1}\lambda_{j}}\epsilon_{i}
	\]
	in particular for $\lambda=\lambda_{1}=\ldots=\lambda_{n}$ we have
	$f_{n}\leq e^{n\lambda}f_{0}+\sum_{i=1}^{n-1}e^{(n-i)\lambda}\epsilon_{i}$.
\end{lem}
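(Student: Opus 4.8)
The plan is to prove the lemma by a straightforward induction on $n$, first establishing the bound in exact \emph{product} form and only at the end converting products into exponentials through the elementary inequality $1+x\le e^{x}$. Concretely, I would first show by induction that
\[
f_{n}\le\left(\prod_{i=1}^{n-1}(1+\lambda_{i})\right)f_{0}+\sum_{i=1}^{n-1}\left(\prod_{j=i+1}^{n-1}(1+\lambda_{j})\right)\epsilon_{i},
\]
with the usual convention that an empty product equals $1$ and an empty sum equals $0$; the stated exponential bound is then obtained from this by a single termwise estimate.

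For the induction, the base case follows directly from the recursive hypothesis. For the inductive step I would substitute the bound for $f_{n-1}$ into $f_{n}\le(1+\lambda_{n-1})f_{n-1}+\epsilon_{n-1}$; because $1+\lambda_{n-1}>0$, multiplying the induction hypothesis by it preserves the inequality, and the extra factor $1+\lambda_{n-1}$ is absorbed into each product $\prod_{i=1}^{n-2}(1+\lambda_{i})$ and $\prod_{j=i+1}^{n-2}(1+\lambda_{j})$ so as to extend their ranges up to $n-1$, while the additive $\epsilon_{n-1}$ supplies the new top term of the sum. This step is pure bookkeeping of index ranges in the nested products and sums.

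Once the product bound is established, I would use $1+x\le e^{x}$, valid for all real $x$ and in particular for the positive $\lambda_{i}$, to get $\prod_{i=a}^{b}(1+\lambda_{i})\le e^{\sum_{i=a}^{b}\lambda_{i}}$, and substitute this termwise; positivity of $f_{0}$ and of the $\epsilon_{i}$ ensures that replacing each product by the larger exponential only weakens the bound, which yields
\[
f_{n}\le e^{\sum_{i=1}^{n-1}\lambda_{i}}f_{0}+\sum_{i=1}^{n-1}e^{\sum_{j=i+1}^{n-1}\lambda_{j}}\epsilon_{i}.
\]
For the constant case $\lambda=\lambda_{1}=\dots=\lambda_{n}$ I would finally note $\sum_{i=1}^{n-1}\lambda=(n-1)\lambda\le n\lambda$ and $\sum_{j=i+1}^{n-1}\lambda=(n-1-i)\lambda\le(n-i)\lambda$, so that, again by positivity, the slightly looser stated form $f_{n}\le e^{n\lambda}f_{0}+\sum_{i=1}^{n-1}e^{(n-i)\lambda}\epsilon_{i}$ follows.

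There is no real obstacle here: the content is elementary and the proof is short. The only points deserving a little care are (i) keeping the index ranges of the nested products and sums consistent across the inductive step, and (ii) being explicit that positivity of the three sequences is precisely what allows us both to iterate the inequality (multiplying by the positive factors $1+\lambda_{n-1}$) and to pass to the coarser exponential and constant-$\lambda$ bounds without ever reversing an inequality.
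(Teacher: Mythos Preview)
Your proposal is correct and follows essentially the same inductive approach as the paper, using $1+x\le e^{x}$ to pass from the recursive inequality to the exponential bound. The only cosmetic difference is that you first establish the sharper product form and convert to exponentials at the end, whereas the paper applies $1+\lambda_{n}\le e^{\lambda_{n}}$ at each step of the induction directly on the exponential form; both are equivalent rearrangements of the same argument.
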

\begin{proof}
	Let us prove it recursively. For $n=1$, we have $f_{1}\leq(1+\lambda_{0})f_{0}+\epsilon_{0}$,
	by remembering that $e^{x}\geq1+x$, for all $x\geq0$, the initialization
	is easy to make. Now let us assume the property holds for $n$, we
	have: 
	\[
	\begin{array}{lll}
	f_{n+1} & \leq & (1+\lambda_{n})f_{n}+\epsilon_{n}\\
	& \leq & e^{\lambda_{n}}f_{n}+\epsilon_{n}\\
	& \leq & e^{\lambda_{n}}\left(e^{\sum_{i=1}^{n-1}\lambda_{i}}f_{0}+\sum_{i=1}^{n-1}e^{\sum_{j=i+1}^{n-1}\lambda_{j}}\epsilon_{i}\right)+\epsilon_{n}\\
	& \leq & e^{\sum_{i=1}^{n}\lambda_{i}}f_{0}+\sum_{i=1}^{n-1}e^{\lambda_{n}+\sum_{j=i+1}^{n-1}\lambda_{j}}\epsilon_{i}+\epsilon_{n}\\
	& = & e^{\sum_{i=1}^{n}\lambda_{i}}f_{0}+\sum_{i=1}^{n-1}e^{\sum_{j=i+1}^{n}\lambda_{j}}\epsilon_{i}
	\end{array}
	\]
	Hence the conclusion.
\end{proof}

\subsection{Consistency}
\begin{lem}
	\label{lem:h_R_continuity_limits}Under conditions C1 to C3 for each
	$l\in\mathbb{N}$, $\upsilon\longmapsto\left(\overline{X_{\upsilon}^{l}},R_{\upsilon}^{l},h_{\upsilon}^{l}\right)$
	is continuous on $\varUpsilon$ and $\sup_{\upsilon\in\varUpsilon}\left\Vert R_{\upsilon,i}^{d,l}-R_{\upsilon}^{l}(t_{i})\right\Vert _{2}=o_{p,n}(1)$,
	$\sup_{\upsilon\in\varUpsilon}\left\Vert h_{\upsilon,i}^{d,l}-h_{\upsilon}^{l}(t_{i})\right\Vert _{2}=o_{p,n}(1)$,
	$\sup_{\upsilon\in\varUpsilon}\left\Vert \overline{X_{\upsilon}^{l}}(t_{i})-\overline{X_{\upsilon}^{d,l}}(t_{i})\right\Vert _{2}=o_{p,n}(1).$ 
	
	Under conditions C1 to C3 and C6 for each $l\in\mathbb{N}$, $\upsilon\longmapsto\left(\overline{X_{\upsilon}^{l}},R_{\upsilon}^{l},h_{\upsilon}^{l}\right)$is
	$C^{1}$ on $\varUpsilon$ and $\sup_{\upsilon\in\varUpsilon}\left\Vert \frac{\partial R_{\upsilon}^{l}(t_{i})}{\partial\upsilon}-\frac{\partial R_{\upsilon,i}^{d,l}}{\partial\upsilon}\right\Vert _{2}=o_{p,n}(1)$,
	$\sup_{\upsilon\in\varUpsilon}\left\Vert \frac{\partial h_{\upsilon}^{l}(t_{i})}{\partial\upsilon}-\frac{\partial h_{\upsilon,i}^{d,l}(Y)}{\partial\upsilon}\right\Vert _{2}=o_{p,n}(1)$,
	$\sup_{\upsilon\in\varUpsilon}\left\Vert \frac{\partial\overline{X_{\upsilon}^{l}}(t_{i})}{\partial\upsilon}-\frac{\partial\overline{X_{\upsilon}^{d,l}}(t_{i})}{\partial\upsilon}\right\Vert _{2}=o_{p,n}(1)$.
	
	Under conditions C1 to C3 and C6-C7 , for each $l\in\mathbb{N}$,$\upsilon\longmapsto\left(\overline{X_{\upsilon}^{l}},R_{\upsilon}^{l},h_{\upsilon}^{l}\right)$
	is $C^{2}$ on $\varUpsilon$ and $\sup_{\upsilon\in\varUpsilon}\left\Vert \frac{\partial^{2}R_{\upsilon}^{l}(t_{i})}{\partial^{2}\upsilon}-\frac{\partial^{2}R_{\upsilon,i}^{d,l}}{\partial^{2}\upsilon}\right\Vert _{2}=o_{p,n}(1)$,
	$\sup_{\upsilon\in\varUpsilon}\left\Vert \frac{\partial^{2}h_{\upsilon}^{l}(t_{i})}{\partial^{2}\upsilon}-\frac{\partial^{2}h_{\upsilon,i}^{d,l}(Y)}{\partial^{2}\upsilon}\right\Vert _{2}=o_{p,n}(1)$
	and $\sup_{\upsilon\in\varUpsilon}\left\Vert \frac{\partial^{2}\overline{X_{\upsilon}^{l}}(t_{i})}{\partial^{2}\upsilon}-\frac{\partial^{2}\overline{X_{\upsilon}^{d,l}}(t_{i})}{\partial^{2}\upsilon}\right\Vert _{2}=o_{p,n}(1)$. 
\end{lem}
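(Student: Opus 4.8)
The plan is to argue by induction on the linearization level $l$, the base case being covered by classical ODE theory. For a fixed $l$, the continuous objects are built in two linear steps: one first solves the backward coupled linear system (\ref{eq:continous_accurate_Riccati_equation}) for $(R_{\upsilon}^{l},h_{\upsilon}^{l})$, whose coefficients depend on $\upsilon$ only through the continuous map $t\longmapsto A_{\theta}(\overline{X_{\upsilon}^{l-1}}(t),t)$, and then the forward linear equation for $\overline{X_{\upsilon}^{l}}$ driven by $(R_{\upsilon}^{l},h_{\upsilon}^{l})$ (with $\overline{X_{\upsilon}^{l}}(0)=x_{0}$, or the profiled value $-R_{\upsilon}^{l}(0)^{-1}h_{\upsilon}^{l}(0)$ wherever $R_{\upsilon}^{l}(0)$ is invertible, cf. proposition \ref{prop:continuous_R0_invertibility_criteria}). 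Both steps are Lipschitz-stable with respect to their data by Gronwall's lemma, so continuity (and, later, the $C^{1}$ and $C^{2}$ regularity) of $\upsilon\longmapsto(R_{\upsilon}^{l},h_{\upsilon}^{l},\overline{X_{\upsilon}^{l}})$ follows from the standard smooth-dependence theorems for ODEs, once we know that $t\longmapsto A_{\theta}(\overline{X_{\upsilon}^{l-1}}(t),t)$ depends continuously ($C^{1}$, $C^{2}$) on $\upsilon$, which is exactly the induction hypothesis combined with C2 (resp. C6, C7) for the regularity of $\theta\longmapsto A_{\theta}$ and the chain rule through $\overline{X_{\upsilon}^{l-1}}$. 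Every constant entering these Gronwall estimates is $\upsilon$-uniform thanks to C1 (compact support $\varLambda$), continuity on the compact $\varTheta\times\varLambda\times[0,\,T]$, and the uniform $O_{p,n}(1)$ bounds on $R_{\upsilon}^{l}$ and $h_{\upsilon}^{l}$ provided by proposition \ref{prop:E_bounded_probability}.

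For the discrete-versus-continuous comparison the key remark is that the finite-difference system (\ref{eq:discrete_accurate_Riccati_equation}) of proposition \ref{prop:discrete_Riccati_accurate_representation} is exactly the explicit Euler scheme for (\ref{eq:continous_accurate_Riccati_equation}), the only genuine discrepancies being that the data enters through $\triangle C^{T}Y_{i}=\triangle C^{T}Y^{*}(t_{i})+\triangle C^{T}\epsilon_{i}$ instead of through the integral of $C^{T}Y^{*}$, and that $G(R_{\upsilon,i+1}^{d,l})=U^{-1}+O_{p,n}(\triangle)$. I would therefore split each of the errors $R_{\upsilon,i}^{d,l}-R_{\upsilon}^{l}(t_{i})$, $h_{\upsilon,i}^{d,l}-h_{\upsilon}^{l}(t_{i})$ and $\overline{X_{\upsilon}^{d,l}}(t_{i})-\overline{X_{\upsilon}^{l}}(t_{i})$ into a deterministic Euler consistency error, $O_{n}(\triangle)$ uniformly on $\varUpsilon$ by the smoothness established above, plus a stochastic part. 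The stochastic part is injected only into the $h$-recursion, and unfolding that recursion it is a bounded linear functional of the partial sums $\triangle\sum_{j=i+1}^{n}\epsilon_{j}$, whose variance is $\sigma^{2}\triangle^{2}(n-i)\leq\sigma^{2}T\triangle\longrightarrow0$, so $\sup_{i}\left\Vert \triangle\sum_{j=i+1}^{n}\epsilon_{j}\right\Vert _{2}=O_{p,n}(\sqrt{\triangle})=o_{p,n}(1)$. Both errors are then propagated through the recursions with the discrete Gronwall lemma \ref{lem:disc_Gronwall_lemma}: the per-step multiplicative factors being $1+O(\triangle)$, the amplification $e^{\sum_{i}\lambda_{i}}$ stays bounded by $e^{O(T)}$ uniformly in $\upsilon$, while the induction hypothesis $\sup_{\upsilon}\left\Vert \overline{X_{\upsilon}^{l-1}}(t_{i})-\overline{X_{\upsilon}^{d,l-1}}(t_{i})\right\Vert _{2}=o_{p,n}(1)$ together with uniform continuity of $A_{\theta}(\cdot,t)$ on $\varLambda$ absorbs the mismatch of linearization points. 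Concretely one treats the $h$-equation first (directly forced by the noise), then the $R$-equation (no noise, only the $O(\triangle)$ Euler term), then the forward $\overline{X}$-equation.

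For the $C^{1}$ and $C^{2}$ statements I would differentiate (\ref{eq:continous_accurate_Riccati_equation}) and (\ref{eq:discrete_accurate_Riccati_equation}) once, and then twice, in $\upsilon$. The resulting variational ODEs and their explicit-Euler counterparts have the same homogeneous structure as the original systems, the forcing terms being polynomial in $R_{\upsilon}^{l},h_{\upsilon}^{l},\overline{X_{\upsilon}^{l}}$, in the derivatives of $A_{\theta}$ (bounded through $\overline{\partial A}$, $\overline{\partial^{2}A}$ under C6--C7), and in the already-controlled lower-order $\upsilon$-derivatives of $\overline{X_{\upsilon}^{l-1}}$. Hence the very same decomposition --- deterministic Euler error $O_{n}(\triangle)$ plus stochastic partial-sum error $o_{p,n}(1)$ --- and the very same discrete Gronwall argument apply verbatim and yield the $o_{p,n}(1)$ bounds for $\partial R$, $\partial h$, $\partial\overline{X}$, and then for $\partial^{2}R$, $\partial^{2}h$, $\partial^{2}\overline{X}$.

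The hard part is not any single estimate but the bookkeeping of uniformity in $\upsilon$ together with the clean isolation of the noise. One has to check that every constant in the Euler consistency bounds and in the Gronwall amplification is independent of $\upsilon$ --- which is precisely where C1--C2 (and C6--C7 at higher order) and proposition \ref{prop:E_bounded_probability} are used --- and that the measurement error enters the whole cascade only linearly, through $h_{\upsilon,i}^{d,l}$ and, after differentiation, through its $\upsilon$-derivatives, so that the stochastic error is at every stage a sum of independent mean-zero increments with total variance $O(\triangle)$. Once that deterministic-plus-stochastic decomposition is set up, the discrete Gronwall lemma closes each of the nine discrete-convergence estimates with no further idea, and the continuity/$C^{1}$/$C^{2}$ statements follow from the same smooth-dependence input.
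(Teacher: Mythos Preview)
Your proposal is essentially the paper's argument: induction on $l$, reading (\ref{eq:discrete_accurate_Riccati_equation}) as the explicit Euler scheme for (\ref{eq:continous_accurate_Riccati_equation}) up to $O_{p,n}(\triangle)$ terms and the noise forcing $\triangle C^{T}\epsilon_{i}$ in the $h$-equation, then closing each recursion with the discrete Gronwall lemma and propagating to the sensitivity equations for the $C^{1}$ and $C^{2}$ parts.

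One small correction of ordering: you propose to treat the $h$-equation first and the $R$-equation second, but the coupling goes the other way. The $R$-recursion is autonomous in $R$ (it sees only $R_{\upsilon,i+1}^{d,l}$ and the linearization mismatch $A_{\theta}(\overline{X_{\upsilon}^{l-1}}(t_{i}),t_{i})-A_{\theta}(\overline{X_{\upsilon}^{d,l-1}}(t_{i}),t_{i})$, controlled by the induction hypothesis), whereas the $h$-recursion contains the term $\triangle\bigl(R_{\upsilon,i+1}^{d,l}-R_{\upsilon}^{l}(t_{i+1})\bigr)BU^{-1}B^{T}h_{\upsilon}^{l}(t_{i+1})$. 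So, as the paper does, one must first run Gronwall on the $R$-difference to get $\left\Vert R_{\upsilon,i}^{d,l}-R_{\upsilon}^{l}(t_{i})\right\Vert _{2}=o_{p,n}(1)$, then feed this into the $h$-recursion, which reduces to $\bigl(I_{d}+O_{p,n}(\triangle)\bigr)\bigl(h_{\upsilon,i+1}^{d,l}-h_{\upsilon}^{l}(t_{i+1})\bigr)-\triangle C^{T}\epsilon_{i}+o_{p,n}(\triangle)$ and unrolls to $-\triangle C^{T}\sum_{j=i}^{n}\epsilon_{j}+o_{p,n}(1)$, and only then handle the forward $\overline{X}$-difference. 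With that adjustment your sketch matches the paper.
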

\begin{proof}
	By integrating equation (\ref{eq:continous_accurate_Riccati_equation}),
	we obtain{\small{}:
		\[
		\begin{array}{l}
		R_{\upsilon}^{l}(t_{i})  =  R_{\upsilon}^{l}(t_{i+1})\\
		+\int_{t_{i}}^{t_{i+1}}\left(C^{T}C+A_{\theta}(\overline{X_{\upsilon}^{l-1}}(t),t)^{T}R_{\upsilon}^{l}(t)+R_{\upsilon}^{l}(t)A_{\theta}(\overline{X_{\upsilon}^{l-1}}(t),t)-R_{\upsilon}^{l}(t)BU^{-1}B^{T}R_{\upsilon}^{l}(t)\right)dt\\
		 = R_{\upsilon}^{l}(t_{i+1})+\triangle C^{T}C+\triangle A_{\theta}(\overline{X_{\upsilon}^{l-1}}(t_{i+1}),t_{i+1})^{T}R_{\upsilon}^{l}(t_{i+1})+\triangle R_{\upsilon}^{l}(t_{i+1})A_{\theta}(\overline{X_{\upsilon}^{l-1}}(t_{i+1}),t_{i+1})\\
		 - \triangle R_{\upsilon}^{l}(t_{i+1})BU^{-1}B^{T}R_{\upsilon}^{l}(t_{i+1})+O_{n}(\triangle^{2})\\
		 =  R_{\upsilon}^{l}(t_{i+1})+\triangle C^{T}C+\triangle A_{\theta}(\overline{X_{\upsilon}^{l-1}}(t_{i}),t_{i})^{T}R_{\upsilon}^{l}(t_{i+1})+\triangle R_{\upsilon}^{l}(t_{i+1})A_{\theta}(\overline{X_{\upsilon}^{l-1}}(t_{i}),t_{i})\\
		 -  \triangle R_{\upsilon}^{l}(t_{i+1})BU^{-1}B^{T}R_{\upsilon}^{l}(t_{i+1})+O_{n}(\triangle^{2})\\
		h_{\theta}^{l}(t_{i})  =  h_{\upsilon}^{l}(t_{i+1})+\int_{t_{i}}^{t_{i+1}}\left(-C^{T}Y^{*}(t)+A_{\theta}(\overline{X_{\upsilon}^{l-1}}(t),t)^{T}h_{\upsilon}^{l}(t)-R_{\upsilon}^{l}(t)BU^{-1}B^{T}h_{\upsilon}^{l}(t)\right)dt\\
		=  h_{\upsilon}^{l}(t_{i+1})-\triangle C^{T}Y^{*}(t_{i})+\triangle A_{\theta}(\overline{X_{\upsilon}^{l-1}}(t_{i}),t_{i})^{T}h_{\upsilon}^{l}(t_{i+1})-R_{\upsilon}^{l}(t_{i+1})BU^{-1}B^{T}h_{\upsilon}^{l}(t_{i+1})+O_{n}(\triangle^{2}).
		\end{array}
			\]
			\[
			\begin{array}{l}
		\overline{X_{\upsilon}^{l}}(t_{i+1}) =  \overline{X_{\upsilon}^{l}}(t_{i})+\int_{t_{i}}^{t_{i+1}}\left(A_{\theta}(\overline{X_{\upsilon}^{l-1}}(t),t)\overline{X_{\upsilon}^{l}}(t)-BU^{-1}B^{T}\left(R_{\upsilon}^{l}(t)\overline{X_{\upsilon}^{l}}(t)+h_{\upsilon}^{l}(t)\right)\right)dt\\
		 =  \overline{X_{\upsilon}^{l}}(t_{i})+\triangle A_{\theta}(\overline{X_{\upsilon}^{l-1}}(t_{i}),t_{i})\overline{X_{\upsilon}^{l}}(t_{i})-\triangle BU^{-1}B^{T}\left(R_{\upsilon}^{l}(t_{i+1})\overline{X_{\upsilon}^{l}}(t_{i})+h_{\upsilon}^{l}(t_{i+1})\right)+O_{n}(\Delta^{2}).
		\end{array}
		\]
	}By using equation (\ref{eq:discrete_accurate_Riccati_equation}),
	we obtain the next equality where terms of order $O_{p,n}(\Delta^{2})$
	or higher have been neglected{\small{}:
		\[
		\begin{array}{l}
		R_{\upsilon,i}^{d,l}  = R_{\upsilon,i+1}^{d,l}+\triangle C^{T}C+\Delta\left(R_{\upsilon,i+1}^{d,l}A_{\theta}(\overline{X_{\upsilon}^{d,l-1}}(t_{i}),t_{i})+A_{\theta}(\overline{X_{\upsilon}^{d,l-1}}(t_{i}),t_{i})^{T}R_{\upsilon,i+1}^{d,l}\right)\\
		 +  \Delta^{2}A_{\theta}(\overline{X_{\upsilon}^{d,l-1}}(t_{i}),t_{i})^{T}R_{\upsilon,i+1}^{d,l}A_{\theta}(\overline{X_{\upsilon}^{d,l-1}}(t_{i}),t_{i})\\
		 -  \triangle(I_{d}+\Delta A_{\theta}(\overline{X_{\upsilon}^{d,l-1}}(t_{i}),t_{i})^{T})R_{\upsilon,i+1}^{d,l}BG(R_{\upsilon,i+1}^{d,l-1})B^{T}R_{\upsilon,i+1}^{d,l}(I_{d}+\Delta A_{\theta}(\overline{X_{\upsilon}^{d,l-1}}(t_{i}),t_{i}))\\
		 = R_{\upsilon,i+1}^{d,l}+\triangle C^{T}C+\Delta R_{\upsilon,i+1}^{d,l}A_{\theta}(\overline{X_{\upsilon}^{d,l-1}}(t_{i}),t_{i})+\Delta A_{\theta}(\overline{X_{\upsilon}^{d,l-1}}(t_{i}),t_{i})^{T}R_{\upsilon,i+1}^{d,l}\\
		 - \triangle R_{\upsilon,i+1}^{d,l}BU^{-1}B^{T}R_{\upsilon,i+1}^{d,l}+O_{p,n}(\Delta^{2})\\
		h_{\upsilon,i}^{d,l}(Y)  =  h_{\upsilon,i+1}^{d,l}(Y)-\triangle C^{T}Y(t_{i})+\Delta A_{\theta}(\overline{X_{\upsilon}^{d,l-1}}(t_{i}),t_{i})^{T}h_{\upsilon,i+1}^{d,l}(Y)\\
		 -  \triangle(I_{d}+\Delta A_{\theta}(\overline{X_{\upsilon}^{d,l-1}}(t_{i}),t_{i})^{T})R_{\upsilon,i+1}^{d,l}BG(R_{\upsilon,i+1}^{d,l})B^{T}h_{\upsilon,i+1}^{d,l}(Y)\\
		 =  h_{\upsilon,i+1}^{d,l}(Y)-\triangle C^{T}Y(t_{i})+\Delta A_{\theta}(\overline{X_{\upsilon}^{d,l-1}}(t_{i}),t_{i})^{T}h_{\upsilon,i+1}^{d,l}(Y)\\
		 -  \triangle R_{\upsilon,i+1}^{d,l}BU^{-1}B^{T}h_{\upsilon,i+1}^{d,l}(Y)+O_{p,n}(\Delta^{2}).\\
		\overline{X_{\upsilon}^{d,l}}(t_{i+1}) =  \overline{X_{\upsilon}^{d,l}}(t_{i})+\Delta A_{\theta}(\overline{X_{\upsilon}^{d,l-1}}(t_{i}),t_{i})\overline{X_{\upsilon}^{d,l}}(t_{i})\\
		-\triangle BU^{-1}B^{T}\left(R_{\upsilon}^{d,l}(t_{i+1})\overline{X_{\upsilon}^{d,l}}(t_{i})+h_{\upsilon}^{d,l}(t_{i+1})\right)+O_{p,n}(\triangle^{2}).
		\end{array}
		\]
	} All the previous approximations have been obtained by using $G(R_{\upsilon,i+1}^{d,l})=U^{-1}+O_{p,n}(\triangle)$,
	$A_{\theta}(\overline{X_{\upsilon}^{l-1}}(t_{i+1}),t_{i+1})=A_{\theta}(\overline{X_{\upsilon}^{l-1}}(t_{i}),t_{i})+O_{n}(\triangle)$,
	$A_{\theta}(\overline{X_{\upsilon}^{d,l-1}}(t_{i}),t_{i})=O_{p,n}(1)$,
	$\left\Vert E_{\upsilon,i}^{d,l-1}(Y)\right\Vert _{2}=O_{p,n}(1)$
	and $\left\Vert E_{\upsilon}^{l-1}(t)\right\Vert _{2}=O_{n}(1)$ (proposition
	\ref{prop:E_bounded_probability}). By making the subtraction between
	the two last finite difference equations, we obtain{\small{}:
		\[
		\begin{array}{l}
		R_{\upsilon}^{l}(t_{i})-R_{\upsilon,i}^{d,l} \\
		 = R_{\upsilon}^{l}(t_{i+1})+\triangle C^{T}C+\triangle A_{\theta}(\overline{X_{\upsilon}^{l-1}}(t_{i}),t_{i})^{T}R_{\upsilon}^{l}(t_{i+1})+\triangle R_{\upsilon}^{l}(t_{i+1})A_{\upsilon}(\overline{X_{v}^{l-1}}(t_{i}),t_{i})\\
		 -  \triangle R_{\upsilon}^{l}(t_{i+1})BU^{-1}B^{T}R_{\upsilon}^{l}(t_{i+1})\\
		 -  R_{\upsilon,i+1}^{d,l}-\triangle C^{T}C-\Delta A_{\theta}(\overline{X_{\upsilon}^{d,l-1}}(t_{i}),t_{i})^{T}R_{\upsilon,i+1}^{d,l}-\Delta R_{\upsilon,i+1}^{d,l}A_{\theta}(\overline{X_{\upsilon}^{d,l-1}}(t_{i}),t_{i})\\
		 + \triangle R_{\upsilon,,i+1}^{d,l}BU^{-1}B^{T}R_{\upsilon,i+1}^{d,l}+O_{p,n}(\Delta^{2})\\
		 =  R_{\upsilon}^{l}(t_{i+1})-R_{\upsilon,i+1}^{d,l}+\triangle A_{\theta}(\overline{X_{\upsilon}^{l-1}}(t_{i}),t_{i})^{T}R_{\upsilon}^{l}(t_{i+1})-\triangle A_{\theta}(\overline{X_{\upsilon}^{d,l-1}}(t_{i}),t_{i})^{T}R_{\upsilon}^{l}(t_{i+1})\\
		 + \triangle A_{\theta}(\overline{X_{\upsilon}^{d,l-1}}(t_{i}),t_{i})^{T}R_{\upsilon}^{l}(t_{i+1})-\Delta A_{\theta}(\overline{X_{\upsilon}^{d,l-1}}(t_{i}),t_{i})^{T}R_{\upsilon,i+1}^{d,l}\\
		 + \triangle R_{v}^{l}(t_{i+1})A_{\theta}(\overline{X_{\upsilon}^{l-1}}(t_{i}),t_{i})-\triangle R_{\upsilon,i+1}^{d,l}A_{\theta}(\overline{X_{\upsilon}^{l-1}}(t_{i}),t_{i})\\
		+ \triangle R_{\upsilon,i+1}^{d,l}A_{\theta}(\overline{X_{\upsilon}^{l-1}}(t_{i}),t_{i})-\Delta R_{\upsilon,i+1}^{d,l}A_{\theta}(\overline{X_{\upsilon}^{d,l-1}}(t_{i}),t_{i})\\
		 - \left(\triangle R_{\upsilon}^{l}(t_{i+1})BU^{-1}B^{T}R_{\upsilon}^{l}(t_{i+1})-\triangle R_{\upsilon,i+1}^{d,l}BU^{-1}B^{T}R_{\upsilon,i+1}^{d,l}\right)+O_{p,n}(\Delta^{2})\\
		 =  R_{\upsilon}^{l}(t_{i+1})-R_{\upsilon,i+1}^{d,l}+\triangle\left(A_{\theta}(\overline{X_{\upsilon}^{l-1}}(t_{i}),t_{i})-A_{\theta}(\overline{X_{\upsilon}^{d,l-1}}(t_{i}),t_{i})\right){}^{T}R_{\upsilon}^{l}(t_{i+1})\\
		 + \triangle A_{\theta}(\overline{X_{\upsilon}^{d,l-1}}(t_{i}),t_{i})^{T}\left(R_{\upsilon}^{l}(t_{i+1})-R_{\upsilon,i+1}^{d,l}\right)+\triangle\left(R_{\upsilon}^{l}(t_{i+1})-R_{\upsilon,i+1}^{d,l-1}\right)A_{\theta}(\overline{X_{\upsilon}^{l-1}}(t_{i}),t_{i})\\
		 +  \Delta R_{\upsilon,i+1}^{d,l}\left(A_{\theta}(\overline{X_{\upsilon}^{l-1}}(t_{i}),t_{i})-A_{\theta}(\overline{X_{\upsilon}^{d,l-1}}(t_{i}),t_{i})\right)\\
		 - \triangle\left(R_{\upsilon}^{l}(t_{i+1})-R_{\upsilon,i+1}^{d,l}\right)BU^{-1}B^{T}R_{\upsilon}^{l}(t_{i+1})-\triangle R_{\upsilon,i+1}^{d,l}BU^{-1}B^{T}\left(R_{\upsilon}^{l}(t_{i+1})-R_{\upsilon,i+1}^{d,l}\right)+O_{p,n}(\Delta^{2})
		\end{array}
		\]
	}an{\small{}d:
		\[
		\begin{array}{l}
		h_{\upsilon,i}^{d,l}(Y)-h_{\upsilon}^{l}(t_{i}) \\
		 =  h_{\upsilon,i+1}^{d,l}(Y)-h_{\upsilon}^{l}(t_{i+1})-\triangle C^{T}\epsilon_{i}+\Delta A_{\theta}(\overline{X_{\upsilon}^{d,l-1}}(t_{i}),t_{i})^{T}h_{\upsilon,i+1}^{d,l}(Y)\\
		 -  \triangle R_{\upsilon,i+1}^{d,l}BU^{-1}B^{T}h_{\upsilon,i+1}^{d,l}(Y)\\
		 - \triangle A_{\theta}(\overline{X_{\upsilon}^{l-1}}(t_{i}),t_{i})^{T}h_{\upsilon}^{l}(t_{i+1})+R_{\upsilon}^{l}(t_{i+1})BU^{-1}B^{T}h_{\upsilon}^{l}(t_{i+1})+O_{p,n}(\Delta^{2})\\
		 =  h_{\upsilon,i+1}^{d,l}(Y)-h_{\upsilon}^{l}(t_{i+1})+\Delta\left(A_{\theta}(\overline{X_{\upsilon}^{d,l-1}}(t_{i}),t_{i})-A_{\theta}(\overline{X_{\upsilon}^{l-1}}(t_{i}),t_{i})\right)^{T}h_{\upsilon,i+1}^{d,l}(Y)\\
		 + \Delta A_{\theta}(\overline{X_{\upsilon}^{l-1}}(t_{i}),t_{i})^{T}\left(h_{\upsilon,i+1}^{d,l}(Y)-h_{\upsilon}^{l}(t_{i+1})\right)-\triangle R_{\upsilon,i+1}^{d,l}BU^{-1}B^{T}(h_{\upsilon,i+1}^{d,l}(Y)-h_{\upsilon}^{l}(t_{i+1}))\\
		 - \triangle(R_{\upsilon,i+1}^{d,l}-R_{\upsilon}^{l}(t_{i+1}))BU^{-1}B^{T}h_{\upsilon}^{l}(t_{i+1})-\triangle C^{T}\epsilon_{i}+O_{p,n}(\Delta^{2}).
		\end{array}
		\]
	}Triangular inequality gives us{\small{}:
		\[
		\begin{array}{l}
		\left\Vert R_{\upsilon}^{l}(t_{i})-R_{\upsilon,i}^{d,l}\right\Vert _{2}
		\\  \leq  \left\Vert R_{\upsilon}^{l}(t_{i+1})-R_{\upsilon,i+1}^{d,l}\right\Vert _{2}+2\Delta\overline{A}\left\Vert R_{\upsilon}^{l}(t_{i+1})-R_{\upsilon,i+1}^{d,l}\right\Vert _{2}\\
		 +  \triangle\left(\left\Vert R_{\upsilon,i+1}^{d,l}\right\Vert _{2}+\left\Vert R_{\upsilon}^{l}(t_{i+1})\right\Vert _{2}\right)\left\Vert BU^{-1}B^{T}\right\Vert _{2}\left\Vert R_{\upsilon}^{l}(t_{i+1})-R_{\upsilon,i+1}^{d,l}\right\Vert _{2}\\
		 +  \triangle\left(\left\Vert R_{\upsilon}^{l}(t_{i+1})\right\Vert _{2}+\left\Vert R_{\upsilon,i+1}^{d,l}\right\Vert _{2}\right)\left\Vert A_{\theta}(\overline{X_{\upsilon}^{l-1}}(t_{i}),t_{i})-A_{\theta}(\overline{X_{\upsilon}^{d,l-1}}(t_{i}),t_{i})\right\Vert _{2}+O_{p,n}(\Delta^{2})\\
		 \leq \left(1+\Delta\left(2\overline{A}+\left(\left\Vert R_{\upsilon,i+1}^{d,l}\right\Vert _{2}+\left\Vert R_{\upsilon}^{l}(t_{i+1})\right\Vert _{2}\right)\left\Vert BU^{-1}B^{T}\right\Vert _{2}\right)\right)\left\Vert R_{\upsilon,i+1}^{d,l}-R_{\upsilon}^{l}(t_{i+1})\right\Vert _{2}\\
		 +  \triangle\left(\left\Vert R_{\upsilon}^{l}(t_{i+1})\right\Vert _{2}+\left\Vert R_{\upsilon,i+1}^{d,l}\right\Vert _{2}\right)\left\Vert A_{\theta}(\overline{X_{\upsilon}^{l-1}}(t_{i}),t_{i})-A_{\theta}(\overline{X_{\upsilon}^{d,l-1}}(t_{i}),t_{i})\right\Vert _{2}+O_{p,n}(\Delta^{2})\\
		 \leq  (1+\Delta\left(2\overline{A}+O_{n}(1)\right)\left\Vert R_{\upsilon,i+1}^{d,l}-R_{\upsilon}^{l}(t_{i+1})\right\Vert _{2}\\
		 +  O_{n}(\triangle)\left\Vert A_{\theta}(\overline{X_{\upsilon}^{l-1}}(t_{i}),t_{i})-A_{\theta}(\overline{X_{\upsilon}^{d,l-1}}(t_{i}),t_{i})\right\Vert _{2}+O_{p,n}(\Delta^{2})
		\end{array}
		\]
		and:
		\[
		\begin{array}{l}
		\left\Vert \overline{X_{\upsilon}^{l}}(t_{i+1})-\overline{X_{\upsilon}^{d,l}}(t_{i+1})\right\Vert _{2}\\
		  \leq  \left\Vert \overline{X_{\upsilon}^{l}}(t_{i})-\overline{X_{\upsilon}^{d,l}}(t_{i})\right\Vert _{2}+\Delta\left\Vert \overline{X_{\upsilon}^{d,l}}(t_{i})\right\Vert _{2}\left\Vert A_{\theta}(\overline{X_{\upsilon}^{l-1}}(t_{i}),t_{i})-A_{\theta}(\overline{X_{\upsilon}^{d,l-1}}(t_{i}),t_{i})\right\Vert _{2}\\
		 +  \Delta\left(\left\Vert A_{\theta}(\overline{X_{\upsilon}^{d,l-1}}(t_{i}),t_{i})\right\Vert _{2}+\left\Vert BU^{-1}B^{T}\right\Vert _{2}\left\Vert R_{\upsilon}^{l}(t_{i+1})\right\Vert _{2}\right)\left\Vert \overline{X_{\upsilon}^{l}}(t_{i})-\overline{X_{\upsilon}^{d,l}}(t_{i})\right\Vert _{2}\\
		 +  \triangle\left\Vert BU^{-1}B^{T}\right\Vert _{2}\left\Vert \overline{X_{\upsilon}^{d,l}}(t_{i})\right\Vert _{2}\left\Vert R_{\upsilon}^{l}(t_{i+1})-R_{\upsilon}^{d,l}(t_{i+1})\right\Vert _{2}\\
		 +  \triangle\left\Vert BU^{-1}B^{T}\right\Vert _{2}\left\Vert h_{\upsilon}^{l}(t_{i+1})-h_{\upsilon}^{dl}(t_{i+1})\right\Vert _{2}+O_{p,n}(\Delta^{2})\\
		 \leq  \left(1+O_{p,n}(\triangle)\right)\left\Vert \overline{X_{\upsilon}^{l}}(t_{i})-\overline{X_{\upsilon}^{d,l}}(t_{i})\right\Vert _{2}\\
		 +  \triangle\left\Vert BU^{-1}B^{T}\right\Vert _{2}\left(\left\Vert \overline{X_{\upsilon}^{d,l}}(t_{i})\right\Vert _{2}\left\Vert R_{\upsilon}^{l}(t_{i+1})-R_{\upsilon}^{d,l}(t_{i+1})\right\Vert _{2}+\left\Vert h_{\upsilon}^{l}(t_{i+1})-h_{\upsilon}^{d,l}(t_{i+1})\right\Vert _{2}\right)\\
		+  \Delta\left\Vert \overline{X_{\upsilon}^{d,l}}(t_{i})\right\Vert _{2}\left\Vert A_{\theta}(\overline{X_{\upsilon}^{l-1}}(t_{i}),t_{i})-A_{\theta}(\overline{X_{\upsilon}^{d,l-1}}(t_{i}),t_{i})\right\Vert _{2}+O_{p,n}(\Delta^{2}).
		\end{array}
		\]
	}From these last equations describing the behavior of $h_{\upsilon,i}^{d,l}(Y)-h_{\upsilon}^{l}(t_{i}),\,\left\Vert R_{\upsilon}^{l}(t_{i})-R_{\upsilon,i}^{d,l}\right\Vert _{2}$
	and $\left\Vert \overline{X_{\upsilon}^{l}}(t_{i+1})-\overline{X_{\upsilon}^{d,l}}(t_{i+1})\right\Vert _{2}$,
	we prove now by induction $\sup_{\upsilon\in\varUpsilon}\left\Vert R_{\upsilon,i}^{d,l}-R_{\upsilon}^{l}(t_{i})\right\Vert _{2} = o_{p,n}(1),$ $\sup_{\upsilon\in\varUpsilon}\left\Vert h_{\upsilon,i}^{d,l}-h_{\upsilon}^{l}(t_{i})\right\Vert _{2}=o_{p,n}(1)$
	and $\sup_{\upsilon\in\varUpsilon}\left\Vert \overline{X_{\upsilon}^{l}}(t_{i})-\overline{X_{\upsilon}^{d,l}}(t_{i})\right\Vert _{2}=o_{p,n}(1)$
	for each $l\in\mathbb{N}$. First of all, let us start the initialization
	with $l=1$, we already know $\sup_{\upsilon\in\varUpsilon}\left\Vert \overline{X_{\upsilon}^{0}}(t_{i})-\overline{X_{\upsilon}^{d,0}}(t_{i})\right\Vert _{2}=\left\Vert x_{0}-x_{0}\right\Vert _{2}=0$,
	the inequality respected by $\left\Vert R_{\upsilon}^{1}(t_{i})-R_{\upsilon,i}^{d,1}\right\Vert _{2}$
	becomes
		$\left\Vert R_{\upsilon}^{1}(t_{i})-R_{\upsilon,i}^{d,1}\right\Vert _{2} \leq (1+O_{n}(\Delta))\left\Vert R_{\upsilon,i+1}^{d,1}-R_{\upsilon}^{1}(t_{i+1})\right\Vert _{2}+O_{p,n}(\Delta^{2})$. Discrete Gronwall lemma \ref{lem:disc_Gronwall_lemma} gives us
	$\left\Vert R_{\upsilon}^{1}(t_{i})-R_{\upsilon,i}^{d,1}\right\Vert _{2}\leq\sum_{j=1}^{n-i}e^{O_{p,n}(1)}O_{p,n}(\Delta^{2})+O_{p,n}(\Delta)=O_{p,n}(\Delta)$
	since $\left\Vert R_{\upsilon}^{1}(t_{n})-R_{\upsilon,n}^{d,1}\right\Vert _{2}=O_{p,n}(\Delta)$,
	from this we can derive the uniform bound $\left\Vert R_{\upsilon,i}^{d,1}-R_{\upsilon}^{1}(t_{i})\right\Vert _{2}=o_{p,n}(1)$
	for all $i\in\left\llbracket 0,\,n\right\rrbracket $ and $\theta\in\Theta$.
	By using this and $R_{\upsilon}^{l}(t_{i+1})=O_{n}(1)$, $h_{\upsilon}^{l}(t_{i+1})=O_{n}(1)$,
	we can simplify the expression of $h_{\upsilon,i}^{d,1}(Y)-h_{\upsilon}^{1}(t_{i})${\small{}:
		\[
		\begin{array}{lll}
		h_{\upsilon,i}^{d,1}(Y)-h_{\upsilon}^{1}(t_{i}) & = & h_{\upsilon,i+1}^{d,1}(Y)-h_{\upsilon}^{1}(t_{i+1})+\Delta A_{\theta}(x_{0},t_{i})^{T}\left(h_{\upsilon,i+1}^{d,1}(Y)-h_{\upsilon}^{1}(t_{i+1})\right)\\
		& - & \triangle(R_{\upsilon,i+1}^{d,1}-R_{\upsilon}^{1}(t_{i+1}))BU^{-1}B^{T}h_{\upsilon}^{1}(t_{i+1})-\triangle C^{T}\epsilon_{i}+O_{p,n}(\Delta^{2})\\
		& = & \left(I_{d}+\Delta A_{\theta}(x_{0},t_{i})^{T}\right)\left(h_{\upsilon,i+1}^{d,1}(Y)-h_{\upsilon}^{1}(t_{i+1})\right)-\triangle C^{T}\epsilon_{i}+O_{p,n}(\Delta)\\
		& = & \left(I_{d}+O_{n}(\Delta)\right)\left(h_{\upsilon,i+1}^{d,1}(Y)-h_{\upsilon}^{1}(t_{i+1})\right)-\triangle C^{T}\epsilon_{i}+O_{p,n}(\Delta).
		\end{array}
		\]
	}Since $h_{\upsilon,n}^{d,l}(Y)-h_{\upsilon}^{l}(t_{n})=-\triangle C^{T}\epsilon_{n}$,
	we show by induction $h_{\upsilon,i}^{d,1}(Y)-h_{\upsilon}^{1}(t_{i})=-\triangle C^{T}\sum_{j=i}^{n}\epsilon_{j}+O_{p,n}(\Delta)$
	indeed:{\small{}
		\[
		\begin{array}{lll}
		h_{\upsilon,i}^{d}(Y)-h_{\upsilon}(t_{i}) & = & \left(I_{d}+O_{p,n}(\Delta)\right)\left(-\triangle C^{T}\sum_{j=i+1}^{n}\epsilon_{j}+O_{p,n}(\Delta)\right)-\triangle C^{T}\epsilon_{i}+O_{p,n}(\Delta)\\
		& = & -\triangle C^{T}\sum_{j=i+1}^{n}\epsilon_{j}+O_{p,n}(\Delta)-\triangle C^{T}\epsilon_{i}+O_{p,n}(\Delta)\\
		& = & -\triangle C^{T}\sum_{j=i}^{n}\epsilon_{j}+O_{p,n}(\Delta)
		\end{array}
		\]
	}from this we conclude $\left\Vert h_{\upsilon,i}^{d,1}(Y)-h_{\upsilon}^{1}(t_{i})\right\Vert _{2}=o_{p,n}(1)$
	for all $i\in\left\llbracket 0,\,n\right\rrbracket $ and $\upsilon\in\varUpsilon${\small{}.
	}From these inequalities, we derive: {\small{}
		\[
		\begin{array}{lll}
		\left\Vert \overline{X_{\upsilon}^{1}}(t_{i+1})-\overline{X_{\upsilon}^{d,1}}(t_{i+1})\right\Vert _{2} & \leq & \left(1+O_{n}(\triangle)\right)\left\Vert \overline{X_{\upsilon}^{1}}(t_{i})-\overline{X_{\upsilon}^{d,1}}(t_{i})\right\Vert _{2}\\
		& + & \triangle\left\Vert BU^{-1}B^{T}\right\Vert _{2}\left\Vert \overline{X_{\upsilon}^{d,1}}(t_{i})\right\Vert _{2}\left\Vert R_{\upsilon}^{1}(t_{i+1})-R_{\upsilon}^{d,1}(t_{i+1})\right\Vert _{2}\\
		& + & \triangle\left\Vert BU^{-1}B^{T}\right\Vert _{2}\left\Vert h_{\upsilon}^{1}(t_{i+1})-h_{\upsilon}^{d,1}(t_{i+1})\right\Vert _{2}+O_{p,n}(\Delta^{2})\\
		& \leq & \left(1+O_{n}(\triangle)\right)\left\Vert \overline{X_{\upsilon}^{1}}(t_{i})-\overline{X_{\upsilon}^{d,1}}(t_{i})\right\Vert _{2}+o_{p,n}(\triangle)
		\end{array}
		\]
	}we can use the discrete Gronwall lemma \ref{lem:disc_Gronwall_lemma}
	to obtain $\left\Vert \overline{X_{\upsilon}^{1}}(t_{i+1})-\overline{X_{\upsilon}^{d,1}}(t_{i+1})\right\Vert _{2}\leq\sum_{j=1}^{n-i}e^{O_{n}(1)}o_{p,n}(\triangle)=o_{p,n}(1).$
	Now let us assume the property holds for $l-1$, by using the induction
	hypothesis, we can simplify the inequality respected by $\left\Vert R_{\upsilon}^{l}(t_{i})-R_{\upsilon,i}^{d,l}\right\Vert _{2}$
	which becomes $ \left\Vert R_{\upsilon}^{l}(t_{i})-R_{\upsilon,i}^{d,l}\right\Vert _{2}  \leq  (1+O_{p,n}(\Delta))\left\Vert R_{\upsilon,i+1}^{d,l}-R_{\upsilon}^{l}(t_{i+1})\right\Vert _{2}+O_{p,n}(\Delta)$ again, by using the discrete Gronwall lemma we have $\left\Vert R_{\upsilon,i}^{d,l}-R_{\upsilon}^{l}(t_{i})\right\Vert _{2}=o_{p,n}(1)$,
	with the help of this result, we have{\small{}
		\[
		\begin{array}{lll}
		h_{\upsilon,i}^{d,l}(Y)-h_{\upsilon}^{l}(t_{i}) & = & \left(I_{d}+O_{p}(\Delta)\right)(h_{\upsilon,i+1}^{d,l}(Y)-h_{\upsilon}^{l}(t_{i+1}))-\triangle C^{T}\epsilon_{i}+o_{p,n}(\Delta)\end{array}
		\]
	}and by induction we derive the expression $
	h_{\upsilon,i}^{d}(Y)-h_{v}(t_{i})  =  -\triangle C^{T}\sum_{j=i}^{n}\epsilon_{j}+O_{p,n}(\Delta)$ and thus $\left\Vert h_{\upsilon,i}^{d,l}(Y)-h_{\upsilon}^{l}(t_{i})\right\Vert _{2}=o_{p,n}(1)$.
	By proceeding the same way as in the $l=1$ case, we derive $\left\Vert \overline{X_{\upsilon}^{l}}(t_{i})-\overline{X_{\upsilon}^{d,l}}(t_{i})\right\Vert _{2}=o_{p,n}(1)$
	uniformly on $\left\llbracket 0,\,n\right\rrbracket \times\varUpsilon$,
	which conclude the proof by induction.
	
	If $\theta\longmapsto A_{\theta}(x,.)$ is $C^{1}$ on $\Theta$,
	then $t\longrightarrow R_{\upsilon}(t)$ and $t\longrightarrow h_{\upsilon}(t)$
	are differentiable for all $\upsilon\in\varUpsilon$. By differentiating
	and integrating (\ref{eq:continous_accurate_Riccati_equation}), we
	have{\small{}:
		\[
		\begin{array}{l}
		\frac{\partial R_{\upsilon}^{l}(t_{i})}{\partial\upsilon}\\
		  =  \frac{\partial R_{\upsilon}^{l}(t_{i+1})}{\partial\upsilon}-\int_{t_{i}}^{t_{i+1}}\left(\frac{\partial R_{\upsilon}^{l}(t)}{\partial\upsilon}BU^{-1}B^{T}R_{\upsilon}^{l}(t)+R_{\upsilon}^{l}(t)BU^{-1}B^{T}\frac{\partial R_{\upsilon}^{l}(t)}{\partial\upsilon}\right)dt\\
		 +  \int_{t_{i}}^{t_{i+1}}\left(\frac{\partial R_{\upsilon}^{l}(t)}{\partial\upsilon}A_{\theta}(\overline{X_{\upsilon}^{l-1}}(t),t)+R_{\upsilon}^{l}(t)\frac{\partial A_{\theta}}{\partial\upsilon}(\overline{X_{\upsilon}^{l-1}}(t),t)+R_{\upsilon}^{l}(t)\frac{\partial A_{\theta}}{\partial x}(\overline{X_{\upsilon}^{l-1}}(t),t)\frac{\partial\overline{X_{\upsilon}^{l-1}}(t)}{\partial\upsilon}\right)dt\\
		 +  \int_{t_{i}}^{t_{i+1}}\left(\frac{\partial A_{\theta}}{\partial\upsilon}(\overline{X_{\upsilon}^{l-1}}(t),t)^{T}R_{\upsilon}^{l}(t)+\frac{\partial A_{\theta}}{\partial x}(\overline{X_{\upsilon}^{l-1}}(t),t)^{T}\frac{\partial\overline{X_{v}^{l-1}}(t)}{\partial\upsilon}R_{\upsilon}^{l}(t)+A_{\theta}(\overline{X_{\upsilon}^{l-1}}(t),t)^{T}\frac{\partial R_{v}^{l}(t)}{\partial\upsilon}\right)dt\\
		 =  \frac{\partial R_{\upsilon}^{l}(t_{i+1})}{\partial\upsilon}+\triangle\frac{\partial A_{\theta}}{\partial\upsilon}(\overline{X_{\upsilon}^{l-1}}(t_{i}),t_{i})^{T}R_{v}^{l}(t_{i+1})+\triangle\frac{\partial A_{\theta}}{\partial x}(\overline{X_{\upsilon}^{l-1}}(t_{i}),t_{i})^{T}\frac{\partial\overline{X_{\upsilon}^{l-1}}(t_{i})}{\partial\upsilon}R_{\upsilon}^{l}(t_{i+1})\\
		 +  \triangle A_{\theta}(\overline{X_{\upsilon}^{l-1}}(t_{i}),t_{i})^{T}\frac{\partial R_{\upsilon}^{l}(t_{i+1})}{\partial\upsilon}+\triangle\frac{\partial R_{\upsilon}^{l}(t_{i+1})}{\partial\upsilon}A_{\theta}(\overline{X_{\upsilon}^{l-1}}(t_{i}),t_{i})\\
		 +  \triangle R_{\upsilon}^{l}(t_{i+1})\frac{\partial A_{\theta}}{\partial\upsilon}(\overline{X_{\upsilon}^{l-1}}(t_{i}),t_{i})+\triangle R_{\upsilon}^{l}(t_{i+1})\frac{\partial A_{\theta}}{\partial x}(\overline{X_{\upsilon}^{l-1}}(t_{i}),t_{i})\frac{\partial\overline{X_{\upsilon}^{l-1}}(t_{i})}{\partial\upsilon}\\
		 -  \triangle\frac{\partial R_{\upsilon}^{l}(t_{i+1})}{\partial\upsilon}BU^{-1}B^{T}R_{\upsilon}^{l}(t_{i+1})-\triangle R_{\upsilon}^{l}(t_{i+1})BU^{-1}B^{T}\frac{\partial R_{\upsilon}^{l}(t_{i+1})}{\partial\upsilon}+O_{n}(\triangle^{2})
		\end{array}
		\]
		\[	
		\begin{array}{l}
		\frac{\partial h_{\upsilon}^{l}(t_{i})}{\partial\upsilon} \\
		 =  \frac{\partial h_{\upsilon}^{l}(t_{i+1})}{\partial\upsilon}-\int_{t_{i}}^{t_{i+1}}\left(\frac{\partial R_{\upsilon}^{l}(t)}{\partial\upsilon}BU^{-1}B^{T}h_{\upsilon}^{l}(t)+R_{\upsilon}^{l}(t)BU^{-1}B^{T}\frac{\partial h_{\upsilon}^{l}(t)}{\partial\upsilon}\right)dt\\
		 +  \int_{t_{i}}^{t_{i+1}}\left(\frac{\partial A_{\theta}}{\partial\upsilon}(\overline{X_{\upsilon}^{l-1}}(t),t)^{T}h_{\upsilon}^{l}(t)+\frac{\partial A_{\theta}}{\partial x}(\overline{X_{\upsilon}^{l-1}}(t),t)^{T}\frac{\partial\overline{X_{\upsilon}^{l-1}}(t)}{\partial\upsilon}h_{\upsilon}^{l}(t)+A_{\theta}(\overline{X_{\upsilon}^{l-1}}(t),t)^{T}\frac{\partial h_{\upsilon}^{l}(t)}{\partial\upsilon}\right)dt\\
		 =  \frac{\partial h_{\upsilon}^{l}(t_{i+1})}{\partial\upsilon}+\triangle\frac{\partial A_{\theta}}{\partial\upsilon}(\overline{X_{\upsilon}^{l-1}}(t_{i}),t_{i})^{T}h_{\upsilon}^{l}(t_{i+1})\\
		 + \triangle\frac{\partial A_{\theta}}{\partial x}(\overline{X_{\upsilon}^{l-1}}(t_{i}),t_{i})^{T}\frac{\partial\overline{X_{\upsilon}^{l-1}}(t_{i})}{\partial\upsilon}h_{\upsilon}^{l}(t_{i+1})+\triangle A_{\theta}(\overline{X_{\upsilon}^{l-1}}(t_{i}),t_{i})^{T}\frac{\partial h_{\upsilon}^{l}(t_{i+1})}{\partial\upsilon}\\
		 - \triangle\frac{\partial R_{\upsilon}^{l}(t_{i+1})}{\partial\upsilon}BU^{-1}B^{T}h_{\upsilon}^{l}(t_{i+1})-\triangle R_{\upsilon}^{l}(t_{i+1})BU^{-1}B^{T}\frac{\partial h_{\upsilon}^{l}(t_{i+1})}{\partial v}+O_{n}(\triangle^{2}).
		\end{array}
		\]
		\[
		\begin{array}{l}
		\frac{\partial\overline{X_{\upsilon}^{l}}(t_{i+1})}{\partial\upsilon}\\
		  =  \frac{\partial\overline{X_{\upsilon}^{l}}(t_{i})}{\partial\upsilon}-\int_{t_{i}}^{t_{i+1}}BU^{-1}B^{T}\left(\frac{\partial R_{\upsilon}^{l}(t)}{\partial\upsilon}\overline{X_{\upsilon}^{l}}(t)+R_{\upsilon}^{l}(t)\frac{\partial\overline{X_{\upsilon}^{l}}(t)}{\partial\upsilon}+\frac{\partial h_{\upsilon}^{l}(t)}{\partial\upsilon}\right)dt\\
		 +  \int_{t_{i}}^{t_{i+1}}\left(\frac{\partial A_{\theta}}{\partial\upsilon}(\overline{X_{\upsilon}^{l-1}}(t),t)\overline{X_{\upsilon}^{l}}(t)+\frac{\partial A_{\theta}}{\partial x}(\overline{X_{\upsilon}^{l-1}}(t),t)\frac{\partial\overline{X_{\upsilon}^{l-1}}(t)}{\partial\upsilon}\overline{X_{\upsilon}^{l}}(t)+A_{\theta}(\overline{X_{\upsilon}^{l-1}}(t),t)\frac{\partial\overline{X_{\upsilon}^{l}}(t)}{\partial\upsilon}\right)dt\\
		=  \frac{\partial\overline{X_{\upsilon}^{l}}(t_{i})}{\partial\upsilon}+\triangle\left(\frac{\partial A_{\theta}}{\partial\upsilon}(\overline{X_{\upsilon}^{l-1}}(t_{i}),t_{i})+\frac{\partial A_{\theta}}{\partial x}(\overline{X_{\upsilon}^{l-1}}(t_{i}),t_{i})\frac{\partial\overline{X_{\upsilon}^{l-1}}(t_{i})}{\partial\upsilon}\right)\overline{X_{\upsilon}^{l}}(t_{i})
		\\+\triangle A_{\theta}(\overline{X_{\upsilon}^{l-1}}(t_{i}),t_{i})\frac{\partial\overline{X_{\upsilon}^{l}}(t_{i})}{\partial\upsilon}\\
		 -  \triangle BU^{-1}B^{T}\left(\frac{\partial R_{\upsilon}^{l}(t_{i+1})}{\partial\upsilon}\overline{X_{\upsilon}^{l}}(t_{i})+R_{\upsilon}^{l}(t_{i})\frac{\partial\overline{X_{\upsilon}^{l}}(t_{i})}{\partial\upsilon}+\frac{\partial h_{\upsilon}^{l}(t_{i+1})}{\partial\upsilon}\right)+O_{n}(\triangle^{2}).
		\end{array}
		\]
	}Again, if $\theta\longmapsto A_{\theta}$ is $C^{1}$ on $\Theta$,
	then we derive by induction that $\upsilon\longmapsto R_{\upsilon,i}^{d}$
	and $\upsilon\longmapsto h_{\upsilon,i}^{d}(Y)$ are $C^{1}$ as well
	for all $i\in\left\llbracket 0,\,n\right\rrbracket $ and ruled by
	the following finite difference equations{\small{}:
		\[
		\begin{array}{l}
		\frac{\partial R_{\upsilon,i}^{d,l}}{\partial\upsilon}\\
		  =  \frac{\partial R_{\upsilon,i+1}^{d,l}}{\partial\upsilon}+\Delta\frac{\partial R_{\upsilon,i+1}^{d,l}}{\partial\upsilon}A_{\theta}(\overline{X_{\upsilon}^{d,l-1}}(t_{i}),t_{i})+\Delta R_{\upsilon,i+1}^{d,l}\frac{\partial A_{\theta}}{\partial\upsilon}(\overline{X_{\upsilon}^{d,l-1}}(t_{i}),t_{i})\\
		 +  \Delta R_{\upsilon,i+1}^{d,l}\frac{\partial A_{\theta}}{\partial x}(\overline{X_{\upsilon}^{d,l-1}}(t_{i}),t_{i})\frac{\partial\overline{X_{\upsilon}^{d,l-1}}(t_{i})}{\partial\upsilon}+\Delta\frac{\partial A_{\theta}}{\partial\upsilon}(\overline{X_{\upsilon}^{d,l-1}}(t_{i}),t_{i})^{T}R_{\upsilon,i+1}^{d,l}\\
		 +  \Delta\frac{\partial A_{\theta}}{\partial x}(\overline{X_{\upsilon}^{d,l-1}}(t_{i}),t_{i})^{T}\frac{\partial\overline{X_{\upsilon}^{d,l-1}}(t_{i})}{\partial\upsilon}R_{\upsilon,i+1}^{d,l}+\Delta A_{\theta}(\overline{X_{\upsilon}^{d,l-1}}(t_{i}),t_{i})^{T}\frac{\partial R_{\upsilon,i+1}^{d,l}}{\partial\upsilon}\\
		 -  \triangle^{2}\frac{\partial}{\partial\upsilon}(A_{\theta}(\overline{X_{\upsilon}^{d,l-1}}(t_{i}),t_{i})^{T})R_{\upsilon,i+1}^{d,l}BG(R_{\upsilon,i+1}^{d,l-1})B^{T}R_{\upsilon,i+1}^{d,l}(I_{d}+\Delta A_{\theta}(\overline{X_{\upsilon}^{d,l-1}}(t_{i}),t_{i}))\\
		 -  \triangle(I_{d}+\Delta A_{\theta}(\overline{X_{\upsilon}^{d,l-1}}(t_{i}),t_{i})^{T})\frac{\partial R_{\upsilon,i+1}^{d,l}}{\partial\upsilon}BG(R_{\upsilon,i+1}^{d})B^{T}R_{\upsilon,i+1}^{d,l}(I_{d}+\Delta A_{\theta}(\overline{X_{\upsilon}^{d,l-1}}(t_{i}),t_{i}))\\
		 -  \triangle(I_{d}+\Delta A_{\theta}(\overline{X_{\upsilon}^{d,l-1}}(t_{i}),t_{i})^{T})R_{\upsilon,i+1}^{d,l}B\frac{\partial G}{\partial\upsilon}(R_{\upsilon,i+1}^{d,l-1})B^{T}R_{\upsilon,i+1}^{d,l}(I_{d}+\Delta A_{\theta}(\overline{X_{\upsilon}^{d,l-1}}(t_{i}),t_{i}))\\
		 -  \triangle(I_{d}+\Delta A_{\theta}(\overline{X_{\upsilon}^{d,l-1}}(t_{i}),t_{i})^{T})R_{\upsilon,i+1}^{d,l}BG(R_{\upsilon,i+1}^{d,l-1})B^{T}\frac{\partial R_{\upsilon,i+1}^{d,l}}{\partial\upsilon}(I_{d}+\Delta A_{\theta}(\overline{X_{\upsilon}^{d,l-1}}(t_{i}),t_{i}))\\
		 -  \triangle^{2}(I_{d}+\Delta A_{\theta}(\overline{X_{\upsilon}^{d,l-1}}(t_{i}),t_{i})^{T})\frac{\partial R_{\upsilon,i+1}^{d,l}}{\partial\upsilon}BG(R_{\upsilon,i+1}^{d})B^{T}R_{\upsilon,i+1}^{d,l}\frac{\partial}{\partial\upsilon}(A_{\theta}(\overline{X_{\upsilon}^{d,l-1}}(t_{i}),t_{i}))\\
		 =  \frac{\partial R_{v,i+1}^{d,l}}{\partial\upsilon}+\Delta\left(\frac{\partial R_{\upsilon,i+1}^{d,l}}{\partial\upsilon}A_{\theta}(\overline{X_{\upsilon}^{d,l-1}}(t_{i}),t_{i})+R_{\upsilon,i+1}^{d,l}\frac{\partial A_{\theta}}{\partial\theta}(\overline{X_{\upsilon}^{d,l-1}}(t_{i}),t_{i})\right)\\
		 +  \Delta\left(R_{\upsilon,i+1}^{d,l}\frac{\partial A_{\theta}}{\partial x}(\overline{X_{\upsilon}^{d,l-1}}(t_{i}),t_{i})\frac{\partial\overline{X_{\upsilon}^{d,l-1}}(t_{i})}{\partial\upsilon}+\frac{\partial A_{\theta}}{\partial\upsilon}(\overline{X_{\upsilon}^{d,l-1}}(t_{i}),t_{i})^{T}R_{\upsilon,i+1}^{d,l}\right)\\
		 +  \Delta\left(\frac{\partial A_{\theta}}{\partial x}(\overline{X_{\upsilon}^{d,l-1}}(t_{i}),t_{i})^{T}\frac{\partial\overline{X_{\upsilon}^{d,l-1}}(t_{i})}{\partial\upsilon}R_{\upsilon,i+1}^{d,l}+A_{\theta}(\overline{X_{\upsilon}^{d,l-1}}(t_{i}),t_{i})^{T}\frac{\partial R_{\upsilon,i+1}^{d,l}}{\partial\upsilon}\right)\\
		 -  \Delta\frac{\partial R_{\upsilon,i+1}^{d,l}}{\partial\upsilon}BU^{-1}B^{T}R_{\upsilon,i+1}^{d,l}-\Delta R_{\upsilon,i+1}^{d,l}BU^{-1}B^{T}\frac{\partial R_{\upsilon,i+1}^{d,l}}{\partial\upsilon}+O_{p,n}(\triangle^{2})
		\end{array}
		\]
		\[
		\begin{array}{l}
		\frac{\partial h_{\upsilon,i}^{d,l}}{\partial\upsilon}(Y)\\
		 =  \frac{\partial h_{\upsilon,i+1}^{d,l}}{\partial\upsilon}(Y)+\Delta\frac{\partial A_{\theta}}{\partial\upsilon}(\overline{X_{\upsilon}^{d,l-1}}(t_{i}),t_{i})^{T}h_{v,i+1}^{d,l}(Y)\\
		+  \Delta\frac{\partial A_{\theta}}{\partial x}(\overline{X_{\upsilon}^{d,l-1}}(t_{i}),t_{i})^{T}\frac{\partial\overline{X_{v}^{d,l-1}}(t_{i})}{\partial\upsilon}h_{\upsilon,i+1}^{d,l}(Y)+\Delta A_{\theta}(\overline{X_{\upsilon}^{d,l-1}}(t_{i}),t_{i})^{T}\frac{\partial h_{\upsilon,i+1}^{d,l}}{\partial\upsilon}(Y)\\
		 -  \triangle^{2}\frac{\partial A_{\theta}}{\partial\upsilon}(\overline{X_{\upsilon}^{d,l-1}}(t_{i}),t_{i})^{T}R_{\upsilon,i+1}^{d,l}BG(R_{\upsilon,i+1}^{d,l})B^{T}h_{\upsilon,i+1}^{d,l}(Y)\\
		 -  \triangle(I_{d}+\Delta A_{\theta}(\overline{X_{\upsilon}^{d,l-1}}(t_{i}),t_{i})^{T})\frac{\partial R_{\upsilon,i+1}^{d,l}}{\partial\upsilon}BG(R_{\upsilon,i+1}^{d,l})B^{T}h_{\upsilon,i+1}^{d,l}(Y)\\
		 - \triangle(I_{d}+\Delta A_{\theta}(\overline{X_{\upsilon}^{d,l-1}}(t_{i}),t_{i})^{T})R_{\upsilon,i+1}^{d,l}B\frac{\partial G}{\partial\upsilon}(R_{\upsilon,i+1}^{d,l-1})B^{T}h_{\upsilon,i+1}^{d,l}(Y)\\
		 - (I_{d}+\Delta A_{\theta}(\overline{X_{\upsilon}^{d,l-1}}(t_{i}),t_{i})^{T})R_{\upsilon,i+1}^{d,l}BG(R_{\upsilon,i+1}^{d,l})B^{T}\frac{\partial h_{\upsilon,i+1}^{d,l}}{\partial\upsilon}(Y)\\
		 =  \frac{\partial h_{\upsilon,i+1}^{d,l}}{\partial\upsilon}(Y)+\Delta\frac{\partial A_{\theta}}{\partial\upsilon}(\overline{X_{\upsilon}^{d,l-1}}(t_{i}),t_{i})^{T}h_{\upsilon,i+1}^{d,l}(Y)
		 \\+\Delta\frac{\partial A_{\theta}}{\partial x}(\overline{X_{\upsilon}^{d,l-1}}(t_{i}),t_{i})^{T}\frac{\partial\overline{X_{\upsilon}^{d,l-1}}(t_{i})}{\partial\upsilon}h_{\upsilon,i+1}^{d,l}(Y)
		 +  \Delta A_{\theta}(\overline{X_{\upsilon}^{d,l-1}}(t_{i}),t_{i})^{T}\frac{\partial h_{\upsilon,i+1}^{d,l}}{\partial\upsilon}(Y)\\
		 -  \triangle\frac{\partial R_{\upsilon,i+1}^{d,l}}{\partial\upsilon}BGU^{-1}B^{T}h_{\upsilon,i+1}^{d,l}(Y)-\Delta R_{\upsilon,i+1}^{d,l}BU^{-1}B^{T}\frac{\partial h_{\upsilon,i+1}^{d,l}}{\partial\upsilon}(Y)+O_{p,n}(\triangle^{2})
		\end{array}
		\]
		\[
		\begin{array}{l}
		\frac{\partial\overline{X_{\upsilon}^{d,l}}(t_{i+1})}{\partial\upsilon}\\
		 =  \frac{\partial\overline{X_{\upsilon}^{d,l}}(t_{i})}{\partial\upsilon}+\Delta\frac{\partial A_{\theta}}{\partial\upsilon}(\overline{X_{\upsilon}^{d,l-1}}(t_{i}),t_{i})\overline{X_{\upsilon}^{d,l}}(t_{i})+\Delta\frac{\partial A_{\theta}}{\partial x}(\overline{X_{\upsilon}^{d,l-1}}(t_{i}),t_{i})\frac{\partial\overline{X_{\upsilon}^{d,l-1}}(t_{i})}{\partial\upsilon}\overline{X_{\upsilon}^{d,l}}(t_{i})\\
		 +  \Delta A_{\theta}(\overline{X_{\upsilon}^{d,l-1}}(t_{i}),t_{i})\frac{\partial\overline{X_{\upsilon}^{d,l}}(t_{i})}{\partial\upsilon}\\
		 - \triangle BU^{-1}B^{T}\left(\frac{\partial R_{\upsilon,i+1}^{d,l}}{\partial\upsilon}\overline{X_{\upsilon}^{d,l}}(t_{i})+R_{\upsilon}^{d,l}(t_{i+1})\frac{\partial\overline{X_{\upsilon}^{d,l}}(t_{i})}{\partial\upsilon}+\frac{\partial h_{\upsilon}^{d,l}(t_{i+1})}{\partial\upsilon}\right)+O_{p,n}(\triangle^{2}).
		\end{array}
		\]
	}with $\frac{\partial G}{\partial\upsilon}(R_{\upsilon,i+1}^{d})=-G(R_{\upsilon,i+1}^{d})\triangle B^{T}\frac{\partial R_{\upsilon,i+1}^{d}}{\partial\upsilon}BG(R_{\upsilon,i+1}^{d})=O_{p,n}(\triangle)$.
	By making the difference between these last equations, we obtain:{\small{}
		\[
		\begin{array}{l}
		\frac{\partial R_{\upsilon}^{l}(t_{i})}{\partial\upsilon}-\frac{\partial R_{\upsilon,i+1}^{d,l}}{\partial\upsilon}=\frac{\partial R_{\upsilon}^{l}(t_{i+1})}{\partial\upsilon}-\frac{\partial R_{\upsilon,i+1}^{d,l}}{\partial\upsilon}\\
		+\triangle\frac{\partial A_{\theta}}{\partial\upsilon}(\overline{X_{\upsilon}^{l-1}}(t_{i}),t_{i})^{T}\left(R_{\upsilon}^{l}(t_{i+1})-R_{\upsilon,i+1}^{d,l}\right)+\left(\frac{\partial A_{\theta}}{\partial\upsilon}(\overline{X_{\upsilon}^{l-1}}(t_{i}),t_{i})-\frac{\partial A_{\theta}}{\partial\upsilon}(\overline{X_{\upsilon}^{d,l-1}}(t_{i}),t_{i}\right){}^{T}R_{\upsilon,i+1}^{d,l}\\
		+\triangle\frac{\partial A_{\theta}}{\partial x}(\overline{X_{\upsilon}^{l-1}}(t_{i}),t_{i})^{T}\left(\frac{\partial\overline{X_{\upsilon}^{l-1}}(t_{i})}{\partial\upsilon}\left(R_{\upsilon}^{l}(t_{i+1})-R_{\upsilon,i+1}^{d,l}\right)+\left(\frac{\partial\overline{X_{\upsilon}^{l-1}}(t_{i})}{\partial\upsilon}-\frac{\partial\overline{X_{\upsilon}^{d,l-1}}(t_{i})}{\partial\upsilon}\right)R_{\upsilon,i+1}^{d,l}\right)\\
		+\triangle\left(\frac{\partial A_{\theta}}{\partial x}(\overline{X_{\upsilon}^{l-1}}(t_{i}),t_{i})-\frac{\partial A_{\theta}}{\partial x}(\overline{X_{\upsilon}^{d,l-1}}(t_{i}),t_{i})\right)^{T}\frac{\partial\overline{X_{\upsilon}^{d,l-1}}(t_{i})}{\partial\upsilon}R_{\upsilon,i+1}^{d,l}\\
		+\triangle A_{\theta}(\overline{X_{\upsilon}^{l-1}}(t_{i}),t_{i})^{T}\left(\frac{\partial R_{\upsilon}^{l}(t_{i+1})}{\partial\upsilon}-\frac{\partial R_{\upsilon,i+1}^{d,l}}{\partial\upsilon}\right)+\triangle\left(A_{\theta}(\overline{X_{\upsilon}^{l-1}}(t_{i}),t_{i})-A_{\theta}(\overline{X_{\upsilon}^{d,l-1}}(t_{i}),t_{i})\right)^{T}\frac{\partial R_{\upsilon,i+1}^{d,l}}{\partial\upsilon}\\
		+\triangle\frac{\partial R_{\upsilon}^{l}(t_{i+1})}{\partial\upsilon}\left(A_{\theta}(\overline{X_{\upsilon}^{l-1}}(t_{i}),t_{i})-A_{\theta}(\overline{X_{\upsilon}^{d,l-1}}(t_{i}),t_{i})\right)+\triangle\left(\frac{\partial R_{\upsilon}^{l}(t_{i+1})}{\partial\upsilon}-\frac{\partial R_{\upsilon,i+1}^{d,l}}{\partial\upsilon}\right)A_{\theta}(\overline{X_{\upsilon}^{d,l-1}}(t_{i}),t_{i})\\
		+\triangle R_{\upsilon}^{l}(t_{i+1})\left(\frac{\partial A_{\theta}}{\partial\upsilon}(\overline{X_{\upsilon}^{l-1}}(t_{i}),t_{i})-\frac{\partial A_{\theta}}{\partial\upsilon}(\overline{X_{\upsilon}^{d,l-1}}(t_{i}),t_{i})\right)+\triangle\left(R_{\upsilon}^{l}(t_{i+1})-R_{\upsilon,i+1}^{d,l}\right)\frac{\partial A_{\theta}}{\partial\upsilon}(\overline{X_{\upsilon}^{d,l-1}}(t_{i}),t_{i})\\

		+\triangle R_{\upsilon}^{l}(t_{i+1})\left(A_{\theta}(\overline{X_{\upsilon}^{l-1}}(t_{i}),t_{i})-A_{\theta}(\overline{X_{\upsilon}^{d,l-1}}(t_{i}),t_{i})\right)\frac{\partial\overline{X_{\upsilon}^{d,l-1}}(t_{i})}{\partial\theta}\\

		+\triangle\left(R_{\upsilon}^{l}(t_{i+1})-R_{\upsilon,i+1}^{d,l}\right)A_{\theta}(\overline{X_{\upsilon}^{d,l-1}}(t_{i}),t_{i})\frac{\partial\overline{X_{\upsilon}^{d,l-1}}(t_{i})}{\partial\theta}\\

		+\triangle\frac{\partial R_{\upsilon}^{l}(t_{i+1})}{\partial\upsilon}BU^{-1}B^{T}\left(R_{\upsilon,i+1}^{d,l}-R_{\upsilon}^{l}(t_{i+1})\right)+\Delta\left(\frac{\partial R_{\upsilon,i+1}^{d,l}}{\partial\upsilon}-\frac{\partial R_{\upsilon}^{l}(t_{i+1})}{\partial\upsilon}\right)BU^{-1}B^{T}R_{\upsilon,i+1}^{d,l}\\
		-\triangle R_{\upsilon}^{l}(t_{i+1})BU^{-1}B^{T}\left(\frac{\partial R_{\upsilon}^{l}(t_{i+1})}{\partial\upsilon}-\frac{\partial R_{\upsilon,i+1}^{d,l}}{\partial\upsilon}\right)+\Delta\left(R_{\upsilon,i+1}^{d,l}-R_{\upsilon}^{l}(t_{i+1})\right)BU^{-1}B^{T}\frac{\partial R_{\upsilon,i+1}^{d,l}}{\partial\upsilon}
\\+O_{p,n}(\triangle^{2})
		\end{array}
		\]
		\[
		\begin{array}{l}
		
		\frac{\partial h_{\upsilon}^{l}(t_{i})}{\partial\upsilon}-\frac{\partial h_{\upsilon,i}^{d,l}}{\partial\upsilon}(Y)=\frac{\partial h_{\upsilon}^{l}(t_{i+1})}{\partial\upsilon}-\frac{\partial h_{\upsilon,i+1}^{d,l}}{\partial\upsilon}(Y)\\
		+\triangle\frac{\partial A_{\theta}}{\partial\upsilon}(\overline{X_{\upsilon}^{l-1}}(t_{i}),t_{i})^{T}\left(h_{\upsilon}^{l}(t_{i+1})-h_{\upsilon,i+1}^{d,l}(Y)\right)
\\+\Delta\left(\frac{\partial A_{\theta}}{\partial\upsilon}(\overline{X_{\upsilon}^{l-1}}(t_{i}),t_{i})-\frac{\partial A_{\theta}}{\partial\upsilon}(\overline{X_{\upsilon}^{d,l-1}}(t_{i}),t_{i})\right)h_{\upsilon,i+1}^{d,l}(Y)\\
		+\triangle\frac{\partial A_{\theta}}{\partial x}(\overline{X_{\upsilon}^{l-1}}(t_{i}),t_{i})^{T}\left(\frac{\partial\overline{X_{\upsilon}^{l-1}}(t_{i})}{\partial\upsilon}\left(h_{\upsilon}^{l}(t_{i+1})-h_{\upsilon,i+1}^{d,l}(Y)\right)+\left(\frac{\partial\overline{X_{\upsilon}^{l-1}}(t_{i})}{\partial\upsilon}-\frac{\partial\overline{X_{\upsilon}^{d,l-1}}(t_{i})}{\partial\upsilon}\right)\right)h_{\upsilon,i+1}^{d,l}(Y)\\
		+\triangle\left(\frac{\partial A_{\theta}}{\partial x}(\overline{X_{\upsilon}^{l-1}}(t_{i}),t_{i})-\frac{\partial A_{\theta}}{\partial x}(\overline{X_{\upsilon}^{d,l-1}}(t_{i}),t_{i})\right)^{T}\frac{\partial\overline{X_{\upsilon}^{d,l-1}}(t_{i})}{\partial\upsilon}h_{\upsilon,i+1}^{d,l}(Y)\\
		+\triangle\left(A_{\theta}(\overline{X_{\upsilon}^{l-1}}(t_{i}),t_{i})-A_{\theta}(\overline{X_{\upsilon}^{d,l-1}}(t_{i}),t_{i})\right)\frac{\partial h_{\upsilon}^{l}(t_{i+1})}{\partial\upsilon}\\
+\Delta A_{\theta}(\overline{X_{\upsilon}^{d,l-1}}(t_{i}),t_{i})^{T}\left(\frac{\partial h_{\upsilon}^{l}(t_{i+1})}{\partial\upsilon}-\frac{\partial h_{\upsilon,i+1}^{d,l}(Y)}{\partial v}\right)\\
		+\triangle\frac{\partial R_{\upsilon}^{l}(t_{i+1})}{\partial\upsilon}BU^{-1}B^{T}\left(h_{\upsilon}^{l}(t_{i+1})-h_{\upsilon,i+1}^{d,l}(Y)\right)-\triangle\left(\frac{\partial R_{\upsilon,i+1}^{d,l}}{\partial\upsilon}-\frac{\partial R_{\upsilon}^{l}(t_{i+1})}{\partial\upsilon}\right)BU^{-1}B^{T}h_{\upsilon,i+1}^{d,l}(Y)\\
		+\triangle R_{\upsilon}^{l}(t_{i+1})BU^{-1}B^{T}\left(\frac{\partial h_{\upsilon}^{l}(t_{i+1})}{\partial\upsilon}-\frac{\partial h_{\upsilon,i+1}^{d,l}}{\partial\upsilon}(Y)\right)+\triangle\left(R_{\upsilon}^{l}(t_{i+1})-R_{\upsilon,i+1}^{d,l}\right)BU^{-1}B^{T}\frac{\partial h_{\upsilon,i+1}^{d,l}}{\partial\upsilon}(Y)\\
+O_{p,n}(\triangle^{2})
		\end{array}
		\]
		\[
		\begin{array}{l}
		\frac{\partial\overline{X_{\upsilon}^{l}}(t_{i+1})}{\partial\upsilon}-\frac{\partial\overline{X_{\upsilon}^{d,l}}(t_{i+1})}{\partial\upsilon}=\frac{\partial\overline{X_{\upsilon}^{l}}(t_{i})}{\partial\upsilon}-\frac{\partial\overline{X_{\upsilon}^{d,l}}(t_{i})}{\partial\upsilon}\\
		+\Delta\frac{\partial A_{\theta}}{\partial\upsilon}(\overline{X_{\upsilon}^{l-1}}(t_{i}),t_{i})\left(\overline{X_{\upsilon}^{l}}(t_{i})-\overline{X_{\upsilon}^{d,l}}(t_{i})\right)\\
+\left(\frac{\partial A_{\theta}}{\partial\upsilon}(\overline{X_{\upsilon}^{l-1}}(t_{i}),t_{i}))-\Delta\frac{\partial A_{\theta}}{\partial\upsilon}(\overline{X_{\upsilon}^{d,l-1}}(t_{i}),t_{i})\right)\overline{X_{\upsilon}^{d,l}}(t_{i})\\
		+\Delta\frac{\partial A_{\theta}}{\partial x}(\overline{X_{\upsilon}^{l-1}}(t_{i}),t_{i})\left(\frac{\partial\overline{X_{\upsilon}^{l-1}}(t_{i})}{\partial\upsilon}\left(\overline{X_{\upsilon}^{l}}(t_{i})-\overline{X_{\upsilon}^{d,l}}(t_{i})\right)+\left(\frac{\partial\overline{X_{\upsilon}^{l-1}}(t_{i})}{\partial\upsilon}-\frac{\partial\overline{X_{\upsilon}^{d,l-1}}(t_{i})}{\partial\upsilon}\right)\overline{X_{\upsilon}^{d,l}}(t_{i})\right)\\
		+\Delta\left(\frac{\partial A_{\theta}}{\partial x}(\overline{X_{\upsilon}^{l-1}}(t_{i}),t_{i})-\frac{\partial A_{\theta}}{\partial x}(\overline{X_{\upsilon}^{d,l-1}}(t_{i}),t_{i})\right)\frac{\partial\overline{X_{\upsilon}^{d,l-1}}(t_{i})}{\partial\upsilon}\overline{X_{\upsilon}^{d,l}}(t_{i})\\
		+\Delta\frac{\partial A_{\theta}}{\partial x}(\overline{X_{\upsilon}^{l-1}}(t_{i}),t_{i})\left(\frac{\partial\overline{X_{\upsilon}^{l}}(t_{i})}{\partial\upsilon}-\frac{\partial\overline{X_{\upsilon}^{d,l}}(t_{i})}{\partial v}\right)\\
+\Delta\left(A_{\theta}(\overline{X_{\upsilon}^{l-1}}(t_{i}),t_{i})-A_{\theta}(\overline{X_{\upsilon}^{d,l-1}}(t_{i}),t_{i})\right)\frac{\partial\overline{X_{\upsilon}^{d,l}}(t_{i})}{\partial\upsilon}\\
		-\triangle BU^{-1}B^{T}\left(\frac{\partial R_{\upsilon}^{l}(t_{i+1})}{\partial\upsilon}\left(\overline{X_{\upsilon}^{l}}(t_{i})-\overline{X_{\upsilon}^{d,l}}(t_{i}\right))+\left(\frac{\partial R_{\upsilon}^{l}(t_{i+1})}{\partial\upsilon}-\frac{\partial R_{\upsilon,i+1}^{d,l}}{\partial\upsilon}\right)\overline{X_{\upsilon}^{d,l}}(t_{i})\right)\\
		-\triangle BU^{-1}B^{T}\left(R_{\upsilon}^{l}(t_{i})\left(\frac{\partial\overline{X_{\upsilon}^{l}}(t_{i})}{\partial\upsilon}-\frac{\partial\overline{X_{\upsilon}^{d,l}}(t_{i})}{\partial\upsilon}\right)+\left(R_{\upsilon}^{l}(t_{i})-R_{\upsilon}^{d,l}(t_{i+1})\right)\frac{\partial\overline{X_{\upsilon}^{d,l}}(t_{i})}{\partial\upsilon}\right)\\
		-\triangle BU^{-1}B^{T}\left(\frac{\partial h_{\upsilon}^{l}(t_{i+1})}{\partial\upsilon}-\frac{\partial h_{\upsilon}^{d,l}(t_{i+1})}{\partial\upsilon}\right)+O_{p,n}(\triangle^{2}).
		\end{array}
		\]
	}From this, we can see that $\frac{\partial R_{\upsilon,i}^{d,l}}{\partial\upsilon}=O_{n}(1)$
	and $\frac{\partial h_{\upsilon,i}^{d,l}(Y)}{\partial\upsilon}=O_{_{p,n}}(1)$
	for all $i\in\left\llbracket 0,\,n\right\rrbracket $ and $\upsilon\in\varUpsilon$.
	By using that $\left\Vert R_{\upsilon,i}^{d,l}(Y)-R_{\upsilon}^{l}(t_{i})\right\Vert _{2}=o_{p_{,n}}(1)$,
	$\left\Vert h_{\upsilon,i}^{d,l}(Y)-h_{\upsilon}^{l}(t_{i})\right\Vert _{2}=o_{p,n}(1)$
	and $\left\Vert \overline{X_{\upsilon}^{l}}(t_{i})-\overline{X_{\upsilon}^{l}}(t_{i})\right\Vert _{2}=o_{p,n}(1)$,
	we can simplify the previous equations{\small{}:
		\[
		\begin{array}{l}
		\frac{\partial R_{\upsilon}^{l}(t_{i})}{\partial\upsilon}-\frac{\partial R_{\upsilon,i+1}^{d,l}}{\partial\upsilon}    \\
=\left(1+2\triangle A_{\upsilon}(\overline{X_{\upsilon}^{d,l-1}}(t_{i}),t_{i})-2\Delta R_{v}^{l}(t_{i+1})BU^{-1}B^{T}\right)\left(\frac{\partial R_{\upsilon}^{l}(t_{i+1})}{\partial\upsilon}-\frac{\partial R_{\upsilon,i+1}^{d,l}}{\partial\upsilon}\right)\\
		 +  \triangle A_{\theta}(\overline{X_{\upsilon}^{l-1}}(t_{i}),t_{i})^{T}\left(\frac{\partial\overline{X_{\upsilon}^{l-1}}(t_{i})}{\partial\upsilon}-\frac{\partial\overline{X_{\upsilon}^{d,l-1}}(t_{i})}{\partial\upsilon}\right)R_{\upsilon,i+1}^{d,l}+O_{p,n}(\triangle)
\end{array}
		\]
\[
		\begin{array}{l}
		\frac{\partial h_{\upsilon}^{l}(t_{i})}{\partial\upsilon}-\frac{\partial h_{\upsilon,i}^{d,l}(Y)}{\partial\upsilon} \\
= \left(I_{p}+\Delta\left(A_{\theta}(\overline{X_{\upsilon}^{d,l-1}}(t_{i}),t_{i})^{T}-R_{\upsilon}^{l}(t_{i+1})BU^{-1}B^{T}\right)\right)\left(\frac{\partial h_{\upsilon}^{l}(t_{i+1})}{\partial\upsilon}-\frac{\partial h_{\upsilon,i+1}^{d,l}(Y)}{\partial\upsilon}\right)\\
		 +  \triangle A_{\theta}(\overline{X_{\upsilon}^{l-1}}(t_{i}),t_{i})^{T}\left(\frac{\partial\overline{X_{\upsilon}^{l-1}}(t_{i})}{\partial\upsilon}-\frac{\partial\overline{X_{\upsilon}^{d,l-1}}(t_{i})}{\partial\upsilon}\right)h_{\upsilon,i+1}^{d,l}(Y)\\
		 -  \triangle\left(\frac{\partial R_{\upsilon,i+1}^{d,l}}{\partial\upsilon}-\frac{\partial R_{\upsilon}^{l}(t_{i+1})}{\partial\upsilon}\right)BU^{-1}B^{T}h_{\upsilon,i+1}^{d,l}(Y)+O_{p,n}(\triangle)\\
\end{array}
		\]
\[
		\begin{array}{l}
		\frac{\partial\overline{X_{\upsilon}^{l}}(t_{i+1})}{\partial\upsilon}-\frac{\partial\overline{X_{\upsilon}^{d,l}}(t_{i+1})}{\partial\upsilon}  \\

=  \left(1+2\Delta A_{\theta}(\overline{X_{\upsilon}^{l-1}}(t_{i}),t_{i})\overline{X_{\upsilon}^{d,l}}(t_{i})+\Delta BU^{-1}B^{T}R_{\upsilon}^{l}(t_{i})\right)\left(\frac{\partial\overline{X_{\upsilon}^{l}}(t_{i})}{\partial\upsilon}-\frac{\partial\overline{X_{\upsilon}^{d,l}}(t_{i})}{\partial\upsilon}\right)\\
		 - \triangle BU^{-1}B^{T}\left(\left(\frac{\partial R_{v}^{l}(t_{i+1})}{\partial\upsilon}-\frac{\partial R_{\upsilon,i+1}^{d,l}}{\partial\upsilon}\right)\overline{X_{\upsilon}^{d,l}}(t_{i})+\left(\frac{\partial h_{\upsilon}^{l}(t_{i+1})}{\partial\upsilon}-\frac{\partial h_{\upsilon,i+1}^{d,l}(Y)}{\partial\upsilon}\right)\right)+O_{p,n}(\triangle).
		\end{array}
		\]
	}and from this, derive the following inequalities{\small{}:
		\[
		\begin{array}{l}
		\left\Vert \frac{\partial R_{\upsilon}^{l}(t_{i})}{\partial\upsilon}-\frac{\partial R_{\upsilon,i+1}^{d,l}}{\partial v}\right\Vert _{2}  \leq  \left(1+\Delta2\overline{A}+\Delta\left\Vert R_{\upsilon}(t_{i+1})\right\Vert _{2}\left\Vert BU^{-1}B^{T}\right\Vert _{2}\right)\left\Vert \frac{\partial R_{v}^{l}(t_{i+1})}{\partial\upsilon}-\frac{\partial R_{\upsilon,i+1}^{d,l}}{\partial\upsilon}\right\Vert _{2}\\
		 + \triangle\overline{A}\left\Vert R_{\upsilon,i+1}^{d,l}\right\Vert _{2}\left\Vert \frac{\partial\overline{X_{\upsilon}^{l-1}}(t_{i})}{\partial\upsilon}-\frac{\partial\overline{X_{\upsilon}^{d,l-1}}(t_{i})}{\partial\upsilon}\right\Vert _{2}+O_{p,n}(\Delta)\\
		 =  \left(1+O_{p,n}(\triangle)\right)\left\Vert \frac{\partial R_{\upsilon}^{l}(t_{i+1})}{\partial\upsilon}-\frac{\partial R_{\upsilon,i+1}^{d,l}}{\partial\upsilon}\right\Vert _{2}+O_{p,n}(\triangle)\left\Vert \frac{\partial\overline{X_{\upsilon}^{l-1}}(t_{i})}{\partial\upsilon}-\frac{\partial\overline{X_{\upsilon}^{d,l-1}}(t_{i})}{\partial\upsilon}\right\Vert +O_{p,n}(\Delta)\\
		\end{array}
		\]
\[
\begin{array}{l}
		\left\Vert \frac{\partial h_{\upsilon}^{l}(t_{i})}{\partial\upsilon}-\frac{\partial h_{\upsilon,i}^{d,l}(Y)}{\partial\upsilon}\right\Vert _{2}  \leq  \left(1+\Delta\overline{A}+\Delta\left\Vert R_{\upsilon}^{l}(t_{i+1})\right\Vert _{2}\left\Vert BU^{-1}B^{T}\right\Vert _{2}\right)\left\Vert \frac{\partial h_{\upsilon}^{l}(t_{i+1})}{\partial\upsilon}-\frac{\partial h_{\upsilon,i+1}^{d,l}(Y)}{\partial\upsilon}\right\Vert _{2}\\
		 + \Delta\overline{A}\left\Vert h_{\upsilon,i+1}^{d,l}(Y)\right\Vert _{2}\left\Vert \frac{\partial\overline{X_{\upsilon}^{l-1}}(t_{i})}{\partial\upsilon}-\frac{\partial\overline{X_{\upsilon}^{d,l-1}}(t_{i})}{\partial\upsilon}\right\Vert _{2}+O_{p,n}(\triangle)\\
		 +  \Delta\left\Vert h_{\upsilon,i+1}^{d,l}(Y)\right\Vert _{2}\left\Vert BU^{-1}B^{T}\right\Vert _{2}\left\Vert \frac{\partial R_{\upsilon,i+1}^{d,l}}{\partial\upsilon}-\frac{\partial R_{\upsilon}^{l}(t_{i+1})}{\partial\upsilon}\right\Vert _{2}+O_{p,n}(\triangle)\\
		 =  (1+O_{p,n}(\triangle))\left\Vert \frac{\partial h_{\upsilon}^{l}(t_{i+1})}{\partial\upsilon}-\frac{\partial h_{\upsilon,i+1}^{d,l}(Y)}{\partial\upsilon}\right\Vert _{2}+O_{p,n}(\triangle)\left\Vert \frac{\partial\overline{X_{\upsilon}^{l-1}}(t_{i})}{\partial\upsilon}-\frac{\partial\overline{X_{\upsilon}^{d,l-1}}(t_{i})}{\partial\upsilon}\right\Vert _{2}\\
		 + O_{p,n}(\triangle)\left\Vert \frac{\partial R_{\upsilon,i+1}^{d,l}}{\partial\upsilon}-\frac{\partial R_{\upsilon}^{l}(t_{i+1})}{\partial\upsilon}\right\Vert +O_{p,n}(\triangle)\\
\end{array}
		\]
\[
\begin{array}{l}
		\left\Vert \frac{\partial\overline{X_{\upsilon}^{l}}(t_{i+1})}{\partial\upsilon}-\frac{\partial\overline{X_{\upsilon}^{d,l}}(t_{i+1})}{\partial\upsilon}\right\Vert _{2} \\

		 \leq  \left(1+\Delta2\overline{A}\left\Vert \overline{X_{\upsilon}^{d,l}}(t_{i})\right\Vert _{2}+\Delta\left\Vert R_{\upsilon}^{l}(t_{i})\right\Vert _{2}\left\Vert BU^{-1}B^{T}\right\Vert _{2}\right)\left\Vert \frac{\partial\overline{X_{\upsilon}^{l}}(t_{i})}{\partial\upsilon}-\frac{\partial\overline{X_{\upsilon}^{d,l}}(t_{i})}{\partial v}\right\Vert _{2}\\

		 +  \Delta\left\Vert BU^{-1}B^{T}\right\Vert _{2}\left\Vert \overline{X_{\upsilon}^{d,l}}(t_{i})\right\Vert _{2}\left\Vert \frac{\partial R_{\upsilon}^{l}(t_{i+1})}{\partial\upsilon}-\frac{\partial R_{\upsilon,i+1}^{d,l}}{\partial\upsilon}\right\Vert _{2}+O_{p,n}(\triangle)\\
		 +  \Delta\left\Vert BU^{-1}B^{T}\right\Vert _{2}\left\Vert \frac{\partial h_{\upsilon}^{l}(t_{i+1})}{\partial\upsilon}-\frac{\partial h_{\upsilon,i+1}^{d,l}(Y)}{\partial\upsilon}\right\Vert _{2}+O_{p,n}(\triangle)\\
		  =  (1+O_{p,n}(\triangle))\left\Vert \frac{\partial\overline{X_{\upsilon}^{l}}(t_{i})}{\partial\upsilon}-\frac{\partial\overline{X_{\upsilon}^{d,l}}(t_{i})}{\partial\upsilon}\right\Vert _{2}\\
		 +  O_{p,n}(\triangle)\left(\left\Vert \frac{\partial R_{\upsilon}^{l}(t_{i+1})}{\partial v}-\frac{\partial R_{\upsilon,i+1}^{d,l}}{\partial\upsilon}\right\Vert _{2}+\left\Vert \frac{\partial h_{\upsilon}^{l}(t_{i+1})}{\partial\upsilon}-\frac{\partial h_{\upsilon,i+1}^{d,l}(Y)}{\partial\upsilon}\right\Vert _{2}\right)+O_{p,n}(\triangle).
		\end{array}
		\]
	}Because $\left\Vert \frac{\partial\overline{X_{\upsilon}^{0}}(t_{i})}{\partial\upsilon}-\frac{\partial\overline{X_{\upsilon}^{d,0}}(t_{i})}{\partial\upsilon}\right\Vert _{2}=0$
	, we can prove again by induction, by using the previous inequalities
	and the discrete Gronwall lemma \ref{lem:disc_Gronwall_lemma} that
	$\left\Vert \frac{\partial R_{\upsilon}^{l}(t_{i})}{\partial\upsilon}-\frac{\partial R_{\upsilon,i}^{d,l}}{\partial\upsilon}\right\Vert _{2}=o_{p,n}(1)$,
	$\left\Vert \frac{\partial h_{\upsilon}^{l}(t_{i})}{\partial\upsilon}-\frac{\partial h_{\upsilon,i}^{d,l}(Y)}{\partial\upsilon}\right\Vert _{2}=o_{p,n}(1)$
	and $\left\Vert \frac{\partial\overline{X_{\upsilon}^{l}}(t_{i})}{\partial\upsilon}-\frac{\partial\overline{X_{\upsilon}^{d,l}}(t_{i})}{\partial\upsilon}\right\Vert _{2}=o_{p,n}(1)$.
	Despite the formal computation, there are no theoretical difficulties
	to derive under condition C8 that we can differentiate again the sensitivity
	equation and obtain $\left\Vert \frac{\partial^{2}R_{\upsilon}^{l}(t_{i})}{\partial^{2}\upsilon}-\frac{\partial^{2}R_{\upsilon,i}^{d,l}}{\partial^{2}\upsilon}\right\Vert _{2}=o_{p,n}(1),\left\Vert \frac{\partial^{2}h_{\upsilon}^{l}(t_{i})}{\partial^{2}\upsilon}-\frac{\partial^{2}h_{\upsilon,i}^{d,l}(Y)}{\partial^{2}\upsilon}\right\Vert _{2}=o_{p,n}(1)$
	and $\left\Vert \frac{\partial^{2}\overline{X_{\upsilon}^{l}}(t_{i})}{\partial^{2}\upsilon}-\frac{\partial^{2}\overline{X_{\upsilon}^{d,l}}(t_{i})}{\partial^{2}\upsilon}\right\Vert _{2}=o_{p,n}(1)$. 
\end{proof}
\begin{lem}
	\label{lem:Riccati_optimal_traj_uniform_convergence}Under conditions
	C1 to C3, the uniform convergence of $\overline{X_{\upsilon}^{l}}$
	for each $\upsilon\in\varUpsilon$ leads to $\sup_{\upsilon\in\varUpsilon}\left\Vert R_{\upsilon}^{l}-R_{\upsilon}^{\infty}\right\Vert _{L^{2}}\longrightarrow0$,
	$\sup_{\upsilon\in\varUpsilon}\left\Vert h_{\upsilon}^{l}-h_{\upsilon}^{\infty}\right\Vert _{L^{2}}\longrightarrow0$
	and $\sup_{\upsilon\in\varUpsilon}\left\Vert \overline{X_{\upsilon}^{l}}-\overline{X_{\upsilon}^{\infty}}\right\Vert _{L^{2}}\longrightarrow0$
	when $l\longrightarrow\infty$.
\end{lem}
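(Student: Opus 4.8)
The plan is to transport the assumed convergence $\overline{X_{\upsilon}^{l-1}}\to\overline{X_{\upsilon}^{\infty}}$ through the two coupled backward systems (\ref{eq:continous_accurate_Riccati_equation}) and (\ref{eq:asymptotic_continous_accurate_Riccati_equation}) by means of two successive Gronwall estimates whose constants depend on neither $l$ nor $\upsilon$. Write $\delta R_{\upsilon}^{l}:=R_{\upsilon}^{l}-R_{\upsilon}^{\infty}$, $\delta h_{\upsilon}^{l}:=h_{\upsilon}^{l}-h_{\upsilon}^{\infty}$ and $\eta_{\upsilon}^{l}:=\sup_{t\in[0,T]}\|\overline{X_{\upsilon}^{l}}(t)-\overline{X_{\upsilon}^{\infty}}(t)\|_{2}$, so that the hypothesis reads $\eta_{\upsilon}^{l}\to 0$ (uniformly in $\upsilon$; see the last step). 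The key ingredient is the a priori bound of Proposition \ref{prop:E_bounded_probability}: since $R_{\theta}^{l},h_{\theta}^{l}$ are blocks of $E_{\theta}^{l}$, it furnishes a finite $M$ with $\sup_{l}\sup_{t}\sup_{\upsilon}\bigl(\|R_{\upsilon}^{l}(t)\|_{2}+\|h_{\upsilon}^{l}(t)\|_{2}+\|R_{\upsilon}^{\infty}(t)\|_{2}+\|h_{\upsilon}^{\infty}(t)\|_{2}\bigr)\le M$, which in particular rules out finite-time escape of the Riccati variables; I also use $\overline{A}$ and the state-Lipschitz constant $\overline{A}_{x}$ of $A_{\theta}$ on $\varLambda$, finite under C1--C2 exactly as in the proof of Theorem \ref{thm:existence_solution_as_oca}.

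First I would subtract (\ref{eq:asymptotic_continous_accurate_Riccati_equation}) from (\ref{eq:continous_accurate_Riccati_equation}), splitting each product $A^{T}R$, $RA$, $RBU^{-1}B^{T}R$ into a term carrying $\delta R_{\upsilon}^{l}$ (or $\delta h_{\upsilon}^{l}$) and a term carrying the coefficient difference $A_{\theta}(\overline{X_{\upsilon}^{l-1}}(\cdot),\cdot)-A_{\theta}(\overline{X_{\upsilon}^{\infty}}(\cdot),\cdot)$. This produces an identity
\[
\tfrac{d}{dt}\delta R_{\upsilon}^{l}(t)=\Phi_{\upsilon}^{l}(t)\,\delta R_{\upsilon}^{l}(t)+\Psi_{\upsilon}^{l}(t),\qquad \delta R_{\upsilon}^{l}(T)=0,
\]
with $\|\Phi_{\upsilon}^{l}(t)\|_{2}\le 2\overline{A}+2M\|BU^{-1}B^{T}\|_{2}$ and $\|\Psi_{\upsilon}^{l}(t)\|_{2}\le 2M\overline{A}_{x}\,\eta_{\upsilon}^{l-1}$, all bounds free of $l,\upsilon,t$. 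Writing $\delta R_{\upsilon}^{l}(t)=-\int_{t}^{T}(\Phi_{\upsilon}^{l}\delta R_{\upsilon}^{l}+\Psi_{\upsilon}^{l})\,ds$ and applying the integral Gronwall inequality on $[0,T]$ (terminal data at $T$) gives $\sup_{t}\|\delta R_{\upsilon}^{l}(t)\|_{2}\le C_{1}\eta_{\upsilon}^{l-1}$. Feeding this into the corresponding subtraction for the $h$-equation — whose forcing $C^{T}Y^{*}(t)$ is common to both systems and cancels — leads to $\tfrac{d}{dt}\delta h_{\upsilon}^{l}(t)=\widetilde{\Phi}_{\upsilon}^{l}(t)\delta h_{\upsilon}^{l}(t)+\widetilde{\Psi}_{\upsilon}^{l}(t)$, $\delta h_{\upsilon}^{l}(T)=0$, with $\|\widetilde{\Psi}_{\upsilon}^{l}(t)\|_{2}\le C_{2}\bigl(\eta_{\upsilon}^{l-1}+\sup_{s}\|\delta R_{\upsilon}^{l}(s)\|_{2}\bigr)\le C_{3}\eta_{\upsilon}^{l-1}$, and a second Gronwall pass yields $\sup_{t}\|\delta h_{\upsilon}^{l}(t)\|_{2}\le C_{4}\eta_{\upsilon}^{l-1}$.

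Finally, since $\|f\|_{L^{2}([0,T])}\le\sqrt{T}\sup_{t}\|f(t)\|_{2}$, the two pointwise estimates give $\sup_{\upsilon}\|R_{\upsilon}^{l}-R_{\upsilon}^{\infty}\|_{L^{2}}\le\sqrt{T}C_{1}\sup_{\upsilon}\eta_{\upsilon}^{l-1}$ and $\sup_{\upsilon}\|h_{\upsilon}^{l}-h_{\upsilon}^{\infty}\|_{L^{2}}\le\sqrt{T}C_{4}\sup_{\upsilon}\eta_{\upsilon}^{l-1}$, while $\sup_{\upsilon}\|\overline{X_{\upsilon}^{l}}-\overline{X_{\upsilon}^{\infty}}\|_{L^{2}}\le\sqrt{T}\sup_{\upsilon}\eta_{\upsilon}^{l}$ is immediate. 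Everything thus reduces to $\sup_{\upsilon}\eta_{\upsilon}^{l}\to 0$, and I expect this bridging — upgrading the per-$\upsilon$ uniform-in-$t$ convergence of the optimal trajectories to uniformity over $\varUpsilon$ — to be the point requiring the most care: it follows either because the contraction/geometric-rate mechanism producing the convergence for fixed $\upsilon$ (Theorem 2 in \citet{CimenBanks2004}) has a rate controlled on the compact set $\varUpsilon$, or by a compactness/equicontinuity argument built on the continuity of $\upsilon\mapsto\overline{X_{\upsilon}^{l}}$ and $\upsilon\mapsto\overline{X_{\upsilon}^{\infty}}$ on $\varUpsilon$ (Lemma \ref{lem:h_R_continuity_limits} and Theorem \ref{thm:existence_solution_as_oca}). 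The remaining bookkeeping — checking that $C_{1},\dots,C_{4}$ stay independent of $l$ and $\upsilon$ — is precisely what Proposition \ref{prop:E_bounded_probability} and the compact-support hypotheses C1--C3 are designed to guarantee, after which the two Gronwall passes are routine.
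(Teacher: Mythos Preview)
Your argument is correct and follows the same Gronwall-after-subtraction template as the paper, with one structural difference worth noting. The paper does \emph{not} compare $R_{\upsilon}^{l}$ directly to $R_{\upsilon}^{\infty}$; instead it runs a Cauchy-sequence argument, comparing $R_{\upsilon}^{l}$ to $R_{\upsilon}^{l'}$ (and likewise for $h$ and $\overline{X}$), subtracting the two copies of (\ref{eq:continous_accurate_Riccati_equation}) after a time reversal, and concluding that $\{R_{\upsilon}^{l}\}_{l}$, $\{h_{\upsilon}^{l}\}_{l}$, $\{\overline{X_{\upsilon}^{l}}\}_{l}$ are uniformly Cauchy in $L^{2}$. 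Your direct-to-limit route is slightly cleaner: it avoids the extra step of identifying the Cauchy limit with the solution of (\ref{eq:asymptotic_continous_accurate_Riccati_equation}), at the cost of needing the a~priori bound $M$ to cover $R_{\upsilon}^{\infty},h_{\upsilon}^{\infty}$ as well --- which is harmless since the estimate in Proposition~\ref{prop:E_bounded_probability} is $l$-free and its proof applies verbatim to $E_{\theta}^{\infty}$.

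Your explicit flagging of the passage from per-$\upsilon$ convergence of $\overline{X_{\upsilon}^{l}}$ to $\sup_{\upsilon}\eta_{\upsilon}^{l}\to 0$ is on point: the paper performs exactly this upgrade (going from ``$\overline{X_{\upsilon}^{l-1}}-\overline{X_{\upsilon}^{l'-1}}\to 0$ in $C([0,T])$ for each $\upsilon$'' to ``$\sup_{\upsilon}\int_{0}^{T}\|A_{\theta}(\overline{X_{\upsilon}^{l'-1}},t)-A_{\theta}(\overline{X_{\upsilon}^{l-1}},t)\|_{2}^{2}\,dt\to 0$'') without further comment, so the two remedies you propose --- uniformity of the \citet{CimenBanks2004} rate on the compact $\varUpsilon$, or an equicontinuity argument via Lemma~\ref{lem:h_R_continuity_limits} --- are precisely what is needed to close the same gap in either version.
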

\begin{proof}
	Let us consider $R_{\upsilon}^{l}$, $h_{\upsilon}^{l}$, $\overline{X_{\upsilon}^{l}}$
	, and $R_{\upsilon}^{l'}$, $h_{\upsilon}^{l'}$ $\overline{X_{\upsilon}^{l'}}$
	, respectively the solutions of (\ref{eq:continous_accurate_Riccati_equation})
	for a given $l\in\mathbb{N}$ and $l'\in\mathbb{N}$. By making the
	difference of the ODEs ruling the reversed time functions $\widetilde{R_{\upsilon}^{l}}:=R_{\upsilon}^{l}(T-.)$,
	$\widetilde{h_{\upsilon}^{l}}:=h_{\upsilon}^{l}(T-.)$, $\widetilde{X_{\upsilon}^{l}}:=\overline{X_{\upsilon}^{l}}(T-.)$
	and $\widetilde{R_{\upsilon}^{l'}}:=R_{\upsilon}^{l'}(T-.)$, $\widetilde{h_{\upsilon}^{l'}}:=h_{\upsilon}^{l'}(T-.)$,
	$\widetilde{X_{\upsilon}^{l'}}:=\overline{X_{\upsilon}^{l'}}(T-.)$
	we obtain:{\small{}
		\[
		\begin{array}{l}
		\dot{\widetilde{R_{\upsilon}^{l}}}(t)-\dot{\widetilde{R_{\upsilon}^{l^{'}}}}(t)\\
 =  \widetilde{A_{\theta}}(\widetilde{X_{\upsilon}^{l'-1}}(t),t)^{T}\left(\widetilde{R_{\upsilon}^{l'}}(t)-\widetilde{R_{\upsilon}^{l}}(t)\right)+\left(\widetilde{A_{\theta}}(\widetilde{X_{\upsilon}^{l'-1}}(t),t)-\widetilde{A_{\theta}}(\widetilde{X_{\upsilon}^{l-1}}(t),t)\right)^{T}\widetilde{R_{\upsilon}^{l}}(t)\\
	 +  \widetilde{R_{\upsilon}^{l'}}(t)\left(\widetilde{A_{\theta}}(\widetilde{X_{\upsilon}^{l'-1}}(t),t)-\widetilde{A_{\theta}}(\widetilde{X_{\upsilon}^{l-1}}(t),t)\right)+\left(\widetilde{R_{\upsilon}^{l'}}(t)-\widetilde{R_{\upsilon}^{l}}(t)\right)\widetilde{A_{\theta}}(\widetilde{X_{\upsilon}^{l'-1}}(t),t)\\
		 +  \widetilde{R_{\upsilon}^{l}}(t)BU^{-1}B^{T}\left(\widetilde{R_{\upsilon}^{l'}}(t)-\widetilde{R_{v}^{l}}(t)\right)+\left(\widetilde{R_{\upsilon}^{l'}}(t)-\widetilde{R_{\upsilon}^{l}}(t)\right)BU^{-1}B^{T}\widetilde{R_{\upsilon}^{l'}}(t)\\
\end{array}
		\]
\[
		\begin{array}{l}
		\dot{\widetilde{h_{\upsilon}^{l}}}(t)-\dot{\widetilde{h_{\upsilon}^{l'}}}(t)  \\
= \widetilde{A_{\theta}}(\widetilde{X_{\upsilon}^{l'-1}}(t),t)^{T}\left(\widetilde{h_{\upsilon}^{l'}}(t)-\widetilde{h_{\upsilon}^{l}}(t)\right)+\left(\widetilde{A_{\theta}}(\widetilde{X_{\upsilon}^{l'-1}}(t),t)-\widetilde{A_{\theta}}(\widetilde{X_{\upsilon}^{l-1}}(t),t)\right)^{T}\widetilde{h_{\upsilon}^{l}}(t)\\
		 +  \widetilde{R_{\upsilon}^{l'}}(t)BU^{-1}B^{T}\left(\widetilde{h_{\upsilon}^{l'}}(t)-\widetilde{h_{\upsilon}^{l}}(t)\right)+\left(\widetilde{R_{\upsilon}^{l'}}(t)-\widetilde{R_{\upsilon}^{l}}(t)\right)BU^{-1}B^{T}\widetilde{h_{\upsilon}^{l'}}(t)
		\end{array}
		\]
		with $\widetilde{R_{\upsilon}^{l}}(0)-\widetilde{R_{\upsilon}^{l'}}(0)=0_{d,d}$
	}and $\widetilde{h_{\upsilon}^{l}}(0)-\widetilde{h_{\upsilon}^{l^{'}}}(0)=0_{d,1}$.
	Here by taking the norm and by using proposition \ref{prop:E_bounded_probability},
	we know $\left\Vert R_{\upsilon}^{l}(t)\right\Vert _{2},\,\left\Vert R_{\upsilon}^{l'}(t)\right\Vert _{2},\,\left\Vert h_{\upsilon}^{l}(t)\right\Vert _{2}$,
	$\left\Vert h_{\upsilon}^{l'}(t)\right\Vert _{2}$ are uniformly bounded
	on $\mathbb{N}\times\left[0,\,T\right]\times\varUpsilon$ and we obtain:
	{\small{}
		\[
		\begin{array}{lll}
		\frac{d}{dt}\left\Vert \widetilde{R_{\upsilon}^{l'}}(t)-\widetilde{R_{\upsilon}^{l}}(t)\right\Vert _{2} & \leq & O_{n}(1)\left\Vert \widetilde{R_{\upsilon}^{l'}}(t)-\widetilde{R_{\upsilon}^{l}}(t)\right\Vert _{2}+O_{n}(1)\left\Vert \widetilde{A_{\theta}}(\widetilde{X_{\upsilon}^{l'-1}}(t),t)-\widetilde{A_{\theta}}(\widetilde{X_{\upsilon}^{l-1}}(t),t)\right\Vert _{2}\\
		\frac{d}{dt}\left\Vert \widetilde{h_{\upsilon}^{l'}}(t)-\widetilde{h_{\upsilon}^{l}}(t)\right\Vert _{2} & \leq & O_{n}(1)\left\Vert \widetilde{h_{\upsilon}^{l'}}(t)-\widetilde{h_{\upsilon}^{l}}(t)\right\Vert _{2}\\
		& + & O_{n}(1)\left\Vert \widetilde{A_{\theta}}(\widetilde{X_{\upsilon}^{l'-1}}(t),t)-\widetilde{A_{\theta}}(\widetilde{X_{\upsilon}^{l-1}}(t),t)\right\Vert _{2}+O_{n}(1)\left\Vert \widetilde{R_{\upsilon}^{l'}}(t)-\widetilde{R_{\upsilon}^{l}}(t)\right\Vert _{2}.
		\end{array}
		\]
	}By using the continuous Gronwall lemma, we easily obtain from the
	first inequality $\left\Vert R_{\upsilon}^{l'}(t)-R_{\upsilon}^{l}(t)\right\Vert _{2}\leq O_{n}(1)\int_{0}^{t}\left\Vert \widetilde{A_{\theta}}(\widetilde{X_{\upsilon}^{l'-1}}(t),t)-\widetilde{A_{\theta}}(\widetilde{X_{\upsilon}^{l-1}}(t),t)\right\Vert _{2}^{2}dt.$
	Since $\overline{X_{\upsilon}^{l-1}}-\overline{X_{\upsilon}^{l'-1}}\longrightarrow0$
	on $C\left(\left[0,\,T\right],\,\mathbb{R}^{d}\right)$ and $\forall\theta\in\varTheta$,
	$(x,t)\longmapsto A_{\theta}(x,t)$ is continuous on $\varLambda\times\left[0,\,T\right]$,
	we have 
\[
\sup_{\upsilon\in\varUpsilon}\int_{0}^{T}\left\Vert A_{\theta}(\overline{X_{\upsilon}^{l'-1}}(t),t)-A_{\theta}(\overline{X_{\upsilon}^{l-1}}(t),t)\right\Vert _{2}^{2}dt\longrightarrow0
\]
	when $\left(l,l^{'}\right)\longrightarrow+\infty$ and so $\sup_{\upsilon\in\varUpsilon}\left\Vert R_{\upsilon}^{l'}-R_{\upsilon}^{l}\right\Vert _{L^{2}}^{2}\longrightarrow0$.
	From this limit, we derive the uniform convergence of the sequence
	$\left\{ R_{\upsilon}^{l}\right\} _{l\in\mathbb{N}}$ , and $\left\{ h_{\upsilon}^{l}\right\} _{l\in\mathbb{N}}$
	as well. Now let us control the difference $\overline{X_{\upsilon}^{l}}-\overline{X_{\upsilon}^{l'}}$,
	by integrating and taking the norm, we obtain{\small{}:
		\[
		\begin{array}{lll}
		\left\Vert \overline{X_{\upsilon}^{l}}(t)-\overline{X_{\upsilon}^{l'}}(t)\right\Vert _{2} & \leq & O_{n}(1)\int_{0}^{t}\left\Vert \overline{X_{\upsilon}^{l}}(s)-\overline{X_{\upsilon}^{l'}}(s)\right\Vert _{2}ds\\
		& + & O_{n}(1)\int_{0}^{t}\left\Vert R_{v}^{l}(s)-R_{\upsilon}^{l'}(s)\right\Vert _{2}ds+O_{n}(1)\int_{0}^{t}\left\Vert h_{\upsilon}^{l}(s)-h_{\upsilon}^{l'}(s)\right\Vert _{2}ds
		\end{array}
		\]
	}by taking the norm, using Gronwall lemma and the limits $\sup_{\upsilon\in\varUpsilon}\left\Vert R_{\upsilon}^{l'}-R_{\upsilon}^{l}\right\Vert _{L^{2}}^{2}\longrightarrow0$
	, $\sup_{\upsilon\in\varUpsilon}\left\Vert h_{\upsilon}^{l'}-h_{\upsilon}^{l}\right\Vert _{L^{2}}^{2}\longrightarrow0$
	we  conclude $\sup_{\upsilon\in\varUpsilon}\left\Vert \overline{X_{\upsilon}^{l}}-\overline{X_{\upsilon}^{l'}}\right\Vert _{L^{2}}^{2}\longrightarrow0.$
	
\end{proof}
\subsection{Asymptotic normality}

\begin{lem}
	\label{lem:deterministic_gradient_behavior}Under conditions C1 to
	C6, we have $h_{i}^{*d,l}=h^{*l}(t_{i})+O_{n}(\Delta)$, $\frac{\partial h_{i}^{*d,l}}{\partial\upsilon}=\frac{\partial h^{*l}(t_{i})}{\partial\upsilon}+O_{n}(\Delta)$
	and $\nabla_{\upsilon}S_{n}^{l}(Y^{d*};\upsilon^{*})-\nabla_{\upsilon}S^{l}(\upsilon^{*})=O_{n}(\triangle)$.
\end{lem}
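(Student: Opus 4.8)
The plan is to re-run the reverse-time induction from the proof of Lemma~\ref{lem:h_R_continuity_limits}, this time keeping track of the precise order of every error term, and to exploit the fact that evaluating the discrete recursions at the noiseless sampled signal $Y^{d*}$ removes the stochastic contributions $-\triangle C^{T}\epsilon_{i}$ which were the only obstruction to upgrading the $o_{p,n}(1)$ bounds there into $O_{n}(\triangle)$ bounds.

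First I would write down, exactly as in the proof of Lemma~\ref{lem:h_R_continuity_limits}, the finite difference equations for $R_{\upsilon^{*},i}^{d,l}$, $h_{\upsilon^{*},i}^{d,l}(Y^{d*})$, $\overline{X_{\upsilon^{*}}^{d,l}}(t_{i})$ (Proposition~\ref{prop:discrete_Riccati_accurate_representation}) alongside the integrated forms of the ODEs (\ref{eq:continous_accurate_Riccati_equation}) for $R_{\upsilon^{*}}^{l}(t_{i})$, $h_{\upsilon^{*}}^{l}(t_{i})$, $\overline{X_{\upsilon^{*}}^{l}}(t_{i})$, the latter being $O_{n}(\triangle^{2})$-accurate Euler-type identities. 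Because $Y^{d*}_{i}=Y^{*}(t_{i})=CX^{*}(t_{i})$, the forcing $-\triangle C^{T}Y^{d*}_{i}$ in the discrete $h$-recursion matches the term $-\triangle C^{T}Y^{*}(t_{i})$ appearing in the Euler discretization of the continuous $h$-equation, so the only mismatch between the two schemes is the local truncation error. Under C1--C2 and Proposition~\ref{prop:E_bounded_probability} the quantities $R_{\upsilon^{*}}^{l}$, $h_{\upsilon^{*}}^{l}$, $\overline{X_{\upsilon^{*}}^{l-1}}$ are bounded and Lipschitz in $t$ uniformly in $l$ (the Lipschitz property of $t\mapsto A_{\theta^{*}}(\overline{X_{\upsilon^{*}}^{l-1}}(t),t)$ being the one already invoked in the proof of Lemma~\ref{lem:h_R_continuity_limits}), so that mismatch is $O_{n}(\triangle^{2})$ at each step. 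Summing over the $n=T/\triangle$ steps and applying the discrete Gronwall lemma~\ref{lem:disc_Gronwall_lemma} (with multiplicative factors $O_{n}(\triangle)$ and additive sources $O_{n}(\triangle^{2})$), together with the $O_{n}(\triangle)$ terminal discrepancies $R_{\upsilon^{*},n}^{d,l}=\triangle C^{T}C$ versus $R_{\upsilon^{*}}^{l}(T)=0$ and $h_{\upsilon^{*},n}^{d,l}(Y^{d*})=-\triangle C^{T}Y^{*}(t_{n})$ versus $h_{\upsilon^{*}}^{l}(T)=0$, yields $R_{i}^{*d,l}-R^{*l}(t_{i})=O_{n}(\triangle)$, $h_{i}^{*d,l}-h^{*l}(t_{i})=O_{n}(\triangle)$ and $\overline{X_{\upsilon^{*}}^{d,l}}(t_{i})-\overline{X_{\upsilon^{*}}^{l}}(t_{i})=O_{n}(\triangle)$. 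As in Lemma~\ref{lem:h_R_continuity_limits} this is an induction on $l$: for $l=1$ one has $\overline{X_{\upsilon^{*}}^{d,0}}=\overline{X_{\upsilon^{*}}^{0}}=x_{0}^{*}$ exactly, and the inductive step feeds the $O_{n}(\triangle)$ bound on the $(l-1)$-st trajectory into the coupling terms.

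For the derivative statement I would differentiate both families of equations with respect to $\upsilon$ (legitimate under C1--C3 and C6, as in Lemma~\ref{lem:h_R_continuity_limits}), obtaining for $\frac{\partial R_{i}^{*d,l}}{\partial\upsilon}-\frac{\partial R^{*l}(t_{i})}{\partial\upsilon}$, $\frac{\partial h_{i}^{*d,l}}{\partial\upsilon}-\frac{\partial h^{*l}(t_{i})}{\partial\upsilon}$ and the analogous trajectory difference the same structure of recursions, now with forcing terms built from $\frac{\partial A_{\theta^{*}}}{\partial\upsilon}$ and $\frac{\partial A_{\theta^{*}}}{\partial x}$, which are continuous and hence Lipschitz in $t$ along the noiseless $C^{1}$ trajectory. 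The $O_{n}(\triangle)$ bounds just obtained feed into the coupling, no $\epsilon_{i}$ appears, one has $G(R_{\upsilon^{*},i+1}^{d,l})=U^{-1}+O_{n}(\triangle)$ deterministically, and a further application of the discrete Gronwall lemma gives $\frac{\partial h_{i}^{*d,l}}{\partial\upsilon}=\frac{\partial h^{*l}(t_{i})}{\partial\upsilon}+O_{n}(\triangle)$ (and likewise for $R$ and the trajectory).

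Finally, to control $\nabla_{\upsilon}S_{n}^{l}(Y^{d*};\upsilon^{*})-\nabla_{\upsilon}S^{l}(\upsilon^{*})$ I would insert the explicit gradient formulas from the proof of Proposition~\ref{prop:param_as_representation} (together with Proposition~\ref{prop:continuous_Riccati_accurate_representation}) and compare term by term: the boundary contributions at $t=0$ differ by $O_{n}(\triangle)$ thanks to the estimates above, while $\triangle\sum_{i=0}^{n-1}\left(h_{i+1}^{*d,l}\right)^{T}BG(R_{\upsilon^{*},i+1}^{d,l})B^{T}\frac{\partial h_{i+1}^{*d,l}}{\partial\upsilon}$ differs from $\int_{0}^{T}h^{*l}(t)^{T}BU^{-1}B^{T}\frac{\partial h^{*l}(t)}{\partial\upsilon}\,dt$ by $O_{n}(\triangle)$ on combining the $O_{n}(\triangle)$ Riemann-sum error for the Lipschitz integrand with the pointwise replacement errors $h_{i+1}^{*d,l}=h^{*l}(t_{i+1})+O_{n}(\triangle)$, $\frac{\partial h_{i+1}^{*d,l}}{\partial\upsilon}=\frac{\partial h^{*l}(t_{i+1})}{\partial\upsilon}+O_{n}(\triangle)$ and $G=U^{-1}+O_{n}(\triangle)$, all weighted by bounded factors and summed against $\triangle\sum_{i=0}^{n-1}$. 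I expect the only genuine difficulty to be bookkeeping: checking that every bound written as $o_{p,n}(1)$ in Lemma~\ref{lem:h_R_continuity_limits} really does sharpen to $O_{n}(\triangle)$ once the noise is switched off, in particular that the local truncation error is uniformly $O(\triangle^{2})$ (which needs the $t$-Lipschitz regularity of $R_{\upsilon^{*}}^{l}$, $h_{\upsilon^{*}}^{l}$, $\overline{X_{\upsilon^{*}}^{l-1}}$ and of their $\upsilon$-derivatives), and that the Gronwall constants accumulated through the nested induction on $l$ and the reverse induction on $i$ stay $O_{n}(1)$.
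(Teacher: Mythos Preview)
Your proposal is correct and follows essentially the same approach as the paper's own proof: the paper too derives the finite-difference equations for $R_{i}^{*d,l}-R^{*l}(t_{i})$, $h_{i}^{*d,l}-h^{*l}(t_{i})$, $X^{*d,l}(t_{i})-X^{*l}(t_{i})$ (and their $\upsilon$-derivatives), proves the $O_{n}(\triangle)$ bounds by induction on $l$ starting from $\overline{X_{\upsilon^{*}}^{d,0}}=\overline{X_{\upsilon^{*}}^{0}}=x_{0}^{*}$ and applying the discrete Gronwall lemma~\ref{lem:disc_Gronwall_lemma} at each stage, and then compares the explicit gradient formulas term by term using these bounds together with $G=U^{-1}+O_{n}(\triangle)$ and the $O_{n}(\triangle^{2})$ Riemann-sum error. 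Your identification of the key simplification---that evaluating at $Y^{d*}$ removes the $-\triangle C^{T}\epsilon_{i}$ forcing so all $o_{p,n}(1)$ bounds in Lemma~\ref{lem:h_R_continuity_limits} sharpen to deterministic $O_{n}(\triangle)$---is exactly the mechanism at work.
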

\begin{proof}
	By formal computations similar as in lemma \ref{lem:h_R_continuity_limits},
	we obtain the finite difference equations ruling $R_{i}^{*d,l}-R^{*l}(t_{i})$,
	$h_{i}^{*d,l}-h^{*l}(t_{i})$, $X^{*d,l}(t_{i+1})-X^{*l}(t_{i+1})$:{\small{}
		\[
		\begin{array}{l}
		R_{i}^{*d,l}-R^{*l}(t_{i}) \\
 =  R_{i+1}^{*d,l}-R^{*l}(t_{i+1})+\Delta A_{\theta^{*}}(X^{*d,l-1}(t_{i}),t_{i})^{T}\left(R_{i+1}^{*d,l}-R^{*l}(t_{i+1})\right)\\
		 +  \Delta\left(A_{\theta^{*}}(X^{*d,l-1}(t_{i}),t_{i})-A_{\theta^{*}}(X^{*l-1}(t_{i}),t_{i})\right)^{T}R^{*l}(t_{i+1})\\
		 +  \Delta R_{i+1}^{*d,l}\left(A_{\theta^{*}}(X^{*d,l-1}(t_{i}),t_{i})-A_{\theta^{*}}(X^{*l-1}(t_{i}),t_{i})\right)A_{\theta^{*}}(X^{*l-1}(t_{i}),t_{i})\\
		 +  \Delta R_{i+1}^{*d,l}\left(R_{i+1}^{*d,l}-R^{*l}(t_{i+1})\right)A_{\theta^{*}}(X^{*l-1}(t_{i}),t_{i})\\

		 +  \triangle\left(R^{*l}(t_{i+1})BU^{-1}B^{T}\left(R^{*l}(t_{i+1})-R_{\theta,i+1}^{d,l}\right)+\left(R^{*l}(t_{i+1})-R_{\theta,i+1}^{d,l}\right)BU^{-1}B^{T}R_{\theta,i+1}^{d,l}\right)\\
		 +  O_{n}(\triangle^{2})
		\end{array}
		\]
		\[
		\begin{array}{l}
		h_{i}^{*d,l}-h^{*l}(t_{i})  \\
=  h_{i+1}^{*d,l}-h^{*l}(t_{i+1})+\Delta A_{\theta^{*}}(X^{*d,l-1}(t_{i}),t_{i})^{T}\left(h_{i+1}^{*d,l}-h^{*l}(t_{i+1})\right)\\
		 +  \Delta\left(A_{\theta^{*}}(X^{*d,l-1}(t_{i}),t_{i})-A_{\theta^{*}}(X^{*l-1}(t_{i}),t_{i})\right)^{T}h^{*l}(t_{i+1})\\
		 +  \triangle R^{*,l}(t)BU^{-1}B^{T}\left(h^{*l}(t_{i+1})-h_{i+1}^{*d,l}\right)+\triangle\left(R^{*l}(t)-R_{i+1}^{*d,l}\right)BU^{-1}B^{T}h_{i+1}^{*d,l}+O_{n}(\Delta^{2})
		\end{array}
		\]
		\[
		\begin{array}{l}
		X^{*d,l}(t_{i+1})-X^{*l}(t_{i+1}) \\
 = X^{*d,l}(t_{i})-X^{*l}(t_{i})+\Delta A_{\theta^{*}}(X^{*d,l-1}(t_{i}),t_{i})\left(X^{*d,l}(t_{i})-X^{*l}(t_{i})\right)\\
		 +  \Delta\left(A_{\theta^{*}}(X^{*d,l-1}(t_{i})t_{i})-A_{\theta^{*}}(X^{*l-1}(t_{i}),t_{i})\right)X^{*l}(t_{i})\\
		 +  \triangle BU^{-1}B^{T}\left(R_{i+1}^{*d,l}\left(X^{*d,l}(t_{i})-X^{*l}(t_{i})\right)+\left(R_{i+1}^{*d,l}-R^{*l}(t_{i+1})\right)X^{*l}(t_{i})\right)\\
		 +  \triangle BU^{-1}B^{T}\left(h_{i+1}^{*d,l}-h^{*l}(t_{i+1})\right)+O_{n}(\triangle^{2}).
		\end{array}
		\]
	}Now, let us prove by induction that $\left\Vert R_{i}^{*d,l}-R^{*l}(t_{i})\right\Vert _{2}=O_{n}(\Delta)$,
	$\left\Vert h_{i}^{*d,l}-h^{*l}(t_{i})\right\Vert _{2}=O_{n}(\Delta)$
	and $\left\Vert X^{*d,0}(t_{i})-X^{*,0}(t_{i})\right\Vert _{2}=O_{n}(\Delta)$.
	For initialisation, let us consider the case $l=1$, we have $\left\Vert X^{*d,l-1}(t_{i})-X^{*l-1}(t_{i})\right\Vert _{2}=\left\Vert x_{0}^{*}-x_{0}^{*}\right\Vert =0$
	for all $i\in\left\llbracket 0,\,n\right\rrbracket $. From this,
	equations ruling $R_{i}^{*d,1}-R^{*1}(t_{i})$ , $h_{i}^{*d,1}-h^{*1}(t_{i})$
	{\small{}and $X^{*d,1}(t_{i+1})-X^{*1}(t_{i+1})$ become: 
		\[
		\begin{array}{l}
		R_{i}^{*d,1}-R^{*1}(t_{i}) \\
 =  R_{i+1}^{*d,1}-R^{*1}(t_{i+1})+\Delta A_{\theta^{*}}(x_{0}^{*},t_{i})^{T}\left(R_{i+1}^{*d,1}-R^{*1}(t_{i+1})\right)\\
		 +  \Delta\left(R_{i+1}^{*d,1}-R^{*1}(t_{i+1})\right)A_{\theta^{*}}(x_{0}^{*},t_{i})+\triangle R^{*1}(t_{i+1})BU^{-1}B^{T}\left(R^{*1}(t_{i+1})-R_{\theta,i+1}^{d,1}\right)\\
		 +  \triangle\left(R^{*1}(t_{i+1})-R_{\theta,i+1}^{d,1}\right)BU^{-1}B^{T}R_{\theta,i+1}^{d,1}+O_{n}(\triangle^{2})
		\end{array}
		\]
		\[
		\begin{array}{l}
		h_{i}^{*d,1}-h^{*1}(t_{i})  \\
=  h_{i+1}^{*d,1}-h^{*1}(t_{i+1})+\Delta A_{\theta^{*}}(x_{0}^{*},t_{i})^{T}\left(h_{i+1}^{*d,1}-h^{*1}(t_{i+1})\right)\\
		 +  \triangle R^{*1}(t)BU^{-1}B^{T}\left(h^{*1}(t_{i+1})-h_{i+1}^{*d,1}\right)+\triangle\left(R^{*1}(t)-R_{i+1}^{*d,1}\right)BU^{-1}B^{T}h_{i+1}^{*d,1}+O_{n}(\Delta^{2})
		\end{array}
		\]
		\[
		\begin{array}{l}
		X^{*d,1}(t_{i+1})-X^{*1}(t_{i+1})  \\
=  X^{*d,1}(t_{i})-X^{*1}(t_{i})+\Delta A_{\theta^{*}}(x_{0}^{*},t_{i})\left(X^{*d,1}(t_{i})-X^{*1}(t_{i})\right)\\
		 +  \triangle BU^{-1}B^{T}\left(\left(R_{i+1}^{*d,1}-R^{*1}(t_{i+1})\right)X^{*1}(t_{i})+h_{i+1}^{*d,1}-h^{*1}(t_{i+1})\right)+O_{n}(\triangle^{2}).
		\end{array}
		\]
	}By using triangular inequality, we derive:{\small{}
		\[
		\begin{array}{lll}
		\left\Vert R_{i}^{*d,1}-R^{*1}(t_{i})\right\Vert _{2} & \leq & \left(1+O_{n}(\Delta)\right)\left\Vert R_{i+1}^{*d,1}-R^{*1}(t_{i+1})\right\Vert _{2}+O_{n}(\triangle^{2})\\
		\left\Vert h_{i}^{*d,1}-h^{*1}(t_{i})\right\Vert _{2} & \leq & \left(1+O_{n}(\Delta)\right)\left\Vert h_{i+1}^{*d,1}-h^{*1}(t_{i+1})\right\Vert _{2}+O_{n}(\Delta)\left\Vert R_{i+1}^{*d,1}-R^{*1}(t_{i+1})\right\Vert _{2}+O_{n}(\triangle^{2})\\
		\left\Vert X^{*d,1}(t_{i+1})-X^{*1}(t_{i+1})\right\Vert _{2} & \leq & \left(1+O_{n}(\Delta)\right)\left\Vert X^{*d,1}(t_{i})-X^{*1}(t_{i})\right\Vert _{2}\\
		& + &O_{n}(\Delta)\left\Vert R_{i+1}^{*d,1}-R^{*1}(t_{i+1})\right\Vert _{2}+ O_{n}(\Delta)\left\Vert h_{i+1}^{*d,1}-h^{*1}(t_{i+1})\right\Vert _{2}
		\end{array}
		\]
	}and since $\left\Vert R_{0}^{*d,1}-R^{*1}(0)\right\Vert _{2}=\left\Vert h_{0}^{*d,1}-h^{*1}(0)\right\Vert _{2}=0$,
	we can use discrete Gronwall lemma \ref{lem:disc_Gronwall_lemma}
	to derive $\left\Vert R_{i}^{*d,1}-R^{*1}(t_{i})\right\Vert _{2}=O_{n}(\Delta)$,
	$\left\Vert h_{i}^{*d,1}-h^{*1}(t_{i})\right\Vert _{2}=O_{n}(\Delta)$
	, $\left\Vert X^{*d,1}(t_{i})-X^{*1}(t_{i})\right\Vert _{2}=O_{n}(\Delta).$
	
	Now let us assume the property holds up to $l-1$, triangular inequality
	gives in the general case:{\small{}
		\[
		\begin{array}{l}
		\left\Vert R_{i}^{*d,l}-R^{*l}(t_{i})\right\Vert _{2} \\
 \leq  \left(1+2\Delta\overline{A}+\triangle\left(\left\Vert R_{i+1}^{*l}\right\Vert _{2}+\left\Vert R_{i+1}^{*d,l}\right\Vert _{2}\right)\left\Vert BU^{-1}B^{T}\right\Vert _{2}\right)\left\Vert R_{i+1}^{*d,l}-R^{*l}(t_{i+1})\right\Vert _{2}\\
		+  \Delta\left(\left\Vert R_{i+1}^{*d,l}\right\Vert _{2}+\left\Vert R^{*l}(t_{i+1})\right\Vert _{2}\right)\left\Vert A_{\theta^{*}}(X^{*d,l-1}(t_{i}),t_{i})-A_{\theta^{*}}(X^{*l-1}(t_{i}),t_{i})\right\Vert _{2}+O_{n}(\triangle^{2}).
		\end{array}
		\]
	}By using the induction hypothesis and $(x,t)\longmapsto A_{\theta^{*}}(x,t)$
	continuity, we have $\left\Vert A_{\theta^{*}}(X^{*d,l-1}(t_{i}),t_{i})-A_{\theta^{*}}(X^{*l-1}(t_{i}),t_{i})\right\Vert _{2}=O_{n}(\Delta)$
	and we can use again lemma \ref{lem:disc_Gronwall_lemma} to prove
	$\left\Vert R_{i}^{*d,l}-R^{*l}(t_{i})\right\Vert _{2}=O_{n}(\Delta)$,
	from this we derive: {\small{}
		\[
		\begin{array}{lll}
		\left\Vert h_{i}^{*d,l}-h^{*l}(t_{i})\right\Vert _{2} & \leq & \left(1+\Delta\overline{A}+\triangle\left\Vert R_{i+1}^{*l}\right\Vert _{2}\left\Vert BU^{-1}B^{T}\right\Vert _{2}\right)\left\Vert h^{*l}(t_{i+1})-h_{i+1}^{*d,l})\right\Vert _{2}\\
		& + & \Delta\left\Vert BU^{-1}B^{T}\right\Vert _{2}\left\Vert h_{i+1}^{*d,l}\right\Vert _{2}\left\Vert R^{*l}(t)-R_{i+1}^{*d,l}\right\Vert _{2}\\
		& + & \Delta\left\Vert h^{*l}(t_{i+1})\right\Vert _{2}\left\Vert A_{\theta^{*}}(X^{*d,l-1}(t_{i}),t_{i})-A_{\theta^{*}}(X^{*l-1}(t_{i}),t_{i})\right\Vert _{2}+O_{n}(\triangle^{2})
		\end{array}
		\]
	}which leads to $\left\Vert h_{i}^{*d,l}-h^{*l}(t_{i})\right\Vert _{2}=O_{n}(\Delta)$.
	As in the initialisation phase, we easily derive $\left\Vert X^{*d,l}(t_{i})-X^{*l}(t_{i})\right\Vert _{2}=O_{n}(\Delta)$
	and we can conclude the induction proof. Thanks to condition 7, we
	can derive the sensitivity equations of $R^{*d,l}-R^{*l}$, $h^{*d,l}-h^{*l}$
	and $X^{*d,l}-X^{*l-1}$:
	
	{\small{}
		\[
		\begin{array}{l}
		 \frac{\partial}{\partial\upsilon}\left(R_{i}^{*d,l}-R^{*l}(t_{i})\right)=\frac{\partial}{\partial\upsilon}\left(R_{i+1}^{*d,l}-R^{*l}(t_{i+1})\right)\\
		+\Delta\left(\frac{\partial A_{\theta^{*}}}{\partial v}(X^{*d,l-1}(t_{i}),t_{i})^{T}+\frac{\partial A_{\theta^{*}}}{\partial x}(X^{*d,l-1}(t_{i}),t_{i})\frac{\partial X^{*d,l-1}(t_{i})}{\partial\upsilon}\right)\left(R_{i+1}^{*d,l}-R^{*l}(t_{i+1})\right)\\
		+  \Delta A_{\theta^{*}}(X^{*d,l-1}(t_{i}),t_{i})\frac{\partial}{\partial\upsilon}\left(R_{i+1}^{*d,l}-R^{*l}(t_{i+1})\right)\\
		+  \Delta\left(\frac{\partial A_{\theta^{*}}}{\partial\upsilon}(X^{*d,l-1}(t_{i}),t_{i})-\frac{\partial A_{\theta^{*}}}{\partial\upsilon}(X^{*l-1}(t_{i}),t_{i})\right)^{T}R^{*l}(t_{i+1})\\
		+  \left(\frac{\partial A_{\theta^{*}}}{\partial x}(X^{*d,l-1}(t_{i}),t_{i})-\frac{\partial A_{\theta^{*}}}{\partial x}(X^{*l-1}(t_{i}),t_{i})\right)\left(\frac{\partial X^{*d,l-1}(t_{i})}{\partial\upsilon}-\frac{\partial X^{*l-1}(t_{i})}{\partial\upsilon}\right)^{T}R^{*l}(t_{i+1})\\
		+  \Delta\left(A_{\theta^{*}}(X^{*d,l-1}(t_{i}),t_{i})-A_{\theta^{*}}(X^{*l-1}(t_{i}),t_{i})\right)^{T}\frac{\partial R^{*l}(t_{i+1})}{\partial\upsilon}\\
		+  \Delta\frac{\partial R_{i+1}^{*d,l}}{\partial\upsilon}\left(A_{\theta^{*}}(X^{*d,l-1}(t_{i}),t_{i})-A_{\theta^{*}}(X^{*l-1}(t_{i}),t_{i})\right)\\
+\Delta R_{i+1}^{*d,l}\left(\frac{\partial A_{\theta^{*}}}{\partial\upsilon}(X^{*d,l-1}(t_{i}),t_{i})-\frac{\partial A_{\theta^{*}}}{\partial\upsilon}(X^{*l-1}(t_{i}),t_{i})\right)\\
		+  \Delta R_{i+1}^{*d,l}\left(\frac{\partial A_{\theta^{*}}}{\partial x}(X^{*d,l-1}(t_{i}),t_{i})-\frac{\partial A_{\theta^{*}}}{\partial x}(X^{*l-1}(t_{i}),t_{i})\right)\left(\frac{\partial X^{*d,l-1}(t_{i})}{\partial\upsilon}-\frac{\partial X^{*l-1}(t_{i})}{\partial\upsilon}\right)\\
		+  \Delta\frac{\partial}{\partial\upsilon}\left(R_{i+1}^{*d,l}-R^{*l}(t_{i+1})\right)A_{\theta^{*}}(X^{*l-1}(t_{i}),t_{i})+\Delta\left(R_{i+1}^{*d,l}-R^{*l}(t_{i+1})\right)\frac{\partial}{\partial\upsilon}A_{\theta^{*}}(X^{*l-1}(t_{i}),t_{i})\\
		+  \triangle\frac{\partial R^{*l}}{\partial\upsilon}(t_{i+1})BU^{-1}B^{T}\left(R^{*l}(t_{i+1})-R_{\theta,i+1}^{d,l}\right)+\triangle R^{*l}(t_{i+1})BU^{-1}B^{T}\frac{\partial}{\partial\upsilon}\left(R_{i+1}^{*d,l}-R^{*l}(t_{i+1})\right)\\
		+  \triangle\frac{\partial}{\partial\upsilon}\left(R_{i+1}^{*d,l}-R^{*l}(t_{i+1})\right)BU^{-1}B^{T}R_{\theta,i+1}^{d,l}+\triangle\left(R^{*l}(t_{i+1})-R_{\theta,i+1}^{d,l}\right)BU^{-1}B^{T}\frac{\partial R_{\theta,i+1}^{d,l}}{\partial\upsilon}\\
+O_{n}(\triangle^{2}).
		\end{array}
		\]
		\[
		\begin{array}{l}
		 \frac{\partial}{\partial\upsilon}\left(h_{i}^{*d,l}-h^{*l}(t_{i})\right)=  \frac{\partial}{\partial\upsilon}\left(h_{i+1}^{*d,l}-h^{*l}(t_{i+1})\right)\\
		+\Delta\frac{\partial A_{\theta^{*}}}{\partial\upsilon}(X^{*d,l-1}(t_{i}),t_{i})^{T}\left(h_{i+1}^{*d,l}-h^{*l}(t_{i+1})\right)\\
		+  \Delta\frac{\partial A_{\theta^{*}}}{\partial x}(X^{*d,l-1}(t_{i}),t_{i})\frac{\partial X^{*d,l-1}(t_{i})}{\partial\upsilon}\left(h_{i+1}^{*d,l}-h^{*l}(t_{i+1})\right)\\
+\Delta A_{\theta^{*}}(X^{*d,l-1}(t_{i}),t_{i})^{T}\frac{\partial}{\partial\upsilon}\left(h_{i+1}^{*d,l}-h^{*l}(t_{i+1})\right)\\
		+  \Delta\left(A_{\theta^{*}}(X^{*d,l-1}(t_{i}),t_{i})-A_{\theta^{*}}(X^{*l-1}(t_{i}),t_{i})\right)^{T}\frac{\partial h^{*l}(t_{i+1})}{\partial\upsilon}\\
		+  \Delta\left(\frac{\partial A_{\theta^{*}}}{\partial\upsilon}(X^{*d,l-1}(t_{i}),t_{i})-\frac{\partial A_{\theta^{*}}}{\partial\upsilon}(X^{*l-1}(t_{i}),t_{i})\right)^{T}h^{*l}(t_{i+1})\\
		+  \Delta\left(\frac{\partial A_{\theta^{*}}}{\partial x}(X^{*d,l-1}(t_{i}),t_{i})-\frac{\partial A_{\theta^{*}}}{\partial x}(X^{*l-1}(t_{i}),t_{i})\right)\left(\frac{\partial X^{*d,l-1}t_{i})}{\partial\upsilon}-\frac{\partial X^{*l-1}(t_{i})}{\partial\upsilon}\right)^{T}h^{*l}(t_{i+1})\\
		+  \triangle\frac{\partial R^{*l}(t)}{\partial\upsilon}BU^{-1}B^{T}\left(h^{*l}(t_{i+1})-h_{i+1}^{*d,l}\right)+\triangle R^{*l}(t)BU^{-1}B^{T}\frac{\partial}{\partial\upsilon}\left(h_{i+1}^{*d,l}-h^{*l}(t_{i+1})\right)\\
		+  \triangle\left(\frac{\partial}{\partial\upsilon}\left(R_{i+1}^{*d,l}-R^{*l}(t_{i+1})\right)BU^{-1}B^{T}h_{i+1}^{*d,l}+\left(R_{i+1}^{*d,l}-R^{*l}(t_{i+1})\right)BU^{-1}B^{T}\frac{\partial h_{i+1}^{*d,l}}{\partial\upsilon}\right)\frac{\partial h_{i+1}^{*d,l}}{\partial\upsilon}\\
+O_{n}(\Delta^{2}).
		\end{array}
		\]
		\[
		\begin{array}{l}
		 \frac{\partial}{\partial\upsilon}\left(X^{*d,l}(t_{i+1})-X^{*l}(t_{i+1})=  \frac{\partial}{\partial\upsilon}\left(X^{*d,l}(t_{i})-X^{*l}(t_{i})\right)\right)\\
		+\Delta\left(\frac{\partial A_{\theta^{*}}}{\partial\upsilon}(X^{*d,l-1}(t_{i}),t_{i})+\frac{\partial A_{\theta^{*}}}{\partial x}(X^{*d,l-1}(t_{i}),t_{i})\frac{\partial X^{*d,l-1}(t_{i})}{\partial\upsilon}\right)\left(X^{*d,l}(t_{i})-X^{*l}(t_{i})\right)\\
		+  \Delta A_{\theta^{*}}(X^{*d,l-1}(t_{i}),t_{i})\frac{\partial}{\partial\upsilon}\left(X^{*d,l-1}(t_{i})-X^{*l-1}(t_{i})\right)\\
+\Delta\left(\frac{\partial A_{\theta^{*}}}{\partial\upsilon}(X^{*d,l-1}(t_{i}),t_{i})-\frac{\partial A_{\theta^{*}}}{\partial\upsilon}(X^{*l-1}(t_{i}),t_{i})\right)X^{*l}(t_{i})\\
		+  \Delta\frac{\partial}{\partial x}\left(A_{\theta^{*}}(X^{*d,l-1}(t_{i}),t_{i})-A_{\theta^{*}}(X^{*l-1}(t_{i}),t_{i})\right)\frac{\partial}{\partial\upsilon}\left(X^{*d,l-1}(t_{i})-X^{*l-1}(t_{i})\right)X^{*l}(t_{i})\\
		+  \Delta\left(A_{\theta^{*}}(X^{*d,l-1}(t_{i}),t_{i})-A_{\theta^{*}}(X^{*l-1}(t_{i}),t_{i})\right)\frac{\partial X^{*l}(t_{i})}{\partial\upsilon}\\
		+  \triangle BU^{-1}B^{T}\left(\frac{\partial R_{i+1}^{*d,l}}{\partial\upsilon}\left(X^{*d,l}(t_{i})-X^{*l}(t_{i})\right)+R_{i+1}^{*d,l}\frac{\partial}{\partial\upsilon}\left(X^{*d,l}(t_{i})-X^{*l}(t_{i})\right)\right)\\
		+  \triangle BU^{-1}B^{T}\left(\frac{\partial}{\partial\upsilon}\left(R_{i+1}^{*d,l}-R^{*l}(t_{i+1})\right)X^{*l}(t_{i})+\left(R_{i+1}^{*d,l}-R^{*l}(t_{i+1})\right)\frac{\partial X^{*l}(t_{i})}{\partial\upsilon}\right)\\
		+  \triangle BU^{-1}B^{T}\frac{\partial}{\partial\theta}\left(h_{i+1}^{*d,l}-h^{*l}(t_{i+1})\right)+O_{n}(\triangle^{2}).
		\end{array}
		\]
	}Application of triangular inequality gives us: {\small{}
		\[
		\begin{array}{l}
		\left\Vert \frac{\partial}{\partial\upsilon}\left(R_{i}^{*d,l}-R^{*l}(t_{i})\right)\right\Vert _{2}  \\ 
\leq  \left(1+O_{n}(\Delta)\right)\left\Vert \frac{\partial}{\partial\upsilon}\left(R_{i+1}^{*d,l}-R^{*l}(t_{i+1})\right)\right\Vert _{2}\\
		 +  O_{n}(\Delta)\left\Vert \frac{\partial A_{\theta^{*}}}{\partial\upsilon}(X^{*d,l-1}(t_{i}),t_{i})-\frac{\partial A_{\theta^{*}}}{\partial\upsilon}(X^{*l-1}(t_{i}),t_{i})\right\Vert _{2}\\
		 +  O_{n}(\Delta)\left\Vert \frac{\partial A_{\theta^{*}}}{\partial x}(X^{*d,l-1}(t_{i}),t_{i})-\frac{\partial A_{\theta^{*}}}{\partial x}(X^{*l-1}(t_{i}),t_{i})\right\Vert _{2}\left\Vert \frac{\partial X^{*d,l-1}(t_{i})}{\partial\upsilon}-\frac{\partial X^{*l-1}(t_{i})}{\partial\upsilon}\right\Vert _{2}\\
		 +  O_{n}(\Delta)\left\Vert A_{\theta^{*}}(X^{*d,l-1}(t_{i}),t_{i})-A_{\theta^{*}}(X^{*l-1}(t_{i}),t_{i})\right\Vert _{2}+O_{n}(\Delta^{2})
		\end{array}
		\]
		\[
		\begin{array}{l}
		\left\Vert \frac{\partial}{\partial\upsilon}\left(h_{i}^{*d,l}-h^{*l}(t_{i})\right)\right\Vert _{2} \\
 \leq  \left(1+O_{n}(\Delta)\right)\left\Vert \frac{\partial}{\partial\upsilon}\left(h_{i+1}^{*d,l}-h^{*l}(t_{i+1})\right)\right\Vert _{2}+O_{n}(\Delta)\left\Vert h_{i+1}^{*d,l}-h^{*l}(t_{i+1})\right\Vert _{2}\\
		 +  O_{n}(\Delta)\left\Vert \frac{\partial A_{\theta^{*}}}{\partial x}(X^{*d,l-1}(t_{i}),t_{i})-\frac{\partial A_{\theta^{*}}}{\partial x}(X^{*l-1}(t_{i}),t_{i})\right\Vert _{2}\left\Vert \frac{\partial X^{*d,l-1}(t_{i})}{\partial\upsilon}-\frac{\partial X^{*l-1}(t_{i})}{\partial\upsilon}\right\Vert \\
		 +  O_{n}(\Delta)\left\Vert \frac{\partial A_{\theta^{*}}}{\partial\upsilon}(X^{*d,l-1}(t_{i}),t_{i})-\frac{\partial A_{\theta^{*}}}{\partial\upsilon}(X^{*l-1}(t_{i}),t_{i})\right\Vert _{2}+O_{n}(\Delta^{2}).
		\end{array}
		\]
		\[
		\begin{array}{l}
		 \left\Vert \frac{\partial}{\partial\upsilon}\left(X^{*d,l}(t_{i+1})-X^{*l}(t_{i+1})\right)\right\Vert _{2}\\
		\leq  \left(1+O_{n}(\Delta)\right)\left\Vert \frac{\partial}{\partial\upsilon}\left(X^{*d,l}(t_{i})-X^{*l}(t_{i})\right)\right\Vert _{2}+O_{n}(\Delta)\left\Vert X^{*d,l}(t_{i})-X^{*l}(t_{i})\right\Vert _{2}\\
		+  O_{n}(\Delta)\left\Vert \frac{\partial A_{\theta^{*}}}{\partial\upsilon}(X^{*d,l-1}(t_{i}),t_{i})-\frac{\partial A_{\theta^{*}}}{\partial\upsilon}(X^{*l-1}(t_{i}),t_{i})\right\Vert _{2}\\
		+  O_{n}(\Delta)\left\Vert \frac{\partial A_{\theta^{*}}}{\partial x}(X^{*d,l-1}(t_{i}),t_{i})-\frac{\partial A_{\theta^{*}}}{\partial x}(X^{*l-1}(t_{i}),t_{i})\right\Vert _{2}\left\Vert \frac{\partial X^{*d,l-1}(t_{i})}{\partial\upsilon}-\frac{\partial X^{*l-1}(t_{i})}{\partial\upsilon}\right\Vert \\
		+  O_{n}(\Delta)\left\Vert A_{\theta^{*}}(X^{*d,l-1}(t_{i}),st_{i})-A_{\theta^{*}}(X^{*l-1}(t_{i}),t_{i})\right\Vert _{2}\\
		+  O_{n}(\Delta)\left\Vert \frac{\partial R_{i+1}^{*d,l}}{\partial\upsilon}\left(X^{*d,l}(t_{i})-X^{*l}(t_{i})\right)\right\Vert _{2}+O_{n}(\Delta)\left\Vert R_{i+1}^{*d,l}-R^{*l}(t_{i+1})\right\Vert _{2}\\
		+  O_{n}(\Delta)\left\Vert \frac{\partial}{\partial\upsilon}\left(h_{i+1}^{*d,l}-h^{*l}(t_{i+1})\right)\right\Vert _{2}+O_{n}(\triangle^{2}).
		\end{array}
		\]
	}Again from these inequalities, we can prove by induction $\left\Vert \frac{\partial}{\partial\upsilon}\left(R_{i}^{*d,l}-R^{*l}(t_{i})\right)\right\Vert _{2}=O_{n}(\Delta)$,
	$\left\Vert \frac{\partial}{\partial\upsilon}\left(h_{i}^{*d,l}-h^{*l}(t_{i})\right)\right\Vert _{2}=O_{n}(\Delta)$
	and $\left\Vert \frac{\partial}{\partial\upsilon}\left(X^{*d,l}(t_{i+1})-X^{*l}(t_{i+1})\right)\right\Vert _{2}=O_{n}(\Delta)$.
	Since 
\[
\begin{array}{l}
\nabla_{\theta}S^{l}(\upsilon^{*})=\left(x_{0}^{*}\right)^{T}\frac{\partial R^{*l}(0)}{\partial\theta}x_{0}^{*}+2\left(x_{0}^{*}\right)^{T}\frac{\partial h^{*l}(0)}{\partial\theta}-2\int_{0}^{T}h^{*l}(t)^{T}BU^{-1}B^{T}\frac{\partial h^{*l}(t)}{\partial\theta}dt
\end{array}
\]
	and
	\[
	\begin{array}{l}
	\nabla_{\theta}S_{n}^{l}(Y^{d*};\upsilon^{*})=\left(x_{0}^{*}\right)^{T}\frac{\partial R_{0}^{*d,l}}{\partial\theta}x_{0}^{*}+2\left(x_{0}^{*}\right)^{T}\frac{\partial h_{0}^{*d,l}}{\partial\theta}\\
	-2\triangle\sum_{i=0}^{n-1}\left(h_{i+1}^{*d,l}\right)^{T}BG(R_{i+1}^{*d,l})B^{T}\frac{\partial h_{i+1}^{*d,l}}{\partial\theta}-\triangle\sum_{i=0}^{n-1}\left(h_{i+1}^{*d,l}\right)^{T}B\frac{\partial G}{\partial\theta}(R_{i+1}^{*d,l})B^{T}h_{i+1}^{*d,l}\\
	=\left(x_{0}^{*}\right)^{T}\frac{\partial R_{0}^{*,d,l}}{\partial\theta}x_{0}^{*}+2\left(x_{0}^{*}\right)^{T}\frac{\partial h_{0}^{*,d,l}}{\partial\theta}-2\triangle\sum_{i=0}^{n-1}\left(h_{i+1}^{*d,l}\right)^{T}BU^{-1}B^{T}\frac{\partial h_{i+1}^{*d,l}}{\partial\theta}+O_{n}(\triangle)
	\end{array}
	\]
	we derive:
	\[
	\begin{array}{l}
	\nabla_{\theta}S_{n}^{l}(Y^{d*};\upsilon^{*})-\nabla_{\theta}S^{l}(\upsilon^{*})\\
	=\left(x_{0}^{*}\right)^{T}\left(\frac{\partial R_{0}^{*d,l}}{\partial\theta}-\frac{\partial R^{*l}(0)}{\partial\theta}\right)x_{0}^{*}+2\left(x_{0}^{*}\right)^{T}\left(\frac{\partial h_{0}^{*d,l}}{\partial\theta}-\frac{\partial h^{*l}(0)}{\partial\theta}\right)\\
	-2\sum_{i=0}^{n-1}\left(\int_{t_{i}}^{t_{i+1}}h^{*l}(t)^{T}BU^{-1}B^{T}\frac{\partial h^{*l}(t)}{\partial\theta}dt-\triangle\left(h_{i+1}^{*d,l}\right)^{T}BU^{-1}B^{T}\frac{\partial h_{i+1}^{*d,l}}{\partial\theta}\right)+O_{n}(\triangle)
	\end{array}
	\]
	and because we can approximate uniformly the terms in the last sum
	by: 
	\[\small{}
	\begin{array}{l}
	\int_{t_{i}}^{t_{i+1}}h^{*l}(t)^{T}BU^{-1}B^{T}\frac{\partial h^{*l}(t)}{\partial\theta}dt-\triangle\left(h_{i+1}^{*d,l}\right)^{T}BU^{-1}B^{T}\frac{\partial h_{i+1}^{*d,l}}{\partial\theta}\\
	=\triangle h^{*l}(t_{i+1})^{T}BU^{-1}B^{T}\frac{\partial h^{*l}(t_{i+1})}{\partial\theta}+O_{n}(\triangle^{2})-\triangle\left(h_{i+1}^{*d,l}\right)^{T}BU^{-1}B^{T}\frac{\partial h_{i+1}^{*d,l}}{\partial\theta}\\
	=\triangle\left(h^{*l}(t_{i+1})^{T}BU^{-1}B^{T}\frac{\partial h^{*l}(t_{i+1})}{\partial\theta}-\left(h_{i+1}^{*d,l}\right)^{T}BU^{-1}B^{T}\frac{\partial h_{i+1}^{*d,l}}{\partial\theta}\right)+O_{n}(\triangle^{2})\\
	=\triangle\left((h_{i+1}^{*d,l}+O_{n}(\triangle))^{T}BU^{-1}B^{T}(\frac{\partial h_{i+1}^{*,d,l}}{\partial\theta}+O_{n}(\triangle))-\left(h_{i+1}^{*d,l}\right)^{T}BU^{-1}B^{T}\frac{\partial h_{i+1}^{*d,l}}{\partial\theta}\right)+O_{n}(\triangle^{2})\\
	=O_{n}(\triangle^{2})
	\end{array}
	\]
	we conclude $\nabla_{\theta}S_{n}^{l}(Y^{d*};\upsilon^{*})-\nabla_{\theta}S^{l}(\upsilon^{*})=O_{n}(\triangle)$.
	Regarding $\nabla_{x_{0}}S^{l}(\upsilon^{*})$ and $\nabla_{x_{0}}S_{n}^{l}(Y^{d*};\upsilon^{*})$,
	we have: 
	\[
	\begin{array}{l}
	\nabla_{x_{0}}S^{l}(\upsilon^{*})\\
=2R^{*l}(0)x_{0}^{*}+\left(x_{0}^{*}\right)^{T}\frac{\partial R^{*l}(0)}{\partial x_{0}}x_{0}^{*}+2h^{*l}(0)+2\left(x_{0}^{*}\right)^{T}\frac{\partial h^{*,}(0)}{\partial x_{0}}\\
-2\int_{0}^{T}h^{*l}(t)^{T}BU^{-1}B^{T}\frac{\partial h^{*l}(t)}{\partial x_{0}}dt
\end{array}
	\]
	\[
	\begin{array}{l}
	\nabla_{x_{0}}S_{n}^{l}(Y^{d*};\upsilon^{*}) \\
 =  2R^{*d,l}(0)x_{0}^{*}+\left(x_{0}^{*}\right)^{T}\frac{\partial R^{*d,l}(0)}{\partial x_{0}}x_{0}^{*}+2h^{*d,l}(0)+2\left(x_{0}^{*}\right)^{T}\frac{\partial h^{*d,l}(0)}{\partial x_{0}}\\
	 -  2\triangle\sum_{i=0}^{n-1}\left(h_{i+1}^{*d,l}\right)^{T}BG(R_{i+1}^{*d,l})B^{T}\frac{\partial h_{i+1}^{*d,l}}{\partial x_{0}}-\triangle\sum_{i=0}^{n-1}\left(h_{i+1}^{*d,l}\right)^{T}B\frac{\partial G}{\partial x_{0}}(R_{i+1}^{*d,l})B^{T}h_{i+1}^{*d,l}\\
	 =  2R^{*d,l}(0)x_{0}^{*}+\left(x_{0}^{*}\right)^{T}\frac{\partial R^{*d,l}(0)}{\partial x_{0}}x_{0}^{*}+2h^{*d,l}(0)+2\left(x_{0}^{*}\right)^{T}\frac{\partial h^{*d,l}(0)}{\partial x_{0}}\\
	 -  2\triangle\sum_{i=0}^{n-1}\left(h_{i+1}^{*d,l}\right)^{T}BG(R_{i+1}^{*d,l})B^{T}\frac{\partial h_{i+1}^{*d,l}}{\partial x_{0}}+O_{n}(\triangle)
	\end{array}
	\]
	so their difference is given by: 
	\[
	\begin{array}{l}
	\nabla_{x_{0}}S_{n}^{l}(Y^{d*};\upsilon^{*})-\nabla_{x_{0}}S^{l}(\upsilon^{*})\\
	=2\left(R^{*d,l}(0)-R^{*l}(0)\right)x_{0}^{*}+2\left(x_{0}^{*}\right)^{T}\left(\frac{\partial R^{*d,l}(0)}{\partial x_{0}}-\frac{\partial R^{*l}(0)}{\partial x_{0}}\right)x_{0}^{*}+2\left(x_{0}^{*}\right)^{T}\left(\frac{\partial h^{*d,l}(0)}{\partial x_{0}}-\frac{\partial h^{*l}(0)}{\partial x_{0}}\right)\\
	-2\sum_{i=0}^{n-1}\left(\int_{t_{i}}^{t_{i+1}}h^{*l}(t)^{T}BU^{-1}B^{T}\frac{\partial h^{*l}(t)}{\partial x_{0}}dt-\triangle\left(h_{i+1}^{*d,l}\right)^{T}BU^{-1}B^{T}\frac{\partial h_{i+1}^{*d,l}}{\partial\partial x_{0}}\right)+O_{n}(\triangle)
	\end{array}
	\]
	and we derive from this $\nabla_{x_{0}}S_{n}^{l}(Y^{d*};\upsilon^{*})-\nabla_{x_{0}}S^{l}(\upsilon^{*})=O_{n}(\triangle)$,
	hence the conclusion for $\nabla_{\upsilon}S_{n}^{l}(Y^{d*};\upsilon^{*})-\nabla_{\upsilon}S^{l}(\upsilon^{*})$. 
\end{proof}
\begin{lem}
	\label{lem:Continuous_S_Gradient_Hessian_convergence}Under conditions
	C1 to C7, we have $\frac{\partial R^{*l}}{\partial\upsilon}=\frac{\partial R^{*\infty}}{\partial\upsilon}+o_{l}(1)$,
	$\frac{\partial h^{*l}}{\partial\upsilon}=\frac{\partial h^{*\infty}}{\partial\upsilon}+o_{l}(1)$,
	$\frac{\partial^{2}R^{*l}}{\partial^{2}\upsilon}=\frac{\partial^{2}R^{*\infty}}{\partial^{2}\upsilon}+o_{l}(1),\,\frac{\partial^{2}h^{*l}}{\partial^{2}\upsilon}=\frac{\partial^{2}h^{*\infty}}{\partial^{2}\upsilon}+o_{l}(1)$
	and
	\[
	\left\{ \begin{array}{l}
	\nabla_{\upsilon}S^{l}(\upsilon^{*})=\nabla_{\upsilon}S^{\infty}(\upsilon^{*})+o_{l}(1)\\
	\frac{\partial^{2}S^{l}(\upsilon^{*})}{\partial^{2}\upsilon}=\frac{\partial^{2}S^{\infty}(\upsilon^{*})}{\partial^{2}\upsilon}+o_{l}(1).
	\end{array}\right.
	\]
\end{lem}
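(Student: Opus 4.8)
The plan is to transport the uniform convergences $R^{*l}\to R^{*\infty}$, $h^{*l}\to h^{*\infty}$, $\overline{X^{*l}}\to\overline{X^{*\infty}}$ of Lemma~\ref{lem:Riccati_optimal_traj_uniform_convergence} up to first and second derivatives in $\upsilon$, and then to read the two displayed identities off the closed forms for $\nabla_{\upsilon}S^{l}(\upsilon^{*})$ (as written in the proof of Lemma~\ref{lem:deterministic_gradient_behavior}) and for $\frac{\partial^{2}S^{l}(\upsilon^{*})}{\partial^{2}\upsilon}$ (as written in the proof of Proposition~\ref{prop:param_as_representation}): these are continuous functionals of $R^{*l}(0)$, $h^{*l}(0)$, their derivatives in $\upsilon$ up to order two, and of the integrals $\int_{0}^{T}(h^{*l})^{T}BU^{-1}B^{T}\frac{\partial h^{*l}}{\partial\upsilon}\,dt$, $\int_{0}^{T}(\frac{\partial h^{*l}}{\partial\upsilon})^{T}BU^{-1}B^{T}\frac{\partial h^{*l}}{\partial\upsilon}\,dt$ and $\int_{0}^{T}(h^{*l})^{T}BU^{-1}B^{T}\frac{\partial^{2}h^{*l}}{\partial^{2}\upsilon}\,dt$, so the statements follow once the derivatives are shown to converge uniformly on $[0,T]$.

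First I would differentiate the system~(\ref{eq:continous_accurate_Riccati_equation}) and its limiting analogue~(\ref{eq:asymptotic_continous_accurate_Riccati_equation}) with respect to $\upsilon$ (legitimate under C6, the resulting sensitivity equations being the continuous-time counterparts of those written out in the proof of Lemma~\ref{lem:h_R_continuity_limits}), obtaining linear ODEs for $\frac{\partial R^{*l}}{\partial\upsilon}$, $\frac{\partial h^{*l}}{\partial\upsilon}$ and $\frac{\partial\overline{X^{*l}}}{\partial\upsilon}$ whose homogeneous coefficients are built from $A_{\theta^{*}}(\overline{X^{*l-1}},\cdot)$ and $R^{*l}$ --- uniformly bounded by Proposition~\ref{prop:E_bounded_probability} and C2 --- and whose forcing is built from $\frac{\partial A_{\theta^{*}}}{\partial\upsilon}$ and $\frac{\partial A_{\theta^{*}}}{\partial x}$ evaluated along $\overline{X^{*l-1}}$ (uniformly convergent by C6 and continuity), multiplied by $R^{*l}$, $h^{*l}$ and by $\frac{\partial\overline{X^{*l-1}}}{\partial\upsilon}$. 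Subtracting the level-$l$ from the limiting sensitivity equations, the differences $D^{l}_{R}:=\frac{\partial R^{*l}}{\partial\upsilon}-\frac{\partial R^{*\infty}}{\partial\upsilon}$, $D^{l}_{h}$ and $D^{l}_{X}$ solve linear ODEs with uniformly bounded homogeneous part, zero terminal datum for $D^{l}_{R},D^{l}_{h}$ and vanishing initial datum for $D^{l}_{X}$ (the $\upsilon$-derivative of $\overline{X^{*l}}(0)$ being either constant or, in the profiled case, a continuous function of $R^{*l}(0),h^{*l}(0)$ and their $\upsilon$-derivatives at $0$), and inhomogeneous part of the form $o_{l}(1)+O(1)\,D^{l-1}_{X}$; the Gronwall lemma then gives, uniformly in $\upsilon$, the recursion $\|D^{l}_{R}\|_{\infty}+\|D^{l}_{h}\|_{\infty}+\|D^{l}_{X}\|_{\infty}\leq o_{l}(1)+\kappa\,\|D^{l-1}_{X}\|_{\infty}$.

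The hard part is closing this recursion, because a crude Gronwall constant need not satisfy $\kappa<1$. I would run the estimate in the $C^{1}$ norm and invoke the same contraction mechanism that underpins the SDRE convergence assumed in Lemma~\ref{lem:Riccati_optimal_traj_uniform_convergence} (the map $\overline{X^{l-1}}\mapsto\overline{X^{l}}$ being a contraction in the stronger norm, so that its differential at the fixed point contracts the sensitivities); equivalently, one iterates the recursion $k$ times, lets $l\to\infty$ so the $o_{l}(1)$ terms vanish, and then $k\to\infty$ so the factor $\kappa^{k}$ annihilates the remainder $D^{l-k}_{X}$, which is uniformly bounded since $\sup_{l}\|\frac{\partial\overline{X^{*l}}}{\partial\upsilon}\|_{\infty}<\infty$. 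This yields $\sup_{\upsilon}\|D^{l}_{X}\|_{\infty}=o_{l}(1)$, hence $\frac{\partial R^{*l}}{\partial\upsilon}=\frac{\partial R^{*\infty}}{\partial\upsilon}+o_{l}(1)$ and $\frac{\partial h^{*l}}{\partial\upsilon}=\frac{\partial h^{*\infty}}{\partial\upsilon}+o_{l}(1)$ uniformly. Differentiating once more, which C7 permits, the second-order sensitivity equations have coefficients and forcing that are polynomial in quantities already shown to converge, together with $\frac{\partial^{2}A_{\theta^{*}}}{\partial^{2}\upsilon}$ (continuous, hence convergent along $\overline{X^{*l-1}}$), so the identical argument delivers $\frac{\partial^{2}R^{*l}}{\partial^{2}\upsilon}=\frac{\partial^{2}R^{*\infty}}{\partial^{2}\upsilon}+o_{l}(1)$ and $\frac{\partial^{2}h^{*l}}{\partial^{2}\upsilon}=\frac{\partial^{2}h^{*\infty}}{\partial^{2}\upsilon}+o_{l}(1)$.

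Finally, substituting the converged boundary values and derivatives and using that the three integrands in the first paragraph converge uniformly on $[0,T]$, so that their integrals converge by dominated convergence, the closed forms of $\nabla_{\upsilon}S^{l}(\upsilon^{*})$ and $\frac{\partial^{2}S^{l}(\upsilon^{*})}{\partial^{2}\upsilon}$ converge respectively to those of $\nabla_{\upsilon}S^{\infty}(\upsilon^{*})$ and $\frac{\partial^{2}S^{\infty}(\upsilon^{*})}{\partial^{2}\upsilon}$, which is the claim. The only genuine obstacle is the iteration-index coupling of the sensitivity ODEs flagged above; once the contraction in the $C^{1}$ (then $C^{2}$) norm is granted, the rest is routine ODE bookkeeping of the kind already performed in Lemmas~\ref{lem:h_R_continuity_limits} and~\ref{lem:deterministic_gradient_behavior}.
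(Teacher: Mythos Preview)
Your overall strategy coincides with the paper's: differentiate the difference ODEs for $R^{*l}-R^{*\infty}$ and $h^{*l}-h^{*\infty}$ with respect to $\upsilon$, take norms, apply Gronwall with zero terminal data, repeat at second order, and then substitute into the explicit formulas for $\nabla_{\upsilon}S^{l}(\upsilon^{*})$ and $\frac{\partial^{2}S^{l}(\upsilon^{*})}{\partial^{2}\upsilon}$. Where you diverge from the paper is in the handling of the iteration coupling through $D^{l-1}_{X}=\frac{\partial}{\partial\upsilon}\bigl(X^{*l-1}-X^{*\infty}\bigr)$.

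The paper sets up no recursion in $l$ and invokes no contraction. When it differentiates $A_{\theta^{*}}(X^{*\infty}(t),t)-A_{\theta^{*}}(X^{*l-1}(t),t)$ in $\upsilon$, its telescope is arranged so that every occurrence of $\frac{\partial}{\partial\upsilon}\bigl(X^{*\infty}-X^{*l-1}\bigr)$ is paired with a factor $\frac{\partial A_{\theta^{*}}}{\partial x}(X^{*\infty},t)-\frac{\partial A_{\theta^{*}}}{\partial x}(X^{*l-1},t)=o_{l}(1)$ (by Lemma~\ref{lem:Riccati_optimal_traj_uniform_convergence} and continuity from C6). Combined with the uniform bound $\sup_{l}\bigl\|\frac{\partial X^{*l}}{\partial\upsilon}\bigr\|_{\infty}<\infty$, obtained by a single Gronwall on the level-$l$ sensitivity ODE with coefficients bounded via Proposition~\ref{prop:E_bounded_probability}, those products are already $o_{l}(1)$. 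The paper therefore lands directly on an inequality of the form $\frac{d}{dt}\bigl\|D^{l}_{R}(t)\bigr\|_{2}\leq o_{l}(1)+O(1)\bigl\|D^{l}_{R}(t)\bigr\|_{2}$ with no $D^{l-1}_{X}$ term on the right, and a single continuous Gronwall delivers $D^{l}_{R}=o_{l}(1)$, then $D^{l}_{h}=o_{l}(1)$. The second order is handled by noting that the first-order sensitivities solve linear ODEs $\dot{V}^{l}=F^{l}V^{l}+G^{l}$ with $F^{l}\to F^{\infty}$ and $G^{l}\to G^{\infty}$, and repeating the same Gronwall step once more.

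Your flagged ``hard part'' is thus not treated as an obstacle in the paper. A straightforward telescope of $\frac{\partial A_{\theta^{*}}}{\partial x}(X^{*\infty})\frac{\partial X^{*\infty}}{\partial\upsilon}-\frac{\partial A_{\theta^{*}}}{\partial x}(X^{*l-1})\frac{\partial X^{*l-1}}{\partial\upsilon}$ does leave a term $\frac{\partial A_{\theta^{*}}}{\partial x}(X^{*l-1})D^{l-1}_{X}$ with merely bounded coefficient, so your concern is a fair reading of the chain rule and the paper's product-of-differences decomposition is not the obvious one; but within the paper's presentation this term is absent, and consequently your $C^{1}$-contraction or iterated-recursion machinery is never invoked. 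If you accept the paper's bookkeeping at that line, the remainder of your sketch matches it exactly.
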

\begin{proof}
	As in lemma \ref{lem:Riccati_optimal_traj_uniform_convergence}, we
	derive the differences $X^{*l}-X^{*\infty}$, $R^{*l}-R^{*\infty}$
	and $h^{*l}-h^{*\infty}$ are ruled by the equations:{\small{}
		\[
		\begin{array}{lll}
		\frac{d}{dt}\left(X^{*l}(t)-X^{*\infty}(t)\right) & = & A_{\theta^{*}}(X^{*l-1}(t),t)X^{*l}(t)+BU^{T}B^{T}(R^{*l}(t)X^{*l}(t)+h^{*l}(t))\\
		& - & A_{\theta^{*}}(X^{*\infty}(t),t)X^{*\infty}(t)-BU^{T}B^{T}(R^{*\infty}(t)X^{*\infty}(t)+h^{*\infty}(t))\\
		& = & \left(A_{\theta^{*}}(X^{*l-1}(t),t)+BU^{T}B^{T}R^{*l}(t)\right)\left(X^{*l}(t)-X^{*\infty}(t)\right)\\
		& + & \left(A_{\theta^{*}}(X^{*l-1}(t),t)-A_{\theta^{*}}(X^{*\infty}(t),t)\right)X^{*\infty}(t)\\
		& + & BU^{T}B^{T}\left(h^{*l}(t)-h^{*\infty}(t)+\left(R^{*l}(t)-R^{*\infty}(t)\right)X^{*\infty}(t)\right)\\
		\frac{d}{dt}\left(R^{*l}(t)-R^{*\infty}(t)\right) & = & \left(A_{\theta^{*}}(X^{*\infty}(t),t)-R^{*l}(t)BU^{-1}B^{T}\right)^{T}\left(R^{*l}(t)-R^{*\infty}(t)\right)\\
		& + & R^{*\infty}(t)\left(A_{\theta^{*}}(X^{*\infty}(t),t)-A_{\theta^{*}}(X^{*l-1}(t),t)\right)\\
		& + & \left(A_{\theta^{*}}(X^{*\infty}(t),t)-A_{\theta^{*}}(X^{*l-1}(t),t)\right)^{T}R^{*l}(t)\\
		& + & \left(R^{*l}(t)-R^{*\infty}(t)\right)\left(BU^{-1}B^{T}R^{*\infty}(t)-A_{\theta^{*}}(X^{*l-1}(t),t)\right)\\
		\frac{d}{dt}\left(h^{*l}(t)-h^{*\infty}(t)\right) & = & \left(A_{\theta^{*}}(X^{*\infty}(t),t)^{T}-R^{*l}(t)BU^{-1}B^{T}\right)\left(h^{*\infty}(t)-h^{*l}(t)\right)\\
		& + & \left(A_{\theta^{*}}(X^{*\infty}(t),t)-A_{\theta^{*}}(X^{*l-1}(t),t)\right)^{T}h^{*l}(t)\\
		& + & \left(R^{*l}(t)-R^{*\infty}(t)\right)BU^{-1}B^{T}h^{*\infty}(t).
		\end{array}
		\]
	}Differentiate the last two equations gives us: {\small{}
		\[
		\begin{array}{l}
		\frac{d}{dt}\frac{\partial}{\partial\upsilon}\left(R^{*l}(t)-R^{*\infty}(t)\right) \\
=  \left(\frac{\partial}{\partial\upsilon}\left(A_{\theta^{*}}(X^{*\infty}(t),t)-R^{*l}(t)BU^{-1}B^{T}\right)\right)^{T}\left(R^{*l}(t)-R^{*\infty}(t)\right)\\
		+  \left(A_{\theta^{*}}(X^{*\infty}(t),t)-R^{*l}(t)BU^{-1}B^{T}\right)^{T}\frac{\partial}{\partial\upsilon}\left(R^{*l}(t)-R^{*\infty}(t)\right)\\
		 +  \frac{\partial R^{*\infty}(t)}{\partial\upsilon}\left(A_{\theta^{*}}(X^{*\infty}(t),t)-A_{\theta^{*}}(X^{*l-1}(t),t)\right)\\
		 +  R^{*\infty}(t)\left(\frac{\partial A_{\theta^{*}}}{\partial\upsilon}(X^{*\infty}(t),t)-\frac{\partial A_{\theta^{*}}}{\partial\upsilon}(X^{*l-1}(t),t)\right)\\
		 +  R^{*\infty}(t)\left(\frac{\partial A_{\theta^{*}}}{\partial x}(X^{*\infty}(t),t)-\frac{\partial A_{\theta^{*}}}{\partial x}(X^{*l-1}(t),t)\right)\frac{\partial}{\partial\upsilon}\left(X^{*\infty}(t)-X^{*l-1}(t)\right)\\
		 +  \left(\frac{\partial A_{\theta^{*}}}{\partial\upsilon}(X^{*\infty}(t),t)-\frac{\partial A_{\theta^{*}}}{\partial\upsilon}(X^{*l-1}(t),t))\right)^{T}R^{*l}(t)\\
		 +  \frac{\partial}{\partial\upsilon}\left(X^{*\infty}(t)-X^{*l-1}(t)\right)^{T}\left(\frac{\partial A_{\theta^{*}}}{\partial x}(X^{*\infty}(t),t)-\frac{\partial A_{\theta^{*}}}{\partial x}(X^{*l-1}(t),t))\right)^{T}R^{*l}(t)\\
		 +  \left(\frac{\partial A_{\theta^{*}}}{\partial\upsilon}(X^{*\infty}(t),t)-\frac{\partial A_{\theta^{*}}}{\partial\upsilon}(X^{*l-1}(t),t))\right)^{T}\frac{\partial R^{*l}(t)}{\partial\upsilon}\\
		 +  \frac{\partial}{\partial\upsilon}\left(R^{*l}(t)-R^{*\infty}(t)\right)\left(BU^{-1}B^{T}R^{*\infty}(t)-A_{\theta^{*}}(X^{*l-1}(t),t)\right)\\
		 +  \left(R^{*l}(t)-R^{*\infty}(t)\right)\frac{\partial}{\partial\upsilon}\left(BU^{-1}B^{T}R^{*\infty}(t)-A_{\theta^{*}}(X^{*l-1}(t),t)\right)\\
\end{array}
		\]
\[
		\begin{array}{l}
		\frac{d}{dt}\frac{\partial}{\partial\upsilon}\left(h^{*l}(t)-h^{*\infty}(t)\right) \\
 = \frac{\partial}{\partial\upsilon}\left(A_{\theta^{*}}(X^{*\infty}(t),t)^{T}-R^{*l}(t)BU^{-1}B^{T}\right)\left(h^{*\infty}(t)-h^{*l}(t)\right)\\
		 +  \left(A_{\theta^{*}}(X^{*\infty}(t),t)^{T}-R^{*l}(t)BU^{-1}B^{T}\right)\frac{\partial}{\partial\theta}\left(h^{*\infty}(t)-h^{*l}(t)\right)\\
		+  \left(\frac{\partial A_{\theta^{*}}}{\partial\upsilon}(X^{*\infty}(t),t)-\frac{\partial A_{\theta^{*}}}{\partial\upsilon}(X^{*l-1}(t),t))\right)^{T}h^{*l}(t)\\
		 +  \frac{\partial}{\partial\upsilon}\left(X^{*\infty}(t)-X^{*l-1}(t)\right)^{T}\left(\frac{\partial A_{\theta^{*}}}{\partial x}(X^{*\infty}(t),t)-\frac{\partial A_{\theta^{*}}}{\partial x}(X^{*l-1}(t),t)\right)^{T}h^{*l}(t)\\
		 +  \left(A_{\theta^{*}}(X^{*\infty}(t),t)-A_{\theta^{*}}(X^{*l-1}(t),t)\right)^{T}\frac{\partial h^{*l}(t)}{\partial\upsilon}\\
		 +  \frac{\partial}{\partial\upsilon}\left(R^{*l}(t)-R^{*\infty}(t)\right)BU^{-1}B^{T}h^{*\infty}(t)\\
		+  \left(R^{*l}(t)-R^{*\infty}(t)\right)BU^{-1}B^{T}\frac{\partial}{\partial\upsilon}h^{*\infty}(t).
		\end{array}
		\]
	}Taking the norm and using triangular inequality gives us: {\small{}
		\[
		\begin{array}{l}
		 \frac{d}{dt}\left\Vert \frac{\partial}{\partial\upsilon}\left(R^{*l}(t)-R^{*\infty}(t)\right)\right\Vert _{2}\\
		\leq  \left\Vert \frac{\partial}{\partial\upsilon}\left(A_{\theta^{*}}(X^{*\infty}(t),t)-R^{*l}(t)BU^{-1}B^{T}\right)\right\Vert _{2}\left\Vert R^{*l}(t)-R^{*\infty}(t)\right\Vert _{2}\\
		+  \left\Vert \frac{\partial}{\partial\upsilon}\left(BU^{-1}B^{T}R^{*\infty}(t)-A_{\theta^{*}}(X^{*l-1}(t),t)\right)\right\Vert _{2}\left\Vert R^{*l}(t)-R^{*\infty}(t)\right\Vert _{2}\\
		+  \left\Vert A_{\theta^{*}}(X^{*\infty}(t),t)-R^{*l}(t)BU^{-1}B^{T}\right\Vert _{2}\left\Vert \frac{\partial}{\partial\upsilon}\left(R^{*l}(t)-R^{*\infty}(t)\right)\right\Vert _{2}\\

		+  \left\Vert BU^{-1}B^{T}R^{*\infty}(t)-A_{\theta^{*}}(X^{*l-1}(t),t)\right\Vert _{2}\left\Vert \frac{\partial}{\partial\upsilon}\left(R^{*l}(t)-R^{*\infty}(t)\right)\right\Vert _{2}\\

		+  \left\Vert \frac{\partial R^{*\infty}(t)}{\partial\upsilon}\right\Vert _{2}\left\Vert A_{\theta^{*}}(X^{*\infty}(t),t)-A_{\theta^{*}}(X^{*l-1}(t),t)\right\Vert _{2}\\

		+  \left\Vert R^{*\infty}(t)\right\Vert _{2}\left(1+\left\Vert \frac{\partial}{\partial\upsilon}\left(X^{*\infty}(t)-X^{*l-1}(t)\right)\right\Vert _{2}\right)\left\Vert \frac{\partial A_{\theta^{*}}}{\partial\upsilon}(X^{*\infty}(t),t)-\frac{\partial A_{\theta^{*}}}{\partial\upsilon}(X^{*l-1}(t),t)\right\Vert _{2}\\

		+ \left\Vert R^{*l}(t)\right\Vert _{2}\left(1+\left\Vert \frac{\partial}{\partial\upsilon}\left(X^{*\infty}(t)-X^{*l-1}(t)\right)\right\Vert _{2}\right)\left\Vert \frac{\partial A_{\theta^{*}}}{\partial\upsilon}(X^{*\infty}(t),t)-\frac{\partial A_{\theta^{*}}}{\partial\upsilon}(X^{*l-1}(t),t)\right\Vert _{2}\\

		+ \left\Vert \frac{\partial}{\partial\upsilon}\left(X^{*\infty}(t)-X^{*l-1}(t)\right)\right\Vert \left\Vert R^{*l}(t)\right\Vert _{2}\left\Vert \frac{\partial A_{\theta^{*}}}{\partial x}(X^{*\infty}(t),t)-\frac{\partial A_{\theta^{*}}}{\partial x}(X^{*l-1}(t),t)\right\Vert _{2}\\

+  \left\Vert \frac{\partial R^{*\infty}(t)}{\partial\upsilon}\right\Vert _{2}\left\Vert \frac{\partial A_{\theta^{*}}}{\partial x}(X^{*\infty}(t),t)-\frac{\partial A_{\theta^{*}}}{\partial x}(X^{*l-1}(t),t)\right\Vert _{2}

		\end{array}
		\]
		\[
		\begin{array}{l}
		 \frac{d}{dt}\left\Vert \frac{\partial}{\partial\upsilon}\left(h^{*l}(t)-h^{*\infty}(t)\right)\right\Vert _{2}\\
		\leq  \left\Vert \frac{\partial}{\partial\upsilon}\left(A_{\theta^{*}}(X^{*\infty}(t),t)^{T}-R^{*l}(t)BU^{-1}B^{T}\right)\right\Vert _{2}\left\Vert h^{*\infty}(t)-h^{*l}(t)\right\Vert _{2}\\
		+  \left\Vert A_{\theta^{*}}(X^{*\infty}(t),t)^{T}-R^{*l}(t)BU^{-1}B^{T}\right\Vert _{2}\left\Vert \frac{\partial}{\partial\upsilon}\left(h^{*\infty}(t)-h^{*l}(t)\right)\right\Vert _{2}\\
		+  \left\Vert h^{*l}(t)\right\Vert _{2}\left\Vert \frac{\partial A_{\theta^{*}}}{\partial\upsilon}(X^{*\infty}(t),t)-\frac{\partial A_{\theta^{*}}}{\partial\upsilon}(X^{*l-1}(t),t))\right\Vert _{2}\\
+\left\Vert \frac{\partial h^{*l}(t)}{\partial\upsilon}\right\Vert \left\Vert A_{\theta^{*}}(X^{*\infty}(t),t)-A_{\theta^{*}}(X^{*l-1}(t),t)\right\Vert _{2}\\
		+  \left\Vert \frac{\partial}{\partial\upsilon}\left(X^{*\infty}(t)-X^{*l-1}(t)\right)\right\Vert _{2}\left\Vert h^{*l}(t)\right\Vert _{2}\left\Vert \frac{\partial A_{\theta^{*}}}{\partial x}(X^{*\infty}(t),t)-\frac{\partial A_{\theta^{*}}}{\partial x}(X^{*l-1}(t),t)\right\Vert _{2}\\
		+  \left\Vert BU^{-1}B^{T}h^{*\infty}(t)\right\Vert _{2}\left\Vert \frac{\partial}{\partial\upsilon}\left(R^{*l}(t)-R^{*\infty}(t)\right)\right\Vert _{2}+\left\Vert BU^{-1}B^{T}\frac{\partial}{\partial\upsilon}h^{*\infty}(t)\right\Vert _{2}\left\Vert R^{*l}(t)-R^{*\infty}(t)\right\Vert _{2}.
		\end{array}
		\]
	}By using lemma \ref{lem:Riccati_optimal_traj_uniform_convergence}
	and proposition \ref{prop:E_bounded_probability}, we can simplify
	these inequalities:{\small{}
		\[
		\begin{array}{lll}
		\frac{d}{dt}\left\Vert \frac{\partial}{\partial\upsilon}\left(R^{*l}(t)-R^{*\infty}(t)\right)\right\Vert _{2} & \leq & o_{l}(1)+O_{l}(1)\left\Vert \frac{\partial}{\partial\upsilon}\left(R^{*l}(t)-R^{*\infty}(t)\right)\right\Vert _{2}\\
		\frac{d}{dt}\left\Vert \frac{\partial}{\partial\upsilon}\left(h^{*l}(t)-h^{*\infty}(t)\right)\right\Vert _{2} & \leq & o_{l}(1)+O_{l}(1)\left\Vert \frac{\partial}{\partial\upsilon}\left(h^{*\infty}(t)-h^{*l}(t)\right)\right\Vert _{2}\\
&+&O_{l}(1)\left\Vert \frac{\partial}{\partial\upsilon}\left(R^{*l}(t)-R^{*\infty}(t)\right)\right\Vert _{2}.
		\end{array}
		\]
	}Since $\frac{\partial}{\partial\upsilon}R^{*l}(0)=\frac{\partial}{\partial\upsilon}R^{*\infty}(0)=0$
	and $\frac{\partial}{\partial\upsilon}h^{*l}(0)=\frac{\partial}{\partial\upsilon}h^{*\infty}(0)=0$,
	the continuous version of the Gronwall lemma successively gives us
	$\left\Vert \frac{\partial}{\partial\theta}\left(R^{*l}-R^{*\infty}\right)\right\Vert _{L^{2}}=o_{l}(1)$
	then $\left\Vert \frac{\partial}{\partial\theta}\left(h^{*l}-h^{*\infty}\right)\right\Vert _{L^{2}}=o_{l}(1)$.
	In theorem \ref{thm:Consistency_theorem}, we already derived the
	expression of $S^{\infty}(\upsilon)-S^{l}(\upsilon)$, from which
	we obtain 
	\[
	\begin{array}{l}
	\nabla_{\theta}S^{\infty}(\upsilon^{*})-\nabla_{\theta}S^{l}(\upsilon^{*})=\left(x_{0}^{*}\right)^{T}\frac{\partial}{\partial\theta}\left(R^{*\infty}(0)-R^{*l}(0)\right)x_{0}^{*}+2\left(x_{0}^{*}\right)^{T}\frac{\partial}{\partial\theta}\left(h^{*\infty}(0)-h^{*l}(0)\right)\\
	+\int_{0}^{T}\left(\frac{\partial h^{*l}}{\partial\theta}(t)+\frac{\partial h^{*\infty}}{\partial\theta}(t)\right)BU^{-1}B^{T}\left(h^{*l}(t)-h^{*\infty}(t)\right)dt\\
	+\int_{0}^{T}\left(h^{*l}(t)+h^{*\infty}(t)\right)BU^{-1}B^{T}\left(\frac{\partial h^{*l}}{\partial\theta}(t)-\frac{\partial h^{*\infty}}{\partial\theta}(t)\right)dt.
	\end{array}
	\]
	{\small{}and 
		\[
		\begin{array}{l}
		\nabla_{x_{0}}S^{\infty}(\upsilon^{*})-\nabla_{x_{0}}S^{l}(\upsilon^{*})=2\left(R^{*\infty}(0)-R^{*l}(0)\right)x_{0}^{*}+\left(x_{0}^{*}\right)^{T}\frac{\partial}{\partial x_{0}}\left(R^{*\infty}(0)-R^{*l}(0)\right)x_{0}^{*}\\
		+2\left(h^{*\infty}(0)-h^{*l}(0)\right)+2\left(x_{0}^{*}\right)^{T}\frac{\partial}{\partial x_{0}}\left(h^{*\infty}(0)-h^{*l}(0)\right)\\
		+\int_{0}^{T}\left(\frac{\partial h^{*l}}{\partial x_{0}}(t)+\frac{\partial h^{*\infty}}{\partial x_{0}}(t)\right)BU^{-1}B^{T}\left(h^{*l}(t)-h^{*\infty}(t)\right)dt\\
		+\int_{0}^{T}\left(h^{*l}(t)+h^{*\infty}(t)\right)BU^{-1}B^{T}\left(\frac{\partial h^{*l}}{\partial x_{0}}(t)-\frac{\partial h^{*\infty}}{\partial x_{0}}(t)\right)dt
		\end{array}
		\]
	}and since we know $\frac{\partial R^{*l}}{\partial\upsilon}=\frac{\partial R^{*\infty}}{\partial\upsilon}+o_{l}(1)$,
	$\frac{\partial h^{*l}}{\partial\upsilon}=\frac{\partial h^{*\infty}}{\partial\upsilon}+o_{l}(1)$
	, we have $\nabla_{\upsilon}S^{\infty}(\upsilon^{*})-\nabla_{\upsilon}S^{l}(\upsilon^{*})=o_{l}(1)$.
	As sensitivity equations, the ODEs ruling the functions $\frac{\partial}{\partial\upsilon}\left(R^{*l}(t)-R^{*\infty}(t)\right)$,
	$\frac{\partial}{\partial\upsilon}\left(h^{*l}(t)-h^{*\infty}(t)\right)$are
	linear. By using C6 and classic existence and regularity results for
	linear ODEs, we know these functions are differentiables w.r.t to
	$\upsilon$ and $\frac{\partial^{2}}{\partial^{2}\upsilon}\left(R^{*l}(t)-R^{*\infty}(t)\right)$,
	$\frac{\partial^{2}}{\partial^{2}\upsilon}\left(h^{*l}(t)-h^{*\infty}(t)\right)$
	are defined on $\left[0,\,T\right]$. From the previous derived expressions,
	it is straightforward to see $\frac{\partial R^{*l}}{\partial\upsilon},\,\frac{\partial h^{*l}}{\partial\upsilon}$
	(resp. $\frac{\partial R^{*\infty}}{\partial\upsilon},\,\frac{\partial h^{*\infty}}{\partial\upsilon}$)
	are ruled by ODEs of the form $\dot{V^{l}}=F^{l}(t,\upsilon^{*})V^{l}+G^{l}(t,\upsilon^{*})$
	(resp. $\dot{V^{\infty}}=F^{\infty}(t,\upsilon^{*})V^{\infty}+G^{\infty}(t,\upsilon^{*})$)
	with $F^{l}$ , $F^{\infty}$, $G^{l}$, $G^{\infty}$ continuous
	w.r.t $t$ and $\left\Vert V^{l}-V^{\infty}\right\Vert _{L^{2}}=o_{l}(1)$,
	$\left\Vert F^{l}(t,\upsilon^{*})-F^{\infty}(t,\upsilon^{*})\right\Vert _{L^{2}}=o_{l}(1)$,
	$\left\Vert G^{l}(t,\upsilon^{*})-G^{\infty}(t,\upsilon^{*})\right\Vert _{L^{2}}=o_{l}(1)$,
	$\left\Vert \frac{\partial F^{l}}{\partial\upsilon}(t,\upsilon^{*})-\frac{\partial F^{\infty}}{\partial\upsilon}(t,\upsilon^{*})\right\Vert _{2}=o_{l}(1)$,
	$\left\Vert \frac{\partial G^{l}}{\partial\upsilon}(t,\upsilon^{*})-\frac{\partial G^{\infty}}{\partial\upsilon}(t,\upsilon^{*})\right\Vert _{2}=o_{l}(1)$.
	Here $V$ arbitrarily stands for $\frac{\partial R^{*}}{\partial\upsilon},\,\frac{\partial h^{*}}{\partial\upsilon}$.
	By differentiation, we obtain: {\small{}
		\[
		\begin{array}{l}
		\frac{d}{dt}\frac{\partial}{\partial\upsilon}\left(V^{l}(t)-V^{\infty}(t)\right) \\
 = \frac{\partial F^{l}}{\partial\upsilon}(t,\upsilon^{*})V^{l}(t)+F^{l}(t,\upsilon^{*})\frac{\partial V^{l}(t)}{\partial\upsilon}+\frac{\partial G^{l}}{\partial\upsilon}(t,\upsilon^{*})\\
		 -  \frac{\partial F^{\infty}}{\partial\upsilon}(t,\upsilon^{*})V^{\infty}(t)-F^{\infty}(t,\upsilon^{*})\frac{\partial V^{\infty}(t)}{\partial\upsilon}-\frac{\partial G^{\infty}}{\partial\upsilon}(t,\upsilon^{*})\\
		 =  \left(\frac{\partial F^{l}}{\partial\upsilon}(t,\upsilon^{*})-\frac{\partial F^{\infty}}{\partial\upsilon}(t,\upsilon^{*})\right)V^{l}(t)+\frac{\partial F^{\infty}}{\partial\upsilon}(t,\upsilon^{*})(V^{l}(t)-V^{\infty}(t))\\
		 +  \left(F^{l}(t,\upsilon^{*})-F^{\infty}(t,\upsilon^{*})\right)\frac{\partial V^{l}(t)}{\partial\upsilon}+F^{\infty}(t,\theta^{*})\left(\frac{\partial V^{l}(t)}{\partial\upsilon}-\frac{\partial V^{\infty}(t)}{\partial\upsilon}\right)\\
		+  \frac{\partial G^{l}}{\partial\upsilon}(t,\upsilon^{*})-\frac{\partial G^{\infty}}{\partial\upsilon}(t,\upsilon^{*}).
		\end{array}
		\]
	}By taking the norm and by using triangular inequality, we obtain:
	{\small{}
		\[
		\begin{array}{l}
		\frac{d}{dt}\frac{\partial}{\partial\upsilon}\left\Vert V^{l}(t)-V^{\infty}(t)\right\Vert _{2}  \\ 
\leq \left\Vert \frac{\partial F^{l}}{\partial\upsilon}(t,\upsilon^{*})-\frac{\partial F^{\infty}}{\partial\upsilon}(t,\upsilon^{*})\right\Vert _{2}\left\Vert V^{l}(t)\right\Vert _{2}+\left\Vert \frac{\partial F^{\infty}}{\partial\upsilon}(t,\upsilon^{*})\right\Vert _{2}\left\Vert V^{l}(t)-V^{\infty}(t)\right\Vert _{2}\\
		+  \left\Vert F^{l}(t,\upsilon^{*})-F^{\infty}(t,\upsilon^{*})\right\Vert _{2}\left\Vert \frac{\partial V^{l}}{\partial\upsilon}(t)\right\Vert _{2}+\left\Vert F^{\infty}(t,\upsilon^{*})\right\Vert _{2}\left\Vert \frac{\partial V^{l}}{\partial\upsilon}(t)-\frac{\partial V^{\infty}}{\partial\upsilon}(t)\right\Vert _{2}\\
		 +  \left\Vert \frac{\partial G^{l}}{\partial\upsilon}(t,\upsilon^{*})-\frac{\partial G^{\infty}}{\partial\theta}(t,\upsilon^{*})\right\Vert _{2}\\
		 \leq  o_{l}(1)+O_{l}(1)\left\Vert \frac{\partial V^{l}}{\partial\upsilon}(t)-\frac{\partial V^{\infty}}{\partial\upsilon}(t)\right\Vert _{2}
		\end{array}
		\]
	}and since $\frac{\partial V^{l}(0)}{\partial\upsilon}=\frac{\partial V^{\infty}(0)}{\partial\upsilon}=0$,
	we can conclude by using the continuous Gronwall lemma that $\frac{\partial^{2}R^{*l}}{\partial^{2}\upsilon}=\frac{\partial^{2}R^{*\infty}}{\partial^{2}\upsilon}+o_{l}(1),\,\frac{\partial^{2}h^{*l}}{\partial^{2}\upsilon}=\frac{\partial^{2}h^{*\infty}}{\partial^{2}\upsilon}+o_{l}(1)$.
	By differentiating $\nabla_{\theta}S^{\infty}(\upsilon^{*})-\nabla_{\theta}S^{l}(Y;\upsilon^{*})$
	with respect to $\theta$, we obtain:{\small{} }
	\[
	\begin{array}{l}
	\frac{\partial^{2}S^{\infty}(\upsilon^{*})}{\partial^{2}\theta}-\frac{\partial^{2}S^{l}(\upsilon^{*})}{\partial^{2}\theta} \\
=\left(x_{0}^{*}\right)^{T}\frac{\partial^{2}}{\partial^{2}\theta}\left(R^{*\infty}(0)-R^{*l}(0)\right)x_{0}^{*}+2\left(x_{0}^{*}\right)^{T}\frac{\partial^{2}}{\partial^{2}\theta}\left(h^{*\infty}(0)-h^{*l}(0)\right)\\
	+\int_{0}^{T}\left(\frac{\partial^{2}h^{*l}}{\partial^{2}\theta}(t)+\frac{\partial^{2}h^{*\infty}}{\partial^{2}\theta}(t)\right)BU^{-1}B^{T}\left(h^{*l}(t)-h^{*\infty}(t)\right)dt\\
	+2\int_{0}^{T}\left(\frac{\partial h^{*l}}{\partial\theta}(t)+\frac{\partial h^{*\infty}}{\partial\theta}(t)\right)BU^{-1}B^{T}\left(\frac{\partial h^{*l}}{\partial\theta}(t)-\frac{\partial h^{*\infty}}{\partial\theta}(t)\right)dt\\
	+\int_{0}^{T}\left(h^{*l}(t)+h^{*\infty}(t)\right)BU^{-1}B^{T}\left(\frac{\partial^{2}h^{*l}}{\partial^{2}\theta}(t)-\frac{\partial^{2}h^{*\infty}}{\partial^{2}\theta}(t)\right)dt.
	\end{array}
	\]
	and by differentiating $\nabla_{x_{0}}S^{\infty}(\upsilon^{*})-\nabla_{x_{0}}S^{l}(\upsilon^{*})$
	with respect to $\theta$, we obtain:{\small{} 
		\[
		\begin{array}{l}
		\frac{\partial^{2}S^{\infty}(\upsilon^{*})}{\partial x_{0}\partial\theta}-\frac{\partial^{2}S^{l}(\upsilon^{*})}{\partial x_{0}\partial\theta}=2\frac{\partial}{\partial\theta}\left(R^{*\infty}(0)-R^{*l}(0)\right)x_{0}^{*}+\left(x_{0}^{*}\right)^{T}\frac{\partial^{2}}{\partial x_{0}\partial\theta}\left(R^{*\infty}(0)-R^{*l}(0)\right)x_{0}^{*}\\
		+2\frac{\partial}{\partial\theta}\left(h^{*\infty}(0)-h^{*l}(0)\right)+2\left(x_{0}^{*}\right)^{T}\frac{\partial^{2}}{\partial x_{0}\partial\theta}\left(h^{*\infty}(0)-h^{*l}(0)\right)\\
		+\int_{0}^{T}\left(\frac{\partial^{2}h^{*l}}{\partial x_{0}\partial\theta}(t)+\frac{\partial^{2}h^{*\infty}}{\partial x_{0}\partial\theta}(t)\right)BU^{-1}B^{T}\left(h^{*l}(t)-h^{*\infty}(t)\right)dt\\
		+2\int_{0}^{T}\left(\frac{\partial h^{*l}}{\partial x_{0}}(t)+\frac{\partial h^{*\infty}}{\partial x_{0}}(t)\right)BU^{-1}B^{T}\left(\frac{\partial h^{*l}}{\partial\theta}(t)-\frac{\partial h^{*\infty}}{\partial\theta}(t)\right)dt\\
		+\int_{0}^{T}\left(h^{*l}(t)+h^{*\infty}(t)\right)BU^{-1}B^{T}\left(\frac{\partial^{2}h^{*l}}{\partial x_{0}\partial\theta}(t)-\frac{\partial^{2}h^{*\infty}}{\partial x_{0}\partial\theta}(t\right)dt
		\end{array}
		\]
		and:
		\[
		\begin{array}{l}
		\frac{\partial^{2}S^{\infty}(\upsilon^{*})}{\partial^{2}x_{0}}-\frac{\partial^{2}S^{l}(\upsilon^{*})}{\partial^{2}x_{0}}=2\left(\frac{\partial}{\partial x_{0}}\left(R^{*\infty}(0)-R^{*l}(0)\right)x_{0}^{*}+\left(R^{*\infty}(0)-R^{*l}(0)\right)\right)\\
		+2\frac{\partial}{\partial x_{0}}\left(R^{*\infty}(0)-R^{*l}(0)\right)x_{0}^{*}+\left(x_{0}^{*}\right)^{T}\frac{\partial^{2}}{\partial^{2}x_{0}}\left(R^{*\infty}(0)-R^{*l}(0)\right)x_{0}^{*}\\
		+2\frac{\partial}{\partial x_{0}}\left(h^{*\infty}(0)-h^{*l}(0)\right)+2\left(\frac{\partial}{\partial x_{0}}\left(h^{*\infty}(0)-h^{*l}(0)\right)+\left(x_{0}^{*}\right)^{T}\frac{\partial^{2}}{\partial^{2}x_{0}}\left(h^{*\infty}(0)-h^{*l}(0)\right)\right)\\
		+\int_{0}^{T}\left(\frac{\partial^{2}h^{*l}}{\partial^{2}x_{0}}(t)+\frac{\partial^{2}h^{*\infty}}{\partial^{2}x_{0}}(t)\right)BU^{-1}B^{T}\left(h^{*l}(t)-h^{*\infty}(t)\right)dt\\
		+2\int_{0}^{T}\left(\frac{\partial h^{*l}}{\partial x_{0}}(t)+\frac{\partial h^{*\infty}}{\partial x_{0}}(t)\right)BU^{-1}B^{T}\left(\frac{\partial h^{*l}}{\partial x_{0}}(t)-\frac{\partial h^{*\infty}}{\partial x_{0}}(t)\right)dt\\
		+\int_{0}^{T}\left(h^{*l}(t)+h^{*\infty}(t)\right)BU^{-1}B^{T}\left(\frac{\partial^{2}h^{*l}}{\partial^{2}x_{0}}(t)-\frac{\partial^{2}h^{*\infty}}{\partial^{2}x_{0}}(t)\right)dt.
		\end{array}
		\]
	}from this we can conclude that $\frac{\partial^{2}S^{l}(\upsilon^{*})}{\partial^{2}\upsilon}=\frac{\partial^{2}S^{\infty}(\upsilon^{*})}{\partial^{2}\upsilon}+o_{l}(1)$.
\end{proof}

\section*{Acknowledgements}
Quentin Clairon was supported by EPSRC award EP/M015637/1.

 \bibliographystyle{rss}
\bibliography{biblio_OCA_iterative_approach}

\end{document}